\newif\ifextended
   \newcommand\SkipToFmtEnd{}%
   \newcommand\EndFmtInput{}%
   \long\def\SkipToFmtEnd#1\EndFmtInput{}%
\newcommand\ReadOnlyOnce[1]{\@ifundefined{#1}{\@namedef{#1}{}}\SkipToFmtEnd}
\DeclareFontFamily{OT1}{cmtex}{}
\DeclareFontShape{OT1}{cmtex}{m}{n}
  {<5><6><7><8>cmtex8
   <9>cmtex9
   <10><10.95><12><14.4><17.28><20.74><24.88>cmtex10}{}
\DeclareFontShape{OT1}{cmtex}{m}{it}
  {<-> ssub * cmtt/m/it}{}
\DeclareFontShape{OT1}{cmtt}{bx}{n}
  {<5><6><7><8>cmtt8
   <9>cmbtt9
   <10><10.95><12><14.4><17.28><20.74><24.88>cmbtt10}{}
\DeclareFontShape{OT1}{cmtex}{bx}{n}
  {<-> ssub * cmtt/bx/n}{}
\newcommand{\Conid}[1]{\mathit{#1}}
\newcommand{\Varid}[1]{\mathit{#1}}
\newcommand{\anonymous}{\kern0.06em \vbox{\hrule\@width.5em}}
\renewcommand{\leq}{\leqslant}
\renewcommand{\geq}{\geqslant}
\newdimen\mathindent\mathindent\leftmargini}%
\def\resethooks{%
  \global\let\SaveRestoreHook\empty
  \global\let\ColumnHook\empty}
\newcommand*{\savecolumns}[1][default]%
  {\g@addto@macro\SaveRestoreHook{\savecolumns[#1]}}
\newcommand*{\restorecolumns}[1][default]%
  {\g@addto@macro\SaveRestoreHook{\restorecolumns[#1]}}
\newcommand*{\aligncolumn}[2]%
  {\g@addto@macro\ColumnHook{\column{#1}{#2}}}
\newcommand{\onelinecommentchars}{\quad-{}- }
\newcommand{\commentbeginchars}{\enskip\{-}
\newcommand{\commentendchars}{-\}\enskip}
\newcommand{\visiblecomments}{%
  \let\onelinecomment=\onelinecommentchars
  \let\commentbegin=\commentbeginchars
  \let\commentend=\commentendchars}
\newcommand{\invisiblecomments}{%
  \let\onelinecomment=\empty
  \let\commentbegin=\empty
  \let\commentend=\empty}
\newlength{\blanklineskip}
\newcommand{\hsindent}[1]{\quad}% default is fixed indentation
\let\hspre\empty
\let\hspost\empty
\newcommand{\stepT}[2]{\xrightarrow{#1}_{#2}}
\newcommand{\Moon}[1]{\llparenthesis #1 \rrparenthesis}
\newcommand{\protectMS}{\ensuremath{\mathbf{protect }\xspace}}
\newcommand{\flows}{\sqsubseteq}
\newcommand{\fresh}[1]{\ensuremath{\textrm{fresh}(#1)}}
\newcommand{\update}[3]{\ensuremath{#1 [ #2 \mapsto #3 ]}}
\newcommand{\srule}[1]{{[}\textsc{#1}{]}}
\newcommand{\while}{\textsc{While}\xspace}
\newcommand{\tool}{\textsc{Blade}\xspace}
\newcommand{\Hacl}{\textsc{HACL*}\xspace}
\newcommand{\mypara}[1]{\smallskip\noindent\emph{\textbf{{#1.}}}}
\newcommand{\eg}{e.g.,\xspace}
\newcommand{\ie}{i.e.,\xspace}
\newcommand{\exNum}[1]{\textsc{Example}}
\newcommand{\theSet}[1]{\{#1\}}
\newcommand{\Transient}{\sourceColorOf{\textbf{T}}}
\newcommand{\Concrete}{\sinkColorOf{\textbf{S}}}
\newcommand{\graphsize}{\footnotesize}
\newcommand{\examplesize}{\small}
\newcommand{\rulesize}{\small}
\newcommand{\sourceColor}{Magenta}
\newcommand{\sinkColor}{BlueGreen}
\newcommand{\sourceColorOf}[1]{{\color{\sourceColor} #1}}
\newcommand{\sinkColorOf}[1]{{\color{\sinkColor} #1}}
\newcommand{\protectedSet}{\mathsf{Prot}}
\definecolor{constColor}{RGB}{227, 248, 250}
\newcommand{\shadedCons}[1]{\colorbox{constColor}{\ensuremath{#1}}} % shaded constraints in figs
\newcommand{\constrColorOf}[1]{\ensuremath{#1}} % no color in text
\newcommand{\cnsSep}{\Rightarrow}
\newcommand{\ghstPatch}[1]{\textcolor{ForestGreen}{\textbf{#1}}}
\newcommand{\orangePatch}[1]{\textcolor{BurntOrange}{\textbf{#1}}}
\newenvironment{CompactItemize}%
  {\begin{list}{$\; \; \; \; \; \; \ \  \triangleright$}%
   {\leftmargin=0pt \itemsep=2pt \topsep=2pt
     \parsep=0pt \partopsep=0pt}}%
  {\end{list}}
\pgfplotsset{compat=1.16}
\newcommand\xrsquigarrow[1]{%
    \mathrel{%
        \begin{tikzpicture}[%
            baseline={(current bounding box.south)}
            ]
        \node[%
            ,inner sep=.44ex
            ,align=center
            ] (tmp) {$\scriptstyle #1$};
        \path[%
            ,draw,<-
            ,decorate,decoration={%
                ,zigzag
                ,amplitude=0.7pt
                ,segment length=1.2mm,pre length=3.5pt
                }
            ]
        (tmp.south east) -- (tmp.south west);
        \end{tikzpicture}
        }
    }
\newtheorem{thm}{Theorem}
\newtheorem{lemma}{\noindent Lemma}
\newenvironment{proofsketch}{\noindent\emph{Proof (Sketch).}}{\hfill$\square$}
\newtheorem{cor}{Corollary}
\newtheorem{prop}{Proposition}
\newtheorem{definition}{Definition}
\crefname{prop}{Proposition}{Propositions}
\crefname{thm}{Theorem}{Theorems}
\newcommand{\Blue}[1]{{\color{blue} #1}}
\newcommand{\Red}[1]{{\color{red} #1}}
\newcommand{\New}[1]{#1}%{{\color{ForestGreen} #1}}
\title{Automatically Eliminating Speculative Leaks from Cryptographic Code with Blade}
\author[M. Vassena]{Marco Vassena}
\affiliation{%
  \institution{CISPA Helmholtz Center for Information Security}
  \country{Germany}
}
\email{marco.vassena@cispa.saarland}
\author[C. Disselkoen]{Craig Disselkoen}
\affiliation{%
  \institution{UC San Diego}
  \country{USA}
}
\email{cdisselk@cs.ucsd.edu}
\author[K. Gleissenthall]{Klaus von Gleissenthall}
\affiliation{
  \institution{Vrije Universiteit Amsterdam}
  \country{Netherlands}
}
\email{k.freiherrvongleissenthal@vu.nl}
\author[S. Cauligi]{Sunjay Cauligi}
\affiliation{%
  \institution{UC San Diego}
  \country{USA}
}
\email{scauligi@eng.ucsd.edu}
\author[R. Kıcı]{Rami Gökhan Kıcı}
\affiliation{%
  \institution{UC San Diego}
  \country{USA}
}
\email{rkici@eng.ucsd.edu}
\author[R. Jhala]{Ranjit Jhala}
\affiliation{%
  \institution{UC San Diego}
  \country{USA}
}
\email{jhala@cs.ucsd.edu}
\author[D. Tullsen]{Dean Tullsen}
\affiliation{%
  \institution{UC San Diego}
  \country{USA}
}
\email{tullsen@cs.ucsd.edu}
\author[D. Stefan]{Deian Stefan}
\affiliation{%
  \institution{UC San Diego}
  \country{USA}
}
\email{deian@cs.ucsd.edu}
\begin{abstract}
We introduce \tool, a new approach to automatically and efficiently eliminate
  speculative leaks from cryptographic code.
\tool is built on the insight that to stop leaks via speculative execution, it
  suffices to \emph{cut} the dataflow from expressions that speculatively
  introduce secrets (\emph{sources}) to those that leak them through the cache
  (\emph{sinks}), rather than prohibit speculation altogether.
We formalize this insight in a \emph{static type system} that
(1) types each expression as either \emph{transient}, \ie possibly containing
  speculative secrets or as being \emph{stable}, and
(2) prohibits speculative leaks by requiring that all \emph{sink} expressions
  are stable.
\tool relies on a new abstract primitive, \protectMS{}, to halt speculation
  at fine granularity.
We formalize and implement \protectMS{} using existing architectural
  mechanisms, and show how \tool's type system can automatically synthesize a
  \emph{minimal} number of \protectMS{}s to provably eliminate speculative leaks.
We implement \tool in the Cranelift WebAssembly compiler and evaluate our
  approach by repairing several verified, yet vulnerable WebAssembly
  implementations of cryptographic primitives. We find that \tool can fix
  existing programs that leak via speculation \emph{automatically}, without
  user intervention, and \emph{efficiently}
  even when using fences to implement \protectMS{}.\looseness=-1
\end{abstract}
\keywords{Speculative execution, Spectre, Constant-time, Type system.}
\begin{document}

\maketitle

\section{Introduction}
Implementing secure cryptographic algorithms is hard.
The code must not only be functionally correct, memory safe, and efficient, it must also avoid divulging secrets indirectly through side channels
like control-flow, memory-access patterns, or execution time.
Consequently, much recent work focuses on how to ensure
implementations do not leak secrets \eg via type
systems~\cite{cauligi:2019:fact,Watt:2019}, verification~\cite{libsignalSP, ct-verif}, and
program transformations~\cite{ct-c-compiler}.

Unfortunately, these efforts can be foiled by speculative execution.
Even if secrets are closely controlled via guards and access checks,
the processor can simply ignore these checks when executing
speculatively. This, in turn, can be exploited by an attacker to leak protected secrets.

In principle, memory fences block speculation, and hence, offer a way to
recover the original security guarantees. In practice, however, fences
pose a dilemma.
Programmers can restore security by conservatively inserting fences after every
load (e.g., using Microsoft's Visual Studio compiler pass~\cite{MSCVC-2020}),
but at huge performance costs.
Alternatively, they can rely on heuristic approaches for inserting
fences~\cite{oo7}, but forgo guarantees about the absence of
side-channels.
Since missing even one fence can allow an attacker to leak any secret from the
address space, secure runtime systems---in particular, browsers like Chrome and
Firefox---take yet another approach and isolate secrets from untrusted code
in different processes to avoid the risk altogether~\cite{site-isolation,
mozilla-sandbox}.
Unfortunately, the engineering effort of such a multi-process redesign is
huge---\eg Chrome's redesign took five years and roughly 450K lines of code
changes~\cite{site-isolation-paper}.

In this paper, we introduce \tool, a new, fully automatic approach to
provably and efficiently eliminate speculation-based leakage from constant-time cryptographic code.
\tool is based on the key insight that to prevent leaking data via
speculative execution, it is not necessary to stop \emph{all}
speculation.
Instead, it suffices to \emph{cut} the data flow from
expressions (sources) that could speculatively introduce
secrets to those that leak them through the cache (sinks).
We develop this insight into an automatic enforcement
algorithm via four contributions.

% \mypara{1. A Semantics for Speculation}
% %
% Our first contribution is a formal operational semantics for a simple
% \while language, inspired by a previous formal model of a speculative
% and out-of-order processor~\cite{pitchfork}.
% %
% Our semantics captures the essence of speculation-based attacks---in
% particular Spectre-PHT~\cite{Kocher2018spectre,Kiriansky18,Canella:2019}---by
% modeling precisely how speculation can occur and what an attacker can
% observe via speculation (\S~\ref{sec:semantics}).

\mypara{A JIT-Step Semantics for Speculation}
A key aim of BLADE is to enable \emph{source}-level reasoning about
the absence of speculation-based information leaks. This is crucial
to let the source-level type system use control- and data-flow
information to optimally prevent leaks.
High-level reasoning requires a source-level semantic model
of speculation which has, so far, proven to be challenging
as the effects of speculation manifest at the very low
\emph{machine}-level: \eg as branch-prediction affects
the \emph{streams} of low-level instructions that are
fetched and (speculatively) executed.
We address this challenge with our first contribution:
a JIT-step semantics for a high-level \while language
that lets us reconcile the tension between high-level
reasoning and low-level speculation.
These semantics translate high-level \emph{commands}
to low-level machine \emph{instructions} in
a \emph{just-in-time} (JIT) fashion, whilst tracking
control-flow predictions at branch points and modeling
the speculative execution path of the program as well as the ``paths not taken''
as stacks of high-level commands.

Our low-level instructions are inspired by a previous
formal model of a speculative and out-of-order
processor~\cite{pitchfork}, and let us model
the essence of speculation-based attacks---in
particular Spectre-PHT~\cite{Kocher2018spectre,Kiriansky18,Canella:2019}---by
modeling precisely how speculation can occur and what an attacker can
observe via speculation (\S~\ref{sec:semantics}).
%
% \mv[inline]{Deian: Also BTB since we have (indirect) function calls?}
%
To prevent leakage, we propose and formalize the semantics of an
abstract primitive called \protectMS{} that embodies several hardware mechanisms proposed in recent work~\cite{tullsen-asplos,STT-Yu-19}.
Crucially, and in contrast to a regular fence which stops \emph{all}
speculation, \protectMS{} only stops speculation for
a given \emph{variable}.
For example~\ensuremath{\Varid{x}\mathbin{:=}\mathbf{protect}(\Varid{e})} ensures that the value of $e$ is
assigned to $x$ only \emph{after} $e$ has been assigned
its \emph{stable}, non-speculative value.
Though we encourage hardware vendors to implement \protectMS{} in future
processors, for backwards compatibility, we implement and evaluate two
versions of \protectMS{} on existing hardware---one using fences, another
using \emph{speculative load hardening} (SLH)~\cite{SLH}.

\mypara{A Type System for Speculation}
Our second contribution is an approach to conservatively approximating
the dynamic semantics of speculation via a \emph{static type sytem}
that types each expression as either transient ($\Transient$),
\ie expressions that \emph{may} contain speculative secrets, or
stable ($\Concrete$), \ie those that cannot (\S~\ref{sec:types}).
Our system prohibits speculative leaks by requiring that all
\emph{sink}  expressions that can influence intrinsic attacker
visible behavior (\eg cache addresses) are typed as stable.
The type system does \emph{not} rely on user-provided security annotations to
identify sensitive sources and public sinks.
% %
% In particular, these annotations do not necessarily determine the sensitivity of \emph{transient} data flows and thus they cannot help the type system to exclude the possibility of speculative leaks.
%
%
% Therefore, programs that exhibit any flow of
% information from transient sources to stable sinks are
% \emph{conservatively} rejected.
Instead, we \emph{conservatively} reject programs that exhibit any flow of
information from transient sources to stable sinks.
%
% \mv[inline]{This can be misleading, it seems to suggest that we could do better with security annotations, but we cannot.}
%
This, in turn, allows us to automatically identify speculative vulnerabilities
in existing cryptographic code (where secrets are not explicitly annotated).
We connect the static and dynamic semantics by proving that well-typed
constant-time programs are secure, i.e., they are also \emph{speculative
  constant-time}~\cite{pitchfork} (§~\ref{sec:correctness}).
This result extends the pre-existing guarantees about sequential constant-time execution of verified cryptographic code to our speculative execution model.
%
% correctness condition called speculative
% non-interference~\cite{Guarnieri20} which states that the program does
% not leak under speculative execution more than it would under
% sequential execution.

\mypara{Automatic Protection}
Existing programs that are free of \protectMS{} statements
are likely insecure under speculation (see Section~\ref{sec:eval}
and~\cite{pitchfork}) and will be rejected by our type system.
Thus, our third contribution is an algorithm that finds potential speculative
leaks and automatically synthesizes a \emph{minimal} number of \protectMS{}
statements to ensure that the program is speculatively constant-time
(\S~\ref{sec:fence-inference}).
To this end, we extend the type checker to construct
a \emph{def-use graph} that captures the data-flow between program
expressions.
The presence of a \emph{path} from transient sources to stable sinks
in the graph indicates a potential speculative leak in the program.
To repair the program, we only need to identify a \emph{cut-set}, a
set of variables whose removal eliminates all the leaky paths in the
graph.
We show that inserting a \protectMS{} statement for each variable in
a cut-set suffices to yield a program that is well-typed, and hence,
secure with respect to speculation.
%
%~(\S\ref{alltogether-corollary}).
%
Finding such cuts is an instance of the classic
Max-Flow/Min-Cut problem, so existing polynomial time algorithms let
us efficiently synthesize \protectMS{} statements that resolve the
dilemma of enforcing security with \emph{minimal} number of protections.

\mypara{\tool Tool}
Our final contribution is an automatic push-button tool, \tool, which
eliminates potential speculative leaks using this min-cut algorithm.
\tool extends the Cranelift compiler~\cite{cranelift}, which compiles
WebAssembly (Wasm) to x86 machine code; thus, \tool operates on programs
expressed in Wasm.
However, operating on Wasm is not fundamental to our approach---we believe
that \tool's techniques can be applied to other programming languages and
bytecodes.\looseness=-1

We evaluate \tool by repairing verified yet vulnerable (to transient execution
attacks) constant-time cryptographic primitives from Constant-time Wasm
(CT-Wasm)~\cite{Watt:2019} and HACL*~\cite{HACL} (\S~\ref{sec:eval}).
Compared to existing fully automatic speculative mitigation approaches (as
notably implemented in Clang), \tool inserts an order of magnitude fewer
protections (fences or SLH masks).
%
%In fact, in defending from Spectre v1, \tool inserts no protections in the
%CT-Wasm implementations---showing that they are secure in our type system
%without any modifications.
%
\tool's fence-based implementation imposes modest performance overheads:
(geometric mean) 5.0\% performance overhead on our benchmarks to defend from Spectre
v1, or 15.3\% overhead to also defend from Spectre v1.1.  Both results are
significant gains over current solutions.
Our fence-free implementation, which automatically inserts SLH masks, is
faster in the Spectre v1 case---geometric mean 1.7\% overhead---but slower when
including Spectre v1.1 protections, imposing 26.6\% overhead.

%Our automatic fence placement imposes modest overheads: roughly \Red{XYZ}\%
%overhead on the CT-Wasm benchmarks, and \Red{XYZ}\% on the HACL* benchmarks.
%%
%Our fence-free implementation, which automatically inserts SLH masks, imposes
%similar overheads on the CT-Wasm benchmarks (\Red{XYZ}\%), but is faster on the
%larger HACL* benchmarks (\Red{XYZ}\%).

%% \mypara{Contributions}
%% %
%% To summarize, we make the following contributions
%% %
%% \begin{CompactItemize}
%% \item A speculative execution semantics,
%% \item A type system that enforces speculative non-interference
%% \item An abstract primitive called~$\mathbf{protect}$ that
  %% serves as fine-grained fence an algorithm that inserts a minimal
  %% number of $\mathbf{protect}$s in order to make the program secure
%% \item An evaluation of our method on several
  %% cryptographic libraries which shows that it can repair programs without
  %% user intervention, inserts two orders of magnitude less fences and
  %% comes with negligable performance overhead.
%% \end{CompactItemize}

%%% Local Variables:
%%% mode: latex
%%% TeX-master: "Makefile"
%%% End:

% overview
\section{Overview}
\label{sec:overview}
This section gives a brief overview of the kinds of speculative leaks
that \tool identifies and how it repairs such programs by careful placement of
\protectMS{} statements.
We then describe how \tool (1) automatically repairs existing programs using
our minimal \protectMS{} inference algorithm and (2) proves that the repairs are
correct using our transient-flow type system.

\subsection{Two Kinds of Speculative Leaks}
\label{subsec:kinds_of_leaks}
\definecolor{burgundy}{rgb}{0.5, 0.0, 0.13}
\begin{figure}[t]
% \setlength{\abovedisplayskip}{0pt}
% \setlength{\belowdisplayskip}{0pt}
% \setlength{\abovedisplayshortskip}{0pt}
% \setlength{\belowdisplayshortskip}{0pt}
%\begin{framed}
  \centering
  \begin{lstlisting}[language=C,
    keywordstyle=\color{blue}\ttfamily\bfseries,%
    keywords={uint8_t,void,if, int, return},
    basicstyle=\ttfamily\small,
    numbers=left,
    emph={load128_be, store128_be},
    emphstyle=\color{cyan}\ttfamily,
    escapechar=&,
    commentstyle=\color{burgundy}\ttfamily,
    columns=fullflexible,
    numbersep=.5pt,
    xleftmargin=2.5ex]
   void SHA2_update_last(int *input_len, ...)
   {
     if (! valid(input_len)) { return; } &\label{line:validate}& // Input validation
     int len = &\ghstPatch{protect}&(*input_len);  &\label{line:protect}& // Can speculatively read secret data
     ...
     int *dst3 = len + base;  &\label{line:double_index:dest}& // Secret-tainted address
     ...
     *dst3 = pad; &\label{line:double_index:store}&  // Secret-dependent memory access
     ...
     for ( i = 0; i < len + ...)       &\label{line:control:leak}&  // Secret-dependent branch
       dst2[i] = 0;                                   &\label{line:control:write}&
     ...
   }
  \end{lstlisting}
%\end{framed}
%\vspace{-\baselineskip}
\caption{Code fragment adapted from the HACL* SHA2 implementation, containing
  two potential speculative execution vulnerabilities: one through the data
  cache by writing memory at a secret-tainted address, and one through the
  instruction cache via a secret-tainted control-flow dependency.
  The patch computed by \tool is shown in \ghstPatch{green}.
  }
\label{fig:sha2}
\end{figure}
%
%
% \mv[inline]{We show the patch using protect, but from this example is not clear how to apply SLH.}
%
Figure \ref{fig:sha2} shows a code fragment of the
\text{\ttfamily SHA2\char95{}update\char95{}last} function, a core piece of the SHA2 cryptographic hash
function implementation, adapted (to simplify exposition) from the \Hacl
library.
%
% Though \tool operates on WebAssembly, we present equivalent simplified
% C code for readability.
%
%% \mv[inline]{Potentially confusing: Do we work at binary or source code
%%   level, or both? Here we talk about WebAssembly but the type system
%%   and semantics are for while-language. Maybe we could have a diagram
%%   with inputs/outputs explaining the work-flow? Or at least we should
%%   say that the type system can be used for source code, but we can
%%   still apply the graph algorithm on binaries by reconstructing their
%%   control- and data-flow graph. Wasm simplifies this because the
%%   assembly code is structured as well.}
%
This function takes as input a pointer \text{\ttfamily input\char95{}len}, validates the
input (line~\ref{line:validate}), loads from memory the public length
of the hash (line~\ref{line:protect}, ignore \ghstPatch{\text{\ttfamily protect}} for now),
calculates a target address \text{\ttfamily dst3}~(line~\ref{line:double_index:dest}),
and pads the buffer pointed to by
\text{\ttfamily dst3}~(line~\ref{line:double_index:store}).
Later, it uses \text{\ttfamily len} to determine the number of initialization rounds in
the condition of the for-loop on line~\ref{line:control:leak}.

\mypara{Leaking Through a Memory Write}
During normal, sequential execution this code is not a problem: the
function validates the input to prevent classic buffer-overflow
vulnerabilities.
However, during speculation, an attacker can exploit this function to leak
sensitive data.
To do this, the attacker first has to
% CD: the below line doesn't make sense to me to be placed here
%modify the value that the pointer \verb|input_len| holds during speculation.
\emph{mistrain} the branch predictor to predict the next input to be valid.
Since \text{\ttfamily input\char95{}len} is a function parameter, the attacker can do this
by, \eg calling the function repeatedly with legitimate addresses.
%, training the branch predictor to predict the next input to be valid.
%
After mistraining the branch predictor this way, the attacker manipulates
\text{\ttfamily input\char95{}len} to point to an address containing secret data and calls the
function again, this time with an invalid pointer.
As a result of the mistraining, the branch predictor causes the processor to
skip validation and load the secret into \text{\ttfamily len}, which in turn is used to
calculate pointer \text{\ttfamily dst3}.
The location pointed to by \text{\ttfamily dst3} is then written on
line~\ref{line:double_index:store}, leaking the secret data.
Even though pointer \text{\ttfamily dst3} is invalid and the subsequent write will not
be visible at the program level (the processor disregards it), the
side-effects of the memory access persist in the cache and therefore become
visible to the attacker.
%
%% (either directly or indirectly~\cite{STT-Yu-19})
In particular, the attacker can extract the target address---and thereby the
secret---using cache timing measurements~\cite{cache-survey}.

\mypara{Leaking Through a Control-Flow Dependency}
The code in \Cref{fig:sha2} contains a second potential speculative
vulnerability, which leaks secrets through a control-flow dependency instead
of a memory access.
%
% Like the previous vulnerability, the function is harmless under
% sequential execution, but leaks under speculation.
%
To exploit this vulnerability, the attacker can similarly manipulate the pointer
\text{\ttfamily input\char95{}len} to point to a secret after mistraining the branch predictor to
skip validation.
But instead of leaking the secret directly through the data cache, the
attacker can leak the value \emph{indirectly} through a control-flow
dependency: in this case, the secret determines how often the initialization
loop (\cref{line:control:leak}) is executed during speculation.
The attacker can then infer the value of the secret from a timing attack on
the instruction cache or (much more easily) on iteration-dependent
lines of the data cache.\looseness=-1
%Dt: changed because timing attack on icache would be
%extremely tough because couldn't slow it down and still exploit spec execution.
%The attack on benign iteration-dependent data would be easy.}

\subsection{Eliminating Speculative Leaks}

% Next, we show how \tool fixes these speculative leaks, focusing
% on~\Cref{fig:sha2} as a running example.

\mypara{Preventing the Leak using Memory Fences}
Since these leaks exploit the fact that input validation is speculatively
skipped, we can prevent them by making sure that dangerous operations such as
the write on \cref{line:double_index:store} or the loop condition check on
\cref{line:control:leak} are not executed until the input has been validated.
\citet{Intel-sw-mitigations},
\citet{AMD-sw-mitigations}, and others~\cite{MSCVC, MSCVC-2020} recommend
doing this by inserting a speculation barrier after critical validation check-points.
This would place a \emph{memory fence} after line 3, but anywhere between lines 3 and 8 would work.
%
%% Inserting a fence has two effects.
%
%% First,
This fence would stop speculation over the fence: statements after the fence
will not be executed until all statements up to the fence (including input
validation) executed.
%
%% Second, a fence ensures that all previous stores have been serialized,
%% \ie the fence prevents reordering of stores due to weak-memory
%% semantics.
%
While fences can prevent leaks, using fences as such is more restrictive than
necessary---they stop speculative execution of all following instructions, not
only of the instructions that leak---and thus incur a high performance
cost~\cite{Tkachenko18,tullsen-asplos}.\looseness=-1

\mypara{Preventing the Leak Efficiently}
We do not need to stop all speculation to prevent leaks.
Instead, we only need to ensure that potentially secret data, when read speculatively, cannot be leaked.
To this end, we propose an alternative way to stop speculation from reaching the
operations on \cref{line:double_index:store} and \cref{line:control:leak},
through a new primitive called \protectMS{}.
Rather than eliminate \emph{all} speculation, \protectMS{} only
stops speculation along a \emph{particular data-path}.
We use \protectMS{} to patch the program on
\cref{line:protect}.
Instead of assigning the value \text{\ttfamily len} directly from the result of the load, the
memory load is guarded by a \protectMS{} statement.
This ensures that the value assigned to \text{\ttfamily len} is
always guaranteed to use the \text{\ttfamily input\char95{}len} pointer's final, nonspeculative value.
This single \ghstPatch{\protectMS{}} statement on \cref{line:protect} is
sufficient to fix both of the speculative leaks described in
\Cref{subsec:kinds_of_leaks}---it prevents any speculative, secret data from
reaching lines~\ref{line:double_index:store} or~\ref{line:control:leak} where
it could be leaked to the attacker.
%

%
% \begin{figure}[t]
%   \centering
%   %

% \begin{lstlisting}[language=C,
%   keywordstyle=\color{blue}\ttfamily\bfseries,%
%   keywords={uint8_t,void,if, int,for},
%   basicstyle=\ttfamily\small,
%   numbers=left,
%   emph={load128_be, store128_be},
%   emphstyle=\color{cyan}\ttfamily,
%   escapechar=&,
%   commentstyle=\color{burgundy}\ttfamily,
%   xleftmargin=5.0ex]
%  void SHA2_update_last(int *input_len,...)
%   {
%     if (! valid(input_len)) { ... } &\label{line:validate2}&
%     int len = &\ghstPatch{protect}&(*input_len);  &\label{line:control:load}&
%     ...
%     for ( i = 0; i < len + ...)       &\label{line:control:leak}&
%       dst2[i] = 0;                                   &\label{line:control:write}&
%     ...
%   }
% \end{lstlisting}
%   %
% \caption{SHA2 code fragment containing a potential speculative
%   execution vulnerability that leaks \emph{implicitly} through a
%   control-flow dependency.  The patch computed by \tool is shown in
%   \ghstPatch{green}.  }
%   \label{fig:sha2:control-dep}
% \end{figure}
%
%    &{int len\_safe = (*input\_len)} &    &\label{line:control:protect}&
%

\mypara{Implementation of protect}
Our \protectMS{} primitive provides an \emph{abstract interface} for
fine grained control of speculation.
This allows us to eliminate speculation-based
leaks precisely and only when needed.
% Don't think the proposal is substantial enough to be compared
% (indeed a similar API has been
% proposed in~\cite{protect-proposal}).
%
However, whether \protectMS{} can eliminate leaks with tolerable
runtime overhead depends on its concrete implementation.
We consider and formalize two implementations: an ideal implementation and one
we can implement on today's hardware.
To have fine grain control of speculation, \protectMS{} must be implemented in
\emph{hardware} and exposed as part of the ISA.
Though existing processors provide only coarse grained control
over speculation through memory fence instructions,
%
% These instructions stop \emph{all} speculation and therefore eliminate
% speculation-based leaks, but cause more runtime performance overhead
% than needed.
%
this might change in the future.
For example, recently proposed microprocessor
designs~\cite{tullsen-asplos,STT-Yu-19} provide new hardware
mechanisms to control speculation, in particular to restrict targeted types of
speculation while allowing other speculation to proceed: this suggests that \protectMS{}
could be implemented efficiently in hardware in the future.\looseness=-1
Even if future processors implement \protectMS{}, we still need to
address Spectre attacks on existing hardware.
Hence, we formalize and implement \protectMS{} in \emph{software},
building on recent Spectre attack mitigations~\cite{ConTExT20}.
%
% However, adopting these solutions is challenging because they
% require some cooperation between a modified operating system, the
% compiler, and the hardware, as well as explicit annotations for secret
% data.
%
Specifically, we propose a self-contained approach
inspired by Clang's Speculative Load Hardening (SLH)~\cite{SLH}.
%  a software mitigation
% deployed in the code generated by Clang .
%
% \mv[inline]{Deian: I have made the distinction between SLH in clang and in Blade according to the FCS reviews. Please take a pass here.}
%
At a high level, Clang's SLH stalls speculative load instructions in a
conditional block by inserting artificial data-dependencies between
loaded addresses and the value of the condition.
This ensures that the load
is not executed before the branch condition is resolved.
%
% the coarse-grained approach with which Clang applies SLH is both overly restrictive  and unsound at the same time.
%
Unfortunately, this approach unnecessarily stalls \emph{all} non-constant conditional load instructions, regardless of whether they can influence a subsequent load and thus can actually cause speculative data leaks.
Furthermore, this approach is unsound---it can also miss some speculative leaks, e.g., if a load instruction is not in the same conditional block that validates its address (like the code in \Cref{fig:sha2}).
% or if the condition does not validate the address properly and the load can  still read secret data.
%
In contrast to Clang, our approach applies SLH \emph{selectively}, only to individual load instructions whose
result flows to an instruction which might leak, and \emph{precisely}, by
using accurate bounds-check conditions to ensure that only data from valid
addresses can be loaded.\looseness=-1
%

%
% \Red{
% To support our proposal, we formalize the ideal hardware-based and the
% software SLH-based semantics of \protectMS{}
% (§\ref{subsec:security-rules}) and prove that both primitives can
% eliminate speculation-based leaks.
% %
% Then, we evaluate how many hardware-based and SLH-based \protectMS{}
% statements are needed to patch \Hacl and measure the runtime
% performance overhead of the software implementation of \protectMS{}
% in \Cref{sec:eval}.}
% %
% \mv[inline]{Are we still evaluating the number of ideal, HW protect? }
%
% \mv[inline]{Should I be more explicit about the fact that SLH-based
%   \protectMS{} is more restrictive?}

% \subsection{Computing Fixes Via Minimal \protectMS{} Inference}
\subsection{Automatically Repairing Speculative Leaks via \protectMS{}-Inference}
\label{sec:computing-minimal}
\begin{figure}[t]

  \begin{subfigure}[b]{.5\columnwidth}
  \centering
  \begin{tabular}{l}
  \multicolumn{1}{c}{\exNum{1}} \\
  \hline
  \ensuremath{\Varid{x}\mathbin{:=}\Varid{a}{[}\Varid{i}_{1}{]}} \\
  \ensuremath{\Varid{y}\mathbin{:=}\Varid{a}{[}\Varid{i}_{2}{]}} \\
  \ensuremath{\Varid{z}\mathbin{:=}\Varid{x}\mathbin{+}\Varid{y}} \\
  \ensuremath{\Varid{w}\mathbin{:=}\Varid{b}{[}\Varid{z}{]}}
  \end{tabular}
  \end{subfigure}%
  \begin{subfigure}[b]{.5\columnwidth}
  \centering
  \begin{tabular}{l}
  \multicolumn{1}{c}{\exNum{1} \textsc{Patched}} \\
  \hline
  \ensuremath{\Varid{x}\mathbin{:=}\textcolor{BurntOrange}{\mathbf{protect}}(\Varid{a}{[}\Varid{i}_{1}{]})} \\
  \ensuremath{\Varid{y}\mathbin{:=}\textcolor{BurntOrange}{\mathbf{protect}}(\Varid{a}{[}\Varid{i}_{2}{]})} \\
  \ensuremath{\Varid{z}\mathbin{:=}\textcolor{ForestGreen}{\mathbf{protect}}(\Varid{x}\mathbin{+}\Varid{y})} \\
  \ensuremath{\Varid{w}\mathbin{:=}\Varid{b}{[}\Varid{z}{]}}
  \end{tabular}
  \end{subfigure}

  % \begin{lstlisting}[language=C,
  % keywordstyle=\color{blue}\ttfamily,
  % basicstyle=\ttfamily\small,
  % %numbers=left,
  % keywords={uint,void, for, uint32_t},
  % emph={load128_be, store128_be},
  % emphstyle=\color{cyan}\ttfamily,
  %   escapechar=&,
  % commentstyle=\color{burgundy}\ttfamily
  % ]
  %   &|x := array a i1|\label{line:running:load1}&          &|x := protectOrange(array a i1)|&
  %   &|y := array a i2|\label{line:running:load2}&          &|y := protectOrange(array a i2)|&
  %   &|z := x + y|\label{line:running:index}&          &|z := protectGreen(x + y)|&&\label{line:running:protect}&
  %   &|w := array b z|\label{line:running:leak}&
  % \end{lstlisting}
   \caption{Running example. Program \exNum{1} is shown on the left and the patched program is shown on the right.
The \orangePatch{orange} patch is sub-optimal because it requires more \ensuremath{\mathbf{protect}} statements than the optimal
\ghstPatch{green} patch.
% The \ghstPatch{green} patch is \emph{optimal} and
% The optimal patch computed by
%      \tool is shown in . A sub-optimal patch is shown
%      in \orangePatch{orange}.
%
\label{fig:ex1-code}}
\end{figure}
\tool automatically finds potential speculative leaks and synthesizes a
\emph{minimal} number of \protectMS{} statements to eliminate the leaks.
% \tool automatically repairs speculative leaks by inferring the placement of
% \protectMS{} statements.
%
We illustrate this process using the simple program
\exNum{1} in Figure~\ref{fig:ex1-code} as a running example.
The program reads two values from an array (\ensuremath{\Varid{x}\mathbin{:=}\Varid{a}{[}\Varid{i}_{1}{]}} and \ensuremath{\Varid{y}\mathbin{:=}\Varid{a}{[}\Varid{i}_{2}{]}}), adds them (\ensuremath{\Varid{z}\mathbin{:=}\Varid{x}\mathbin{+}\Varid{y}}), and indexes another array with
the result (\ensuremath{\Varid{w}\mathbin{:=}\Varid{b}{[}\Varid{z}{]}}).
This program is written using our formal calculus in which all array operations are implicitly bounds-checked and thus no explicit validation code is needed.

Like the SHA2 example from \Cref{fig:sha2}, \exNum{1} contains a speculative execution
vulnerability: the speculative array reads could bypass their bounds checks and
so \ensuremath{\Varid{x}} and \ensuremath{\Varid{y}} can contain transient secrets (\ie secrets introduced by
misprediction).
This secret data then flows to \ensuremath{\Varid{z}}, and finally leaks through the data
cache when reading \ensuremath{\Varid{b}{[}\Varid{z}{]}}.

\mypara{Def-Use Graph}
\begin{figure}[t]
  \centering
  \begin{tikzpicture}[node distance=0.4cm and 0.6cm,minimum size=0.4cm,auto]
    % nodes

    \node[draw, circle, color=\sourceColor, thick, inner sep=2pt] (src)
    {\graphsize $\Transient$};

    \node[draw, inner sep=3pt] (ai2src) [below right=0cm and 0.4cm of src]
    {\graphsize $a[i_2]$};

%    \node[color=\sourceColor]  at (ai2src.0) {$\bullet$};

     \node[draw, inner sep=2pt] (ai1src)
     [above = of ai2src] {\graphsize $a[i_1]$};

%     \node[color=\sourceColor]  at (ai1src.0) {$\bullet$};

     % \node[draw, inner sep=2pt] (bzsrc) [above = of src]
     % {\graphsize $b[z]$};

%     \node[color=\sourceColor]  at (bzsrc.180) {$\bullet$};

    \node[draw,inner sep=1pt] (x) [right=of ai1src] {\graphsize
      $x$};

%        \node[draw,inner sep=1pt, color=gray] (alphax) [above right =
%        0.3cm of x] {\graphsize $\alpha_x$};

    \node[draw,inner sep=1pt] (y) [right=of ai2src] {\graphsize $y$};

%    \node[draw,inner sep=1pt, color=gray] (alphay) [below right = 0.3cm of y] {\graphsize
%      $\alpha_y$};

      \node[draw,inner sep=2pt] (xplusy) [above right = 1pt and 0.4cm of y]
    {\graphsize $x+y$};

    \node[draw,inner sep=1pt] (z) [ right = of xplusy]
    {\graphsize $z$};

%    \node[draw,inner sep=1pt, color=gray] (alphaz) [above left = 0.3cm of z] {\graphsize
%      $\alpha_z$};

    %\node[draw,inner sep=2pt, color=\sinkColor] (bzsnk) [right=of z]
    %{\graphsize $\grayOf{b[}\sinkColorOf{z}\grayOf{]}$};

%    \node[color=\sinkColor] at (bzsnk.181) {$\bullet$};

    \node[draw,circle, color=\sinkColor, thick, inner sep=2pt] (snk) [right = of z]
    {\graphsize $\Concrete$};

    \node (box) [color=BurntOrange,draw,rounded corners, dashed, very thick, fit = (x) (y)] {};
    \node (box) [color=ForestGreen,draw,rounded corners, dashed, very thick, fit = (z)] {};

%    \node[draw, inner sep=2pt, color=\sinkColor] (ai2snk) [above left = of snk]
%    {\graphsize $\grayOf{a[}\sinkColorOf{i_2}\grayOf{]}$};

%    \node[color=\sinkColor] at (ai2snk.181) {$\bullet$};

%    \node[draw, inner sep=2pt, color=\sinkColor] (ai1snk) [below left =of snk]
%    {\graphsize $\grayOf{a[}\sinkColorOf{i_1}\grayOf{]}$};

%    \node[color=\sinkColor] at (ai1snk.181) {$\bullet$};

   %  \node[draw, inner sep=2pt] (i1) [above left=of snk]
   %  {\graphsize $i_1$};
   % \node[draw, inner sep=2pt] (i2) [below left=of snk] {\graphsize $i_2$};

    % edges
   \path[->] (ai2src) edge node {} (y);
   \path[->] (ai1src) edge node {} (x);
   \path[->] (z) edge node {} (snk);
   % \path[->] (i2) edge node {} (snk);
   % \path[->] (i1) edge node {} (snk);
%   \path[->]  (x) edge node {} (alphax);
%   \path[->]  (y) edge node {} (alphay);
%   \path[->]  (z) edge node {} (alphaz);

%   \path[->, bend right] (i1) edge node {} (ai1src);
%    \path[->, bend left] (i2) edge node {} (ai2src);

    \path[->]
        (src) edge node {} (ai1src)
        (src) edge node {} (ai2src)
        % (src) edge node {} (bzsrc)
        (x) edge node {} (xplusy)
        (y) edge node {} (xplusy)
        (xplusy) edge node {} (z)
    ; %end path
    \end{tikzpicture}
    \caption{\emph{Subset} of the def-use graph of \exNum{1}.
      %\sourceColorOf{Source} nodes are shown in red, and \sinkColorOf{sink} nodes in green.
      %
      The dashed lines identify two valid choices of cut-sets. The \textcolor{BurntOrange}{\textbf{left cut}} requires removing two nodes and thus
inserting two \protectMS{} statements. The \textcolor{ForestGreen}{\textbf{right cut}} shows a
minimal solution, which only requires removing a single node.}
\label{fig:ex1:data-flow}
\end{figure}
To secure the program, we need to \emph{cut the dataflow} between the
array reads which could introduce \emph{transient} secret values into
the program, and the index in the array read where they are leaked
through the cache.
For this, we first build a \emph{def-use graph} whose nodes and
directed edges capture the data dependencies between the expressions
and variables of a program.
For example, consider (a subset of) the def-use graph of program
\exNum{1} in \Cref{fig:ex1:data-flow}.
% \footnote{We omit some
%   irrelevant nodes and edges of the graph for readability.}
%
In the graph, the edge \ensuremath{\Varid{x}\to \Varid{x}\mathbin{+}\Varid{y}} indicates that \ensuremath{\Varid{x}} is used to
compute \ensuremath{\Varid{x}\mathbin{+}\Varid{y}}.
% \mv[inline]{Where is the ambiguity in this example? All variables are different}
% \footnote{To avoid ambiguities in the graph, we assume
%   that each variable is assigned at most \emph{once}, \ie the code is
%   in static single assignment form.}
%
To track how transient values propagate in the def-use graph, we
extend the graph with the special node~$\Transient$, which
represents the \emph{source} of \sourceColorOf{\emph{transient}}
values of the program.
Since reading memory creates transient values, we connect the
$\Transient$ node to all nodes containing expressions that explicitly
read memory, \eg \ensuremath{\ensuremath{\Transient}\to \Varid{a}{[}\Varid{i}_{1}{]}}.
Following the data dependencies along the edges of the def-use graph,
we can see that node \ensuremath{\ensuremath{\Transient}} is transitively connected to node \ensuremath{\Varid{z}}, which
indicates that \ensuremath{\Varid{z}} can contain transient data at runtime.
To detect insecure uses of transient values, we then extend the graph
with the special node \ensuremath{\ensuremath{\Concrete}}, which represents the \emph{sink} of
\sinkColorOf{\emph{stable}} (\ie non-transient) values of a program.
Intuitively, this node draws all the values of a program that
\emph{must} be stable to avoid transient execution attacks.
Therefore, we connect all expression used as array indices in the
program to the \ensuremath{\ensuremath{\Concrete}} node, \eg \ensuremath{\Varid{z}\to \ensuremath{\Concrete}}.
The fact that the graph in Figure~\ref{fig:ex1:data-flow} contains a
\emph{path} from \ensuremath{\ensuremath{\Transient}} to \ensuremath{\ensuremath{\Concrete}} indicates that transient data flows
through data dependencies into (what should be) a stable index
expression and thus the program may be leaky.
\mypara{Cutting the Dataflow}
In order to make the program safe, we need to \emph{cut} the data-flow
between $\Transient$ and $\Concrete$ by introducing
\protectMS{} statements.
This problem can be equivalently restated as follows: find a
\emph{cut-set}, \ie a set of variables, such that removing the
variables from the graph eliminates all paths from ~$\Transient$ to
$\Concrete$.
Each choice of cut-set defines a way to repair the program: simply add
a \protectMS{} statement for each variable in the set.
\Cref{fig:ex1:data-flow} contains two choices of cut-sets, shown as
dashed lines.
The cut-set on the left requires two protect statements, for variables
$x$ and $y$ respectively, corresponding to the \orangePatch{orange}
patch in \Cref{fig:ex1-code}.
The cut-set on the right is \emph{minimal}, it requires only a single
protect, for variable~$z$, and corresponds to the \ghstPatch{green}
patch in \Cref{fig:ex1-code}.
Intuitively, minimal cut-sets result in patches that introduce as few
\protectMS{}s as needed and therefore allow more speculation.
Luckily, the problem of finding a minimal cut-set is an instance of the classic
Min-Cut/Max-Flow problem, which can be solved using efficient, polynomial-time
algorithms~\cite{MinCut}.
%m
For simplicity, \tool adopts a uniform cost model and therefore synthesizes
patches that contain a minimal \emph{number} of \protectMS{} statements,
regardless of their position in the code and how many times they can be
executed.
Though our evaluation shows that even this simple model imposes modest
overheads (\S\ref{sec:eval}), our implementation can be easily optimized by
considering additional criteria when searching for a minimal cut set, with
further performance gain likely.
For example, we could assign weights proportional to execution frequency, or
introduce penalties for placing \protectMS{} inside loops.

\subsection{Proving Correctness via Transient-Flow Types}
\label{sec:proving}
%
% \tool only needs the def-use graph of a program to detect vulnerabilities and
% synthesize patches.
%
To ensure that we add \protectMS{} statements in all the right places (without formally
verifying our repair algorithm),
we use a type system to \emph{prove} that patched
programs are secure, \ie they satisfy a \emph{semantic} security condition.
The type system simplifies the security analysis---we can reason about program
execution directly rather than through generic flows of information in the
def-use graph.
Moreover, restricting the security analysis to the type system
makes the security proofs independent of the specific algorithm used
to compute the program repairs (\eg the Max-Flow/Min-Cut algorithm).
As long as the repaired program type checks, \tool's
formal guarantees hold.
To show that the patches obtained from cutting the def-use graph of a
given program are \emph{correct} (\ie they make the program
well-typed), our transient-flow type system constructs its def-use
graph from the type-constraints generated during type inference.
%
% We then prove that the patches computed from the def-use graph of the
% program make the program well-typed and thus secure.
%
%

\mypara{Typing Judgement}
Our type system statically assigns a transient-flow type to each
variable: a variable is typed as \sourceColorOf{\emph{transient}}
(written as $\Transient$), if it can contain transient data (\ie
potential secrets) at runtime, and as \sinkColorOf{\emph{stable}}
(written as $\Concrete$), otherwise.
For instance, in program \exNum{1} (Fig.~\ref{fig:ex1:data-flow}) variables \ensuremath{\Varid{x}} and \ensuremath{\Varid{y}} (and hence \ensuremath{\Varid{z}}) are typed as
\sourceColorOf{\emph{transient}} because they may temporarily contain secret data originated from speculatively reading the array out of bounds.
Given a typing environment~$\Gamma$ which assigns a transient-flow
type to each variable,
%% a set of protected variables $\protectedSet$
and a command~$c$, the type system defines a
judgement~$\Gamma \vdash c$ saying that $c$ is
free of speculative execution bugs (\S~\ref{sec:types}).
%
% More on this at the end of the paragraph.
% % constant-time then it is constant-time also under
% speculative execution.
%
% \mv[inline]{Soundness is technically only in combination with a
%   standard constant type system. But intuitively this is the correct
%   interpretation.  }
%
%% under the assumption that each expressions in~$\protectedSet$ is
%% guarded by a~$\protectMS{}$ statement.
%
The type system ensures that transient expressions are not used in
positions that may leak their value by affecting memory reads and
writes, \eg they may not be used as array indices and in loop
conditions.
Additionally, it ensures that transient expressions are not
written to stable variables, except via \protectMS{}.
For example, our type system rejects program \exNum{1} because it uses  \sourceColorOf{\emph{transient}} variable \ensuremath{\Varid{z}} as an array index, but it accepts program \exNum{1} \textsc{Patched} in which
\ensuremath{\Varid{z}} can be typed \sinkColorOf{\emph{stable}} thanks to the \protectMS{} statements.
\New{We say that variables whose assignment is guarded by a \ensuremath{\mathbf{protect}} statement (like \ensuremath{\Varid{z}} in \exNum{1} \textsc{Patched}) are ``protected variables''.}

To study the security guarantees of our type system, we define a JIT-step semantics for speculative execution of a high-level \while
language (\S~\ref{sec:semantics}), which
resolves the tension between source-level reasoning and machine-level speculation.
We then show that our type system indeed prevents speculative execution
attacks, i.e., we prove that well-typed programs
remain constant-time under the speculative semantics (\S~\ref{sec:correctness}).
%

%
%   speculatively more than
% \emph{sequentially}, that is by executing their statements in-order
% and without speculation .

\mypara{Type Inference}
Given an input program, we construct the corresponding def-use graph
by collecting the type constraints generated during type inference.
Type inference is formalized by a typing-inference judgment \ensuremath{\Gamma,\protectedSet\vdash\Varid{c}\cnsSep\Varid{k}} (\S~\ref{sec:fence-inference}), which extends the
typing judgment from above with (1) a set of \New{implicitly} protected variables \ensuremath{\protectedSet} (the \emph{cut-set}), and (2)  a set of type-constraints \ensuremath{\Varid{k}}
(the \emph{def-use graph}).
\New{Intuitively, the set \ensuremath{\protectedSet} identifies the variables of program \ensuremath{\Varid{c}} that
contain \sourceColorOf{\emph{transient}} data and that must be protected (\ie
they must be made \sinkColorOf{\emph{stable}} using \ensuremath{\mathbf{protect}}) to avoid
leaks.}
%
% Informally, type inference starts with an \emph{initial} typing
% environment and protected set that allow any program to type-check
%.
%
At a high level, type inference has 3 steps: \emph{(i)} generate a set
of constraints under an \emph{initial} typing environment and
protected set that allow any program to type-check, \emph{(ii)}
construct the def-use graph from the constraints and find a cut-set
(the \emph{final} protected set), and \emph{(iii)} compute the
\emph{final} typing environment which types the variables in the
cut-set as stable.
%
% \mv[inline]{This bit reads too technical for the ``overview'' and it
%   is also repeated later in the type inference section. I think we
%   could cut it. DS: I think it's less technical than the sentences before and kind of useful. It's okay to repeat, but cut if need space.}
%
To characterize the security of a still \emph{unrepaired} program
after type inference, we define a typing judgment \ensuremath{\Gamma,\protectedSet\vdash\Varid{c}},
where unprotected variables are explicitly accounted for in the \ensuremath{\protectedSet}
set.\footnote{The judgment \ensuremath{\Gamma\vdash\Varid{c}} is just a short-hand for
  \ensuremath{\Gamma,\varnothing\vdash\Varid{c}}.}
Intuitively, the program is secure if we promise to insert a
\protectMS{} statement for each variable in \ensuremath{\protectedSet}.
To repair programs, we simply honor this promise and insert a protect
statement for each variable in the protected set of the typing
judgment obtained above.
Once repaired, the program type checks under an \emph{empty} protected set.\looseness=-1
 % \Red{and with the \emph{final} typing environment.}
% \mv[inline]{If we cut the three steps above, we should not mention the
%   ``final'' typing environment. DS: I think you can cut this regardless.}
%%  modified only to type the
%% \emph{now} protected variables as stable.
% \unsure[inline]{We don't need to modify the typing environment anymore
%   after repair right?}
%

\subsection{Attacker Model}
% \mv[inline]{Deian: Do we consider the BTB if we have functions?}
%
% Before moving to the details of our semantics and transient type
% system,
We delineate the extents of the security guarantees of our type system and repair algorithm by discussing the attacker model considered in this work.
We assume an attacker model where the attacker runs cryptographic code, written
in Wasm, on a speculative out-of-order processor; the attacker can influence
how programs are speculatively executed using the branch predictor and choose
the instruction execution order in the processor pipeline.
\New{
The attacker can make predictions based on control-flow history and
memory-access patterns similar to real, adaptive predictors.
Though these predictions \emph{can} depend on secret information in general,
we assume that they are secret-independent for the constant-time programs
repaired by \tool.
%
% Adversarial predictors that intentionally leak secrets (e.g., by reading
% secrets from memory) are out of scope.
}
%
% and, as usual, can choose the values of public inputs and
% observe public outputs, but may not read secret data (\eg
% cryptographic keys) in registers and memory.
%
% Additionally, the attacker can influence how programs are
% speculatively executed through the branch predictor and choose the
% instruction execution order in the processor pipeline.

We assume that the attacker can observe the effects of their actions on the
cache, even if these effects are otherwise invisible at the ISA level.
In particular, while programs run, the attacker can take precise timing
measurements of the data- and instruction-cache with a cache-line
granularity, and thus infer the value of secret data.
These features allow the attacker to mount Spectre-PHT
attacks~\cite{Kocher2018spectre,Kiriansky18} and exfiltrate data through
FLUSH+RELOAD~\cite{FLUSH+RELOAD} and PRIME+PROBE~\cite{PRIME+PROBE} cache
side-channel attacks.
We do not consider speculative attacks that rely on the Return Stack Buffer
(\eg Spectre-RSB~\cite{Maisuradze:2018,Koruyeh:2018}), Branch Target Buffer
(Spectre-BTB~\cite{Kocher2018spectre}), or Store-to-Load forwarding
misprediction (Spectre-STL~\cite{spectrev4}, recently reclassified as a Meltdown attack~\cite{medusa}).
We similarly do not consider Meltdown attacks~\cite{Lipp2018meltdown} or
attacks that do not use the cache to exfiltrate data, \eg port contention
(SMoTherSpectre~\cite{Bhattacharyya:2019}).
%
%% Physical side-channels (power consumption, electromagnetic, etc.) are
%% out of scope.

%calculus & semantics

\section{A JIT-Step Semantics for Speculation}
\label{sec:semantics}

\begin{figure}[t]
  \centering
  \rulesize
%\begin{framed}
\begin{subfigure}[b]{.42\textwidth}
\setlength\mathindent{0pt}%
\begin{align*}
\text{Values }  \ensuremath{\Varid{v}}  &\ensuremath{\Coloneqq}  \ensuremath{\Varid{n}\;\;|\;\;\Varid{b}\;\;|\;\;\Varid{a}} \\
\text{Expr. } \ensuremath{\Varid{e}}  &\ensuremath{\Coloneqq}  \ensuremath{\Varid{v}\;\;|\;\;\Varid{x}\;\;|\;\;\Varid{e}\mathbin{+}\Varid{e}\;\;|\;\;\Varid{e}<\Varid{e}} \\
                   &\ \ensuremath{\;|\;\;\Varid{e}\;\otimes\;\Varid{e}\;\;|\;\;\Varid{e}\mathbin{?}\Varid{e}\mathbin{:}\Varid{e}} \\
                   &\ \ensuremath{\;|\;\;\mathit{length}(\Varid{e})\;\;|\;\;\Varid{base}(\Varid{e})} \\
\text{Rhs. }   \ensuremath{\Varid{r}}  & \ensuremath{\Coloneqq}  \ensuremath{\Varid{e}\;\;|\;\;\ast\Varid{e}\;\;|\;\;\Varid{a}{[}\Varid{e}{]}} \\
\text{Cmd. }   \ensuremath{\Varid{c}} & \ensuremath{\Coloneqq}  \ensuremath{\mathbf{skip}\;\;|\;\;\Varid{x}\mathbin{:=}\Varid{r}\;\;|\;\;\ast\Varid{e}\mathrel{=}\Varid{e}} \\
                   &\ \ensuremath{\;|\;\;\Varid{a}{[}\Varid{e}{]}\mathbin{:=}\Varid{e}\;\;|\;\;\mathbf{fail}\;\;|\;\;\Varid{c};\Varid{c}} \\
                   &\ \ensuremath{\;|\;\;\mathbf{if}\;\Varid{e}\;\mathbf{then}\;\Varid{c}\;\mathbf{else}\;\Varid{c}} \\
                   &\ \ensuremath{\;|\;\;\mathbf{while}\;\Varid{e}\;\mathbf{do}\;\Varid{c}} \\
                   &\ \ensuremath{\;|\;\;\Varid{x}\mathbin{:=}\mathbf{protect}(\Varid{r})}
\end{align*}
\caption{Source syntax.}
\label{fig:surface-syntax}
\end{subfigure}%
\hfill
\begin{subfigure}[b]{.58\textwidth}
\setlength\mathindent{0pt}%
\begin{align*}
\text{Instructions }  \ensuremath{\Varid{i}}  &\ensuremath{\Coloneqq} \ensuremath{\mathbf{nop}\;\;|\;\;\Varid{x}\mathbin{:=}\Varid{e}\;\;|\;\;\Varid{x}\mathbin{:=}\mathbf{load}(\Varid{e})} \\
                    &\ \ensuremath{\;|\;\;\mathbf{store}(\Varid{e},\Varid{e})\;\;|\;\;\Varid{x}\mathbin{:=}\mathbf{protect}(\Varid{e})} \\
                  &\ \ensuremath{\;|\;\;\mathbf{guard}(\Varid{e}^{\Varid{b}},\Varid{cs},\Varid{p})\;\;|\;\;\mathbf{fail}(\Varid{p})} \\
\text{Directives }   \ensuremath{\Varid{d}}  &\ensuremath{\Coloneqq} \ensuremath{\mathbf{fetch}\;\;|\;\;\mathbf{fetch}\;\Varid{b}\;\;|\;\;\mathbf{exec}\;\Varid{n}\;\;|\;\;\mathbf{retire}} \\
\text{Observations }   \ensuremath{\Varid{o}}  &\ensuremath{\Coloneqq} \ensuremath{\epsilon\;\;|\;\;\mathbf{read}(\Varid{n},\Varid{ps})\;\;|\;\;\mathbf{write}(\Varid{n},\Varid{ps})} \\
                  &\ \ensuremath{\;|\;\;\mathbf{fail}(\Varid{p})\;\;|\;\;\mathbf{rollback}(\Varid{p})} \\
\text{Predictions } \ensuremath{\Varid{b}}  & \ensuremath{\in} \ensuremath{\{\mskip1.5mu \mathbf{true},\mathbf{false}\mskip1.5mu\}} \\
\text{Guard Fail Ids. } \ensuremath{\Varid{p}}  & \ensuremath{\in} \ensuremath{\mathbb{N}} \\
\text{Instr. Buffers }     \ensuremath{\Varid{is}}  &\ensuremath{\Coloneqq} \ensuremath{\Varid{i}\mathbin{:}\Varid{is}\;\;|\;\;[\mskip1.5mu \mskip1.5mu]} \\
\text{Cmd Stacks }     \ensuremath{\Varid{cs}} & \ensuremath{\Coloneqq} \ensuremath{\Varid{c}\mathbin{:}\Varid{cs}\;\;|\;\;[\mskip1.5mu \mskip1.5mu]} \\
\text{Stores }     \ensuremath{\mu} & \ensuremath{\in} \ensuremath{\mathbb{N}\rightharpoonup\Conid{Value}} \\
\text{Var. Maps }    \ensuremath{\rho} & \ensuremath{\in} \ensuremath{\Conid{Var}\to \Conid{Value}} \\
\text{Config. }    \ensuremath{\Conid{C}} & \ensuremath{\Coloneqq} \ensuremath{\langle\Varid{is},\Varid{cs},\mu,\rho\rangle}
\end{align*}
\caption{Processor syntax.}
\label{fig:processor-syntax}
\end{subfigure}
%\end{framed}
\caption{Syntax of the calculus.}
\end{figure}
We now formalize the concepts discussed in the overview, presenting a semantics
in this section and a type system in \Cref{sec:type-system}.
\New{Our semantics are explicitly conservative and abstract, i.e., they only
model the \emph{essential} features required to capture Spectre-PHT attacks
on modern microarchitectures---namely, speculative and out-of-order execution.
We do not model microarchitectural features exploited by other attacks (e.g.,
Spectre-BTB and Spectre-RSB), which require a more complex semantics
model~\cite{pitchfork,balliu2019inspectre}.
}

\mypara{Language}
We start by giving a formal just-in-time step semantics for a \while language with
speculative execution.
We present the language's source syntax in Figure \ref{fig:surface-syntax}.
Its values consist of Booleans \ensuremath{\Varid{b}}, pointers \ensuremath{\Varid{n}} represented as natural
numbers, and arrays \ensuremath{\Varid{a}}.
\New{The calculus has standard expression constructs: values \ensuremath{\Varid{v}}, variables \ensuremath{\Varid{x}}, addition \ensuremath{\Varid{e}\mathbin{+}\Varid{e}}, and a comparison operator \ensuremath{\Varid{e}<\Varid{e}}.
To formalize the semantics of the software (SLH) implementation of \ensuremath{\mathbf{protect}(\cdot )} we
rely on several additional constructs: the bitwise AND operator \ensuremath{\Varid{e}\;\otimes\;\Varid{e}},
the \emph{non-speculative} conditional ternary operator \ensuremath{\Varid{e}\mathbin{?}\Varid{e}\mathbin{:}\Varid{e}}, and
primitives for getting the length ($\mathit{length}(\cdot)$) and base
address ($\mathit{base}(\cdot)$) of an array.
We do not specify the size and bit-representation of values and, instead, they
satisfy standard properties.
In particular, we assume that \ensuremath{\Varid{n}\;\otimes\;\bm{\mathrm{0}}\mathrel{=}\bm{\mathrm{0}}\mathrel{=}\bm{\mathrm{0}}\;\otimes\;\Varid{n}}, i.e., the bitmask \ensuremath{\bm{\mathrm{0}}} consisting of all 0s is the zero element for \ensuremath{\otimes} and corresponds to number \ensuremath{\mathrm{0}}, and \ensuremath{\Varid{n}\;\otimes\;\bm{\mathrm{1}}\mathrel{=}\Varid{n}\mathrel{=}\bm{\mathrm{1}}\;\otimes\;\Varid{n}}, i.e., the bitmask \ensuremath{\bm{\mathrm{1}}} consisting of all 1s is the unit element for \ensuremath{\otimes} and corresponds to some number \ensuremath{\Varid{n'}\neq\mathrm{0}}.
Commands include variable assignments, pointer dereferences, array
stores,\footnote{\New{The syntax for reading and writing arrays
    requires the array to be a constant value (i.e., \ensuremath{\Varid{a}{[}\Varid{e}{]}}) to
    simplify our security analysis. This simplification does not restrict the power of the attacker, who can
    still access memory at arbitrary addresses speculatively through the index expression \ensuremath{\Varid{e}}.
}} conditionals, and loops.\footnote{\New{
  The branch constructs of our source language can be used to directly model
  Wasm's branch constructs (e.g., conditionals and while loops can map to Wasm's
  \textbf{br\_if} and \textbf{br}) or could be easily extended (e.g., we can
  add \textbf{switch} statements to support Wasm's branch table
  \textbf{br\_table} instruction)~\cite{WASM}.
To keep our calculus small, we do not replicate Wasm's (less traditional)
low-level branch constructs which rely on block labels (e.g.,
\textbf{br}~\emph{label}).
Since Wasm programs are statically typed, these branch instructions switch
execution to statically known, well-specified program points marked with
labels.
Our calculus simply avoids modeling labels explicitly and replaces them with
the code itself.
}}
Beyond these standard constructs, we expose a special command that is used to
prevent transient execution attacks: the command \ensuremath{\Varid{x}\mathbin{:=}\mathbf{protect}(\Varid{r})} evaluates
\ensuremath{\Varid{r}} and assigns its value to \ensuremath{\Varid{x}}, but only after the value is \emph{stable}
(i.e., non-transient).
}
%
% MV: No stable_read
% Command |x := safe_read (e1, e2)| is a restricted version of |protect|
% that only applies to array reads (see \Cref{subsec:security-rules})
%
Lastly, \ensuremath{\mathbf{fail}} triggers a memory violation error (caused by trying to read
or write an array out-of-bounds) and aborts the program.
%
%% \kg{No safe read in instructions? MV: No! safe_read gets desugared
%% on basic instructions that just use the built-in data-dependencies
%% tracking of the processor pipeline}

%We extend our semantics to
%% include functions in Appendix \ref{app:fun}.

\mypara{JIT-Step Semantics}
Our operational semantics formalizes the execution of source programs on a
pipelined processor and thus enables \emph{source}-level reasoning about
speculation-based information leaks.
%
% As a result, our semantics allows us to reason  directly about
% the speculative execution of programs with structured control-flow,
% rather than compiled code ,
%
In contrast to previous semantics for speculative
execution~\cite{Guarnieri20,cheang-csf19,McIlroy19,pitchfork}, our
processor abstract machine does not operate directly on fully compiled
assembly programs.
Instead, our processor translates high-level commands into low-level instructions \emph{just in time}, by converting individual commands into
corresponding instructions in the first stage of the processor
pipeline.
To support this execution model, the processor \emph{progressively} flattens structured commands (e.g., \ensuremath{\mathbf{if}}-statements and \ensuremath{\mathbf{while}} loops) into predicted straight-line code and maintains a \emph{stack} of (partially flattened) commands to keep track of the program execution path.
In this model, when the processor detects a misspeculation, it only
needs to replace the command stack with the sequence of commands that
should have been executed instead to restart the execution on the
correct path.
%

 % whose commands are then translated into the corresponding processor instructions.
%
% , the processor abstract machine  and is simply replaced to roll-back the execution.
 % which are subsequently translated into individual processor instructions.
%
%

\mypara{Processor Instructions}
%
% \begin{figure}[t]
%   \centering
%   \rulesize
% %\begin{framed}
% % \begin{subfigure}[b]{1\textwidth}
% \begin{tabular}{l l c l}
% Instructions & |i| & |::=| & | nop div x := e div x := load e | \\
%        &     &       & | div (store(e, e)) div x := protect(e) | \\
%        &     &       & | div (br(lbl e b, cs, p)) div fail  | \\
% Directives  & |d| & |::=| & | fetch div fetch b div execute n  div retire | \\
% Observations  & |o| & |::=| & | eps div (load (n, ps) div (store (n, ps) | \\
%       &     &       & | div fail div (rollback p)| \\
% % \end{tabular}
% % \end{subfigure}
% % \begin{subfigure}[b]{1\textwidth}
% % \begin{tabular}{l l l l}
% Predictions      & | b | & | inset | & | {true, false} | \\
% Guard Identifiers  & | p | & | inset | & | Nat | \\
% Reorder Buffers    & | is| & | ::= | & | i : is div [] | \\
% Cmd Stacks    & | cs| & | ::= | & | c : cs div [] | \\
% Memory Stores    & | mu| & | inset | & | Nat -\ Value | \\
% Variables Maps   & |rho| & | inset | & | Var -> Value | \\
% Configurations   & | C | & | ::= |    & |< is, cs, mu, rho >|
% \end{tabular}
% % \end{subfigure}%
% \caption{Processor syntax.}
% \label{fig:processor-syntax}
% %\end{framed}
% \end{figure}
%
Our semantics translates
source commands into an abstract set of
processor instructions shown in \Cref{fig:processor-syntax}.
Most of the processor instructions correspond directly to the basic source commands.
Notably, the processor instructions do not include an explicit jump
instruction for branching.
Instead, a sequence of \emph{guard} instructions represents a series
of \emph{pending} branch points along a \emph{single} predicted path.
Guard instructions have the form \ensuremath{\mathbf{guard}(\Varid{e}^{\Varid{b}},\Varid{cs},\Varid{p})}, which records the
branch condition \ensuremath{\Varid{e}}, its predicted truth value \ensuremath{\Varid{b}}, and a unique
guard identifier \ensuremath{\Varid{p}}, used in our security analysis
(\Cref{sec:correctness}).
Each guard attests to the fact that the current execution is valid
only if the branch condition gets resolved as predicted.
In order to enable a roll-back in case of a missprediction, guards
additionally record the sequence of commands \ensuremath{\Varid{cs}} along the
alternative branch.\looseness=-1

\mypara{Directives and Observations}
Instructions do not have to be executed in sequence: they can be
executed in any order, enabling out-of-order execution.
We use a simple three stage processor pipeline: the execution of each
instruction is split into \ensuremath{\mathbf{fetch}}, \ensuremath{\mathbf{exec}}, and \ensuremath{\mathbf{retire}}.
We do not fix the order in which instructions and their individual
stages are executed, nor do we supply a model of the branch predictor
to decide which control flow path to follow.
Instead, we let the attacker supply those decisions through a set of
\emph{directives} \cite{pitchfork} shown in
Fig.\ \ref{fig:processor-syntax}.
For example, directive \ensuremath{\mathbf{fetch}\;\mathbf{true}} fetches the \ensuremath{\mathbf{true}} branch of a
conditional and \ensuremath{\mathbf{exec}\;\Varid{n}} executes the \ensuremath{\Varid{n}}-th instruction in the
reorder buffer.
Executing an instruction generates an
\emph{observation} (Fig.\ \ref{fig:processor-syntax}) which records
attacker observable behavior.
Observations include \emph{speculative} memory reads and writes \New{(i.e., \ensuremath{\mathbf{read}(\Varid{n},\Varid{ps})} and \ensuremath{\mathbf{write}(\Varid{n},\Varid{ps})} issued while the guard and fail instructions identified by \ensuremath{\Varid{ps}} are
pending), rollbacks (i.e., \ensuremath{\mathbf{rollback}(\Varid{p})} due to misspeculation of guard
\ensuremath{\Varid{p}}), and memory violations (i.e., \ensuremath{\mathbf{fail}(\Varid{p})} due to instruction \ensuremath{\mathbf{fail}(\Varid{p})}).}
Most instructions generate the \emph{silent} observation \ensuremath{\epsilon}.
\New{Like~\cite{pitchfork}, we do not include observations about branch
directions.
Doing so would \emph{not} increase the power of the attacker:
a constant-time program that leaks through the branch direction would also leak
through the presence or absence of rollback events, i.e., different predictions
produce different rollback events, capturing leaks due to branch direction.}

\mypara{Configurations and Reduction Relation}
We formally specify our semantics as a reduction relation between processor
configurations.
A configuration \ensuremath{\langle\Varid{is},\Varid{cs},\mu,\rho\rangle} consists of a queue of in-flight
instructions \ensuremath{\Varid{is}} called the \emph{reorder buffer}, a stack of commands
\ensuremath{\Varid{cs}} representing the current \emph{execution path}, a memory \ensuremath{\mu}, and a map \ensuremath{\rho} from variables to values.
A reduction step \ensuremath{\Conid{C}\;\stepT{\Varid{d}}{\Varid{o}}\;\Conid{C}'} denotes that, under directive
\ensuremath{\Varid{d}}, configuration \ensuremath{\Conid{C}} is transformed into \ensuremath{\Conid{C}'} and generates
observation \ensuremath{\Varid{o}}.
To execute a program \ensuremath{\Varid{c}} with initial memory \ensuremath{\mu} and variable map
\ensuremath{\rho}, the processor initializes the configuration with an empty
reorder buffer and inserts the program into the command stack, i.e.,
\ensuremath{\langle[\mskip1.5mu \mskip1.5mu],[\mskip1.5mu \Varid{c}\mskip1.5mu],\mu,\rho\rangle}.
Then, the execution proceeds until both the reorder buffer and the
stack in the configuration are empty, i.e., we reach a configuration
of the form \ensuremath{\langle[\mskip1.5mu \mskip1.5mu],[\mskip1.5mu \mskip1.5mu],\mu',\rho'\rangle},
for some final memory store \ensuremath{\mu'} and variable map \ensuremath{\rho'}.

We now discuss the semantics rules of each execution stage and then
those for our security primitive.
\subsection{Fetch Stage}
\begin{figure}[t]
%\begin{framed}
\rulesize
\begin{mathpar}
\inferrule[Fetch-Seq]
{}
{\ensuremath{\langle\Varid{is},(\Varid{c}_{1};\Varid{c}_{2})\mathbin{:}\Varid{cs},\mu,\rho\rangle\stepT{\mathbf{fetch}}{\epsilon}\langle\Varid{is},\Varid{c}_{1}\mathbin{:}\Varid{c}_{2}\mathbin{:}\Varid{cs},\mu,\rho\rangle}}
\and
% \inferrule[Fetch-Skip]
% {}
% {|< is, skip : cs, mu, rho > stepF < is ++ [nop], cs, mu, rho >|}
% \and
\inferrule[Fetch-Asgn]
{}
{\ensuremath{\langle\Varid{is},\Varid{x}\mathbin{:=}\Varid{e}\mathbin{:}\Varid{cs},\mu,\rho\rangle\stepT{\mathbf{fetch}}{\epsilon}\langle\Varid{is}+{\mkern-9mu+}\ [\mskip1.5mu \Varid{x}\mathbin{:=}\Varid{e}\mskip1.5mu],\Varid{cs},\mu,\rho\rangle}}
\and
\inferrule[Fetch-Ptr-Load]
{}
{\ensuremath{\langle\Varid{is},\Varid{x}\mathbin{:=}\ast\Varid{e}\mathbin{:}\Varid{cs},\mu,\rho\rangle\stepT{\mathbf{fetch}}{\epsilon}\langle\Varid{is}+{\mkern-9mu+}\ [\mskip1.5mu \Varid{x}\mathbin{:=}\mathbf{load}(\Varid{e})\mskip1.5mu],\Varid{cs},\mu,\rho\rangle}}
\and
%% \inferrule[Fetch-Ptr-Store]
%% {}
%% {|< is, deref e1  := e2 : cs, mu, rho > stepF < is ++ [store(e1, e2)], cs, mu, rho >|}
%% \and
\inferrule[Fetch-Array-Load]
{\ensuremath{\Varid{c}\mathrel{=}\Varid{x}\mathbin{:=}\Varid{a}{[}\Varid{e}{]}} \\ \ensuremath{\Varid{e}_{1}\mathrel{=}\Varid{e}<\mathit{length}(\Varid{a})} \\ \ensuremath{\Varid{e}_{2}\mathrel{=}\Varid{base}(\Varid{a})\mathbin{+}\Varid{e}} \\
 %% |is' = [br(lbl e true, [fail], p), x := load e' ] | }
 \ensuremath{\Varid{c'}\mathrel{=}\mathbf{if}\;\Varid{e}_{1}\;\mathbf{then}\;\Varid{x}\mathbin{:=}\mathbin{*}\Varid{e}_{2}\;\mathbf{else}\;\mathbf{fail}}}
{\ensuremath{\langle\Varid{is},\Varid{c}\mathbin{:}\Varid{cs},\mu,\rho\rangle\stepT{\mathbf{fetch}}{\epsilon}\langle\Varid{is},\Varid{c'}\mathbin{:}\Varid{cs},\mu,\rho\rangle}}
% \and
% Old rule with arbitrary array expression
% \inferrule[\Red{Fetch-Array-Load}]
% {|c = x := (array e1 e2)| \\ |e = e2 :<: (length e1) | \\ |e' = eval base e1 + e2| \\
%  %% |is' = [br(lbl e true, [fail], p), x := load e' ] | }
%  |c' = if e then x := *e' else fail|}
% {|< is, c : cs, mu, rho >  stepF < is, c' : cs, mu, rho >|}
\and
%% \inferrule[Fetch-Array-Store]
%% {|c = (array e1 e2) := e3| \\ |e = e2 :<: (length e1) | \\
%%  |fresh p| \\ |e' = eval base e1 + e2| \\
%%  %% |is' = [br(lbl e true, [fail], p), x := load e' ] | }
%%  |c' = if e then *e' := e else fail|}
%% {|< is, c : cs, mu, rho >  stepF < is, c' : cs, mu, rho >|}
%% \and
\inferrule[Fetch-If-True]
{\ensuremath{\Varid{c}\mathrel{=}\mathbf{if}\;\Varid{e}\;\mathbf{then}\;\Varid{c}_{1}\;\mathbf{else}\;\Varid{c}_{2}} \\ \ensuremath{\fresh{\Varid{p}}} \\ \ensuremath{\Varid{i}\mathrel{=}\mathbf{guard}(\Varid{e}^{\mathbf{true}},\Varid{c}_{2}\mathbin{:}\Varid{cs},\Varid{p})} }
{\ensuremath{\langle\Varid{is},\Varid{c}\mathbin{:}\Varid{cs},\mu,\rho\rangle\stepT{\mathbf{fetch}\;\mathbf{true}}{\epsilon}\langle\Varid{is}+{\mkern-9mu+}\ [\mskip1.5mu \Varid{i}\mskip1.5mu],\Varid{c}_{1}\mathbin{:}\Varid{cs},\mu,\rho\rangle}}
\and
%% \inferrule[Fetch-If-|false|]
%% {|fresh p|}
%% {|< is, if e then c1 else c2 : cs , mu , rho > (step2 (fetch false) eps) < is ++ [br(lbl e false, c1 : cs, p) ], c2 : cs , mu , rho > |}
% \and
%% \inferrule[Fetch-While]
%% {|c' = c ; while e c|}
%% {|< is, while e c : cs , mu , rho > stepF < is, if e then c' else skip : cs , mu , rho > |}
%% \and
%% \inferrule[Fetch-Fail]
%% {}
%% {|< is, fail : cs, mu, rho > stepF < is ++ [fail], cs, mu, rho >|}
\end{mathpar}
\caption{Fetch stage (selected rules).}
\label{fig:semantics-fetch}
%\end{framed}
\end{figure}
The fetch stage flattens the input commands into a sequence of
instructions which it stores in the reorder buffer.
Figure \ref{fig:semantics-fetch} presents selected rules; the
remaining rules are in
\ifextended
Appendix~\ref{app:full-calculus}.
\else
\New{the extended version of this paper~\cite{vassena2020automatically}.}
\fi
Rule \srule{Fetch-Seq} pops command \ensuremath{\Varid{c}_{1};\Varid{c}_{2}} from the commands stack
and pushes the two sub-commands for further processing.
%
% After a number of these reductions, the top of the stack contains a
% single command (e.g., |x := e|) and appropriate rules Rule
%
\srule{Fetch-Asgn} pops an assignment from the commands stack and
appends the corresponding processor instruction (\ensuremath{\Varid{x}\mathbin{:=}\Varid{e}}) at the end
of the reorder buffer.\footnote{Notation \ensuremath{[\mskip1.5mu \Varid{i}_{1},\mathbin{...},\Varid{i}_{\Varid{n}}\mskip1.5mu]}
  represents a list of \ensuremath{\Varid{n}} elements, \ensuremath{\Varid{is}_{1}+{\mkern-9mu+}\ \Varid{is}_{2}} denotes list
  concatenation, and \ensuremath{{\vert}\Varid{is}{\vert}} is the length of list \ensuremath{\Varid{is}}.}
Rule \srule{Fetch-Ptr-Load} is %% and \srule{Fetch-Ptr-Store} are
similar and simply translates pointer dereferences to the
corresponding load instruction.
%% and store instructions.
Arrays provide a memory-safe interface to read and write memory: the
processor injects bounds-checks when fetching commands that read and
write arrays.
%
%
% For simplicity, we assume that commands contain at most a single
% array load or store.
%
For example, rule \srule{Fetch-Array-Load} expands command \ensuremath{\Varid{x}\mathbin{:=}\Varid{a}{[}\Varid{e}{]}} into the corresponding pointer dereference, but guards the
command with a bounds-check condition.
First, the rule generates the condition \ensuremath{\Varid{e}_{1}\mathrel{=}\Varid{e}<\mathit{length}(\Varid{a})} and
calculates the address of the indexed element \ensuremath{\Varid{e}_{1}\mathrel{=}\Varid{base}(\Varid{a})\mathbin{+}\Varid{e}}.
Then, it replaces the array read on the stack with command \ensuremath{\mathbf{if}\;\Varid{e}_{1}\;\mathbf{then}\;\Varid{x}\mathbin{:=}\mathbin{*}\Varid{e}_{2}\;\mathbf{else}\;\mathbf{fail}} to abort the program and prevent the buffer
overrun if the bounds check fails.
Later, we show that speculative out-of-order execution can simply
ignore the bounds check guard and cause the processor to transiently
read memory at an invalid address.
Rule \srule{Fetch-If-True} fetches a conditional branch from the stack
and, following the prediction provided in directive \ensuremath{\mathbf{fetch}\;\mathbf{true}},
speculates that the condition \ensuremath{\Varid{e}} will evaluate to \ensuremath{\mathbf{true}}.
Thus, the processor inserts the corresponding instruction \ensuremath{\mathbf{guard}(\Varid{e}^{\mathbf{true}},\Varid{c}_{2}\mathbin{:}\Varid{cs},\Varid{p})} with a fresh guard identifier \ensuremath{\Varid{p}} in the
reorder buffer and pushes the then-branch \ensuremath{\Varid{c}_{1}} onto the stack \ensuremath{\Varid{cs}}.
Importantly, the guard instruction stores the else-branch together
with a copy of the current commands stack (\ie \ensuremath{\Varid{c}_{2}\mathbin{:}\Varid{cs}}) as a
rollback stack to restart the execution in case of
misprediction.\looseness=-1

\subsection{Execute Stage}
\label{subsec:execute-stage}
In the execute stage, the processor evaluates the operands of
instructions in the reorder buffer and rolls back the program state
whenever it detects a misprediction.

\mypara{Transient Variable Map}
\label{par:transient-map}
Since instructions can be executed out-of-order,
when we evaluate operands we need to take into account how previous, possibly unresolved
assignments in the reorder buffer affect the variable map.
%
%
% To evaluate operands in the presence of out-of-order execution, we
% need to take into account how previous, possibly unresolved
% assignments in the reorder buffer affect the variable map.
In particular, we need to ensure that an instruction cannot execute if
it depends on a preceding assignment whose value is still unknown.
To this end, we define a function \ensuremath{\phi(\rho,\Varid{is})}, called the
\emph{transient variable map} (Fig.~\ref{fig:transient-map}), which
updates variable map \ensuremath{\rho} with the pending assignments in reorder
buffer \ensuremath{\Varid{is}}.
%
%% We show its definition in \Cref{fig:transient-var-map}.\mv{Can cut the
%%   figure}
%
The function walks through the reorder buffer, registers each resolved
assignment instruction (\ensuremath{\Varid{x}\mathbin{:=}\Varid{v}}) in the variable map through
function update \ensuremath{\update{\rho}{\Varid{x}}{\Varid{v}}},
%% \footnote{For a function |f|, we
%%   write |update f x v| to denote the function that returns the same
%%   value as |f| except for |x|, where it returns |v|.}\mv{can cut
%%   footnote}
and marks variables from pending assignments (\ie \ensuremath{\Varid{x}\mathbin{:=}\Varid{e}}, \ensuremath{\Varid{x}\mathbin{:=}\mathbf{load}(\Varid{e})}, and \ensuremath{\Varid{x}\mathbin{:=}\mathbf{protect}(\Varid{r})}) as \emph{undefined} (\ensuremath{\update{\rho}{\Varid{x}}{\bot}}), making their respective values unavailable to following
instructions.

\begin{figure}[t]
  %\begin{framed}
    \rulesize
\begin{subfigure}{\columnwidth}
\centering
\setlength{\abovedisplayskip}{0pt}%
\setlength{\belowdisplayshortskip}{0pt}%
\setlength{\abovedisplayshortskip}{0pt}%
\setlength{\belowdisplayskip}{0pt}%
\begin{align*}
&\ensuremath{\phi(\rho,[\mskip1.5mu \mskip1.5mu])\mathrel{=}\rho} \\
&\ensuremath{\phi(\rho,(\Varid{x}\mathbin{:=}\Varid{v})\mathbin{:}\Varid{is})\mathrel{=}\phi(\update{\rho}{\Varid{x}}{\Varid{v}},\Varid{is})} \\
&\ensuremath{\phi(\rho,(\Varid{x}\mathbin{:=}\Varid{e})\mathbin{:}\Varid{is})\mathrel{=}\phi(\update{\rho}{\Varid{x}}{\bot},\Varid{is})} \\
&\ensuremath{\phi(\rho,(\Varid{x}\mathbin{:=}\mathbf{load}(\Varid{e}))\mathbin{:}\Varid{is})\mathrel{=}\phi(\update{\rho}{\Varid{x}}{\bot},\Varid{is})} \\
&\ensuremath{\phi(\rho,(\Varid{x}\mathbin{:=}\mathbf{protect}(\Varid{e}))\mathbin{:}\Varid{is})\mathrel{=}\phi(\update{\rho}{\Varid{x}}{\bot},\Varid{is})} \\
&\ensuremath{\phi(\rho,\Varid{i}\mathbin{:}\Varid{is})\mathrel{=}\phi(\rho,\Varid{is})}
\end{align*}
\captionsetup{format=caption-with-line}
\caption{Transient variable map.}
\label{fig:transient-map}
\end{subfigure}%
\\
\begin{subfigure}{\columnwidth}
\centering
\begin{mathpar}
\inferrule[Execute]
{\ensuremath{{\vert}\Varid{is}_{1}{\vert}\mathrel{=}\Varid{n}\mathbin{-}\mathrm{1}} \\ \ensuremath{\rho'\mathrel{=}\phi(\Varid{is}_{1},\rho)} \\ \ensuremath{\langle\Varid{is}_{1},\Varid{i},\Varid{is}_{2},\Varid{cs}\rangle\xrsquigarrow{(\mu,\rho',\Varid{o})}\langle\Varid{is'},\Varid{cs'}\rangle} }
{\ensuremath{\langle\Varid{is}_{1}+{\mkern-9mu+}\ [\mskip1.5mu \Varid{i}\mskip1.5mu]+{\mkern-9mu+}\ \Varid{is}_{2},\Varid{cs},\mu,\rho\rangle\stepT{\mathbf{exec}\;\Varid{n}}{\Varid{o}}\langle\Varid{is'},\Varid{cs'},\mu,\rho\rangle}}
\end{mathpar}
\captionsetup{format=caption-with-line}
%\subcaption{Main }
\caption{Execute rule.}
\label{fig:execute-phase1}
\end{subfigure}%
\\
\begin{subfigure}[b]{\columnwidth}
\begin{mathpar}
\inferrule[Exec-Asgn]
{\ensuremath{\Varid{i}\mathrel{=}(\Varid{x}\mathbin{:=}\Varid{e})} \\ \ensuremath{\Varid{v}\mathrel{=}\llbracket \Varid{e}\rrbracket^{\rho}} \\ \ensuremath{\Varid{i'}\mathrel{=}(\Varid{x}\mathbin{:=}\Varid{v})} }
{\ensuremath{\langle\Varid{is}_{1},\Varid{i},\Varid{is}_{2},\Varid{cs}\rangle\xrsquigarrow{(\mu,\rho,\epsilon)}}  \ensuremath{\langle\Varid{is}_{1}+{\mkern-9mu+}\ [\mskip1.5mu \Varid{i'}\mskip1.5mu]+{\mkern-9mu+}\ \Varid{is}_{2},\Varid{cs}\rangle}}
\and
\inferrule[Exec-Load]
{\ensuremath{\Varid{i}\mathrel{=}(\Varid{x}\mathbin{:=}\mathbf{load}(\Varid{e}))} \\ \ensuremath{\mathbf{store}(\anonymous ,\anonymous )\;\not\in\;\Varid{is}_{1}} \\
 \ensuremath{\Varid{n}\mathrel{=}\llbracket \Varid{e}\rrbracket^{\rho}} \\ \ensuremath{\Varid{ps}\mathrel{=}\Moon{\Varid{is}_{1}}} \\
 \ensuremath{\Varid{i'}\mathrel{=}(\Varid{x}\mathbin{:=}\mu(\Varid{n}))} }
{\ensuremath{\langle\Varid{is}_{1},\Varid{i},\Varid{is}_{2},\Varid{cs}\rangle\xrsquigarrow{(\mu,\rho,\mathbf{read}(\Varid{n},\Varid{ps}))}} \ensuremath{\langle\Varid{is}_{1}+{\mkern-9mu+}\ [\mskip1.5mu \Varid{i'}\mskip1.5mu]+{\mkern-9mu+}\ \Varid{is}_{2},\Varid{cs}\rangle}}
\and
\inferrule[Exec-Branch-Ok]
{\ensuremath{\Varid{i}\mathrel{=}\mathbf{guard}(\Varid{e}^{\Varid{b}},\Varid{cs'},\Varid{p})} \\ \ensuremath{\llbracket \Varid{e}\rrbracket^{\rho}\mathrel{=}\Varid{b}} } %  \\ |ps = (spec is1)|
{\ensuremath{\langle\Varid{is}_{1},\Varid{i},\Varid{is}_{2},\Varid{cs}\rangle\xrsquigarrow{(\mu,\rho,\epsilon)}\langle\Varid{is}_{1}+{\mkern-9mu+}\ [\mskip1.5mu \mathbf{nop}\mskip1.5mu]+{\mkern-9mu+}\ \Varid{is}_{2},\Varid{cs}\rangle}}
\and
\inferrule[Exec-Branch-Mispredict]
{\ensuremath{\Varid{i}\mathrel{=}\mathbf{guard}(\Varid{e}^{\Varid{b}},\Varid{cs'},\Varid{p})} \\ \ensuremath{\Varid{b'}\mathrel{=}\llbracket \Varid{e}\rrbracket^{\rho}} \\ \ensuremath{\Varid{b'}\neq\Varid{b}} } % \\ |ps = (spec is1)|
{\ensuremath{\langle\Varid{is}_{1},\Varid{i},\Varid{is}_{2},\Varid{cs}\rangle\xrsquigarrow{(\mu,\rho,\mathbf{rollback}(\Varid{p}))}} \ensuremath{\langle\Varid{is}_{1}+{\mkern-9mu+}\ [\mskip1.5mu \mathbf{nop}\mskip1.5mu],\Varid{cs'}\rangle}}
% \and
% \inferrule[Exec-Store-Addr]
% {|i = (store(e1, e2))| \\ |n = denot e1 rho| \\ |i' = store(n, e2)|}
% {|< is1, i, is2, cs > (exec (mu, rho, eps))| |< is1 ++ [i'] ++ is2, cs
%   > |}
% \and
% \inferrule[Exec-Store-Value]
% {|i = (store(n, e))| \\ |v = denot e rho| \\ |i' = store(n, v)|}
% {|< is1, i, is2, cs > (exec (mu, rho, eps)) < is1 ++ [i'] ++ is2, cs > |}
\end{mathpar}
\captionsetup{format=caption-with-line}
%\subcaption{Rules for the second phase.}
\caption{Auxiliary relation (selected rules).}
\label{fig:execute-phase2}
\end{subfigure}
%\end{framed}
\caption{Execute stage.}
\label{fig:semantics-execute}
\end{figure}

\mypara{Execute Rule and Auxiliary Relation}
\Cref{fig:semantics-execute} shows selected rules for the execute stage.
Rule \srule{Execute} executes the \ensuremath{\Varid{n}}th instruction in the reorder
buffer, following the directive \ensuremath{\mathbf{exec}\;\Varid{n}}. For this, the rule splits the reorder
buffer into prefix \ensuremath{\Varid{is}_{1}}, $n$th instruction \ensuremath{\Varid{i}}, and suffix \ensuremath{\Varid{is}_{2}}.
Next, it computes the transient variable map \ensuremath{\phi(\Varid{is}_{1},\rho)} and
executes a transition step under the new map using an auxiliary
relation \ensuremath{\rightsquigarrow}.
Notice that \srule{Execute} does not update the store or the variable
map---the transient map is simply discarded. These changes are performed
later in the retire stage.

The rules for the auxiliary relation are shown in
Fig.~\ref{fig:execute-phase2}.
The relation transforms a tuple \ensuremath{\langle\Varid{is}_{1},\Varid{i},\Varid{is}_{2},\Varid{cs}\rangle} consisting of
prefix, suffix and current instruction \ensuremath{\Varid{i}} into a tuple \ensuremath{\langle\Varid{is'},\Varid{cs'}\rangle}
specifying the reorder buffer and command stack obtained by executing
\ensuremath{\Varid{i}}.
For example, rule \srule{Exec-Asgn} evaluates the right-hand side of
the assignment \ensuremath{\Varid{x}\mathbin{:=}\Varid{e}} where \ensuremath{\llbracket \Varid{e}\rrbracket^{\rho}} denotes the value of \ensuremath{\Varid{e}}
under \ensuremath{\rho}.
The premise \ensuremath{\Varid{v}\mathrel{=}\llbracket \Varid{e}\rrbracket^{\rho}} ensures that the expression is defined,
\ie it does not evaluate to \ensuremath{\bot}.
Then, the rule substitutes the computed value into the assignment
(\ensuremath{\Varid{x}\mathbin{:=}\Varid{v}}), and reinserts the instruction back into its original
position in the reorder buffer.

\mypara{Loads}
Rule \srule{Exec-Load} executes a memory load.
The rule computes the address (\ensuremath{\Varid{n}\mathrel{=}\llbracket \Varid{e}\rrbracket^{\rho}}), retrieves the value
at that address from memory (\ensuremath{\mu(\Varid{n})}) and rewrites the load into
an assignment (\ensuremath{\Varid{x}\mathbin{:=}\mu(\Varid{n})}).
By inserting the resolved assignment into the reorder buffer, the rule allows the processor
to transiently forward the loaded value to later
instructions through the transient variable map.
To record that the load is issued \emph{speculatively}, the
observation \ensuremath{\mathbf{read}(\Varid{n},\Varid{ps})} stores list \ensuremath{\Varid{ps}} containing the identifiers
of the \New{guard and fail instructions still pending in the
reorder buffer.}
 % which are extracted by function |spec is1|.
%
Function \ensuremath{\Moon{\Varid{is}_{1}}} simply extracts these identifiers from the guard \New{and fail instructions} in prefix \ensuremath{\Varid{is}_{1}}.
% These identifiers are simply extracted from the prefix  by
% function .
%  simply extracts the identifiers of the guard
% instructions from the buffer |is|.
%
%% In particular, a read observation is speculative if the list is
%% non-empty.
%
In the rule, premise \ensuremath{\mathbf{store}(\anonymous ,\anonymous )\;\not\in\;\Varid{is}_{1}} prevents the processor from reading
potentially stale data from memory: if the load aliases with a preceding (but
pending) store, ignoring the store could produce a stale read.

\mypara{Store Forwarding}
Alternatively, instead of exclusively reading fresh values from memory, it would be also possible to \emph{forward} values waiting to be written to memory at the same address by a pending, aliasing store.
For example, if buffer \ensuremath{\Varid{is}_{1}} contained a resolved instruction \ensuremath{\mathbf{store}(\Varid{n},\Varid{v})},  rule \srule{Exec-Load} could \emph{directly} propagate the value \ensuremath{\Varid{v}} to the load, which would be resolved to \ensuremath{\Varid{x}\mathbin{:=}\Varid{v}}, without reading memory and thus generating silent event \ensuremath{\epsilon} instead of \ensuremath{\mathbf{read}(\Varid{n},\Varid{ps})}.\footnote{Capturing the semantics of \emph{store-forwarding} would additionally require some bookkeeping about the freshness of (forwarded) values in order to detect reading stale data. We refer the interested reader to \cite{pitchfork} for full
 details.}
The Spectre v1.1 variant relies on this particular microarchitectural optimization to leak data---for example by speculatively writing transient data out of the bounds of an array, forwarding that data to an aliasing load.\looseness=-1
%
%
%
% For example, a load instruction |x := load that reads address |n| could be rewritten into the assignment |x := v| if the last

\mypara{Guards and Rollback}
Rules \srule{Exec-Branch-Ok} and \srule{Exec-Branch-Mispredict}
resolve guard instructions.
In rule \srule{Exec-Branch-Ok}, the predicted and computed value of
the guard expression match (\ensuremath{\llbracket \Varid{e}\rrbracket^{\rho}\mathrel{=}\Varid{b}}), and thus the processor
only replaces the guard with \ensuremath{\mathbf{nop}}.
In contrast, in rule \srule{Exec-Branch-Mispredict} the predicted and
computed value differ (\ensuremath{\llbracket \Varid{e}\rrbracket^{\rho}\mathrel{=}\Varid{b'}} and \ensuremath{\Varid{b'}\neq\Varid{b}}).
This causes the processor to revert the program state and issue a
rollback observation (\ensuremath{\mathbf{rollback}(\Varid{p})}).
%  and the correct control-flow
% action (|branch(b',ps)|).
%
For the rollback, the processor discards the instructions \emph{past}
the guard (\ie \ensuremath{\Varid{is}_{2}}) and substitutes the current commands stack \ensuremath{\Varid{cs}}
with the rollback stack \ensuremath{\Varid{cs'}} which causes execution to revert to the
alternative branch.

\subsection{Retire Stage}
\begin{figure}[t]
% \setlength\mathindent{0pt}%
% \setlength\abovedisplayskip{0pt}%
% \setlength\belowdisplayskip{0pt}%
%\begin{framed}
\rulesize
\centering
\begin{mathpar}
\inferrule[Retire-Nop]
{}
{\ensuremath{\langle\mathbf{nop}\mathbin{:}\Varid{is},\Varid{cs},\mu,\rho\rangle\stepT{\mathbf{retire}}{\epsilon}\langle\Varid{is},\Varid{cs},\mu,\rho\rangle}}
\and
\inferrule[Retire-Asgn]
{}
{\ensuremath{\langle\Varid{x}\mathbin{:=}\Varid{v}\mathbin{:}\Varid{is},\Varid{cs},\mu,\rho\rangle\stepT{\mathbf{retire}}{\epsilon}\langle\Varid{is},\Varid{cs},\mu,\update{\rho}{\Varid{x}}{\Varid{v}}\rangle}}
\and
\inferrule[Retire-Store]
{} % \\ |i = store(n,v)| \\ |mu' = update(mu)(n)(v)|}
{\ensuremath{\langle\mathbf{store}(\Varid{n},\Varid{v})\mathbin{:}\Varid{is},\Varid{cs},\mu,\rho\rangle\stepT{\mathbf{retire}}{\epsilon}\langle\Varid{is},\Varid{cs},\update{\mu}{\Varid{n}}{\Varid{v}},\rho\rangle}}
\and
\inferrule[Retire-Fail]
{}
{\ensuremath{\langle\mathbf{fail}(\Varid{p})\mathbin{:}\Varid{is},\Varid{cs},\mu,\rho\rangle\stepT{\mathbf{retire}}{\mathbf{fail}(\Varid{p})}\langle[\mskip1.5mu \mskip1.5mu],[\mskip1.5mu \mskip1.5mu],\mu,\rho\rangle}}
\end{mathpar}
\caption{Retire stage.}
\label{fig:semantics-retire}
%\end{framed}
\end{figure}
The retire stage removes completed instructions from
the reorder buffer and propagates their changes to the variable map
and memory store.
While instructions are executed out-of-order, they are retired
in-order to preserve the illusion of sequential execution to the user.
For this reason, the rules for the retire stage in
\Cref{fig:semantics-retire} always remove the \emph{first} instruction
in the reorder buffer.
For example, rule \srule{Retire-Nop} removes \ensuremath{\mathbf{nop}} from the front of
the reorder buffer.
Rules \srule{Retire-Asgn} and \srule{Retire-Store} remove the resolved
assignment \ensuremath{\Varid{x}\mathbin{:=}\Varid{v}} and instruction \ensuremath{\mathbf{store}(\Varid{n},\Varid{v})} from the reorder
buffer and update the variable map (\ensuremath{\update{\rho}{\Varid{x}}{\Varid{v}}}) and the memory
store (\ensuremath{\update{\mu}{\Varid{n}}{\Varid{v}}}) respectively.
%%
%% % Not anymore
%% Additionally, rule \srule{Retire-Store} also generates the event
%% |write(n)|.
%
Rule \srule{Retire-Fail} aborts the program by emptying reorder buffer
and command stack and generates a \ensuremath{\mathbf{fail}(\Varid{p})} observation, simulating a
processor raising an exception (e.g., a segmentation fault).\looseness=-1

\begin{figure}[t]
%  %\begin{framed}
%
%\graphsize
    % \begin{minipage}{.7\textwidth}
\examplesize
% \end{minipage}
% %
% \begin{minipage}{.15\textwidth}
  % \graphsize
% \rulesize
\centering
\hfill
\begin{subfigure}[t]{.3\textwidth}
   \begin{tabular}{l | l}
     \multicolumn{2}{c}{\textbf{Memory Layout}} \\[.5ex]
  \ensuremath{\mu(\mathrm{0})\mathrel{=}\mathrm{0}}         & \ensuremath{\Varid{b}{[}\mathrm{0}{]}} \\
  \ensuremath{\mu(\mathrm{1})\mathrel{=}\mathrm{0}}         & \ensuremath{\Varid{a}{[}\mathrm{0}{]}} \\
  \ensuremath{\mu(\mathrm{2})\mathrel{=}\mathrm{0}}         & \ensuremath{\Varid{a}{[}\mathrm{1}{]}} \\
  \ensuremath{\mu(\mathrm{3})\mathrel{=}\sourceColorOf{\mathrm{42}}}    & \ensuremath{ s{[}\mathrm{0}{]}} \\
  \ensuremath{\cdots}                 & \ensuremath{\cdots} \\
   \end{tabular} \\
\end{subfigure}%
\hfill
\begin{subfigure}[t]{.3\textwidth}
\begin{tabular}{c}
\textbf{Variable Map} \\[.5ex]
\ensuremath{\rho(\Varid{i}_{1})\mathrel{=}\mathrm{1}} \\
\ensuremath{\rho(\Varid{i}_{2})\mathrel{=}\mathrm{2}} \\
\ensuremath{\cdots} \\
\end{tabular}
\end{subfigure}%
\hfill
\\[1ex]
\begin{subfigure}{\textwidth}
\centering
\renewcommand{\arraystretch}{1.1}
\examplesize
\begin{tabular}{c l | c | c | c | c }
  \multicolumn{2}{c|}{\textbf{Reorder Buffer}} &$\mathbf{exec} \ 2$ &
                                                                       $\mathbf{exec}
                                                                       \
                                                                       4$
  & $\mathbf{exec} \ 5$ & $\mathbf{exec} \ 7$ \\
\cline{1-2}
%% \hline
1 & \ensuremath{\mathbf{guard}((\Varid{i}_{1}<\mathit{length}(\Varid{a}))^{\mathbf{true}},[\mskip1.5mu \mathbf{fail}\mskip1.5mu],\mathrm{1})} & & & & \\
2 & \ensuremath{\Varid{x}\mathbin{:=}\mathbf{load}(\Varid{base}(\Varid{a})\mathbin{+}\Varid{i}_{1})}                       & \ensuremath{\Varid{x}\mathbin{:=}\mu(\mathrm{2})} & & & \\
3 & \ensuremath{\mathbf{guard}((\Varid{i}_{2}<\mathit{length}(\Varid{a}))^{\mathbf{true}},[\mskip1.5mu \mathbf{fail}\mskip1.5mu],\mathrm{2})}  & & & & \\
4 & \ensuremath{\Varid{y}\mathbin{:=}\mathbf{load}(\Varid{base}(\Varid{a})\mathbin{+}\Varid{i}_{2})}                       & & \ensuremath{\Varid{y}\mathbin{:=}\sourceColorOf{\mu(\mathrm{3})}}   &                 & \\
5 & \ensuremath{\Varid{z}\mathbin{:=}\Varid{x}\mathbin{+}\Varid{y}} & &                                                                             & \ensuremath{\Varid{z}\mathbin{:=}\sourceColorOf{\mathrm{42}}} & \\
6 & \ensuremath{\mathbf{guard}((\Varid{z}<\mathit{length}(\Varid{b}))^{\mathbf{true}},[\mskip1.5mu \mathbf{fail}\mskip1.5mu],\mathrm{3})}  & & & & \\
7 & \ensuremath{\Varid{w}\mathbin{:=}\mathbf{load}(\Varid{base}(\Varid{b})\mathbin{+}\Varid{z})} & & &                                                                            & \ensuremath{\Varid{w}\mathbin{:=}\mu(\sourceColorOf{\mathrm{42}})} \\
\hline
\multicolumn{2}{c|}{Observations:}                    & \ensuremath{\mathbf{read}(\mathrm{2},[\mskip1.5mu \mathrm{1}\mskip1.5mu])}  &  \ensuremath{\mathbf{read}(\mathrm{3},[\mskip1.5mu \mathrm{1},\mathrm{2}\mskip1.5mu])} & \ensuremath{\epsilon}  & \ensuremath{\mathbf{read}(\sourceColorOf{\mathrm{42}},[\mskip1.5mu \mathrm{1},\mathrm{2},\mathrm{3}\mskip1.5mu])} \\
\end{tabular}
\end{subfigure}
\caption{Leaking execution of running program \exNum{1}.}
\label{fig:ex1-execution}
%%\end{framed}
\end{figure}

\mypara{Example}
We demonstrate how the attacker can leak a secret from program
\exNum{1} (Fig.~\ref{fig:ex1-code}) in our model.
First, the attacker instructs the processor to fetch all the
instructions, supplying prediction \ensuremath{\mathbf{true}} for all bounds-check
conditions.
\Cref{fig:ex1-execution} shows the resulting buffer and how it evolves
after each attacker directive; the memory \ensuremath{\mu} and variable map \ensuremath{\rho} are
shown above.
The attacker directives instruct the processor to speculatively execute the
load instructions and the assignment (but not the guard instructions).
Directive \ensuremath{\mathbf{exec}\;\mathrm{2}} executes the first load instruction by computing
the memory address \ensuremath{\mathrm{2}\mathrel{=}\llbracket \Varid{base}(\Varid{a})\mathbin{+}\Varid{i}_{1}\rrbracket^{\rho}} and replacing
the instruction with the assignment \ensuremath{\Varid{x}\mathbin{:=}\mu(\mathrm{2})} containing the
loaded value.
Directive \ensuremath{\mathbf{exec}\;\mathrm{4}} transiently reads \emph{public} array \ensuremath{\Varid{a}} past its bound, at
index $2$, reading into the memory (\ensuremath{\sourceColorOf{\mu(\mathrm{3})}\mathrel{=}\sourceColorOf{\mathrm{42}}}) of \emph{secret} array \ensuremath{ s{[}\mathrm{0}{]}} and generates the corresponding
observation.
Finally, the processor
forwards the values of \ensuremath{\Varid{x}} and \ensuremath{\Varid{y}} through the
\emph{transient} variable map \ensuremath{\rho{[}\Varid{x}\;\mapsto\;\mu(\mathrm{2}),\Varid{y}\;\mapsto\;\sourceColorOf{\mu(\mathrm{3})}{]}}
% |update (update rho x (eval mu 2)) y
% (eval mu 3)| (\S\ref{par:transient-map})
to compute their sum in the
fifth instruction, (\ensuremath{\Varid{z}\mathbin{:=}\sourceColorOf{\mathrm{42}}}), which is then used as an index
in the last instruction and leaked to the attacker via observation
\ensuremath{\mathbf{read}(\sourceColorOf{\mathrm{42}},[\mskip1.5mu \mathrm{1},\mathrm{2},\mathrm{3}\mskip1.5mu])}.
%
% \unsure[inline]{We could do the example, after having resolved one
%   branch. 2 rows less and 1 column less.}

\subsection{Protect}
\label{subsec:security-rules}
Next, we turn to the rules that formalize the semantics of \ensuremath{\mathbf{protect}}
as an ideal hardware primitive and then its software implementation
via speculative-load-hardening (SLH).

\begin{figure}[t]
\rulesize
\begin{subfigure}[b]{\columnwidth}
%\begin{framed}
\begin{mathpar}
% Fetch rule for protect for arrays (reduces to Fetch-Protect)
\inferrule[Fetch-Protect-Array]
{\ensuremath{\Varid{c}\mathrel{=}(\Varid{x}\mathbin{:=}\mathbf{protect}(\Varid{a}{[}\Varid{e}{]}))} \\
 \ensuremath{\Varid{c}_{1}\mathrel{=}(\Varid{x}\mathbin{:=}\Varid{a}{[}\Varid{e}{]})} \\ \ensuremath{\Varid{c}_{2}\mathrel{=}(\Varid{x}\mathbin{:=}\mathbf{protect}(\Varid{x}))}}
{\ensuremath{\langle\Varid{is},\Varid{c}\mathbin{:}\Varid{cs},\mu,\rho\rangle\stepT{\mathbf{fetch}}{\epsilon}\langle\Varid{is},\Varid{c}_{1}\mathbin{:}\Varid{c}_{2}\mathbin{:}\Varid{cs},\mu,\rho\rangle}}
\and
\inferrule[Fetch-Protect-Expr]
{\ensuremath{\Varid{c}\mathrel{=}(\Varid{x}\mathbin{:=}\mathbf{protect}(\Varid{e}))} \\ \ensuremath{\Varid{i}\mathrel{=}(\Varid{x}\mathbin{:=}\mathbf{protect}(\Varid{e}))}}
{\ensuremath{\langle\Varid{is},\Varid{c}\mathbin{:}\Varid{cs},\mu,\rho\rangle\stepT{\mathbf{fetch}}{\epsilon}\langle\Varid{is}+{\mkern-9mu+}\ [\mskip1.5mu \Varid{i}\mskip1.5mu],\Varid{cs},\mu,\rho\rangle}}
\and
\inferrule[Exec-Protect$_1$]
{\ensuremath{\Varid{i}\mathrel{=}(\Varid{x}\mathbin{:=}\mathbf{protect}(\Varid{e}))} \\ \ensuremath{\Varid{v}\mathrel{=}\llbracket \Varid{e}\rrbracket^{\rho}} \\ \ensuremath{\Varid{i'}\mathrel{=}(\Varid{x}\mathbin{:=}\mathbf{protect}(\Varid{v}))} }
{\ensuremath{\langle\Varid{is}_{1},\Varid{i},\Varid{is}_{2},\Varid{cs}\rangle\xrsquigarrow{(\mu,\rho,\epsilon)}\langle\Varid{is}_{1}+{\mkern-9mu+}\ [\mskip1.5mu \Varid{i'}\mskip1.5mu]+{\mkern-9mu+}\ \Varid{is}_{2},\Varid{cs}\rangle}}
\and
\inferrule[Exec-Protect$_2$]
{\ensuremath{\Varid{i}\mathrel{=}(\Varid{x}\mathbin{:=}\mathbf{protect}(\Varid{v}))} \\
 \ensuremath{\mathbf{guard}(\anonymous ,\anonymous ,\anonymous )\;\not\in\;\Varid{is}_{1}} \\
 \ensuremath{\Varid{i'}\mathrel{=}(\Varid{x}\mathbin{:=}\Varid{v})}}
{\ensuremath{\langle\Varid{is}_{1},\Varid{i},\Varid{is}_{2},\Varid{cs}\rangle\xrsquigarrow{(\mu,\rho,\epsilon)}\langle\Varid{is}_{1}+{\mkern-9mu+}\ [\mskip1.5mu \Varid{i'}\mskip1.5mu]+{\mkern-9mu+}\ \Varid{is}_{2},\Varid{cs}\rangle}}
\end{mathpar}
\captionsetup{format=caption-with-line}
\caption{Semantics of \ensuremath{\mathbf{protect}} as a hardware primitive (selected rules). }
\label{fig:semantics-protect}
%\end{framed}
\end{subfigure}
\begin{subfigure}[b]{\columnwidth}
%\begin{framed}
\rulesize
\begin{mathpar}
\inferrule[Fetch-Protect-SLH]
{\ensuremath{\Varid{c}\mathrel{=}\Varid{x}\mathbin{:=}\mathbf{protect}(\Varid{a}{[}\Varid{e}{]})} \\ \ensuremath{\Varid{e}_{1}\mathrel{=}\Varid{e}<\mathit{length}(\Varid{a})} \\
 \ensuremath{\Varid{e}_{2}\mathrel{=}\Varid{base}(\Varid{a})\mathbin{+}\Varid{e}} \\
 \ensuremath{\Varid{c}_{1}\mathrel{=}\Varid{m}\mathbin{:=}\Varid{e}_{1}} \\ \ensuremath{\Varid{c}_{2}\mathrel{=}\Varid{m}\mathbin{:=}\Varid{m}\mathbin{?}\bm{\mathrm{1}}\mathbin{:}\bm{\mathrm{0}}} \\
 \ensuremath{\Varid{c}_{3}\mathrel{=}\Varid{x}\mathbin{:=}\ast(\Varid{e}_{2}\;\otimes\;\Varid{m})} \\
 \ensuremath{\Varid{c'}\mathrel{=}\Varid{c}_{1};\mathbf{if}\;\Varid{m}\;\mathbf{then}\;\Varid{c}_{2};\Varid{c}_{3}\;\mathbf{else}\;\mathbf{fail}}}
{\ensuremath{\langle\Varid{is},\Varid{c}\mathbin{:}\Varid{cs},\mu,\rho\rangle\stepT{\mathbf{fetch}}{\epsilon}\langle\Varid{is},\Varid{c'}\mathbin{:}\Varid{cs},\mu,\rho\rangle}}
\end{mathpar}
\captionsetup{format=caption-with-line}
\caption{Software implementation of \ensuremath{\mathbf{protect}(\Varid{a}{[}\Varid{e}{]})}.}
\label{fig:fetch-slh}
\end{subfigure}
\caption{Semantics of \ensuremath{\mathbf{protect}}.}
%\end{framed}
\end{figure}

\mypara{Protect in Hardware}
Instruction \ensuremath{\Varid{x}\mathbin{:=}\mathbf{protect}(\Varid{r})} assigns the value of $r$, only after all
previous $\mathbf{guard}$ instructions have been executed, \ie when
the value has become stable and no more rollbacks are possible.
\Cref{fig:semantics-protect} formalizes this intuition.
Rule \srule{Fetch-Protect-Expr} fetches protect commands involving
simple expressions (\ensuremath{\Varid{x}\mathbin{:=}\mathbf{protect}(\Varid{e})}) and inserts the corresponding
protect instruction in the reorder buffer.
Rule \srule{Fetch-Protect-Array} piggy-backs on the previous rule by
splitting a protect of an array read (\ensuremath{\Varid{x}\mathbin{:=}\mathbf{protect}(\Varid{a}{[}\Varid{e}{]})})
into a separate assignment of the array value (\ensuremath{\Varid{x}\mathbin{:=}\Varid{a}{[}\Varid{e}{]}}) and
protect of the variable (\ensuremath{\Varid{x}\mathbin{:=}\mathbf{protect}(\Varid{x})}).
Rules \srule{Exec-Protect$_1$} and \srule{Exec-Protect$_2$} extend the
auxiliary relation \ensuremath{\rightsquigarrow}.
Rule \srule{Exec-Protect$_1$} evaluates the expression (\ensuremath{\Varid{v}\mathrel{=}\llbracket \Varid{e}\rrbracket^{\rho}}) and reinserts the instruction in the reorder buffer as if it were
a normal assignment.
However, the processor leaves the value wrapped inside the protect
instruction in the reorder buffer, i.e., \ensuremath{\Varid{x}\mathbin{:=}\mathbf{protect}(\Varid{v})}, to prevent
forwarding the value to the later instructions via the the transient
variable map.
When no guards are pending in the reorder buffer (\ensuremath{\mathbf{guard}(\anonymous ,\anonymous ,\anonymous )\;\not\in\;\Varid{is}_{1}}), rule \srule{Exec-Protect$_2$} transforms the instruction into a
normal assignment, so that the processor can propagate and commit its
value.

\mypara{Example}
Consider again \exNum{1} and the execution shown in
\Cref{fig:ex1-execution}. In the repaired program, \ensuremath{\Varid{x}\mathbin{+}\Varid{y}} is wrapped in
a $\mathbf{protect}$ statement.  As a result, directive \ensuremath{\mathbf{exec}\;\mathrm{5}}
produces value \ensuremath{\Varid{z}\mathbin{:=}\mathbf{protect}(\sourceColorOf{\mathrm{42}})}, instead of \ensuremath{\Varid{z}\mathbin{:=}\sourceColorOf{\mathrm{42}}}
which prevents instruction \ensuremath{\mathrm{7}} from executing (as its target address
is undefined), until all guards are resolved. This in turn
prevents leaking the transient value.

\mypara{Protect in Software}
%
% Even though existing processors do not to expose a |protect| primitive
% directly, it is possible to protect array reads in \emph{software} via
% SLH.
%
% Our approach is based on Speculative Load Hardening (SLH), a mitigation
% technique deployed to secure load instructions in the code generated
% by the Clang compiler~\cite{SLH}.
%
The software implementation of \ensuremath{\mathbf{protect}} applies SLH to array reads.
Intuitively, we rewrite array reads by injecting \emph{artificial}
data-dependencies between bounds-check conditions and the
corresponding addresses in load instructions, thus transforming
control-flow dependencies into data-flow dependencies.\footnote{\New{Technically, applying SLH to array reads is a program transformation.
%
% On the semantics level, we change the translation of the high-level |protectN| statement into low-level commands.
%
In our implementation (\S~\ref{sec:impl}), the SLH version of \ensuremath{\mathbf{protect}} inserts additional instructions into the program to perform the conditional update and mask operation described in rule \srule{Fetch-Protect-SLH}.}}
These data-dependencies validate control-flow decisions at runtime by
stalling speculative loads until the processor resolves their bounds
check conditions.\footnote{A fully-fledged security tool could apply static analysis techniques to infer array bounds.
Our implementation of \tool merely simulates SLH using a static constant instead of the actual lengths, as discussed in \Cref{sec:impl}. }
Formally, we replace rule \srule{Fetch-Protect-Array} with rule
\srule{Fetch-Protect-SLH} in Figure \ref{fig:fetch-slh}.
The rule computes the bounds check condition \ensuremath{\Varid{e}_{1}\mathrel{=}\Varid{e}<\mathit{length}(\Varid{a})},
the target address \ensuremath{\Varid{e}_{2}\mathrel{=}\Varid{base}(\Varid{a})\mathbin{+}\Varid{e}}, and generates commands that
abort the execution if the check fails, like for regular array reads.
%
% \mv[inline]{Maybe mention that in this calculus arrays carry their length and that in the tool we use static analysis to infer this information?}
% \todo[inline]{CD: More precisely, a secure tool would have to use some kind of static analysis, but our implementation merely simulates SLH, as discussed in \Cref{sec:impl}}
%
Additionally, the rule generates \emph{regular commands} that (i)
assign the result of the bounds check to a reserved variable \ensuremath{\Varid{m}} (\ensuremath{\Varid{c}_{1}\mathrel{=}\Varid{m}\mathbin{:=}\Varid{e}_{1}}), (ii) conditionally update the variable with a bitmask
consisting of all 1s or 0s
 (\ensuremath{\Varid{c}_{2}\mathrel{=}\Varid{m}\mathbin{:=}\Varid{m}\mathbin{?}\bm{\mathrm{1}}\mathbin{:}\bm{\mathrm{0}}}), and (iii) mask off the target address with the bitmask (\ensuremath{\Varid{c}_{3}\mathrel{=}\Varid{x}\mathbin{:=}\mathbin{*}(\Varid{e}_{2}\;\otimes\;\Varid{m})}).\footnote{Alternatively, it would be also possible to mask the loaded value, i.e., \ensuremath{\Varid{c}_{3}\mathrel{=}\Varid{x}\mathbin{:=}(\mathbin{*}\Varid{e}_{2})\;\otimes\;\Varid{m}}.
However, this alternative mitigation would still introduce transient data in various processor internal buffers, where it could be leaked.
 % which could then be leaked through microarchitectural data sampling attacks~\cite{ridl,Schwarz2019ZombieLoad}.
In contrast, we conservatively mask the address, which has the effect of stalling the load and preventing transient data from even entering the processor, thus avoiding the risk of leaking it altogether.
}
Since the target address in command \ensuremath{\Varid{c}_{3}} \emph{depends} on variable
\ensuremath{\Varid{m}}, the processor cannot read memory until the bounds check is
resolved.
If the check succeeds, the bitmask \ensuremath{\Varid{m}\mathrel{=}\bm{\mathrm{1}}} leaves the target
address unchanged (\ensuremath{\llbracket \Varid{e'}\rrbracket^{\rho}\mathrel{=}\llbracket \Varid{e'}\;\otimes\;\bm{\mathrm{1}}\rrbracket^{\rho}}) and
the processor reads the correct address normally.
Otherwise, the bitmask \ensuremath{\Varid{m}\mathrel{=}\bm{\mathrm{0}}} zeros out the target address
and the processor loads speculatively only from the constant address
\ensuremath{\mathrm{0}\mathrel{=}\llbracket \Varid{e'}\;\otimes\;\bm{\mathrm{0}}\rrbracket^{\rho}}.
(We assume that the processor reserves the first memory cell and
initializes it with a dummy value, e.g., \ensuremath{\mu(\mathrm{0})\mathrel{=}\mathrm{0}}.)
Notice that this solution works under the assumption that the
processor does not evaluate the conditional update \ensuremath{\Varid{m}\mathbin{:=}\Varid{m}\mathbin{?}\bm{\mathrm{1}}\mathbin{:}\bm{\mathrm{0}}} speculatively.
We can easily enforce that by compiling conditional updates to
non-speculative instructions available on commodity processors (e.g.,
the conditional move instruction CMOV on x86).
%
% \mv[inline]{Does ARM and AMD have a similar instruction?} Yes, no need to cite.
%

\mypara{Example}
Consider again \exNum{1}.
The optimal patch \ensuremath{\mathbf{protect}(\Varid{x}\mathbin{+}\Varid{y})} cannot be executed on existing
processors without support for a generic \ensuremath{\mathbf{protect}} primitive.
%
% \mv[inline]{But could be synthetized with a compiler pass. Better wording?}
%
Nevertheless, we can repair the program by applying SLH to the
individual array reads, i.e., \ensuremath{\Varid{x}\mathbin{:=}\mathbf{protect}(\Varid{a}{[}\Varid{i}_{1}{]})} and \ensuremath{\Varid{y}\mathbin{:=}\mathbf{protect}(\Varid{a}{[}\Varid{i}_{2}{]})}.

\section{Type System and Inference}
\label{sec:type-system}
% \mv[inline]{Mention that we cannot improve the analysis with
%   security annotations. The point is that there are no ``secret'' sinks. All the loads are public sinks.
% We cannot allow secret transient data
%   to flow to secret sinks. But, if the data is transient it could be public at runtime
%   and the attacker could affect a secret dependent memory access. }
%
%
In \Cref{sec:types}, we present a transient-flow type system which
statically rejects programs that can potentially leak through transient execution attacks.
These speculative leaks arise in programs as transient data flows from source to sink expressions.
\New{
Our type system does \emph{not} rely on user annotations to identify secret.
Indeed, our typing rules simply ignore security annotations and instead,
 conservatively reject programs that exhibit any source-to-sink
data flows.
Intuitively, this is because security annotations are not trustworthy when
programs are executed speculatively: public variables could contain transient
secrets and secret variables could flow speculatively into public sinks.
}
% \footnote{We remark that secrecy annotations would not meaningfully improve the precision of the type system unless secret data is partitioned from public data in memory. }
% \mv[inline]{Footnote 7. Deian: I believe that the partitioning suggested in Footnote 7  would eliminate spectre attacks altogether, not only improve the precision. (If memory is partitioned, a program that reads a public array out-of-bounds would still read public data ). Does it make sense to keep the footnote? CD: Removing the footnote makes sense to me}
%
\New{
Additionally, our soundness theorem, which states that well-typed
programs are \emph{speculatively} constant-time, assumes that programs are
sequentially constant-time.
We do \emph{not} enforce a \emph{sequential} constant-time discipline; this can
be done using existing constant-time type systems or interfaces (e.g.,~\cite{Watt:2019, HACL}).
}
Given an unannotated program, we apply constraint-based type
inference~\cite{Aiken96,Nielson98} to generate its use-def graph and
reconstruct type information~(\Cref{sec:fence-inference}).
Then, reusing off-the-shelf Max-Flow/Min-Cut algorithms, we analyze
the graph and locate potential speculative vulnerabilities in the form
of a variable min-cut set.
Finally, using a simple program repair algorithm we patch the program
by inserting a minimum number of \ensuremath{\mathbf{protect}} so that it cannot leak
speculatively anymore~(\Cref{sec:repair}).

\begin{figure}[t]
%\begin{framed}
\centering
\begin{subfigure}{\columnwidth}
\rulesize
\begin{mathpar}
\inferrule[Value]
{}
{\ensuremath{\Gamma\vdash\Varid{v}\mathbin{:}\tau} \ \shadedCons{\cnsSep\ \emptyset}}
\and
\inferrule[Var]
{\ensuremath{\Gamma(\Varid{x})\mathrel{=}\tau}}
{\ensuremath{\Gamma\vdash\Varid{x}\mathbin{:}\tau} \; \shadedCons{\cnsSep\ \ x\ \flows\ \alpha_x} }
\and
%% \inferrule[Uop]
%% {|Gamma :- e : tau|}
%% {|Gamma :- (eval uop e) : tau|}
%% \and
\inferrule[Bop]
{ \ensuremath{\Gamma\vdash\Varid{e}_{\mathrm{1}}\mathbin{:}\tau_{1}\;\shadedCons{\cnsSep\;\Varid{k}_{\mathrm{1}}}} \\
  \ensuremath{\Gamma\vdash\Varid{e}_{\mathrm{2}}\mathbin{:}\tau_{2}\;\shadedCons{\cnsSep\;\Varid{k}_{\mathrm{2}}}} \\
  \ensuremath{\tau_{\mathrm{1}}\;\flows\;\tau} \\
  \ensuremath{\tau_{\mathrm{2}}\;\flows\;\tau}
}
{\ensuremath{\Gamma\vdash\Varid{e}_{1}\;\oplus\;\Varid{e}_{2}\mathbin{:}\tau}  \; \shadedCons{\cnsSep\ k_1\  \cup\ k_2\
    \cup\ (e_1\ \flows\ e_1 \oplus e_2 )\
     \cup\ (e_2\ \flows\ e_1 \oplus e_2 )}
}
\and
\inferrule[Array-Read]
{  \ensuremath{\Gamma\vdash\Varid{e}\mathbin{:}\ensuremath{\Concrete}} \; \shadedCons{\cnsSep\ k}}
{\ensuremath{\Gamma\vdash\Varid{a}{[}\Varid{e}{]}\mathbin{:}\ensuremath{\Transient}} \; \shadedCons{\cnsSep\ k
    \cup\ ( e\ \flows\ \Concrete )\ \cup\  (\Transient\ \flows\ a[e])}
% OLD rule with arbitrary array expressions
% \inferrule[Array-Read]
% {|Gamma :- (sub e 1) : St| \; \shadedCons{\cnsSep\ k_1} \\ |Gamma :-  (sub e 2) : St| \; \shadedCons{\cnsSep\ k_2}}
% {|Gamma :- array e1 e2 : Tr| \; \shadedCons{\cnsSep\ k_1\ \cup\ k_2\
%     \cup\ ( e_1\ \flows\ \Concrete )\ \cup\ ( e_2\ \flows\ \Concrete )\ \cup\  (\Transient\ \flows\ e_1[e_2])}
}
\end{mathpar}
\captionsetup{format=caption-with-line}
\caption{Typing rules for expressions and arrays.}
\end{subfigure}%
\\
\begin{subfigure}{\columnwidth}
\rulesize
\begin{mathpar}
\inferrule[Asgn]
{\ensuremath{\Gamma\vdash\Varid{r}\mathbin{:}\tau} \ \shadedCons{\cnsSep\ k} \qquad
 \ensuremath{\tau\;\flows\;\Gamma(\Varid{x})}}
{\ensuremath{\Gamma,\protectedSet\vdash\Varid{x}\mathbin{:=}\Varid{r}} \ \shadedCons{\cnsSep\ k\ \cup\ (r\ \flows\ x)}}
\and
\inferrule[Protect]
{\ensuremath{\Gamma\vdash\Varid{r}\mathbin{:}\tau} \ \shadedCons{\cnsSep\ k}}
{\ensuremath{\Gamma,\protectedSet\vdash\Varid{x}\mathbin{:=}\mathbf{protect}(\Varid{r})} \ \shadedCons{\cnsSep\ k}}
\and
\inferrule[Asgn-Prot]
{\ensuremath{\Gamma\vdash\Varid{r}\mathbin{:}\tau} \ \shadedCons{\cnsSep\ k} \qquad
\ensuremath{\Varid{x}\;\in\;\protectedSet} }
{\ensuremath{\Gamma,\protectedSet\vdash\Varid{x}\mathbin{:=}\Varid{r}} \ \shadedCons{\cnsSep\ k\ \cup\ (r \flows x)}}
\and
\inferrule[Array-Write]
{\ensuremath{\Gamma\vdash\Varid{e}_{1}\mathbin{:}\ensuremath{\Concrete}} \ \shadedCons{\cnsSep k_1} \\
 \ensuremath{\Gamma\vdash\Varid{e}_{2}\mathbin{:}\tau} \ \shadedCons{\cnsSep k_2} }
{\ensuremath{\Gamma,\protectedSet\vdash\Varid{a}{[}\Varid{e}_{1}{]}\mathbin{:=}\Varid{e}_{2}} \ \shadedCons{\cnsSep k_1 \cup k_2 \cup (e_1 \flows \Concrete)}}
\and
% MV: No stable read
% \inferrule[Stable-Read]
% {|Gamma :- (sub e 1) : St| \; \shadedCons{\cnsSep\ k}  \\ |Gamma :- (sub e 2) : St|}
% {|Gamma, Prot :- x := stable_read(e1, e2)| \; \shadedCons{\cnsSep\
%     k \cup\ (e_1 \flows \Concrete) \cup\ (e_2 \flows \Concrete)}}
% \and
\inferrule[If-Then-Else]
{\ensuremath{\Gamma\vdash\Varid{e}\mathbin{:}\ensuremath{\Concrete}} \; \shadedCons{\cnsSep\ k} \\ \ensuremath{\Gamma,\protectedSet\vdash\Varid{c}_{\mathrm{1}}} \; \shadedCons{\cnsSep\ k_1} \\ \ensuremath{\Gamma,\protectedSet\vdash\Varid{c}_{\mathrm{2}}} \; \shadedCons{\cnsSep\ k_2}}
{\ensuremath{\Gamma,\protectedSet\vdash\mathbf{if}\;\Varid{e}\;\mathbf{then}\;\Varid{c}_{1}\;\mathbf{else}\;\Varid{c}_{2}} \; \shadedCons{\cnsSep\ k\ \cup\
  k_1\ \cup\ k_2\ \cup\ (e \flows \Concrete)}}
\end{mathpar}
\captionsetup{format=caption-with-line}
\caption{Typing rules for commands.\label{fig:transient-ts-lower}}
\end{subfigure}
\caption{Transient-flow type system and \colorbox{constColor}{constraints generation}.\label{fig:transient-ts}}
%\end{framed}
\end{figure}

\subsection{Type System}
\label{sec:types}
%
%We now present our transient flow type system.
%
% Our transient-flow type system prevents programs from leaking
% transient values via cache timing channels.
%
% To this end,
Our type system assigns a \emph{transient-flow type} to expressions
and tracks how transient values propagate within programs, rejecting
programs in which transient values reach commands which may leak them.
An expression can either be typed as \sinkColorOf{\emph{stable}}
(\ensuremath{\ensuremath{\Concrete}}) indicating that it cannot contain transient values during
execution, or as \sourceColorOf{\emph{transient}} (\ensuremath{\ensuremath{\Transient}}) indicating
that it can. % \mv{already in overview, can cut until here}
These types form a 2-point lattice~\cite{Landauer-Lattice}, which
allows stable expressions to be typed as transient, but not vice
versa, i.e., we define a can-flow-to relation \ensuremath{\flows} such that
\ensuremath{\ensuremath{\Concrete}\;\flows\;\ensuremath{\Transient}}, but \ensuremath{\ensuremath{\Transient}\;\not\flows\;\ensuremath{\Concrete}}.
\mypara{Typing Expressions}
% \mv[inline]{Explain which typing rule can
%   be relaxed if security annotations are available.}
%
Given a typing environment for variables \ensuremath{\Gamma\;\in\;\Conid{Var}\to \{\mskip1.5mu \ensuremath{\Concrete},\ensuremath{\Transient}\mskip1.5mu\}}, the typing judgment \ensuremath{\Gamma\vdash\Varid{r}\mathbin{:}\tau} assigns a
transient-flow type \ensuremath{\tau} to \ensuremath{\Varid{r}}.
%
 %% The judgement additionally collects a set of
 %% constraints~$\constrColorOf{c}$ which are used for type inference
 %% (\Cref{sec:fence-inference}).
%
Figure \ref{fig:transient-ts} presents selected rules \ifextended
(see Appendix~\ref{app:type-inference} for the rest).
\else
\New{(see~\cite{vassena2020automatically} for the rest).}
\fi
The \colorbox{constColor}{shaded} part of the rules generates type
constraints during type inference and are explained later.
Values can assume any type in rule \srule{Value} and variables are assigned
their respective type from the environment in rule \srule{Var}.
Rule \srule{Bop} propagates the type of the operands to the result of
binary operators \ensuremath{\oplus\;\in\;\{\mskip1.5mu \mathbin{+},<,\otimes\mskip1.5mu\}}.
%
% Rule \srule{Sub} allows
% relaxing types (from |St| to |Tr|) according to the transient-flow
% lattice.
%
Finally, rule \srule{Array-Read} assigns the \sourceColorOf{\emph{transient}}
type \ensuremath{\ensuremath{\Transient}} to array reads as the array may potentially be indexed out of
bounds during speculation.
Importantly, the rule requires the index expression to be typed \sinkColorOf{\emph{stable}} (\ensuremath{\ensuremath{\Concrete}}) to prevent programs from leaking through the
cache.
%
% \Red{
% Notice that this restriction cannot be relaxed, not even if
% we could statically determine from user-provided annotations that the security level of the array and the index expression is \emph{public}.
% %
% Intuitively, ``public'' transient expressions may temporarily contain \emph{secret} data, obtained, for example, by speculatively reading from a public array past its bounds, into the memory of an adjacent secret array like in \Cref{fig:ex1-execution}.}

% \mv[inline]{This paragraph needs a pass, maybe this is too repetitive?}

\mypara{Typing Commands}
Given a set of \New{implicitly} protected variables \ensuremath{\protectedSet}, we define a typing judgment
\ensuremath{\Gamma,\protectedSet\vdash\Varid{c}} for commands.
Intuitively, a command \ensuremath{\Varid{c}} is well-typed under environment \ensuremath{\Gamma} and
set \ensuremath{\protectedSet}, if \ensuremath{\Varid{c}} does not leak, under the assumption that the
expressions assigned to all variables in \ensuremath{\protectedSet} are protected using
the \ensuremath{\mathbf{protect}} primitive.
\Cref{fig:transient-ts-lower} shows our typing rules.
Rule \srule{Asgn} disallows assignments from
\sourceColorOf{\emph{transient}} to \sinkColorOf{\emph{stable}} variables (as \ensuremath{\ensuremath{\Transient}\;\not\flows\;\ensuremath{\Concrete}}).
Rule \srule{Protect} relaxes this policy as long as the right-hand
side is explicitly protected.\footnote{Readers familiar with
  information-flow control may see an analogy between \ensuremath{\mathbf{protect}} and
  the \textbf{declassify} primitive of some IFC
  languages~\cite{declassify}.}
Intuitively, the result of \ensuremath{\mathbf{protect}} is \sinkColorOf{\emph{stable}} and it
can thus flow securely to variables of any type.
%
% Nevertheless, the right-hand side of the assignment must still be
% well-typed to avoid leaking through array reads, i.e., the rule
% ensures that array reads still follow rule \srule{Array-Read}.
%
Rule \srule{Asgn-Prot} is similar, but instead of requiring an
explicit \ensuremath{\mathbf{protect}} statement, it demands that the variable is
accounted for in the protected set \ensuremath{\protectedSet}.
This is secure because all assignments to variables in \ensuremath{\protectedSet} will
eventually be protected through the repair function discussed later in
this section.
%
% MV: no stable-read
%
% Since primitive |x := stable_read(e1, e2)| corresponds to the array
% read |array e1 e2|, rule \srule{Stable-Read} requires the array and
% the index argument to be stable like in rule \srule{Array-Read}.
% %
% Similar to |protect|, the result of |stable_read| is
% \sinkColorOf{stable} and thus the type of the variable needs no
% constraints.
%
Rule \srule{Array-Write} requires the index expression used in array writes to be typed \sinkColorOf{\emph{stable}} (\ensuremath{\ensuremath{\Concrete}}) to avoid leaking through the cache, similarly to rule \srule{Array-Read}.
Notice that the rule does not prevent storing transient data, i.e., the stored value can have any type.
While this is sufficient to mitigate Spectre v1 attacks, it is inadequate to defend against Spectre v1.1.
Intuitively, the store-to-load forward optimization  enables additional \emph{implicit} data-flows between stored data to aliasing load instructions, thus enabling Spectre v1.1 attacks.
Luckily, to protect against these attacks we need only modify one clause of rule \srule{Array-Write}: we change the type system to conservatively treat stored values as \emph{sinks}  and therefore require them
to be typed \sinkColorOf{\emph{stable}} (\ensuremath{\Gamma\vdash\Varid{e}_{2}\mathbin{:}\ensuremath{\Concrete}}).\looseness=-1\footnote{\New{Both variants of rule \srule{Array-Write} are implemented in \tool (\S~\ref{sec:impl}) and evaluated (\S~\ref{sec:eval}).}}

\mypara{Implicit Flows}
To prevent programs from leaking data \emph{implicitly} through their
control flow, rule \srule{If-Then-Else} requires the branch condition
to be \sinkColorOf{\emph{stable}}.
This might seem overly restrictive, at first: why can't we accept a
program that branches on transient data, as long as it does not
perform any attacker-observable operations (e.g., memory reads and
writes) along the branches?
Indeed, classic information-flow control (IFC) type systems (e.g.,
\cite{Volpano:Smith:Irvine:Sound}) take this approach by keeping track
of an explicit program counter label.
Unfortunately, such permissiveness is \emph{unsound} under
speculation.
Even if a branch does not contain observable behavior, the value of
the branch condition can be leaked by the instructions that
\emph{follow} a mispredicted branch.
\New{In particular, upon a rollback, the processor may
\emph{repeat} some load and store instructions after the mispredicted branch and thus generate additional observations, which can implicitly reveal the value of the
branch condition.}\looseness=-1
%
%% The rollback and
%% the effects of rolling back may be observed by the attacker \eg
%% through timing or cache effects.
%
%In our semantics, this is witnesed by the $\textbf{rollback}$
%observation created on misprediction (see
% \Cref{fig:semantics-execute}).

%\mypara{Example}
%
%Consider again example \exNum{1} in \Cref{fig:ex1-code}.
%
%For an environment $\Gamma$, which types \verb|x| and \verb|y| as |Tr|,
%and \verb|i|$_1$, \verb|i|$_2$, and $z$ as |St|, \exNum{1} type checks
%under the promise that $z$ is protected, \ie |Gamma, {z} :- Ex1|.
%
%For an environment $\Gamma$, which types \verb|x|, \verb|y| and \verb|z| as
%|Tr| and \verb|i|$_1$, \verb|i|$_2$ and \verb|z_safe| as |St|,
%$\exNum{1}$ type checks under the empty promise set, \ie |Gamma, {} :- Ex1|.

\mypara{Example}
Consider the program
%
%\begin{center}
%\begin{tabular}{l}
\ensuremath{\{\mskip1.5mu \mathbf{if}\;\sourceColorOf{\Varid{tr}}}\ \ensuremath{\mathbf{then}}\ \ensuremath{\Varid{x}\mathbin{:=}\mathrm{0}}\ \ensuremath{\mathbf{else}}\ \ensuremath{\mathbf{skip}\mskip1.5mu\};}\ \ensuremath{\Varid{y}\mathbin{:=}\Varid{a}{[}\mathrm{0}{]}}.
 % where the condition |srcClr tr| is typed transient.
%\end{tabular}
%\end{center}
%
The program can leak the transient value of \ensuremath{\sourceColorOf{\Varid{tr}}} during speculative execution.
To see that, assume that the processor predicts that \ensuremath{\sourceColorOf{\Varid{tr}}} will
evaluate to \ensuremath{\mathbf{true}}.
Then, the processor speculatively executes the then-branch (\ensuremath{\Varid{x}\mathbin{:=}\mathrm{0}}) and
the load instruction (\ensuremath{\Varid{y}\mathbin{:=}\Varid{a}{[}\mathrm{0}{]}}), before resolving the
condition.
If \ensuremath{\sourceColorOf{\Varid{tr}}} is \ensuremath{\mathbf{true}}, the observation trace of the program contains a
single read observation.
However, if \ensuremath{\sourceColorOf{\Varid{tr}}} is \ensuremath{\mathbf{false}}, the processor detects a
misprediction, restarts the execution from the other branch (\ensuremath{\mathbf{skip}})
and executes the array read \emph{again}, producing a rollback and
\emph{two} read observations.
From these observations, an attacker could potentially make inferences
about the value of \ensuremath{\sourceColorOf{\Varid{tr}}}.
Consequently, if \ensuremath{\sourceColorOf{\Varid{tr}}} is typed as \ensuremath{\ensuremath{\Transient}}, our type system rejects
the program as unsafe.

% \item Rule \srule{Protect}
% \end{itemize}

\subsection{Type Inference}
\label{sec:fence-inference}
%
% \todo[inline]{If we have space, here we could have some more
%   high-level text on inference and explanation on the constraints.}
Our type-inference approach is based on type-constraints
satisfaction~\cite{Aiken96,Nielson98}.
Intuitively, type constraints restrict the types that variables and
expressions may assume in a program.
In the constraints, the possible types of variables and expressions
are represented by \emph{atoms} consisting of unknown types of
expressions and variables.
%  that can be instantianted with
% any type that satisfies the constraints.
%
Solving these constrains requires finding a \emph{substitution}, i.e.,
a mapping from atoms to concrete transient-flow types, such that all
constraints are \emph{satisfied} if we instantiate the atoms with
their type.%\mv{do we need the subs?}
Our type inference algorithm consists of 3 steps: (i) generate a set
of type constraints under an initial typing environment and protected
set that under-approximates the solution of the constraints, (ii)
construct the def-use graph from the constraints and find a cut-set,
and (iii) cut the transient-to-stable dataflows in the graph and
compute the resulting typing environment.
We start by describing the generation of constraints through the typing
judgment from \Cref{fig:transient-ts}.

\mypara{Type Constraints}
%
% \footnote{For space reasons, the
%   constraints generation is reported next to the typing judgment, but
%   we remark that these are two \emph{distinct} judgments.
% %
% In particular the \colorbox{constColor}{type-constraints generation}
% judgment \emph{ignores} the set of protected variables |Prot|, but to
% avoid confusion we include it in the judgment anyway.}
%
Given a typing environment \ensuremath{\Gamma}, a protected set \ensuremath{\protectedSet}, the
judgment \ensuremath{\Gamma,\protectedSet\vdash\Varid{r}\cnsSep\Varid{k}} type checks \ensuremath{\Varid{r}} and generates type
constraints \ensuremath{\Varid{k}}.
The syntax for constraints is shown in \Cref{fig:typing-constraints}.
Constraints are sets of can-flow-to relations involving concrete types
(\ensuremath{\ensuremath{\Concrete}} and \ensuremath{\ensuremath{\Transient}}) and \emph{atoms}, i.e., type variables corresponding
to program variables (e.g., \ensuremath{\alpha_{\Varid{x}}} for \ensuremath{\Varid{x}}) and unknown types
for expressions (e.g., \ensuremath{\Varid{r}}).
%
%%  type variables for
%% program variables, e.g., |sub alpha x| for |x|, and expressions, e.g.,
%% |sub alpha r| for |r|.
%
%% In the rules in \Cref{fig:transient-ts}, we abbreviate shortne type
%% variables for expression with the expression itself, i.e., we write
%% |r| instead of |sub alpha r|.
%
In rule \srule{Var}, constraint \ensuremath{\Varid{x}\;\flows\;\alpha_{\Varid{x}}} indicates
that the type variable of \ensuremath{\Varid{x}} should be at least as transient as the
unknown type of \ensuremath{\Varid{x}}.
This ensures that, if variable \ensuremath{\Varid{x}} is transient, then \ensuremath{\alpha_{\Varid{x}}}
can only be instantiated with type \ensuremath{\ensuremath{\Transient}}.
Rule \srule{Bop} generates constraints \ensuremath{\Varid{e}_{1}\;\flows\;\Varid{e}_{1}\;\oplus\;\Varid{e}_{2}} and
\ensuremath{\Varid{e}_{2}\;\flows\;\Varid{e}_{1}\;\oplus\;\Varid{e}_{2}} to reflect the fact that the unknown type
of \ensuremath{\Varid{e}_{1}\;\oplus\;\Varid{e}_{2}} should be at least as transient as the (unknown) type
of \ensuremath{\Varid{e}_{1}} and \ensuremath{\Varid{e}_{2}}.
Notice that these constraints correspond exactly to the premises \ensuremath{\tau_{1}\;\flows\;\tau} and \ensuremath{\tau_{2}\;\flows\;\tau} of the same rule.
Similarly, rule \srule{Array-Read} generates constraint \ensuremath{\Varid{e}\;\flows\;\ensuremath{\Concrete}} for the unknown type of the array index, thus forcing it to be typed \ensuremath{\ensuremath{\Concrete}}.\footnote{No constraints are generated for the type of the array because
our syntax forces the array to be a \emph{static}, constant value.
If we allowed arbitrary expressions for arrays the rule would require them to by typed as stable.}
In addition to these, the rule generates also the constraint \ensuremath{\ensuremath{\Transient}\;\flows\;\Varid{a}{[}\Varid{e}{]}}, which forces the type of \ensuremath{\Varid{a}{[}\Varid{e}{]}} to be
\ensuremath{\ensuremath{\Transient}}.
Rule \srule{Asgn} generates the constraint \ensuremath{\Varid{r}\;\flows\;\Varid{x}} disallowing
transient to stable assignments.
%
%  the type because we ignore the protected set during constraint
% generation, as explained in the footnote.
%
In contrast, rule \srule{Protect} does \emph{not} generate the
constraint \ensuremath{\Varid{r}\;\flows\;\Varid{x}} because \ensuremath{\Varid{r}} is explicitly protected.
Rule \srule{Asgn-Prot} generates the same constraint as rule
\srule{Asgn}, because type inference ignores the protected set, which
is computed in the next step of the algorithm.
The constraints generated by the other rules follow the same
intuition.
In the following we describe the inference algorithm in more detail.\looseness=-1
%
% We refer the interested reader to \Cref{app:type-inference} for details of our type inference algorithm.
%

% \subsection{Type Inference}
% \label{app:type-inference}
%
\mypara{Generating Constraints}
We start by collecting a set of constraints~$\constrColorOf{k}$ via
typing judgement~$\Gamma, \mathsf{Prot} \vdash s \ \cnsSep
k$.
For this, we define a dummy environment~$\Gamma^ \ast$ and protected
set~$\protectedSet^\ast$, such that $\Gamma^\ast, \protectedSet^\ast
\vdash c \ \cnsSep k$ holds for any command~$c$, (\ie we let \ensuremath{\Gamma^{\ast}\mathrel{=}\lambda \Varid{x}.\ensuremath{\Concrete}} and include \emph{all} variables in the cut-set)
and use it to extract the set of constraints~$\constrColorOf{k}$.
%
% The syntax for constraints is shown in \Cref{fig:typing-constraints-full}.
%
% The constraints relate \emph{atoms} which represent the unknown type
% of variables, \ie $\alpha_x$ for $x$, and expression, \ie $r$.
% %
% Constraints record can-flow-to relationships between the atoms and
% lattice values $\Transient$ and $\Concrete$.
% %
% They are accumulated via operator~$\cup$, where we identify $k_1 \cup
% \dots \cup k_n$ with the set $\theSet{k_1, \dots, k_n}$.

% \mypara{Collecting Constraints}
% %
% To collect the constraints for a command~$c$, We then compute a
% minimal set of variables $A$ such that $A$ is a cut-set
% of~$\constrColorOf{k}$, extract environment $\Gamma
% (\constrColorOf{k},A)$ and use $A$ as protected set. Statement $s$ is
% then guaranteed to type check under the inferred environment.

\mypara{Solutions and Satisfiability}
We define the solution to a set of constraints as a function~$\sigma$
from atoms to flow types, \ie $\sigma \in \textsc{Atoms} \mapsto
\theSet{\Transient, \Concrete}$, and extend solutions to map
$\Transient$ and $\Concrete$ to themselves. For
a set of constraints~$k$ and a solution function~$\sigma$,
we write $\sigma \vdash k$ to say that the constraints $k$ are
satisfied under solution~$\sigma$.
A solution $\sigma$ satisfies $k$, if all can-flow-to constraints
hold, when the atoms are replaced by their values under $\sigma$ (Fig.~\ref{fig:typing-constraints}).
We say that a set of constraints~$k$ is satisfiable, if there is a
solution~$\sigma$ such that~$\sigma \vdash k$.

\mypara{Def-Use Graph \& Paths}
The constraints generated by our type
system give rise to the def-use graph of the type-checked program.
For a set of constraints~$k$, we call a sequence of atoms~$a_1 \dots
a_n$ a \emph{path} in $k$, if $a_i \flows a_{i+1} \in k$ for $i \in \{1,
\dots, n-1\}$ and say that $a_1$ is the path's entry and $a_n$ its exit. A
$\Transient$-$\Concrete$ path is a path with entry~$\Transient$ and
exit $\Concrete$. A set of constraints~$k$ is satisfiable if and only
if there is no $\Transient$-$\Concrete$ path in $k$, as such a path
would correspond to a derivation of \emph{false}.
If $k$ is satisfiable, we can compute a solution $\sigma(k)$ by
letting $\sigma(k)(a) = \Transient$, if there is a path with
entry~$\Transient$ and exit $a$, and $\Concrete$ otherwise.\looseness=-1

\mypara{Cuts}
If a set of constraints is unsatisfiable, we can make it satisfiable
by removing some of the nodes in its graph or equivalently protecting
some of the variables.
A set of atoms~$A$ \emph{cuts} a path $a_1 \dots a_n$, if some $a \in
A$ occurs along the path, \ie there exists $a \in A$ and $i \in \{1,
\dots, n\}$ such that $a_i=a$.
We call~$A$ a cut-set for a set of constraints~$k$, if $A$ cuts
all~$\Transient$-$\Concrete$ paths in~$k$.
A cut-set $A$ is minimal for $k$, if all other cut-sets~$A'$ contain
as many or more atoms than $A$, \ie $\#A \leq \#A'$.
%
%% The problem of finding a minimal cut-set is an instance of them
%% Min-Cut/Max-Flow problem, and we can reuse existing efficient
%% algorithms~\cite{MinCut} to compute a solution.\mv{can cut: already
%%   written in overview}

\mypara{Extracting Types From Cuts}
From a set of variables~$A$ such that $A$ is a cut-set of
constraints~$k$, we can extract a typing environment~$\Gamma(k,A)$ as
follows: for an atom~$\alpha_x$, we define $\Gamma (k,A)(x) =
\Transient$, if there is a path with entry $\Transient$ and
exit~$\alpha_x$ in~$k$ that is not cut by $A$, and let $\Gamma
(k,A)(x) = \Concrete$ otherwise.
%
% \vspace*{-.1\baselineskip}
\begin{prop}[Type Inference]
If $\ \Gamma^\ast, \protectedSet^\ast \vdash c \
\cnsSep k$ and $A$ is a set of variables that cut $k$,
then $\Gamma (k,A), A \vdash c$.
\label{prop:typeInfer}
\end{prop}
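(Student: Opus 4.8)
The plan is to fix the global constraint set $k$ together with the cut-set $A$, and to reason throughout via the canonical solution they induce. I define $\sigma(a) = \Transient$ precisely when $k$ contains a $\Transient$-to-$a$ path not cut by $A$, and $\sigma(a) = \Concrete$ otherwise. By construction $\sigma(\Transient) = \Transient$, $\sigma(\Concrete) = \Concrete$, and on each variable atom $\sigma(\alpha_x) = \Gamma(k,A)(x)$; in particular every protected variable $x \in A$ has $\sigma(\alpha_x) = \Concrete$, since any $\Transient$-to-$\alpha_x$ path ends at the cut node $\alpha_x$ and is thus cut. The cut atoms are exactly $\{\alpha_x \mid x \in A\}$.

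Before the induction I would record two facts about $\sigma$, both direct from the definition. \textbf{Monotonicity}: for every edge $a \flows b \in k$ whose target $b$ is not a cut atom, $\sigma(a) \flows \sigma(b)$ --- if $\sigma(a) = \Transient$, an uncut $\Transient$-to-$a$ path exists, its endpoint $a$ is uncut, and extending it along the edge to the non-cut $b$ keeps it uncut, so $\sigma(b) = \Transient$. \textbf{Sinks-are-stable}: for every edge $a \flows \Concrete \in k$, $\sigma(a) = \Concrete$ --- otherwise extending an uncut $\Transient$-to-$a$ path to $\Concrete$ (never a cut atom) yields an uncut $\Transient$-$\Concrete$ path, contradicting that $A$ cuts $k$. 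This second fact is exactly where the cut hypothesis is consumed.

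I would then prove the statement by induction on the derivation of $\Gamma^\ast, \protectedSet^\ast \vdash c \cnsSep k$, strengthened by the expression claim that each subexpression $e$ admits $\Gamma(k,A) \vdash e : \tau_e$ for the type $\tau_e$ that $\sigma$ assigns to $e$'s atom. The expression cases are mechanical: \srule{Var} yields $\Gamma(k,A)(x) = \sigma(\alpha_x)$ directly, \srule{Value} is free, \srule{Bop} discharges its premises $\tau_1 \flows \tau$ and $\tau_2 \flows \tau$ with $\tau = \sigma(e_1 \oplus e_2)$ using Monotonicity on the edges $e_i \flows e_1 \oplus e_2$ (the target is an expression atom, hence not cut), and \srule{Array-Read} obtains its index premise $\Gamma(k,A) \vdash e : \Concrete$ from Sinks-are-stable applied to the emitted $e \flows \Concrete$ edge, while the edge $\Transient \flows a[e]$ forces the result type $\Transient$. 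For commands the only delicate case is assignment: since $\protectedSet^\ast$ contains every variable, generation always fired \srule{Asgn-Prot}, so the edge $r \flows x$ is present in $k$; if $x \in A$ I re-derive with \srule{Asgn-Prot} (whose sole obligation is $x \in A$), and if $x \notin A$ I use \srule{Asgn}, discharging $\tau_r \flows \sigma(\alpha_x)$ by Monotonicity on that edge. The pointer-store, array-write, \srule{If-Then-Else} and \srule{While} cases each reduce, for their index or guard expressions, to Sinks-are-stable on the emitted $e \flows \Concrete$ edge, and otherwise to the induction hypothesis on subcommands; \srule{Protect}, \srule{Skip}, \srule{Fail} and sequencing are immediate.

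The point requiring care --- and the one I would flag as the principal obstacle --- is that $\Gamma(k,A)$ and $\sigma$ are defined relative to the \emph{entire} graph $k$, whereas a structural induction would naively hand each subcommand only its own sub-constraints. The fix is to keep $k$, $A$, and hence $\sigma$ and $\Gamma(k,A)$, fixed throughout, letting only the derivation shrink: because constraint generation accumulates edges monotonically, every edge invoked in a subcase already lies in the global $k$, so both sub-lemmas apply under the global cut hypothesis, and the induction hypothesis is always invoked under the \emph{same} $\Gamma(k,A)$ and $A$. Framing the induction this way, rather than recomputing environments locally, is what makes the proof go through cleanly.
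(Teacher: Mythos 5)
You cannot be checked against the paper's own argument for the simple reason that the paper never gives one: Proposition~\ref{prop:typeInfer} is stated in \S\ref{sec:fence-inference} and relied on by Corollary~\ref{alltogether-corollary}, but no proof of it appears in the main text or in the appendix of the extended version (the appendix proves only the type-preservation lemmas and the consistency/security results). Judged on its own, your argument is sound and supplies exactly the missing content. Your canonical $\sigma$ is the paper's $\sigma(k)$ relativized to the cut, restricting to $\Gamma(k,A)$ on variable atoms; your \emph{Monotonicity} and \emph{Sinks-are-stable} facts isolate the only two places where the graph structure and the cut hypothesis are consumed; and the case split on $x\in A$ in the assignment case---re-deriving with \srule{Asgn-Prot} when the variable is cut and with \srule{Asgn} otherwise---is precisely the point of the proposition. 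Your closing remark about keeping $k$ and $A$ global while only the derivation shrinks is also the right way to set up the induction, since constraint generation only unions the subderivations' edges.

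One wrinkle, inherited from the paper's definitions rather than introduced by you: the constraint syntax distinguishes the rhs atom $x$ (the node generated by \srule{Var} and targeted by the edge $r\flows x$ of \srule{Asgn}) from the type variable $\alpha_x$ (the exit used to define $\Gamma(k,A)(x)$), and you silently identify the two when you declare the cut atoms to be $\{\alpha_x\mid x\in A\}$ and when you close the \srule{Var} case. If the cut removed only the $\alpha_x$ nodes it would not sever the def-use paths, which pass through the $x$ nodes; and if a variable is assigned a transient rhs but never used, no edge $x\flows\alpha_x$ is ever generated, $\Gamma(k,A)(x)$ collapses to $\Concrete$, and \srule{Asgn} fails for $x\notin A$. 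The reading your proof needs---and surely the intended one---is that a variable's def-use node and $\alpha_x$ are identified (equivalently, that the assignment edge targets $\alpha_x$). Under that identification your two lemmas and the induction go through; it would be worth stating the identification explicitly rather than leaving it implicit.
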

%
% \vspace*{-.425\baselineskip}
\mypara{Remark}
To infer a repair using exclusively SLH-based \ensuremath{\mathbf{protect}} statements,
we simply restrict our cut-set to only include variables that are
assigned from an array read.

\mypara{Example}
Consider again \exNum{1} in \Cref{fig:ex1-code}.
The graph defined by the constraints $k$, given by $\Gamma^\ast,
\protectedSet^\ast \vdash \exNum{1} \ \cnsSep k$ is shown in
\Cref{fig:ex1:data-flow}, where we have omitted $\alpha$-nodes.
The constraints are not satisfiable, since there are
$\Transient$-$\Concrete$ paths.
Both $\theSet{x,y}$ and $\theSet{z}$ are cut-sets, since they cut each
$\Transient$-$\Concrete$ path, however, the set $\theSet{z}$ contains
only one element and is therefore minimal.
The typing environment $\Gamma(k, \theSet{x,y})$ extracted from the
sub-obptimal cut $\theSet{x,y}$ types all variables as $\Concrete$,
while the typing extracted from the optimal cut, i.e., $\Gamma(k,
\theSet{z})$ types $x$ and $y$ as $\Transient$ and $z$, $i_1$ and
$i_2$ as $\Concrete$.
By \Cref{prop:typeInfer} both $\Gamma(k, \theSet{x,y}), \theSet{x,y}
\vdash \exNum{1}$ and $\Gamma(k, \theSet{z}), \theSet{z} \vdash
\exNum{1}$ hold.

\begin{figure}[t]
%\begin{framed}
\centering
\begin{subfigure}{.45\textwidth}
\rulesize
\begin{align*}%{tabular}{l l c l}
\text{Atoms }&   a\ \ensuremath{\Coloneqq} \  \alpha_x \mid\ r \\
\text{Constraints }&  k\ \ensuremath{\Coloneqq} \ a\ \flows\ \Concrete\ \mid\ \Transient\ \flows\ a\ \mid\ a\ \flows\ a \\
                   &  \mid\ k\ \cup\ k\ \mid\ \emptyset \\
\text{Solutions }&  \sigma\ \in\  \textsc{Atoms}\ \uplus\ \ensuremath{\{\mskip1.5mu \ensuremath{\Concrete},\ensuremath{\Transient}\mskip1.5mu\}}\ \mapsto\ \ensuremath{\{\mskip1.5mu \ensuremath{\Concrete},\ensuremath{\Transient}\mskip1.5mu\}} \\
 & \qquad \text{ where } \ensuremath{\sigma(\ensuremath{\Concrete})\mathrel{=}\ensuremath{\Concrete}} \text{ and } \ensuremath{\sigma(\ensuremath{\Transient})\mathrel{=}\ensuremath{\Transient}}
\end{align*}
\end{subfigure}%
%\vrule
\begin{subfigure}{.55\textwidth}
\rulesize
\begin{mathpar}
\inferrule[Sol-Empty]
{}
{\sigma\ \vdash\ \emptyset}
\and
\inferrule[Sol-Set]
{\sigma\ \vdash\ k_1 \quad \sigma\ \vdash\ k_2}
{\sigma\ \vdash\ k_1\ \cup\ k_2}
\and
\inferrule[Sol-Trans]
{\Transient\ \flows\ \sigma(a_2)}
{\sigma\ \vdash\ \Transient\ \flows\ a_2}
\and
\inferrule[Sol-Stable]
{\sigma(a_1)\ \flows\ \Concrete}
{\sigma\ \vdash\ a_1\ \flows\ \Concrete}
\and
\inferrule[Sol-Flow]
{\sigma(a_1)\ \flows\ \sigma(a_2)}
{\sigma\ \vdash\ a_1\ \flows\ a_2}
\end{mathpar}%
\end{subfigure}
\caption{Type constraints and satisfiability.}
\label{fig:typing-constraints}
%\end{framed}
\end{figure}

\subsection{Program Repair}
\label{sec:repair}
As a final step, our repair algorithm \ensuremath{ repair(\Varid{c},\protectedSet) }
traverses program \ensuremath{\Varid{c}} and inserts a \ensuremath{\mathbf{protect}} statement for each
variable in the cut-set \ensuremath{\protectedSet}.
For simplicity, we assume that programs are in static single assignment (SSA)
form.\footnote{\New{We make this assumption to simplify our analysis and security proof. We also omit phi-nodes from our calculus to avoid cluttering the semantics and remark that this simplification does not affect our \emph{flow-insensitive} analysis. Our implementation operates on Cranelift’s SSA form (\S~\ref{sec:impl}).
}}
Therefore, for each variable \ensuremath{\Varid{x}\;\in\;\protectedSet} there is a single assignment~\ensuremath{\Varid{x}\mathbin{:=}\Varid{r}}, and our repair algorithm simply replaces it with \ensuremath{\Varid{x}\mathbin{:=}\mathbf{protect}(\Varid{r})}.
%

% \mv[inline]{This subsection is just a paragraph.}

%% \mypara{Example}
%% %
%% Consider again \exNum{1} in \Cref{fig:ex1-code} from
%% \Cref{sec:overview}. The cut-set shown on the right in
%% \Cref{fig:ex1:data-flow} produces the repair shown in green.
%
% \mypara{Example}
% %
% Consider again \exNum{2} and its dataflow graph shown in
% \Cref{fig:graph-ex2}. The cut-set~$\theSet{x}$ produces the repaired
% program below.
% %
% \begin{equation*}
%   \begin{array}[t]{l}
%     x := \verb|protect|(a[i]); \
%     b[y] := x;  \\
%     \textbf{if} \ 0 \leq x \ \textbf{then} \ z := y \ \textbf{else} \ \textbf{skip}
%   \end{array}
% \end{equation*}
%

\section{Consistency and Security}
\label{sec:correctness}
% \todo[inline]{Somewhere at the beginning we should say that we
%   liberally say ``speculatively'' for speculative and out-of-order.}
%
We now present two formal results about our speculative semantics and
the security of our type system.
First, we prove that the semantics from
Section \ref{sec:semantics} is \emph{consistent} with sequential
program execution (Theorem \ref{thm:consistent}).
Intuitively, programs running on our processor produce the same
results (with respect to the memory store and variables) as if their
commands were executed in-order and without speculation.
The second result establishes that our type system is \emph{sound}
(Theorem \ref{thm:ts-sound}).
We prove that our transient-flow type system in combination with a
standard constant-time type system
(e.g.,~\cite{Watt:2019,libsignalSP}) enforces constant time under
speculative execution~\cite{pitchfork}.
%
% trace equivalence under speculative
% execution, a security property inspired by existing formal approaches
% to secure speculative execution \cite{Guarnieri20,cheang-csf19}.
% %
% Intuitively, the theorem guarantees that well-typed programs do not
% leak speculatively more than they leak sequentially.
%
We provide full definitions and proofs in
\ifextended
Appendix \ref{app:proofs}.
\else
\New{the extended version of this paper~\cite{vassena2020automatically}.}
\fi

% %
% In other words, the type system guarantees that attackers cannot abuse
% speculative execution to leak more information that could already be
% leaked without.
% %
% Notice that our type-system does not protect against specific classes
% of attacks (e.g., cache and timing attacks), but it prevents
% \emph{all} attacks from leaking more data through speculation and
% out-of-order execution.
% %
% As a result, our type system can extend the security guarantees of
% existing static analyzers to the speculative setting \emph{modularly},
% allowing security practitioners to enforce security piecewise.
% %
% For example, previous type systems enforcing constant-time programming
% \cite{cauligi:2019:fact,Watt:2019}, (timing-sensitive)
% non-interference
% \cite{Volpano:1997,Smith:Volpano:MultiThreaded,Zhang:2012}, or static
% analysis detecting cache-based side-channels
% \cite{Wang:2019,CacheAudit}, need not to be adapted in the light of
% speculative execution.
%

%
% We now discuss some notation and state our theorems.
% %
% Full definitions and proofs can be found in Appendix \ref{app:proofs}.

\mypara{Consistency}
We write \ensuremath{\Conid{C}\;\Downarrow_{\Conid{O}}^{\Conid{D}}\;\Conid{C}'} for the complete \emph{speculative}
execution of configuration \ensuremath{\Conid{C}} to final configuration \ensuremath{\Conid{C}'}, which
generates a trace of observations \ensuremath{\Conid{O}} under list of directives \ensuremath{\Conid{D}}.
Similarly, we write \ensuremath{\langle\mu,\rho\rangle\Downarrow_{\Conid{O}}^{\Varid{c}}\langle\mu',\rho'\rangle} for the
\emph{sequential} execution of program \ensuremath{\Varid{c}} with initial memory \ensuremath{\mu}
and variable map \ensuremath{\rho} resulting in final memory \ensuremath{\mu'} and variable
map \ensuremath{\rho'}.
%
% This reduction relation is the standard while-language big-step
% semantics extended with an observation trace |O|.
%
% The sequential semantics generates only a subset of the
% observations generated by the speculative semantics.
% %
% In particular, sequential executions perform no |rollback p| and
% speculative reads |load(n, ps)|.
%
To relate speculative and sequential observations, we define a
projection function, written \ensuremath{\Conid{O}{\downarrow}}, which removes prediction
identifiers, rollbacks, and misspeculated loads and stores.
%
%% \footnote{The function also removes the prediction identifiers
%%   from the loads that were correctly guessed---the sequential
%%   semantics has no prediction identifiers.}

%% \mv[inline]{Fix: The sequential execution will not have speculative
%%   observations, but the order of the observations may be different,
%%   e.g., |x := a[0] ; y := b[0]| may execute the second load before the
%%   first. Then the observations should be equal up to permutation in
%%   the theorem.}

%We now state our consistency theorem.
%
\begin{thm}[Consistency]
\label{thm:consistent}
For all programs \ensuremath{\Varid{c}}, initial memory stores \ensuremath{\mu}, variable maps \ensuremath{\rho},
and directives \ensuremath{\Conid{D}}, if \ensuremath{\langle\mu,\rho\rangle\Downarrow_{\Conid{O}}^{\Varid{c}}\langle\mu',\rho'\rangle} and \ensuremath{\langle[\mskip1.5mu \mskip1.5mu],[\mskip1.5mu \Varid{c}\mskip1.5mu],\mu,\rho\rangle\Downarrow_{\Conid{O}'}^{\Conid{D}}\langle[\mskip1.5mu \mskip1.5mu],[\mskip1.5mu \mskip1.5mu],\mu'',\rho''\rangle}, then \ensuremath{\mu'\mathrel{=}\mu''}, \ensuremath{\rho'\mathrel{=}\rho''}, and \ensuremath{\Conid{O}\cong\Conid{O}'{\downarrow}}.
\end{thm}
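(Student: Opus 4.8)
The plan is to prove consistency by a simulation argument that relates each reachable speculative configuration to a residual sequential computation, using the given sequential run $\langle\mu,\rho\rangle\Downarrow^{c}_{O}\langle\mu',\rho'\rangle$ as a reference. Three structural facts drive the argument. First, only $\mathbf{retire}$ steps alter the committed store and variable map: $\mathbf{fetch}$ only rearranges $is$ and $cs$, and rule \srule{Execute} threads $\mu$ and $\rho$ through unchanged, discarding the transient map. Second, retirement is strictly in-order, always consuming the head of the reorder buffer. Third, no $\mathbf{guard}$ is ever retired as a guard: by \srule{Exec-Branch-Ok} a correctly predicted guard is rewritten to $\mathbf{nop}$, and by \srule{Exec-Branch-Mispredict} a mispredicted one triggers a rollback that discards every instruction after it. Hence the only state-updating instructions that survive to retirement are resolved assignments $x := v$ and stores $\mathbf{store}(n,v)$, committed in fetch order along the eventually-correct path.

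First I would define a decoding $\mathsf{seq}(is,cs)$ that reconstructs the residual sequential command: in-flight assignments and loads decode back to the source assignments they were fetched from, $\mathbf{nop}$ to $\mathbf{skip}$, and $\mathbf{protect}$ instructions to their pending assignments, while at each pending $\mathbf{guard}$ the decoding follows the branch that agrees with the reference sequential run (switching to the stored alternative stack $cs'$ and dropping the trailing instructions at the first mispredicted guard). The central invariant is then a pure state invariant: for every configuration $\langle is, cs, \mu_c, \rho_c\rangle$ reachable from $\langle[],[c],\mu,\rho\rangle$ under a prefix of $D$, the residual run $\langle\mu_c,\rho_c\rangle\Downarrow^{\mathsf{seq}(is,cs)}\langle\mu',\rho'\rangle$ reaches the same final state. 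Instantiating this at the terminal configuration $\langle[],[],\mu'',\rho''\rangle$, where $\mathsf{seq}$ is empty, forces $\mu''=\mu'$ and $\rho''=\rho'$.

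I would establish preservation of the invariant by case analysis on the directive. The $\mathbf{fetch}$ cases are routine, since flattening a command (e.g.\ \srule{Fetch-Seq}, \srule{Fetch-Array-Load}, \srule{Fetch-If-True}) fixes $(\mu_c,\rho_c)$ and leaves $\mathsf{seq}$ unchanged up to sequential equivalence. The $\mathbf{retire}$ cases peel one assignment or store off the residual program while making the matching update to $(\mu_c,\rho_c)$. The key supporting lemma lives in the $\mathbf{exec}$ cases: out-of-order evaluation agrees with in-order evaluation. Because $\phi(\rho_c, is_1)$ marks every unresolved preceding assignment as $\bot$ and \srule{Exec-Asgn}, \srule{Exec-Load} require their operand to be defined, an instruction can fire only once all its data dependencies are resolved, so by induction the value it computes equals the value it would receive at the corresponding point of the reference run; the premise $\mathbf{store}(\cdot,\cdot)\notin is_1$ guarantees loads never read stale memory, since every program-order-preceding store is by then already committed to $\mu_c$ and, by in-order retirement, no later store can have been retired ahead of it. Thus rewriting $x := e$ to $x := v$, or a load to $x := \mu(n)$, yields an equivalent residual program.

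The delicate part, where I expect the real work, is the observation component $O\cong O'\!\downarrow$ together with the rollback rule \srule{Exec-Branch-Mispredict}. For the state invariant, rollback is trivial---it touches neither $\mu_c$ nor $\rho_c$, and resetting the command stack to $cs'$ exactly restores $\mathsf{seq}$---but it is precisely here that the projection $\downarrow$ must be shown to erase the right observations. I would maintain a subsidiary invariant tying the pending-identifier list $ps$ recorded in each $\mathbf{read}(n,ps)$ and $\mathbf{write}(n,ps)$ observation to the guards under which it was issued, so that a read or write is misspeculated exactly when one of those guards later mispredicts. This lets me show that $\downarrow$ deletes precisely the memory accesses generated past a rolled-back guard and retains exactly those of retired, correct-path loads and stores; these surviving accesses carry the same addresses and values as the sequential observations, so they match $O$ up to $\cong$, which also absorbs the out-of-order issuing order and the stripped identifiers. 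Combining the state invariant at termination with this observation correspondence yields the theorem.
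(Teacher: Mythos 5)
Your proposal is sound in outline, but it takes a genuinely different route from the paper, so a comparison is worthwhile. You run a single forward simulation directly against the sequential semantics: an abstraction function $\mathsf{seq}(is,cs)$ decodes each reachable speculative configuration into a residual sequential program (resolving each pending $\mathbf{guard}$ by consulting the reference run and switching to the stored rollback stack at the first misprediction), and the invariant ``the residual run from the committed state reaches $(\mu',\rho')$'' is preserved by every directive and collapses to the theorem at the terminal configuration. The paper instead factors the argument into two pieces: Lemma~\ref{lemma1} constructively turns the given sequential run into a speculative run under a canonical \emph{sequential schedule} (fetch, execute, and retire each instruction immediately, always predicting correctly, so no rollbacks occur), and Lemma~\ref{lemma:general-consistency} then shows that \emph{any} two valid schedules, one rollback-free, drive the machine to the same final configuration with traces equal up to permutation and filtering. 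That second lemma is where the paper's real work lives: it introduces a $J$-equivalence relation tracking divergence past a single mispredicted guard and repeatedly \emph{reorders directives} of one schedule to synchronize with the other, justified by the confluence property of Lemma~\ref{lemma:valid-consistent}. Your approach buys a more direct argument with no schedule reordering or confluence lemma, at the price of a heavier invariant: the decoding must remain meaningful under partial out-of-order resolution of the buffer, and the trace correspondence $O\cong O'{\downarrow}$ must be threaded through the same induction rather than isolated in a separate permutation lemma. If you carry the plan out, make two points explicit. First, as in the paper's appendix statement, you need $D$ to be a \emph{valid} schedule (each fetched instruction executed at most once); otherwise a resolved $\mathbf{store}(n,v)$ could be re-executed and emit a duplicate $\mathbf{write}$ observation, breaking the trace claim even though your state invariant survives. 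Second, your filtering invariant must also cover observations issued under pending $\mathbf{fail}$ instructions (the projection $\downarrow$ removes those as well), and must account for \emph{correct-path} accesses fetched after a guard that later mispredicts: they are issued, discarded by the rollback, and re-issued, and only the re-issue survives $\downarrow$ --- this is precisely what makes each sequential observation matched exactly once.
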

%
%\mypara{Consistency}
%
The theorem ensures equivalence of the final memory stores,
variable maps, and observation traces from the sequential and the
speculative execution.
Notice that trace equivalence is up to \emph{permutation}, i.e., \ensuremath{\Conid{O}\cong\Conid{O}'{\downarrow}}, because the processor can execute load and store
instructions out-of-order.% \footnote{For example, program |x := array
%% a 0 ; y :=
  % array b 0| can generate the speculative trace |read(eval base b)
  % . read(eval base a)) ~= read(eval base a) . read(eval base b)| from
  % the sequential trace. }
% %
% The theorem above also assumes the speculative reduction as well to
% rule out ill-formed directives that would get the processor
% \emph{stuck} (e.g., trying to retire instructions from an empty
% reorder buffer).
%
%% We remark that actual processors schedule stages in such a way that
%% these faulty executions never occur.
%

\mypara{Speculative Constant Time}
%
% Speculative non-interference is parametric in the security policy that
% specifies which variables and part of the memory are controlled by the
% attacker~\cite{Guarnieri20}.
% %
% In the following, we write |LL| for the set of public variables and
% memory locations that are \emph{observable} by the attacker.
% %
% Two variable maps are \emph{indistinguishable} to the attacker,
% written |rho1 &~ rho2|, if and only if |eval rho1 x = eval rho2 x| for
% all |x inset LL|.
% %
% Similarly, memory stores are related pointwise, i.e., |mu1 &~ mu2| iff
% |eval mu1 n = eval mu2 n| for all |n inset LL|.% .\footnote{Since our
  % calculus does not feature dynamic memory allocation, we assume for
  % simplicity that |LL|-equivalent memories have the same layout, i.e.,
  % they have the same \emph{number} of secret locations at the same
  % addresses. We could have also indexed the relation with a
  % \emph{bijection} to account for differences in memory addresses
  % \cite{banerjee:2005:stack}. }
%
In our model, an attacker can leak information through the
architectural state (i.e., the variable map and the memory store) and
through the cache by supplying directives that force the execution of
an otherwise constant-time cryptographic program to generate different
traces.
\New{In the following, the relation \ensuremath{\approx_{\Blue{\Conid{L}}}} denotes \ensuremath{\Blue{\Conid{L}}}-equivalence, i.e.,
equivalence of configurations
% memories, and variable maps
with respect to a security policy \ensuremath{\Blue{\Conid{L}}} that specifies which variables and arrays are public (\ensuremath{\Blue{\Conid{L}}}) and attacker observable.
Initial and final configurations are \ensuremath{\Blue{\Conid{L}}}-equivalent (\ensuremath{\Conid{C}_{1}\approx_{\Blue{\Conid{L}}}\Conid{C}_{2}}) if the values of public variables in the variable maps and the content of public arrays in the memories coincide, i.e., \ensuremath{\forall\;\Varid{x}\;\in\;\Blue{\Conid{L}}}\ \ensuremath{.}\ \ensuremath{\rho_{1}(\Varid{x})\mathrel{=}\rho_{2}(\Varid{x})} and \ensuremath{\forall\;\Varid{a}\;\in\;\Blue{\Conid{L}}} and addresses \ensuremath{\Varid{n}\;\in\;\{\mskip1.5mu \Varid{base}(\Varid{a}),\mathbin{...},\Varid{base}(\Varid{a})\mathbin{+}\mathit{length}(\Varid{a})\mathbin{-}\mathrm{1}\mskip1.5mu\}}, \ensuremath{\mu_{1}(\Varid{n})\mathrel{=}\mu_{2}(\Varid{n})}, respectively.
}
%
% We similarly lift |LL|-equivalence to configurations.

% property considers only the observation trace and ignores (potential)
% leaks through
%
% Previous works on constant time cryptography have largely addressed
% the problem of controlling these leaks, e.g., by disallowing secret
% dependent branches and memory
% accesses~\cite{cauligi:2019:fact,Watt:2019}.

% \mv[inline]{Motivation: the architectural state is always reverted
%   over mispeculations. }
%
% \mv[inline]{Better name? Just Trace Equivalence Preservation may be
%   confusing because it doesn't mention ``speculation''}
\begin{definition}[Speculative Constant Time]
A program \ensuremath{\Varid{c}} is speculative constant time with respect to a security policy
\ensuremath{\Blue{\Conid{L}}}, written \ensuremath{SCT_{\mkern-1mu\Blue{\Conid{L}}}(\Varid{c})}, iff for all directives \ensuremath{\Conid{D}} and
initial configurations
% , memories |sub mu i|, and variable maps |sub rho i|, let
\ensuremath{\Conid{C}_{\Varid{i}}\mathrel{=}\langle[\mskip1.5mu \mskip1.5mu],[\mskip1.5mu \Varid{c}\mskip1.5mu],\mu_{\Varid{i}},\rho_{\Varid{i}}\rangle} for \ensuremath{\Varid{i}\;\in\;\{\mskip1.5mu \mathrm{1},\mathrm{2}\mskip1.5mu\}}, if
\ensuremath{\Conid{C}_{1}\approx_{\Blue{\Conid{L}}}\Conid{C}_{2}},
% |mu1 &~ mu2|, |rho1 &~ rho2|,
\ensuremath{\Conid{C}_{1}\;\Downarrow_{\Conid{O}_{1}}^{\Conid{D}}\;\Conid{C}_{1}'}, and \ensuremath{\Conid{C}_{2}\;\Downarrow_{\Conid{O}_{2}}^{\Conid{D}}\;\Conid{C}_{2}'}, then \ensuremath{\Conid{O}_{1}\mathrel{=}\Conid{O}_{2}} and \ensuremath{\Conid{C}_{1}'\approx_{\Blue{\Conid{L}}}\Conid{C}_{2}'}.
% memory stores  and variable maps such that |mu1 &~
%   mu2| and |rho1 &~ rho2|, let |sub C i = <[], [ c ], sub mu i, sub
%   rho i>| for |i inset {1,2}|, if |C1 (bstep D O1) C1'|, |C2 (bstep D
%   O2) C2'|, and |down O1 = down O2|, then |O1 = O2|.
\label{def:STC}
\end{definition}

%
% In the defintion above, programs \emph{leak} by producing different
% observations.
% and starting from memories and variables
% indistinguishable to the attacker.
%
% The definition above requires showing absence of leaks for the
% speculative traces (|O1 = O2|) assuming that the program does not
% already leak sequentially (|down O1 = down O2|).
In the definition above, we consider syntactic equivalence of traces
because both executions follow the same list of directives.
%
% This assumption reflects the fact that the goal of our repair
% algorithm is not to eliminate \emph{all} leaks, but only those
% introduced by speculative execution.
%
% \unsure[inline]{It could happen that the same directive refers to
%   different instructions because the program takes different paths. I
%   suspect that these programs already leak sequentially.  }
%
We now present our soundness theorem: well-typed programs satisfy
speculative constant-time.
Our approach focuses on side-channel attacks through the observation
trace and therefore relies on a separate, but standard, type system to
control leaks through the program control-flow and architectural
state.
In particular, we write \ensuremath{CT_{\mkern-1mu\Blue{\Conid{L}}}(\Varid{c})} if \ensuremath{\Varid{c}} follows the (sequential)
constant time discipline from \cite{Watt:2019,libsignalSP}, i.e., it
is free of secret-dependent branches and memory accesses.

\begin{thm}[Soundness]
  For all programs \ensuremath{\Varid{c}} and security policies \ensuremath{\Blue{\Conid{L}}}, if \ensuremath{CT_{\mkern-1mu\Blue{\Conid{L}}}(\Varid{c})} and \ensuremath{\Gamma\vdash\Varid{c}},
  then \ensuremath{SCT_{\mkern-1mu\Blue{\Conid{L}}}(\Varid{c})}.
\label{thm:ts-sound}
\end{thm}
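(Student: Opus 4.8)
The plan is to prove that the two speculative executions stay in lockstep and agree on every observation, via a relational (two-run) invariant argument. Because both runs consume the \emph{same} directive list $\Conid{D}$, they take the same number of steps and, at each step, the applicable rule is determined by the directive together with the shape of the reorder buffer and command stack; the task is therefore to exhibit an invariant on pairs of configurations that (i) holds initially from $\Conid{C}_1 \approx_{\Blue{\Conid{L}}} \Conid{C}_2$, (ii) is preserved by every matched pair of steps, and (iii) forces the two observations produced at each step to be equal. The final equivalence $\Conid{C}_1' \approx_{\Blue{\Conid{L}}} \Conid{C}_2'$ then falls out as the restriction of the invariant to terminal configurations.

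First I would extend both the typing judgment and $\Blue{\Conid{L}}$-equivalence from programs to whole configurations $\langle \Varid{is}, \Varid{cs}, \mu, \rho\rangle$. Extending the typing judgment requires typing the in-flight instructions in $\Varid{is}$ and the pending commands in $\Varid{cs}$, after which I would establish a subject-reduction lemma: a well-typed configuration steps, under any directive, to a well-typed configuration. This guarantees that the transient/stable discipline of $\Gamma \vdash \Varid{c}$ is maintained all along the speculative run, including along mispredicted paths and after the re-fetching triggered by \srule{Exec-Branch-Mispredict}. The configuration-level $\approx_{\Blue{\Conid{L}}}$ should demand: identical command stacks, reorder buffers of identical length and matching instruction shape (same constructors and same guard/fail identifiers), agreement of the public entries of $\mu$ and of variables that are simultaneously \emph{stable} and \emph{public}, while permitting secret or \sourceColorOf{transient} entries to differ. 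A supporting coincidence lemma---if $\Varid{e}$ is typed \ensuremath{\Concrete} then $\llbracket \Varid{e}\rrbracket$ agrees under two variable maps differing only on \sourceColorOf{transient} entries---is the workhorse connecting the invariant to equality of computed addresses.

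The core is preserving the invariant by case analysis on the directive and the active rule. \textbf{Fetch} steps are straightforward: identical command stacks are flattened identically, fresh guard identifiers are allocated in lockstep so the buffers stay shape-equal, and the observation is $\epsilon$ in both runs. The decisive \textbf{execute} cases are loads, stores, and guard resolution, where the observation carries an address or a branch outcome. For \srule{Exec-Load} and array stores, the emitted $\mathbf{read}(\Varid{n},\Varid{ps})$ and $\mathbf{write}(\Varid{n},\Varid{ps})$ must coincide; I argue that the index expression is forced \emph{stable} by \srule{Array-Read} and \srule{Array-Write} (so, by the coincidence lemma, it is independent of the \sourceColorOf{transient} values on which the two runs may differ) \emph{and} forced \emph{public} by the hypothesis $CT_{\mkern-1mu\Blue{\Conid{L}}}(\Varid{c})$ (which forbids secret-dependent memory accesses), so the address agrees across runs; the loaded or stored \emph{value} may differ but only flows into positions the invariant already allows to differ. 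Guard resolution is analogous: \srule{If-Then-Else} and the fetched bounds checks force the guard condition stable, and $CT_{\mkern-1mu\Blue{\Conid{L}}}$ forces it public, so both runs take the same branch of \srule{Exec-Branch-Ok} or \srule{Exec-Branch-Mispredict} and emit the same (possibly $\mathbf{rollback}(\Varid{p})$) observation with matching $\Varid{p}$. \textbf{Retire} steps commit values to $\mu$ and $\rho$; \srule{Retire-Asgn} and \srule{Retire-Store} preserve agreement on public, stable entries by the same combination, keeping the architectural component of $\approx_{\Blue{\Conid{L}}}$ intact.

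The main obstacle I anticipate is the clean combination of the two \emph{orthogonal} classifications at play: the \sourceColorOf{transient}/\emph{stable} typing from $\Gamma \vdash \Varid{c}$ and the public/secret policy underlying $CT_{\mkern-1mu\Blue{\Conid{L}}}(\Varid{c})$. Neither alone suffices---an address must be \emph{both} stable (else speculation could inject a secret) \emph{and} public (else a non-speculative secret leaks)---so the invariant must carry and jointly preserve both disciplines, with the coincidence lemma stated modulo \sourceColorOf{transient} differences and the $CT$ side stated modulo secret differences. The interaction of this double bookkeeping with out-of-order execution (the transient variable map $\phi(\rho,\Varid{is})$, which marks pending assignments as undefined) and with rollback-induced re-execution---where the same instructions are fetched and may emit observations again---is where the bulk of the case analysis and the subtlety concentrate; where it simplifies the public-agreement obligations I would lean on \Cref{thm:consistent} to relate committed values to the sequential execution.
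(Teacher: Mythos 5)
Your overall architecture coincides with the paper's: a configuration-level typing judgment with subject reduction for both type systems, a configuration-level $\approx_{\Blue{\Conid{L}}}$ demanding identical command stacks and shape-equal reorder buffers, a coincidence lemma for expressions that are simultaneously \emph{public} and \emph{stable} (the paper's Lemma on evaluations of public stable expressions, which uses exactly the "both disciplines jointly" observation you flag as the main obstacle), and a step-indexed preservation argument over matched fetch/execute/retire steps that forces observation equality. The case analysis you sketch for loads, stores, and guard resolution is the paper's.

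The one genuine gap is in your architectural invariant. You propose that the committed variable map $\rho$ need only agree on variables that are \emph{both} public and \emph{stable}, permitting public-but-\sourceColorOf{transient} entries to differ. That is too weak: the final-state clause of speculative constant time (and the paper's rule \srule{VarMap}) requires agreement on \emph{all} public variables, so your invariant at terminal configurations does not yield $\Conid{C}_1' \approx_{\Blue{\Conid{L}}} \Conid{C}_2'$. The paper relaxes equality only for assignments still \emph{pending in the reorder buffer} (rule \srule{Asgn$_{\ensuremath{\Red{\Conid{H}}\lor\ensuremath{\Transient}}}$}) and for the transient map $\phi(\rho,\Varid{is})$; the committed $\rho$ agrees on every public variable throughout. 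Closing this requires an argument you do not supply: in \srule{Retire-Asgn}, when a \emph{public} variable typed $\Transient$ (e.g.\ the target of an array read) reaches the head of the buffer, in-order retirement guarantees that every bounds-check guard fetched before it has already been resolved and retired, hence the load was in bounds, the value came from public memory, and the two runs committed equal values. The same reasoning is needed in \srule{Exec-Protect$_2$}, where the absence of pending guards forces the protected value to be genuinely public and equal. Your suggestion to "lean on Theorem~\ref{thm:consistent}" here would require detouring through two sequential executions and $CT_{\mkern-1mu\Blue{\Conid{L}}}$ and is not how the paper closes the case; the direct in-order-retirement argument is what the proof actually needs.
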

As mentioned in \Cref{sec:type-system}, our transient flow-type system is oblivious to
the security policy \ensuremath{\Blue{\Conid{L}}}, which is only required by the constant-time type system and the definition of speculative constant~time.
% it can enforce speculative non-interference for \emph{any} security
% policy.
% %
% Intuitively, this theorem holds because the type system conservatively
% prohibits all data flows from transient source to stable sinks,
% regardless of the sensitivity (public or secret) of the source.
% %
%

We conclude with a corollary that combines all the components of our
protection chain (type inference, type checking and automatic repair)
and shows that repaired programs satisfy speculative constant time.
\begin{cor}
  \label{alltogether-corollary}
  For all sequential constant-time programs \ensuremath{CT_{\mkern-1mu\Blue{\Conid{L}}}(\Varid{c})}, there exists a
  set of constraints \ensuremath{\Varid{k}} such that \ensuremath{\Gamma^{\ast},\protectedSet^{\ast}\vdash\Varid{c}\cnsSep\Varid{k}}. Let \ensuremath{\Conid{A}} be a set of variables that cut \ensuremath{\Varid{k}}. Then, it
  follows that \ensuremath{SCT_{\mkern-1mu\Blue{\Conid{L}}}( repair(\Varid{c},\Conid{A}) )}.
\end{cor}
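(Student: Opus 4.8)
The plan is to chain the type-inference result (Proposition~\ref{prop:typeInfer}) with the soundness theorem (Theorem~\ref{thm:ts-sound}), bridging the two with a \emph{repair lemma} that discharges the protected-set ``promise.'' Existence of the constraint set $k$ is immediate: by construction the dummy environment $\Gamma^{\ast} = \lambda x.\ensuremath{\Concrete}$ and the protected set $\protectedSet^{\ast}$ containing every variable make $\Gamma^{\ast}, \protectedSet^{\ast} \vdash c \cnsSep k$ derivable for every command $c$, since constraint generation never fails---it merely records flow constraints without checking their satisfiability, as noted in \Cref{sec:fence-inference}.

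Given a cut-set $A$ of $k$, I would first apply Proposition~\ref{prop:typeInfer} to obtain $\Gamma(k,A), A \vdash c$: the unrepaired program type-checks under the extracted environment, provided we promise to protect the variables in $A$. The crux is then a repair lemma stating that honoring this promise empties the protected set, i.e.\ $\Gamma, A \vdash c$ implies $\Gamma \vdash repair(c,A)$ (equivalently $\Gamma, \varnothing \vdash repair(c,A)$). I would prove this by structural induction on $c$, using the SSA assumption so that each $x \in A$ has a unique defining assignment $x \mathbin{:=} r$ that $repair$ rewrites to $x \mathbin{:=} \mathbf{protect}(r)$. The only interesting case is this assignment: in the derivation of $\Gamma, A \vdash c$ it is typed by rule \srule{Asgn-Prot}, which relies solely on $x \in A$ and places no flow constraint relating $\tau$ to $\Gamma(x)$; after rewriting it is typed by rule \srule{Protect}, whose premise $\Gamma \vdash r : \tau$ is exactly the same and which likewise imposes no such constraint. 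Hence the rewritten statement type-checks under the \emph{empty} protected set with the \emph{same} environment $\Gamma(k,A)$. All other statements are left untouched, so their sub-derivations carry over verbatim, with the composite rules (\srule{If-Then-Else}, sequencing, loops) recomposing them; non-protected assignments remain typed by \srule{Asgn}, which does not depend on the protected set. This yields $\Gamma(k,A) \vdash repair(c,A)$.

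To invoke soundness I also need $CT_{\mkern-1mu\Blue{\Conid{L}}}(repair(c,A))$. This follows from the hypothesis $CT_{\mkern-1mu\Blue{\Conid{L}}}(c)$ together with the observation that $repair$ only wraps selected right-hand sides in $\mathbf{protect}$: under the \emph{sequential} semantics $\mathbf{protect}$ behaves as a plain assignment (values are already stable), so $c$ and $repair(c,A)$ produce identical sequential executions---identical branch decisions and identical memory-access sequences---and thus the same constant-time behavior. With both $CT_{\mkern-1mu\Blue{\Conid{L}}}(repair(c,A))$ and $\Gamma(k,A) \vdash repair(c,A)$ in hand, Theorem~\ref{thm:ts-sound} yields $SCT_{\mkern-1mu\Blue{\Conid{L}}}(repair(c,A))$, as required.

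I expect the repair lemma to be the main obstacle---not because any single case is deep, but because it requires setting up the correspondence between the two typing judgments precisely: one must confirm that emptying the protected set never invalidates a sub-derivation, which hinges on the SSA assumption (ruling out multiple assignments to a protected variable, only some of which might be rewritten) and on the fact that the extracted environment $\Gamma(k,A)$ is unchanged by repair. The preservation of sequential constant-time is routine by comparison, reducing to the semantic equivalence of $\mathbf{protect}$ and ordinary assignment in the non-speculative model.
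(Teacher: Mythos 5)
Your proposal is correct and follows exactly the chain the paper intends: existence of $k$ from the dummy environment, Proposition~\ref{prop:typeInfer} to obtain $\Gamma(k,A), A \vdash c$, the observation that \srule{Asgn-Prot} and \srule{Protect} share the premise $\Gamma \vdash r : \tau$ so that honoring the promise empties the protected set, and Theorem~\ref{thm:ts-sound} to conclude; the paper states the corollary without an explicit proof, relying on precisely this composition (cf.\ the remark in \S~\ref{sec:proving} that ``once repaired, the program type checks under an empty protected set''). The only nitpick is that $CT_{\mkern-1mu\Blue{\Conid{L}}}$ is \emph{defined} as the typing judgment $\Gamma \vdash_{\textsc{ct}} c$, whose rules \srule{Asgn} and \srule{Protect} have identical premises, so preservation of sequential constant-time under $repair$ is a one-line syntactic fact rather than requiring your semantic argument about sequential executions.
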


\section{Implementation}
\label{sec:impl}

We implement \tool as a compilation pass in the Cranelift~\cite{cranelift}
Wasm code-generator, which is used by the Lucet compiler and
runtime~\cite{lucet-talk}.
\tool first identifies all sources and sinks.
Then, it finds the cut points using the Max-Flow/Min-Cut
algorithm~(\S\ref{sec:fence-inference}), and either inserts fences at the cut
points, or applies SLH to all of the loads which feed the cut point in the
graph.  This difference is why SLH sometimes requires code insertions
in more locations.

Our SLH prototype implementation does not track the length of arrays, and
instead uses a static constant for all array lengths when applying masking.
Once compilers like Clang add support for conveying array length information to
Wasm (e.g., via Wasm's custom section), our compilation pass would be able to
take this information into account.
This simplification in our experiments does not affect the sequence of instructions
emitted for the SLH
masks and thus \tool's performance overhead is accurately measured.

Our Cranelift \tool pass runs after the control-flow graph has been finalized
and right before register allocation.\footnote{More precisely: The Cranelift
register allocation pass modifies the control-flow graph as an initial step; we
insert our pass after this initial step but before register allocation proper.}
Placing \tool before register allocation allows our implementation to
remain oblivious of low-level details such as register pressure and
stack spills and fills.
%
%Positioning \tool before register allocation
% %allows \tool to make use of temp registers from the register allocator for
% %SLH masking sequences,
%provides some implementation convenience,
%but more importantly, it makes \tool oblivious to register
%spills and fills, which are inserted during register allocation.
%
Ignoring the memory operations incurred by spills and fills simplifies
\tool's analysis and reduces the required number of \ensuremath{\mathbf{protect}} statements.
This, importantly, does not compromise the security of its mitigations:
In Cranelift, spills and fills are always to constant addresses which are
inaccessible to ordinary Wasm loads and stores, even speculatively.
(Cranelift uses guard pages---not conditional bounds checks---to ensure that
Wasm memory accesses cannot access anything outside the linear memory, such
as the stack used for spills and fills.)
%
% Cranelift does not allocate any arrays on this stack---it allocates all
% arrays in the Wasm linear memory---so in fact, all memory accesses to this
% stack are to constant addresses.
%
% As a result, other memory operations cannot access the memory used
% for spills and fills.
% %and since the spills and fills are to constant addresses, the addresses
% %cannot be mispredicted.
%
% The only way for a fill operation to load dangerous transient data is if a
% previous spill operation speculatively stored it there (Spectre variant 1.1).
%
As a result, we can treat stack spill slots like registers.
Indeed, since \tool runs before register allocation, it already traces
def-use chains across operations that will become spills and fills.
Even if a particular spill-fill sequence would handle potentially sensitive
transient data, \tool would insert a \ensuremath{\mathbf{protect}} between the original transient
source and the final transient sink (and thus mitigate the attack).

Our implementation implements a single optimization: we do not mark
constant-address loads as transient sources.
We assume that the program contains no loads from out-of-bounds constant
addresses, and therefore that loads from constant (Wasm linear memory)
addresses can never speculatively produce invalid data.
As we describe below, however, we omit this optimization when considering
Spectre v1.1.

At its core, our repair algorithm addresses Spectre v1 attacks based on
PHT mispredictions.
To also protect against Spectre variant 1.1 attacks, which exploit store forwarding in
the presence of PHT mispredictions,\footnote{
  Spectre v1 and Spectre v1.1 attacks are both classified as Spectre-PHT
  attacks~\cite{Canella:2019}.}
we perform two additional mitigations.
First, we mark constant-address loads as transient sources (and thus omit the
above optimization).
Under Spectre v1.1, a load from a constant address may speculatively
produce transient data, if a previous speculative store wrote transient data
to that constant address---and, thus, \tool must account for this.
Second, our SLH implementation marks all stored \emph{values} as sinks,
essentially preventing any transient data from being stored to memory.
This is necessary when considering Spectre v1.1 because otherwise, ensuring
that a load is in-bounds using SLH is insufficient to guarantee that the
produced data is not transient---again, a previous speculative store may have
written transient data to that in-bounds address.\looseness=-1

\section{Evaluation}
\label{sec:eval}
%We evaluate~\tool on several
%cryptographic algorithms from CT-Wasm~\cite{Watt:2019} and \Hacl~\cite{\Hacl.
%%
%%Our evaluation shows that \tool can secure existing software systems
%%against speculative execution attacks automatically, without user
%%intervention.
%%
%We compare \tool and our baseline with respect to (i) the \emph{number} of
%protections inserted by each mitigation and (ii) the runtime
%performance overhead introduced by these protections.
%%
%%We perform all of these comparisons both with and without protections for
%%Spectre variant 1.1, \ie Spectre-PHT in the presence of store forwarding.
%%
%We show that \tool inserts an order of magnitude fewer protections than
%our baseline on these crytographic algorithms, and that this leads to
%minimal runtime overhead in the compiled programs:
%%
%\todo{report geomean here instead of ranges}
%0-36\% runtime overhead for \tool, compared with 6-230\% runtime overhead for
%our baseline.

We evaluate \tool by answering two questions:
%\emph{\textbf{(Q1)}} Can \tool secure existing unannotated software?
\emph{\textbf{(Q1)}} How many \ensuremath{\mathbf{protect}}s does \tool insert when
repairing existing programs?
\emph{\textbf{(Q2)}} What is the runtime performance overhead of eliminating
speculative leaks with \tool on existing hardware?

\newcommand{\X}{\Red{XXX}}

\begin{table}[t]
  \caption{
    \textbf{Ref}: Reference implementation with no Spectre mitigations;
    \textbf{Baseline-F}: Baseline mitigation inserting fences;
    \textbf{\tool-F}: \tool using fences as \ensuremath{\mathbf{protect}};
    \textbf{Baseline-S}: Baseline mitigation using SLH;
    \textbf{\tool-S}: \tool using SLH;
    \textbf{Overhead}: Runtime overhead compared to Ref;
    \textbf{Defs}: number of fences inserted (Baseline-F and \tool-F), or number of loads protected with SLH (Baseline-S and \tool-S)
  }
  \label{tab:evaluation}

% autogenerated LaTeX table; see PLSysSec/blade-benchmarks

%%% AUTOGENERATED FILE (by `make report` in PLSysSec/blade-benchmarks) - do not edit manually %%%

\centering
\footnotesize
\begin{tabular}{llcccccc}

\toprule
& & \multicolumn{3}{c}{Without v1.1 protections} & \multicolumn{3}{c}{With v1.1 protections} \\
\textbf{Benchmark} & \textbf{Defense} & \textbf{Time} & \textbf{Overhead} & \textbf{Defs}
                                      & \textbf{Time} & \textbf{Overhead} & \textbf{Defs}
\\

\midrule
\multirow{5}{*}{Salsa20 (CT-Wasm), 64 bytes}    & Ref         &    4.3 us & -        & -    &    4.3 us & -        & -    \\
                                                & Baseline-F  &    4.6 us & 7.2\%    & 3    &    8.6 us & 101.7\%  & 99   \\
                                                & \tool-F     &    4.4 us & 1.9\%    & 0    &    4.3 us & 1.7\%    & 0    \\
                                                & Baseline-S  &    4.4 us & 2.7\%    & 3    &    5.3 us & 24.3\%   & 99   \\
                                                & \tool-S     &    4.3 us & 0.5\%    & 0    &    5.4 us & 26.4\%   & 99   \\
\midrule
\multirow{5}{*}{SHA-256 (CT-Wasm), 64 bytes}    & Ref         &   13.7 us & -        & -    &   13.7 us & -        & -    \\
                                                & Baseline-F  &   19.8 us & 43.8\%   & 23   &   20.3 us & 48.0\%   & 54   \\
                                                & \tool-F     &   13.8 us & 0.2\%    & 0    &   14.5 us & 5.4\%    & 3    \\
                                                & Baseline-S  &   15.0 us & 9.1\%    & 23   &   15.1 us & 10.0\%   & 54   \\
                                                & \tool-S     &   13.9 us & 0.8\%    & 0    &   15.2 us & 10.9\%   & 54   \\
\midrule
\multirow{5}{*}{SHA-256 (CT-Wasm), 8192 bytes}  & Ref         &  114.6 us & -        & -    &  114.6 us & -        & -    \\
                                                & Baseline-F  &  516.6 us & 350.6\%  & 23   &  632.6 us & 451.8\%  & 54   \\
                                                & \tool-F     &  113.7 us & -0.8\%   & 0    &  193.3 us & 68.6\%   & 3    \\
                                                & Baseline-S  &  187.4 us & 63.4\%   & 23   &  208.0 us & 81.5\%   & 54   \\
                                                & \tool-S     &  115.2 us & 0.5\%    & 0    &  216.5 us & 88.9\%   & 54   \\
\midrule
\multirow{5}{*}{ChaCha20 (HACL*), 8192 bytes}   & Ref         &   43.7 us & -        & -    &   43.7 us & -        & -    \\
                                                & Baseline-F  &   85.2 us & 94.8\%   & 136  &   85.4 us & 95.3\%   & 142  \\
                                                & \tool-F     &   44.4 us & 1.5\%    & 3    &   45.4 us & 3.8\%    & 7    \\
                                                & Baseline-S  &   52.8 us & 20.8\%   & 136  &   53.3 us & 21.9\%   & 142  \\
                                                & \tool-S     &   43.6 us & -0.3\%   & 3    &   53.8 us & 22.9\%   & 142  \\
\midrule
\multirow{5}{*}{Poly1305 (HACL*), 1024 bytes}   & Ref         &    5.5 us & -        & -    &    5.5 us & -        & -    \\
                                                & Baseline-F  &    6.3 us & 15.9\%   & 133  &    6.4 us & 17.2\%   & 139  \\
                                                & \tool-F     &    5.5 us & 1.4\%    & 3    &    5.6 us & 2.2\%    & 9    \\
                                                & Baseline-S  &    5.6 us & 1.8\%    & 133  &    5.7 us & 4.4\%    & 139  \\
                                                & \tool-S     &    5.5 us & 1.0\%    & 3    &    5.6 us & 2.5\%    & 139  \\
\midrule
\multirow{5}{*}{Poly1305 (HACL*), 8192 bytes}   & Ref         &   15.1 us & -        & -    &   15.1 us & -        & -    \\
                                                & Baseline-F  &   21.3 us & 41.1\%   & 133  &   21.4 us & 41.2\%   & 139  \\
                                                & \tool-F     &   15.1 us & -0.0\%   & 3    &   15.2 us & 0.8\%    & 9    \\
                                                & Baseline-S  &   16.2 us & 7.2\%    & 133  &   16.3 us & 7.6\%    & 139  \\
                                                & \tool-S     &   15.2 us & 0.7\%    & 3    &   16.2 us & 7.1\%    & 139  \\
\midrule
\multirow{5}{*}{ECDH Curve25519 (HACL*)}        & Ref         &  354.3 us & -        & -    &  354.3 us & -        & -    \\
                                                & Baseline-F  &  989.8 us & 179.3\%  & 1862 & 1006.4 us & 184.0\%  & 1887 \\
                                                & \tool-F     &  479.9 us & 35.4\%   & 235  &  497.8 us & 40.5\%   & 256  \\
                                                & Baseline-S  &  507.0 us & 43.1\%   & 1862 &  520.4 us & 46.9\%   & 1887 \\
                                                & \tool-S     &  386.1 us & 9.0\%    & 1419 &  516.8 us & 45.9\%   & 1887 \\

\bottomrule

% This added manually, not autogenerated
\multirow{5}{*}{Geometric means}                & Ref         &           & -        &      &           & -        &      \\
                                                & Baseline-F  &           & 80.2\%   &      &           & 104.8\%  &      \\
                                                & \tool-F     &           & 5.0\%    &      &           & 15.3\%   &      \\
                                                & Baseline-S  &           & 19.4\%   &      &           & 25.8\%   &      \\
                                                & \tool-S     &           & 1.7\%    &      &           & 26.6\%   &      \\

\bottomrule
\end{tabular}
\end{table}

\mypara{Benchmarks}
We evaluate \tool on existing cryptographic code taken from two sources.
First, we consider two cryptographic primitives from CT-Wasm~\cite{Watt:2019}:
\begin{CompactItemize}
\item The Salsa20 stream cipher, with a workload of 64 bytes.
\item The SHA-256 hash function, with workloads of 64 bytes (one block) or
8192 bytes (128 blocks).
\end{CompactItemize}
Second, we consider automatically generated cryptographic primitives and
protocols from the \Hacl~\cite{HACL} library. We compile the automatically
generated C code to Wasm using Clang's Wasm backend. (We do not use \Hacl's
Wasm backend since it relies on a JavaScript embedding environment and is not
well suited for Lucet.)
Specifically, from \Hacl we consider:
\begin{CompactItemize}
\item The ChaCha20 stream cipher, with a workload of 8192 bytes.
\item The Poly1305 message authentication code, with workloads of 1024 or 8192 bytes.
\item ECDH key agreement using Curve25519.
\end{CompactItemize}
We selected these primitives to cover different kinds of modern crypto
workloads (including hash functions, MACs, encryption ciphers, and public key
exchange algorithms).
We omitted primitives that had inline assembly or SIMD since Lucet does not
yet support either; we also omitted the AES from \Hacl and TEA from
CT-Wasm---modern processors implement AES in hardware (largely
because efficient software implementations of AES are generally not
constant-time~\cite{Osvik:2006:CAC}), while TEA is not used in practice.
All the primitives we consider have been verified to be constant-time---free of cache
and timing side-channels.
However, the proofs assume a sequential execution model and do not account
for speculative leaks as addressed in this work.

\mypara{Experimental Setup}
We conduct our experiments on an Intel Xeon Platinum 8160 (Skylake) with 1TB
of RAM.
The machine runs Arch Linux with kernel \texttt{5.8.14}, and we use the Lucet
runtime version \texttt{0.7.0-dev} (Cranelift version \texttt{0.62.0} with
our modifications) compiled with \texttt{rustc} version \texttt{1.46.0}.
We collect benchmarks using the Rust \texttt{criterion} crate version
\texttt{0.3.3}~\cite{criterion} and report the point estimate for the mean
runtime of each benchmark.

\mypara{Reference and Baseline Comparisons}
We compare \tool to a reference (unsafe) implementation and a baseline (safe)
implementation which simply \ensuremath{\mathbf{protect}}s every Wasm memory load instruction.
We consider two baseline variants: The baseline solution with Spectre v1.1
mitigation \ensuremath{\mathbf{protect}}s every Wasm load instruction, while the baseline
solution with only Spectre v1 mitigation \ensuremath{\mathbf{protect}}s only Wasm load
instructions with non-constant addresses.
The latter is similar to Clang's Spectre mitigation pass, which applies SLH to
each non-constant array read~\cite{SLH}.
We evaluate both \tool and the baseline implementation with Spectre v1 protection
and with both v1 and v1.1 protections combined.  We consider both fence-based and SLH-based
implementations of the \ensuremath{\mathbf{protect}} primitive.
In the rest of this section, we use Baseline-F and \tool-F to refer to
fence-based implementations of their respective mitigations and Baseline-S and
\tool-S to refer to the SLH-based implementations.\looseness=-1

\mypara{Results}
\Cref{tab:evaluation} summarizes our results.
With Spectre v1 protections, both \tool-F and \tool-S insert very few \ensuremath{\mathbf{protect}}s
and have negligible performance overhead on most of our benchmarks---the
geometric mean overheads imposed by \tool-F and \tool-S are 5.0\% and
1.7\%, respectively.
In contrast, the baseline passes insert between 3 and 1862
protections and incur significantly higher overheads than \tool---the geometric mean
overheads imposed by Baseline-F and Baseline-S are 80.2\% and 19.4\%,
respectively.\looseness=-1

With both v1 and v1.1 protections, \tool-F inserts an order of magnitude fewer protections
than Baseline-F, and has correspondingly low performance overhead---the
geometric mean overhead of \tool-F is 15.3\%, whereas Baseline-F's is
104.8\%.
The geometric mean overhead of both \tool-S and Baseline-S, on the other hand, is
roughly 26\%.
Unlike \tool-F, \tool-S must mark all stored values as sinks in order to
eliminate Spectre v1.1 attacks; for these benchmarks, this countermeasure
requires \tool-S to apply protections to every Wasm load, just like
Baseline-S.  Indeed, we see in the table that Baseline-S and \tool-S
make the exact same number of additions to the code.

We make \New{three} observations from our measurements.
First, and somewhat surprisingly, \tool does not insert any \ensuremath{\mathbf{protect}}s
for Spectre v1 on any of the CT-Wasm benchmarks.
We attribute this to the style of code: the CT-Wasm primitives are hand-written
and, moreover, statically allocate variables and arrays in the Wasm linear
memory---which, in turn, results in many constant-address loads.
This is unlike the \Hacl primitives which are written in F*, compiled to C and
then Wasm---and thus require between 3 and 235 \ensuremath{\mathbf{protect}}s.

\New{
Second, the benchmarks with short reference runtimes tend to have overall
lower overheads, particularly for the baseline schemes.
This is because for short workloads, the overall runtime is dominated by
sandbox setup and teardown, which \tool does not introduce much overhead for.
In contrast, for longer workloads, the execution of the Wasm code becomes the
dominant portion of the benchmark---and exposes the overhead imposed by the
different mitigations.
We explore the relationship between workload size and performance overhead in
more detail
\ifextended
in~\Cref{app:perf-overheads}.
\else
in~\cite{vassena2020automatically}.\looseness=-1
\fi
}

And third, we observe for the Spectre v1 version that SLH gives overall better
performance than fences, as expected.
This is true even in the case of Curve25519, where implementing
\ensuremath{\mathbf{protect}} using SLH (\tool-S) results in a significant increase in the number of
protections versus the fence-based implementation (\tool-F).
Even in this case, the more targeted
restriction of speculation, and the less heavyweight impact on the pipeline,
allows SLH to still prevail over the fewer fences.
\New{
However, this advantage is lost when considering both v1 and v1.1 mitigation:
The sharp increase in the number of \ensuremath{\mathbf{protect}}s required for v1.1
ends up being slower than using (fewer) fences.
% ends up making the fenced version overall more performant.
%
A hybrid approach that uses both fences and SLH could potentially
outperform both \tool-F and \tool-S.
}

In reality, though, both versions are inadequate software emulations of what the
\ensuremath{\mathbf{protect}} primitive should be.  Fences take a heavy toll on the pipeline and are
far too restrictive of speculation, while SLH pays a heavy instruction overhead for
each instance, and can only be applied directly to loads, not to arbitrary cut points.
A hardware implementation of the \ensuremath{\mathbf{protect}} primitive could combine the best of
\tool-F and \tool-S: targeted restriction of speculation, minimal instruction
overhead, and only as many defenses as \tool-F, without the inflation in insertion
count required by \tool-S.

However, even without any hardware assistance, both versions of the \tool tool
provide significant performance gains over the current state of the art in mitigating
Spectre v1, and over existing fence-based solutions when targeting v1 or both v1 and
v1.1.

%%% Local Variables:
%%% mode: latex
%%% TeX-master: "main.lhs"
%%% End:

\section{Related Work}
% %
\label{sec:related}

\mypara{Speculative Execution Semantics}
Several semantics models for speculative execution have been proposed
recently~\cite{Guarnieri20,cheang-csf19,McIlroy19,pitchfork,DisselkoenJJR19,balliu2019inspectre}.
Of those, \cite{pitchfork} is closest to ours, and inspired our semantics
(e.g., we share the 3-stages pipeline, attacker-supplied directives and the
instruction reorder buffer).
%
% \todo{cd: is the rest of this now a bit too defensive?  or is it fine?}
\New{However, their semantics---and, indeed, the semantics of most of the other works---are
exclusively for low-level assembly-like languages.\footnote{
  % CD: also added this footnote, which is probably unnecessary, but it explains
  % the one exception to the statement we just made
  The one exception, \citet{DisselkoenJJR19}, present a Spectre-aware relaxed
  memory model based on pomsets, which is even further abstracted from the
  microarchitectural features of real processors.
}
In contrast, our JIT semantics bridges the gap between high-level commands and low-level instructions, which
allows us to reason about speculative execution of source-level imperative programs through straightforward typing rules, while being faithful to low-level microarchitectural details.
Moreover, the idea of modeling speculative and out-of-order execution using stacks of progressively flattened commands is novel and key to enable source-level reasoning about the low-level effects of speculation.
}

\mypara{Detection and Repair}
\citet{Wu:2019} detect cache side channels via abstract interpretation
by augmenting the program control-flow to accommodate for speculation.
\textsc{Spectector}~\cite{Guarnieri20} and \textsc{Pitchfork}~\cite{pitchfork} use
symbolic execution on x86 binaries to detect speculative
vulnerabilities.
% , but only Pitchfork considers out-of-order execution
% and store forwarding.
%
\citet{cheang-csf19} and \citet{Roderick19} apply bounded model
checking to detect potential speculative vulnerabilities respectively
via 4-ways self-composition and taint-tracking.
These efforts assume a \emph{fixed} speculation bound, and they focus
on vulnerability detection rather than proposing techniques to \emph{repair}
vulnerable programs.
Furthermore, many of these works consider only \emph{in-order} execution.
% with the exception of Pitchfork~\cite{pitchfork}; the semantics of
% \citet{balliu2019inspectre} and \citet{DisselkoenJJR19} consider out-of-order
% execution, but do not perform detection or repair of existing programs.
%
In contrast, our type system enforces \emph{speculative constant-time}
when program instructions are executed \emph{out-of-order} with
\emph{unbounded} speculation---and our tool \tool automatically synthesizes
repairs.
Separately, \textsc{oo7}~\cite{oo7} statically analyzes a binary from a set of
untrusted input sources, detecting vulnerable patterns and inserting fences
in turn.
Our tool, \tool, not only repairs vulnerable programs without user annotation,
but ensures that program patches contain a minimum number of fences.
Furthermore, \tool formally guarantees that repaired programs are free
from speculation-based attacks.

Concurrent to our work, Intel proposed a mitigation for a new class of
LVI attacks~\cite{vanbulck2020lvi,Intel-lvi}.
Like \tool, they implement a compiler pass that analyzes the program to
determine an optimal placement of fences to cut source-to-sink data flows.
While we consider an abstract, ideal \ensuremath{\mathbf{protect}} primitive, they
focus on the optimal placement of fences in particular.
This means that they optimize the fence placement by taking into account the
coarse-grained effects of fences---e.g., one fence providing a speculation
barrier for multiple independent data-dependency chains.\footnote{
Unlike our approach, their resulting optimization problem is NP-hard---and only
sub-optimal solutions may be found through heuristics.
}
This also means, however, their approach does not easily transfer to
using SLH for cases where SLH would be faster.

\mypara{Hardware-based Mitigations}
% Both \citet{AMD-sw-mitigations} and \citet{Intel-sw-mitigations}
% recommend inserting serializing, fence instructions after bounds
% checks to protect against Spectre v1 attacks; some compilers have
% followed suit \cite{Intel-compiler, MSCVC}.
% %
% Unfortunately, these defenses cause significant performance
% degradation \cite{tullsen-asplos,Tkachenko18}.
%
% and static analysis has
% been proposed to automatically locate potentially vulnerable code and
% reduce the number of fences needed \cite{Intel-sw-mitigations}.
% %
% \citet{AMD-sw-mitigations} also suggests to introduce artificial
% \emph{data dependencies} between conditions and the following
% instructions to prevent speculation past branch instructions.
% %
% Speculative Load Hardening (SLH) \cite{SLH} applies this idea by
% conditionally masking addresses in load instructions in the code
% generated by the LLVM compiler.
% %
% Our model formalizes SLH and shows that indeed this mitigation is
% sound.
%
% Spectre v2 attacks allow attackers to mislead the processor into
% speculatively jumping into malicious code.
% %
% \tool does not handle this class of attacks, however, Retpoline
% \cite{reptoline} can be used protect against branch target injection.
%
%
To eliminate speculative attacks, several secure hardware designs have been
proposed.
%
%
% In contrast, the data oblivious ISA extensions proposed by
% \citet{YuHHF19} supports modern hardware optimizations but enforces
% only basic confidentialities policies.
%
\citet{tullsen-asplos} propose context-sensitive fencing, a
hardware-based mitigation that dynamically inserts fences in the
instruction stream when dangerous conditions arise.
\textsc{InvisiSpec} \cite{Yan:2018} %is a new micro-architecture design that
features a special \emph{speculative buffer} to prevent speculative
loads from polluting the cache.
\textsc{STT} \cite{STT-Yu-19} tracks speculative taints \emph{dynamically}
inside the processor micro-architecture and stalls instructions to
prevent speculative leaks.
%values from reaching instructions that could serve as covert channels.
%
% We think our approach could be applied to guide such hardware
% mitigations by pinpointing the program parts that need to be
% protected.
%
% NDA \cite{Weisse:2019} also tracks potential wrong-path dependence
% chains at the micro-architecture level and selectively stop
% propagating speculative values to avoid leaks.
% %
% \unsure[inline]{I don't know if I got the differences right between
%   \cite{Weisse:2019} and \cite{STT-Yu-19}.}
\citet{ConTExT20} propose \textsc{ConTExT}, a whole architecture change
(applications, compilers, operating systems, and hardware) to
eliminate \emph{all} Spectre attacks.
Though \tool can benefit from a hardware implementation of \ensuremath{\mathbf{protect}}, this
work also shows that Spectre-PHT on existing hardware can be automatically
eliminated in pure software with modest performance overheads.
%
% Furthermore, ConTExT requires programmers to explicitly annotate
% secrets in programs, while \tool is a push-button tool that requires
% no user intervention.

% CheckMate: synthethis of exploits, relies on micro architecture
% formal model (hard to get) and exploits ``hints''. Not sure where I
% would place it (not exactly detection, no formal model).

% \todo[inline]{\cite{WamplerMW19} shows that malware detectors
%   (antivirus in particular) fail to detect spectre attacks. Useful
%   also to motivate the ``add our type system on top'' statement.  Some
%   quantitative analysis that could be combined with our type system:
%   \cite{Malacaria18,Backes:2009,Kopf:2007,McCamant:2008,Malacaria:2007,Kopf:2012}. Maybe
%   this claim is not entirely correct. For these works, the type system
%   would not help to ``quantify'' the extra leakage and the program
%   repair may invalidate the results of these analysis. }
%XXX: what's the point of this ?? seems unrelated

\section{Limitations and Future Work}
\tool only addresses Spectre-PHT attacks and does so at the Wasm-layer.
Extending \tool to tackle other Spectre variants and the limitations of
operating on Wasm is future work.
% Building on \tool, we plan to pursue several interesting lines of work
% in the future.
%

%
\mypara{Other Spectre Variants}
The Spectre-BTB variant~\cite{Kocher2018spectre} mistrains the Branch Target
Buffer (BTB), which is used to predict indirect jump targets, to hijack the
(speculative) control-flow of the program.
Although Wasm does not provide an unrestricted indirect jump instruction,
the indirect function call instruction---which is used to call functions
registered in a function table---can be abused by an attacker.
To address (in-process) Spectre-BTB, we could extend our type system to
restrict the values used as indices into the function table to be typed
as \emph{stable}.

%%%
%%The Spectre-STL variant~\cite{spectrev4} exploits cases when store forwarding
%%fails to forward values from the most recent store instructions.
%%%
%%To detect Spectre-STL, we would need to make the data dependencies via
%%memory aliases \emph{explicit} in the def-use graph by performing
%%alias analysis~\cite{alias-analysis}.
%%%
%%Then, to eliminate this class of attack, we could apply \emph{speculative
%%  store hardening} (the dual mitigation of SLH) for all stores that
%%\emph{may} alias with a load.
%%%
%%This alias analysis and extra hardening may be expensive for both
%%analysis and runtime performance; we leave the implementation
%%and evaluation to future work.

The other Spectre variant, Spectre-RSB~\cite{Koruyeh:2018,Maisuradze:2018},
abuses the return stack buffer.
To mitigate these attacks, we could analyze Wasm code to identify potential RSB
over/underflows and insert fences in response, or use mitigation strategies
like RSB stuffing~\cite{retpoline-whitepaper}.
A more promising approach, however, is to use Intel's recent shadow stack,
which ensures that returns cannot be speculatively
hijacked~\cite{shanbhogue2019security}.
%
% B: Is it possible to refactor the semantics (and the type system) in a way that it has a common core, and then micro-architecture specific components that the core can be instantiated with?
% This would be a really interesting extension, especially considering that recent work has already explored general (although low-level) models of speculation [Guanciale et al. 2020].
% \New{When extending \tool to address Spectre variants that abuse these specific microarchitectural components, it would also be beneficial to \emph{modularly} extend the core model of speculation presented here, e.g.,    }

% Finally, Spectre-RSB can be mitigated by introducing a
% \emph{retpoline} \cite{reptoline} before regular function calls and
% introducing fences before returning. \mv[inline]{Deian: I do not think
%   I got this right.}

% \mypara{Detection of Spectre gadgets in High-Level Languages}
% %
% In order to detect Spectre gadgets in programs written in high-level
% languages, we intend to port \tool to operate on LLVM intermediate
% representation.
% %
% We expect the analysis of LLVM intermediate representation to be
% challenging.
% %
% In particular, the presence of explicit jump instructions complicates
% a precise reconstruction of the the def-use graph, thus potentially
% leading to an unacceptable number of false-positive.
% %
% \mv[inline]{Deian: Other challenges? Function pointers?}

\mypara{Detecting Spectre Gadgets at the Binary Level}
\tool operates on Wasm code---or more precisely, on the Cranelift compiler's
IR---and can thus
%be overly conservative (\eg because a Wasm compiler may eliminate a source-level leak) yet still
miss leaks inserted by the compiler passes that run after \tool---namely,
register allocation and instruction selection.
Though these passes are unlikely to introduce such leaks, we leave the
validation of the generated binary code to future work.
% Operating on binaries directly, however, is extremely challenging since much of
% the high-level information about types, functions, etc. is missing---it's even
% hard to tell if a guard is a bounds check or something else.
% %
% We leave it to future work to bridge this gap; perhaps generated binary code
% could be validated against high-level semantics.
% \todo[inline]{CD: this statement/paragraph seems a little weird in light of
% the existence of tools like Pitchfork which already detect Spectre gadgets in
% binaries}
% In order to detect Spectre gadgets in programs written in high-level
% languages, we intend to port \tool to operate on LLVM intermediate
% representation.
% %
% We expect the analysis of LLVM intermediate representation to be
% challenging.
% %
% In particular, the presence of explicit jump instructions complicates
% a precise reconstruction of the the def-use graph, thus potentially
% leading to an unacceptable number of false-positive.
% %
% \mv[inline]{Deian: Other challenges? Function pointers?}

\mypara{Spectre Resistant Compilation}
An alternative to repairing existing programs is to ensure they are compiled
securely from the start.
Recent works have developed verified constant-time-preserving
optimizing compilers for generating correct, efficient, and secure
cryptographic code \cite{jasmin,ct-c-compiler}.
Doing this for speculative constant-time, and understanding which optimizations
break the SCT notion, is an interesting direction for future
work.
%

%\unsure[inline]{Can you add some discussion on how do you see it applied to say C algorithms? C is a little different than the syntax in Fig. 4, notably there are no length(e) and base(e) functions there.

%C: Are all programs repairable using exclusively SLH?
%SLH can be used as long as the size of arrays can be tracked precisely, either statically or dynamically (see L985-987). This is the case for cryptographic code. But, since this may be infeasible in the general case, Blade could insert fences to repair any code where this doesn’t hold.
%%
%D: Is hybrid fence/SLH possible?
%This is a great idea. A simple heuristic for a hybrid Blade-F/S could be to insert a fence instead of SLH when the number of SLHs required to perform a cut would be above some threshold. Benchmarks like our Curve25519 benchmark (L1124-1132) may benefit from this. Moreover, such a hybrid implementation could also be useful in practice to repair programs where we can’t easily use Blade-S (e.g., because array lengths cannot be precisely tracked).
% \unsure[inline]{Maybe ``other languages'' is not the right name for this paragraph. The challenges that we describe are present in the current implementation as well.}

\New{
\mypara{Bounds Information}
\tool-S relies on array bounds information to implement the speculative load
hardening.
For a memory safe language, this information can be made available to \tool
when compiling to Wasm (e.g., as a custom section).
When compiling languages like C, where arrays bounds information is not
explicit, this is harder---and we would need to use program analysis to track
array lengths statically~\cite{Venet04}.
Although such an analysis may be feasible for cryptographic code, it is likely
to fall short for other application domains (e.g., due to dynamic memory
allocation and pointer chasing).
In these cases, we could track array lengths at runtime (e.g., by
instrumenting programs~\cite{soft-bound}) or, more simply, fall back to fences
(especially since the overhead of tracking bounds information at runtime is
typically high).
% We’ll add a discussion of the challenges of extending Blade to other
% languages. For example, we believe that Blade could use the memory allocator
% to track the base address and length of (heap) arrays in C programs.
}

\section{Conclusion}
We presented \tool, a fully automatic
approach to provably and efficiently eliminate speculation-based
leakage in unannotated cryptographic code.
\tool statically detects data flows from transient sources to stable sinks
and synthesizes a minimal number of fence-based or SLH-based \ensuremath{\mathbf{protect}} calls
to eliminate potential leaks.
Our evaluation shows that \tool inserts an order of magnitude fewer protections
than would be added by today's compilers, and that existing crypto
primitives repaired with \tool impose modest overheads when using both fences
and SLH for \ensuremath{\mathbf{protect}}.

% \mv[inline]{Future Work/Conclusion.  We want to port blade for
%   llvm. Challenges include: 1) harder to reconstruct def-use graph
%   (explicit jumps). Handling the rest of wasm: switch statement (like
%   if-then-else), function calls (reptoline), indirect calls (make the
%   index stable). Compiler for CT+SNI
%   preservation. }

%\acks
\section*{Acknowledgements}
We thank the reviewers and our shepherd Aseem Rastogi for their suggestions and
insightful comments.
Many thanks to Shravan Narayan, Ravi Sahita, and Anjo Vahldiek-Oberwagner for
fruitful discussions.
This work was supported in part by gifts from Fastly, Fujitsu, and Cisco; by the
NSF under Grant Number CNS-1514435 and CCF-1918573; by ONR Grant N000141512750;
by the German Federal Ministry of Education and Research (BMBF) through funding
for the CISPA-Stanford Center for Cybersecurity;
and, by the CONIX Research Center, one of six centers in JUMP, a Semiconductor
Research Corporation (SRC) program sponsored by DARPA.

\bibliography{local}

%%% -*-BibTeX-*-
%%% Do NOT edit. File created by BibTeX with style
%%% ACM-Reference-Format-Journals [18-Jan-2012].

\begin{thebibliography}{58}

%%% ====================================================================
%%% NOTE TO THE USER: you can override these defaults by providing
%%% customized versions of any of these macros before the \bibliography
%%% command.  Each of them MUST provide its own final punctuation,
%%% except for \shownote{}, \showDOI{}, and \showURL{}.  The latter two
%%% do not use final punctuation, in order to avoid confusing it with
%%% the Web address.
%%%
%%% To suppress output of a particular field, define its macro to expand
%%% to an empty string, or better, \unskip, like this:
%%%
%%% \newcommand{\showDOI}[1]{\unskip}   % LaTeX syntax
%%%
%%% \def \showDOI #1{\unskip}           % plain TeX syntax
%%%
%%% ====================================================================

\ifx \showCODEN    \undefined \def \showCODEN     #1{\unskip}     \fi
\ifx \showDOI      \undefined \def \showDOI       #1{#1}\fi
\ifx \showISBNx    \undefined \def \showISBNx     #1{\unskip}     \fi
\ifx \showISBNxiii \undefined \def \showISBNxiii  #1{\unskip}     \fi
\ifx \showISSN     \undefined \def \showISSN      #1{\unskip}     \fi
\ifx \showLCCN     \undefined \def \showLCCN      #1{\unskip}     \fi
\ifx \shownote     \undefined \def \shownote      #1{#1}          \fi
\ifx \showarticletitle \undefined \def \showarticletitle #1{#1}   \fi
\ifx \showURL      \undefined \def \showURL       {\relax}        \fi
% The following commands are used for tagged output and should be
% invisible to TeX
\providecommand\bibfield[2]{#2}
\providecommand\bibinfo[2]{#2}
\providecommand\natexlab[1]{#1}
\providecommand\showeprint[2][]{arXiv:#2}

\bibitem[\protect\citeauthoryear{Aiken}{Aiken}{1996}]%
        {Aiken96}
\bibfield{author}{\bibinfo{person}{Alex Aiken}.}
  \bibinfo{year}{1996}\natexlab{}.
\newblock \showarticletitle{Constraint-based program analysis}. In
  \bibinfo{booktitle}{\emph{Static Analysis}},
  \bibfield{editor}{\bibinfo{person}{Radhia Cousot} {and}
  \bibinfo{person}{David~A. Schmidt}} (Eds.). \bibinfo{publisher}{Springer
  Berlin Heidelberg}, \bibinfo{address}{Berlin, Heidelberg},
  \bibinfo{pages}{1--1}.
\newblock
\showISBNx{978-3-540-70674-8}


\bibitem[\protect\citeauthoryear{Almeida, Barbosa, Barthe, Blot, Gr\'{e}goire,
  Laporte, Oliveira, Pacheco, Schmidt, and Strub}{Almeida
  et~al\mbox{.}}{2017}]%
        {jasmin}
\bibfield{author}{\bibinfo{person}{Jos\'{e}~Bacelar Almeida},
  \bibinfo{person}{Manuel Barbosa}, \bibinfo{person}{Gilles Barthe},
  \bibinfo{person}{Arthur Blot}, \bibinfo{person}{Benjamin Gr\'{e}goire},
  \bibinfo{person}{Vincent Laporte}, \bibinfo{person}{Tiago Oliveira},
  \bibinfo{person}{Hugo Pacheco}, \bibinfo{person}{Benedikt Schmidt}, {and}
  \bibinfo{person}{Pierre-Yves Strub}.} \bibinfo{year}{2017}\natexlab{}.
\newblock \showarticletitle{Jasmin: High-Assurance and High-Speed
  Cryptography}. In \bibinfo{booktitle}{\emph{Proceedings of the 2017 ACM
  SIGSAC Conference on Computer and Communications Security}} (Dallas, Texas,
  USA) \emph{(\bibinfo{series}{CCS ’17})}. \bibinfo{publisher}{Association
  for Computing Machinery}, \bibinfo{address}{New York, NY, USA},
  \bibinfo{pages}{1807–1823}.
\newblock
\showISBNx{9781450349468}
\urldef\tempurl%
\url{https://doi.org/10.1145/3133956.3134078}
\showDOI{\tempurl}


\bibitem[\protect\citeauthoryear{Almeida, Barbosa, Barthe, Dupressoir, and
  Emmi}{Almeida et~al\mbox{.}}{2016}]%
        {ct-verif}
\bibfield{author}{\bibinfo{person}{Jose~Bacelar Almeida},
  \bibinfo{person}{Manuel Barbosa}, \bibinfo{person}{Gilles Barthe},
  \bibinfo{person}{Fran{\c c}ois Dupressoir}, {and} \bibinfo{person}{Michael
  Emmi}.} \bibinfo{year}{2016}\natexlab{}.
\newblock \showarticletitle{Verifying Constant-Time Implementations}. In
  \bibinfo{booktitle}{\emph{25th {USENIX} Security Symposium ({USENIX} Security
  16)}}. \bibinfo{publisher}{{USENIX} Association}, \bibinfo{address}{Austin,
  TX}, \bibinfo{pages}{53--70}.
\newblock
\showISBNx{978-1-931971-32-4}
\urldef\tempurl%
\url{https://www.usenix.org/conference/usenixsecurity16/technical-sessions/presentation/almeida}
\showURL{%
\tempurl}


\bibitem[\protect\citeauthoryear{AMD}{AMD}{2018}]%
        {AMD-sw-mitigations}
\bibfield{author}{\bibinfo{person}{AMD}.} \bibinfo{year}{2018}\natexlab{}.
\newblock \bibinfo{title}{Software Techniques For Managing Speculation On {AMD}
  Processors}.
\newblock
  \bibinfo{howpublished}{\url{https://developer.amd.com/wp-content/resources/Managing-Speculation-on-AMD-Processors.pdf}}.
\newblock


\bibitem[\protect\citeauthoryear{Barthe, Blazy, Gr\'{e}goire, Hutin, Laporte,
  Pichardie, and Trieu}{Barthe et~al\mbox{.}}{2019}]%
        {ct-c-compiler}
\bibfield{author}{\bibinfo{person}{Gilles Barthe}, \bibinfo{person}{Sandrine
  Blazy}, \bibinfo{person}{Benjamin Gr\'{e}goire}, \bibinfo{person}{R\'{e}mi
  Hutin}, \bibinfo{person}{Vincent Laporte}, \bibinfo{person}{David Pichardie},
  {and} \bibinfo{person}{Alix Trieu}.} \bibinfo{year}{2019}\natexlab{}.
\newblock \showarticletitle{Formal Verification of a Constant-Time Preserving C
  Compiler}.
\newblock \bibinfo{journal}{\emph{Proc. ACM Program. Lang.}}
  \bibinfo{volume}{4}, \bibinfo{number}{POPL}, Article \bibinfo{articleno}{7}
  (\bibinfo{date}{Dec.} \bibinfo{year}{2019}), \bibinfo{numpages}{30}~pages.
\newblock
\urldef\tempurl%
\url{https://doi.org/10.1145/3371075}
\showDOI{\tempurl}


\bibitem[\protect\citeauthoryear{Bhattacharyya, Sandulescu, Neugschwandtner,
  Sorniotti, Falsafi, Payer, and Kurmus}{Bhattacharyya et~al\mbox{.}}{2019}]%
        {Bhattacharyya:2019}
\bibfield{author}{\bibinfo{person}{Atri Bhattacharyya},
  \bibinfo{person}{Alexandra Sandulescu}, \bibinfo{person}{Matthias
  Neugschwandtner}, \bibinfo{person}{Alessandro Sorniotti},
  \bibinfo{person}{Babak Falsafi}, \bibinfo{person}{Mathias Payer}, {and}
  \bibinfo{person}{Anil Kurmus}.} \bibinfo{year}{2019}\natexlab{}.
\newblock \showarticletitle{SMoTherSpectre: Exploiting Speculative Execution
  Through Port Contention}. In \bibinfo{booktitle}{\emph{Proceedings of the
  2019 ACM SIGSAC Conference on Computer and Communications Security}} (London,
  United Kingdom) \emph{(\bibinfo{series}{CCS '19})}. \bibinfo{publisher}{ACM},
  \bibinfo{address}{New York, NY, USA}, \bibinfo{pages}{785--800}.
\newblock
\showISBNx{978-1-4503-6747-9}
\urldef\tempurl%
\url{https://doi.org/10.1145/3319535.3363194}
\showDOI{\tempurl}


\bibitem[\protect\citeauthoryear{Bloem, Jacobs, and Vizel}{Bloem
  et~al\mbox{.}}{2019}]%
        {Roderick19}
\bibfield{author}{\bibinfo{person}{Roderick Bloem}, \bibinfo{person}{Swen
  Jacobs}, {and} \bibinfo{person}{Yakir Vizel}.}
  \bibinfo{year}{2019}\natexlab{}.
\newblock \showarticletitle{Efficient Information-Flow Verification Under
  Speculative Execution}. In \bibinfo{booktitle}{\emph{Automated Technology for
  Verification and Analysis}}, \bibfield{editor}{\bibinfo{person}{Yu-Fang
  Chen}, \bibinfo{person}{Chih-Hong Cheng}, {and} \bibinfo{person}{Javier
  Esparza}} (Eds.). \bibinfo{publisher}{Springer International Publishing},
  \bibinfo{address}{Cham}, \bibinfo{pages}{499--514}.
\newblock
\showISBNx{978-3-030-31784-3}


\bibitem[\protect\citeauthoryear{Blog}{Blog}{2010}]%
        {site-isolation}
\bibfield{author}{\bibinfo{person}{Google~Security Blog}.}
  \bibinfo{year}{2010}\natexlab{}.
\newblock \bibinfo{title}{{Mitigating Spectre with Site Isolation in Chrome}}.
\newblock
  \bibinfo{howpublished}{\url{https://security.googleblog.com/2018/07/mitigating-spectre-with-site-isolation.html}}.
\newblock


\bibitem[\protect\citeauthoryear{{Bytecode Alliance}}{{Bytecode
  Alliance}}{2020}]%
        {cranelift}
\bibfield{author}{\bibinfo{person}{{Bytecode Alliance}}.}
  \bibinfo{year}{2020}\natexlab{}.
\newblock \bibinfo{title}{{Cranelift} Code Generator}.
\newblock
  \bibinfo{howpublished}{\url{https://github.com/bytecodealliance/wasmtime/tree/main/cranelift}}.
\newblock


\bibitem[\protect\citeauthoryear{Canella, Van~Bulck, Schwarz, Lipp, Von~Berg,
  Ortner, Piessens, Evtyushkin, and Gruss}{Canella et~al\mbox{.}}{2019}]%
        {Canella:2019}
\bibfield{author}{\bibinfo{person}{Claudio Canella}, \bibinfo{person}{Jo
  Van~Bulck}, \bibinfo{person}{Michael Schwarz}, \bibinfo{person}{Moritz Lipp},
  \bibinfo{person}{Benjamin Von~Berg}, \bibinfo{person}{Philipp Ortner},
  \bibinfo{person}{Frank Piessens}, \bibinfo{person}{Dmitry Evtyushkin}, {and}
  \bibinfo{person}{Daniel Gruss}.} \bibinfo{year}{2019}\natexlab{}.
\newblock \showarticletitle{A Systematic Evaluation of Transient Execution
  Attacks and Defenses}. In \bibinfo{booktitle}{\emph{Proceedings of the 28th
  USENIX Conference on Security Symposium}} (Santa Clara, CA, USA)
  \emph{(\bibinfo{series}{SEC'19})}. \bibinfo{publisher}{USENIX Association},
  \bibinfo{address}{Berkeley, CA, USA}, \bibinfo{pages}{249--266}.
\newblock
\showISBNx{978-1-939133-06-9}
\urldef\tempurl%
\url{http://dl.acm.org/citation.cfm?id=3361338.3361356}
\showURL{%
\tempurl}


\bibitem[\protect\citeauthoryear{Carruth}{Carruth}{2019}]%
        {SLH}
\bibfield{author}{\bibinfo{person}{Chandler Carruth}.}
  \bibinfo{year}{2019}\natexlab{}.
\newblock \bibinfo{title}{Speculative Load Hardening}.
\newblock
  \bibinfo{howpublished}{\url{https://llvm.org/docs/SpeculativeLoadHardening.html}}.
\newblock


\bibitem[\protect\citeauthoryear{Cauligi, Disselkoen, von Gleissenthall,
  Tullsen, Stefan, Rezk, and Barthe}{Cauligi et~al\mbox{.}}{2020}]%
        {pitchfork}
\bibfield{author}{\bibinfo{person}{Sunjay Cauligi}, \bibinfo{person}{Craig
  Disselkoen}, \bibinfo{person}{Klaus von Gleissenthall}, \bibinfo{person}{Dean
  Tullsen}, \bibinfo{person}{Deian Stefan}, \bibinfo{person}{Tamara Rezk},
  {and} \bibinfo{person}{Gilles Barthe}.} \bibinfo{year}{2020}\natexlab{}.
\newblock \showarticletitle{Constant-Time Foundations for the New Spectre Era}.
  In \bibinfo{booktitle}{\emph{{Proc. ACM Conference on Programming Language
  Design and Implementation}}}.
\newblock


\bibitem[\protect\citeauthoryear{Cauligi, Soeller, Johannesmeyer, Brown, Wahby,
  Renner, Gregoire, Barthe, Jhala, and Stefan}{Cauligi et~al\mbox{.}}{2019}]%
        {cauligi:2019:fact}
\bibfield{author}{\bibinfo{person}{Sunjay Cauligi}, \bibinfo{person}{Gary
  Soeller}, \bibinfo{person}{Brian Johannesmeyer}, \bibinfo{person}{Fraser
  Brown}, \bibinfo{person}{Riad~S. Wahby}, \bibinfo{person}{John Renner},
  \bibinfo{person}{Benjamin Gregoire}, \bibinfo{person}{Gilles Barthe},
  \bibinfo{person}{Ranjit Jhala}, {and} \bibinfo{person}{Deian Stefan}.}
  \bibinfo{year}{2019}\natexlab{}.
\newblock \showarticletitle{{FaCT}: A DSL for timing-sensitive computation}. In
  \bibinfo{booktitle}{\emph{Programming Language Design and Implementation
  (PLDI)}}. \bibinfo{publisher}{ACM SIGPLAN}.
\newblock


\bibitem[\protect\citeauthoryear{Cheang, Rasmussen, Seshia, and
  Subramanyan}{Cheang et~al\mbox{.}}{2019}]%
        {cheang-csf19}
\bibfield{author}{\bibinfo{person}{Kevin Cheang}, \bibinfo{person}{Cameron
  Rasmussen}, \bibinfo{person}{Sanjit~A. Seshia}, {and} \bibinfo{person}{Pramod
  Subramanyan}.} \bibinfo{year}{2019}\natexlab{}.
\newblock \showarticletitle{A Formal Approach to Secure Speculation}. In
  \bibinfo{booktitle}{\emph{Proceedings of the Computer Security Foundations
  Symposium (CSF)}}.
\newblock


\bibitem[\protect\citeauthoryear{Disselkoen, Jagadeesan, Jeffrey, and
  Riely}{Disselkoen et~al\mbox{.}}{2019}]%
        {DisselkoenJJR19}
\bibfield{author}{\bibinfo{person}{Craig Disselkoen}, \bibinfo{person}{Radha
  Jagadeesan}, \bibinfo{person}{Alan Jeffrey}, {and} \bibinfo{person}{James
  Riely}.} \bibinfo{year}{2019}\natexlab{}.
\newblock \showarticletitle{The Code That Never Ran: Modeling Attacks on
  Speculative Evaluation}. In \bibinfo{booktitle}{\emph{2019 {IEEE} Symposium
  on Security and Privacy, {SP} 2019, San Francisco, CA, USA, May 19-23,
  2019}}. \bibinfo{pages}{1238--1255}.
\newblock
\urldef\tempurl%
\url{https://doi.org/10.1109/SP.2019.00047}
\showDOI{\tempurl}


\bibitem[\protect\citeauthoryear{Donenfeld}{Donenfeld}{2020}]%
        {MSCVC-2020}
\bibfield{author}{\bibinfo{person}{Daniel Donenfeld}.}
  \bibinfo{year}{2020}\natexlab{}.
\newblock \bibinfo{title}{More Spectre mitigations in MSVC}.
\newblock
  \bibinfo{howpublished}{\url{https://devblogs.microsoft.com/cppblog/more-spectre-mitigations-in-msvc/}}.
\newblock


\bibitem[\protect\citeauthoryear{Ford and Fulkerson}{Ford and
  Fulkerson}{2010}]%
        {MinCut}
\bibfield{author}{\bibinfo{person}{D.~R. Ford} {and} \bibinfo{person}{D.~R.
  Fulkerson}.} \bibinfo{year}{2010}\natexlab{}.
\newblock \bibinfo{booktitle}{\emph{Flows in Networks}}.
\newblock \bibinfo{publisher}{Princeton University Press},
  \bibinfo{address}{USA}.
\newblock
\showISBNx{0691146675}


\bibitem[\protect\citeauthoryear{Ge, Yarom, Cock, and Heiser}{Ge
  et~al\mbox{.}}{2018}]%
        {cache-survey}
\bibfield{author}{\bibinfo{person}{Qian Ge}, \bibinfo{person}{Yuval Yarom},
  \bibinfo{person}{David Cock}, {and} \bibinfo{person}{Gernot Heiser}.}
  \bibinfo{year}{2018}\natexlab{}.
\newblock \showarticletitle{A survey of microarchitectural timing attacks and
  countermeasures on contemporary hardware}. In
  \bibinfo{booktitle}{\emph{Journal of Cryptographic Engineering}}.
\newblock


\bibitem[\protect\citeauthoryear{Guanciale, Balliu, and Dam}{Guanciale
  et~al\mbox{.}}{2020}]%
        {balliu2019inspectre}
\bibfield{author}{\bibinfo{person}{Roberto Guanciale}, \bibinfo{person}{Musard
  Balliu}, {and} \bibinfo{person}{Mads Dam}.} \bibinfo{year}{2020}\natexlab{}.
\newblock \showarticletitle{InSpectre: Breaking and Fixing Microarchitectural
  Vulnerabilities by Formal Analysis}. In \bibinfo{booktitle}{\emph{Proceedings
  of the 2020 ACM SIGSAC Conference on Computer and Communications Security}}
  (Virtual Event, USA) \emph{(\bibinfo{series}{CCS '20})}.
  \bibinfo{publisher}{Association for Computing Machinery},
  \bibinfo{address}{New York, NY, USA}, \bibinfo{pages}{1853–1869}.
\newblock
\showISBNx{9781450370899}
\urldef\tempurl%
\url{https://doi.org/10.1145/3372297.3417246}
\showDOI{\tempurl}


\bibitem[\protect\citeauthoryear{Guarnieri, Koepf, Morales, Reineke, and
  Sánchez}{Guarnieri et~al\mbox{.}}{2020}]%
        {Guarnieri20}
\bibfield{author}{\bibinfo{person}{Marco Guarnieri}, \bibinfo{person}{Boris
  Koepf}, \bibinfo{person}{José~Francisco Morales}, \bibinfo{person}{Jan
  Reineke}, {and} \bibinfo{person}{Andrés Sánchez}.}
  \bibinfo{year}{2020}\natexlab{}.
\newblock \showarticletitle{Spectector: Principled Detection of Speculative
  Information Flows}. In \bibinfo{booktitle}{\emph{{Proc. IEEE Symp. on
  Security and Privacy}}} \emph{(\bibinfo{series}{{SSP '20}})}.
\newblock


\bibitem[\protect\citeauthoryear{Haas, Rossberg, Schuff, Titzer, Holman,
  Gohman, Wagner, Zakai, and Bastien}{Haas et~al\mbox{.}}{2017}]%
        {WASM}
\bibfield{author}{\bibinfo{person}{Andreas Haas}, \bibinfo{person}{Andreas
  Rossberg}, \bibinfo{person}{Derek~L. Schuff}, \bibinfo{person}{Ben~L.
  Titzer}, \bibinfo{person}{Michael Holman}, \bibinfo{person}{Dan Gohman},
  \bibinfo{person}{Luke Wagner}, \bibinfo{person}{Alon Zakai}, {and}
  \bibinfo{person}{JF Bastien}.} \bibinfo{year}{2017}\natexlab{}.
\newblock \showarticletitle{Bringing the Web up to Speed with WebAssembly}. In
  \bibinfo{booktitle}{\emph{Proceedings of the 38th ACM SIGPLAN Conference on
  Programming Language Design and Implementation}} (Barcelona, Spain)
  \emph{(\bibinfo{series}{PLDI 2017})}. \bibinfo{publisher}{Association for
  Computing Machinery}, \bibinfo{address}{New York, NY, USA},
  \bibinfo{pages}{185–200}.
\newblock
\showISBNx{9781450349888}
\urldef\tempurl%
\url{https://doi.org/10.1145/3062341.3062363}
\showDOI{\tempurl}


\bibitem[\protect\citeauthoryear{Heisler and Aparicio}{Heisler and
  Aparicio}{2020}]%
        {criterion}
\bibfield{author}{\bibinfo{person}{Brook Heisler} {and} \bibinfo{person}{Jorge
  Aparicio}.} \bibinfo{year}{2020}\natexlab{}.
\newblock \bibinfo{title}{{Criterion.rs}: Statistics-driven Microbenchmarking
  in Rust}.
\newblock \bibinfo{howpublished}{\url{https://crates.io/crates/criterion}}.
\newblock


\bibitem[\protect\citeauthoryear{Horn}{Horn}{2018}]%
        {spectrev4}
\bibfield{author}{\bibinfo{person}{Jann Horn}.}
  \bibinfo{year}{2018}\natexlab{}.
\newblock \bibinfo{title}{speculative execution, variant 4: speculative store
  bypass}.
\newblock
  \bibinfo{howpublished}{\url{https://bugs.chromium.org/p/project-zero/issues/detail?id=1528}}.
\newblock


\bibitem[\protect\citeauthoryear{Intel}{Intel}{2018a}]%
        {Intel-sw-mitigations}
\bibfield{author}{\bibinfo{person}{Intel}.} \bibinfo{year}{2018}\natexlab{a}.
\newblock \bibinfo{title}{Intel Analysis of Speculative Execution Side
  Channels}.
\newblock
  \bibinfo{howpublished}{\url{https://newsroom.intel.com/wp-content/uploads/sites/11/2018/01/Intel-Analysis-of-Speculative-Execution-Side-Channels.pdf}}.
\newblock


\bibitem[\protect\citeauthoryear{Intel}{Intel}{2018b}]%
        {retpoline-whitepaper}
\bibfield{author}{\bibinfo{person}{Intel}.} \bibinfo{year}{2018}\natexlab{b}.
\newblock \bibinfo{title}{Retpoline: A Branch Target Injection Mitigation}.
\newblock
  \bibinfo{howpublished}{\url{https://software.intel.com/security-software-guidance/api-app/sites/default/files/Retpoline-A-Branch-Target-Injection-Mitigation.pdf}}.
\newblock


\bibitem[\protect\citeauthoryear{Intel}{Intel}{2020}]%
        {Intel-lvi}
\bibfield{author}{\bibinfo{person}{Intel}.} \bibinfo{year}{2020}\natexlab{}.
\newblock \bibinfo{title}{An Optimized Mitigation Approach for Load Value
  Injection}.
\newblock
  \bibinfo{howpublished}{\url{https://software.intel.com/security-software-guidance/insights/optimized-mitigation-approach-load-value-injection}}.
\newblock


\bibitem[\protect\citeauthoryear{Kiriansky and Waldspurger}{Kiriansky and
  Waldspurger}{2018}]%
        {Kiriansky18}
\bibfield{author}{\bibinfo{person}{Vladimir Kiriansky} {and}
  \bibinfo{person}{Carl Waldspurger}.} \bibinfo{year}{2018}\natexlab{}.
\newblock \showarticletitle{Speculative Buffer Overflows: Attacks and
  Defenses}.
\newblock \bibinfo{journal}{\emph{CoRR}}  \bibinfo{volume}{abs/1807.03757}
  (\bibinfo{year}{2018}).
\newblock
\showeprint[arxiv]{1807.03757}
\urldef\tempurl%
\url{http://arxiv.org/abs/1807.03757}
\showURL{%
\tempurl}


\bibitem[\protect\citeauthoryear{Kocher, Horn, Fogh, , Genkin, Gruss, Haas,
  Hamburg, Lipp, Mangard, Prescher, Schwarz, and Yarom}{Kocher
  et~al\mbox{.}}{2019}]%
        {Kocher2018spectre}
\bibfield{author}{\bibinfo{person}{Paul Kocher}, \bibinfo{person}{Jann Horn},
  \bibinfo{person}{Anders Fogh}, \bibinfo{person}{}, \bibinfo{person}{Daniel
  Genkin}, \bibinfo{person}{Daniel Gruss}, \bibinfo{person}{Werner Haas},
  \bibinfo{person}{Mike Hamburg}, \bibinfo{person}{Moritz Lipp},
  \bibinfo{person}{Stefan Mangard}, \bibinfo{person}{Thomas Prescher},
  \bibinfo{person}{Michael Schwarz}, {and} \bibinfo{person}{Yuval Yarom}.}
  \bibinfo{year}{2019}\natexlab{}.
\newblock \showarticletitle{Spectre Attacks: Exploiting Speculative Execution}.
  In \bibinfo{booktitle}{\emph{40th IEEE Symposium on Security and Privacy
  (S\&P'19)}}.
\newblock


\bibitem[\protect\citeauthoryear{Koruyeh, Khasawneh, Song, and
  Abu-Ghazaleh}{Koruyeh et~al\mbox{.}}{2018}]%
        {Koruyeh:2018}
\bibfield{author}{\bibinfo{person}{Esmaeil~Mohammadian Koruyeh},
  \bibinfo{person}{Khaled~N. Khasawneh}, \bibinfo{person}{Chengyu Song}, {and}
  \bibinfo{person}{Nael Abu-Ghazaleh}.} \bibinfo{year}{2018}\natexlab{}.
\newblock \showarticletitle{Spectre Returns! Speculation Attacks Using the
  Return Stack Buffer}. In \bibinfo{booktitle}{\emph{Proceedings of the 12th
  USENIX Conference on Offensive Technologies}} (Baltimore, MD, USA)
  \emph{(\bibinfo{series}{WOOT'18})}. \bibinfo{publisher}{USENIX Association},
  \bibinfo{address}{Berkeley, CA, USA}, \bibinfo{pages}{3--3}.
\newblock
\urldef\tempurl%
\url{http://dl.acm.org/citation.cfm?id=3307423.3307426}
\showURL{%
\tempurl}


\bibitem[\protect\citeauthoryear{{Landauer} and {Redmond}}{{Landauer} and
  {Redmond}}{1993}]%
        {Landauer-Lattice}
\bibfield{author}{\bibinfo{person}{J. {Landauer}} {and} \bibinfo{person}{T.
  {Redmond}}.} \bibinfo{year}{1993}\natexlab{}.
\newblock \showarticletitle{A lattice of information}. In
  \bibinfo{booktitle}{\emph{[1993] Proceedings Computer Security Foundations
  Workshop VI}}. \bibinfo{pages}{65--70}.
\newblock
\urldef\tempurl%
\url{https://doi.org/10.1109/CSFW.1993.246638}
\showDOI{\tempurl}


\bibitem[\protect\citeauthoryear{Lipp, Schwarz, Gruss, Prescher, Haas, Fogh,
  Horn, Mangard, Kocher, Genkin, Yarom, and Hamburg}{Lipp
  et~al\mbox{.}}{2018}]%
        {Lipp2018meltdown}
\bibfield{author}{\bibinfo{person}{Moritz Lipp}, \bibinfo{person}{Michael
  Schwarz}, \bibinfo{person}{Daniel Gruss}, \bibinfo{person}{Thomas Prescher},
  \bibinfo{person}{Werner Haas}, \bibinfo{person}{Anders Fogh},
  \bibinfo{person}{Jann Horn}, \bibinfo{person}{Stefan Mangard},
  \bibinfo{person}{Paul Kocher}, \bibinfo{person}{Daniel Genkin},
  \bibinfo{person}{Yuval Yarom}, {and} \bibinfo{person}{Mike Hamburg}.}
  \bibinfo{year}{2018}\natexlab{}.
\newblock \showarticletitle{Meltdown: Reading Kernel Memory from User Space}.
  In \bibinfo{booktitle}{\emph{27th {USENIX} Security Symposium ({USENIX}
  Security 18)}}.
\newblock


\bibitem[\protect\citeauthoryear{Maisuradze and Rossow}{Maisuradze and
  Rossow}{2018}]%
        {Maisuradze:2018}
\bibfield{author}{\bibinfo{person}{Giorgi Maisuradze} {and}
  \bibinfo{person}{Christian Rossow}.} \bibinfo{year}{2018}\natexlab{}.
\newblock \showarticletitle{Ret2Spec: Speculative Execution Using Return Stack
  Buffers}. In \bibinfo{booktitle}{\emph{Proceedings of the 2018 ACM SIGSAC
  Conference on Computer and Communications Security}} (Toronto, Canada)
  \emph{(\bibinfo{series}{CCS '18})}. \bibinfo{publisher}{ACM},
  \bibinfo{address}{New York, NY, USA}, \bibinfo{pages}{2109--2122}.
\newblock
\showISBNx{978-1-4503-5693-0}
\urldef\tempurl%
\url{https://doi.org/10.1145/3243734.3243761}
\showDOI{\tempurl}


\bibitem[\protect\citeauthoryear{McIlroy, Sevc{\'{\i}}k, Tebbi, Titzer, and
  Verwaest}{McIlroy et~al\mbox{.}}{2019}]%
        {McIlroy19}
\bibfield{author}{\bibinfo{person}{Ross McIlroy}, \bibinfo{person}{Jaroslav
  Sevc{\'{\i}}k}, \bibinfo{person}{Tobias Tebbi}, \bibinfo{person}{Ben~L.
  Titzer}, {and} \bibinfo{person}{Toon Verwaest}.}
  \bibinfo{year}{2019}\natexlab{}.
\newblock \showarticletitle{Spectre is here to stay: An analysis of
  side-channels and speculative execution}.
\newblock \bibinfo{journal}{\emph{CoRR}}  \bibinfo{volume}{abs/1902.05178}
  (\bibinfo{year}{2019}).
\newblock
\showeprint[arxiv]{1902.05178}
\urldef\tempurl%
\url{http://arxiv.org/abs/1902.05178}
\showURL{%
\tempurl}


\bibitem[\protect\citeauthoryear{McMullen}{McMullen}{2020}]%
        {lucet-talk}
\bibfield{author}{\bibinfo{person}{Tyler McMullen}.}
  \bibinfo{year}{2020}\natexlab{}.
\newblock \bibinfo{title}{Lucet: A Compiler and Runtime for High-Concurrency
  Low-Latency Sandboxing}.
\newblock \bibinfo{howpublished}{Principles of Secure Compilation (PriSC)}.
\newblock


\bibitem[\protect\citeauthoryear{Moghimi, Lipp, Sunar, and Schwarz}{Moghimi
  et~al\mbox{.}}{2020}]%
        {medusa}
\bibfield{author}{\bibinfo{person}{Daniel Moghimi}, \bibinfo{person}{Moritz
  Lipp}, \bibinfo{person}{Berk Sunar}, {and} \bibinfo{person}{Michael
  Schwarz}.} \bibinfo{year}{2020}\natexlab{}.
\newblock \showarticletitle{Medusa: Microarchitectural Data Leakage via
  Automated Attack Synthesis}. In \bibinfo{booktitle}{\emph{29th {USENIX}
  Security Symposium ({USENIX} Security 20)}}. \bibinfo{publisher}{{USENIX}
  Association}, \bibinfo{address}{Boston, MA}.
\newblock
\urldef\tempurl%
\url{https://www.usenix.org/conference/usenixsecurity20/presentation/moghimi-medusa}
\showURL{%
\tempurl}


\bibitem[\protect\citeauthoryear{Mozilla Wiki}{Mozilla Wiki}{2018}]%
        {mozilla-sandbox}
Mozilla Wiki \bibinfo{year}{2018}\natexlab{}.
\newblock \bibinfo{title}{Security/Sandbox}.
\newblock
  \bibinfo{howpublished}{\url{https://wiki.mozilla.org/Security/Sandbox}}.
\newblock


\bibitem[\protect\citeauthoryear{{Myers}, {Sabelfeld}, and {Zdancewic}}{{Myers}
  et~al\mbox{.}}{2004}]%
        {declassify}
\bibfield{author}{\bibinfo{person}{A.~C. {Myers}}, \bibinfo{person}{A.
  {Sabelfeld}}, {and} \bibinfo{person}{S. {Zdancewic}}.}
  \bibinfo{year}{2004}\natexlab{}.
\newblock \showarticletitle{Enforcing robust declassification}. In
  \bibinfo{booktitle}{\emph{Proceedings. 17th IEEE Computer Security
  Foundations Workshop, 2004.}} \bibinfo{pages}{172--186}.
\newblock
\showISSN{1063-6900}
\urldef\tempurl%
\url{https://doi.org/10.1109/CSFW.2004.1310740}
\showDOI{\tempurl}


\bibitem[\protect\citeauthoryear{Nagarakatte, Zhao, Martin, and
  Zdancewic}{Nagarakatte et~al\mbox{.}}{2009}]%
        {soft-bound}
\bibfield{author}{\bibinfo{person}{Santosh Nagarakatte},
  \bibinfo{person}{Jianzhou Zhao}, \bibinfo{person}{Milo~M.K. Martin}, {and}
  \bibinfo{person}{Steve Zdancewic}.} \bibinfo{year}{2009}\natexlab{}.
\newblock \showarticletitle{SoftBound: Highly Compatible and Complete Spatial
  Memory Safety for c}. In \bibinfo{booktitle}{\emph{Proceedings of the 30th
  ACM SIGPLAN Conference on Programming Language Design and Implementation}}
  (Dublin, Ireland) \emph{(\bibinfo{series}{PLDI '09})}.
  \bibinfo{publisher}{Association for Computing Machinery},
  \bibinfo{address}{New York, NY, USA}, \bibinfo{pages}{245–258}.
\newblock
\showISBNx{9781605583921}
\urldef\tempurl%
\url{https://doi.org/10.1145/1542476.1542504}
\showDOI{\tempurl}


\bibitem[\protect\citeauthoryear{Nielson and Nielson}{Nielson and
  Nielson}{1998}]%
        {Nielson98}
\bibfield{author}{\bibinfo{person}{Hanne~Riis Nielson} {and}
  \bibinfo{person}{Flemming Nielson}.} \bibinfo{year}{1998}\natexlab{}.
\newblock \showarticletitle{Flow logics for constraint based analysis}. In
  \bibinfo{booktitle}{\emph{Compiler Construction}},
  \bibfield{editor}{\bibinfo{person}{Kai Koskimies}} (Ed.).
  \bibinfo{publisher}{Springer Berlin Heidelberg}, \bibinfo{address}{Berlin,
  Heidelberg}, \bibinfo{pages}{109--127}.
\newblock
\showISBNx{978-3-540-69724-4}


\bibitem[\protect\citeauthoryear{Osvik, Shamir, and Tromer}{Osvik
  et~al\mbox{.}}{2006}]%
        {Osvik:2006:CAC}
\bibfield{author}{\bibinfo{person}{Dag~Arne Osvik}, \bibinfo{person}{Adi
  Shamir}, {and} \bibinfo{person}{Eran Tromer}.}
  \bibinfo{year}{2006}\natexlab{}.
\newblock \showarticletitle{Cache attacks and countermeasures: the case of
  {AES}}. In \bibinfo{booktitle}{\emph{Proceedings of the 2006 The
  Cryptographers' Track at the RSA conference on Topics in Cryptology}}
  \emph{(\bibinfo{series}{CT-RSA'06})}. \bibinfo{publisher}{Springer-Verlag}.
\newblock


\bibitem[\protect\citeauthoryear{Pardoe}{Pardoe}{2018}]%
        {MSCVC}
\bibfield{author}{\bibinfo{person}{Andrew Pardoe}.}
  \bibinfo{year}{2018}\natexlab{}.
\newblock \bibinfo{title}{Spectre mitigations in MSVC}.
\newblock
  \bibinfo{howpublished}{\url{https://devblogs.microsoft.com/cppblog/spectre-mitigations-in-msvc/}}.
\newblock


\bibitem[\protect\citeauthoryear{Protzenko, Beurdouche, Merigoux, and
  Bhargavan}{Protzenko et~al\mbox{.}}{2019}]%
        {libsignalSP}
\bibfield{author}{\bibinfo{person}{Jonathan Protzenko},
  \bibinfo{person}{Benjamin Beurdouche}, \bibinfo{person}{Denis Merigoux},
  {and} \bibinfo{person}{Karthikeyan Bhargavan}.}
  \bibinfo{year}{2019}\natexlab{}.
\newblock \showarticletitle{Formally Verified Cryptographic Web Applications in
  WebAssembly}. In \bibinfo{booktitle}{\emph{Security and Privacy}}.
\newblock


\bibitem[\protect\citeauthoryear{Reis, Moshchuk, and Oskov}{Reis
  et~al\mbox{.}}{2019}]%
        {site-isolation-paper}
\bibfield{author}{\bibinfo{person}{Charles Reis}, \bibinfo{person}{Alexander
  Moshchuk}, {and} \bibinfo{person}{Nasko Oskov}.}
  \bibinfo{year}{2019}\natexlab{}.
\newblock \showarticletitle{Site Isolation: Process Separation for Web Sites
  within the Browser}. In \bibinfo{booktitle}{\emph{USENIX Security
  Symposium}}.
\newblock


\bibitem[\protect\citeauthoryear{Schwarz, Lipp, Canella, Schilling, Kargl, and
  Gruss}{Schwarz et~al\mbox{.}}{2020}]%
        {ConTExT20}
\bibfield{author}{\bibinfo{person}{Michael Schwarz}, \bibinfo{person}{Moritz
  Lipp}, \bibinfo{person}{Claudio Canella}, \bibinfo{person}{Robert Schilling},
  \bibinfo{person}{Florian Kargl}, {and} \bibinfo{person}{Daniel Gruss}.}
  \bibinfo{year}{2020}\natexlab{}.
\newblock \showarticletitle{ConTExT: A Generic Approach for Mitigating
  Spectre}. In \bibinfo{booktitle}{\emph{{Proc. Network and Distributed System
  Security Symposium}}}.
\newblock
\urldef\tempurl%
\url{https://doi.org/10.14722/ndss.2020.24271}
\showDOI{\tempurl}


\bibitem[\protect\citeauthoryear{Shanbhogue, Gupta, and Sahita}{Shanbhogue
  et~al\mbox{.}}{2019}]%
        {shanbhogue2019security}
\bibfield{author}{\bibinfo{person}{Vedvyas Shanbhogue}, \bibinfo{person}{Deepak
  Gupta}, {and} \bibinfo{person}{Ravi Sahita}.}
  \bibinfo{year}{2019}\natexlab{}.
\newblock \showarticletitle{Security Analysis of Processor Instruction Set
  Architecture for Enforcing Control-Flow Integrity}. In
  \bibinfo{booktitle}{\emph{International Workshop on Hardware and
  Architectural Support for Security and Privacy (HASP)}}.
\newblock


\bibitem[\protect\citeauthoryear{Taram, Venkat, and Tullsen}{Taram
  et~al\mbox{.}}{2019}]%
        {tullsen-asplos}
\bibfield{author}{\bibinfo{person}{Mohammadkazem Taram},
  \bibinfo{person}{Ashish Venkat}, {and} \bibinfo{person}{Dean Tullsen}.}
  \bibinfo{year}{2019}\natexlab{}.
\newblock \showarticletitle{Context-Sensitive Fencing: Securing Speculative
  Execution via Microcode Customization}. In
  \bibinfo{booktitle}{\emph{Proceedings of the Twenty-Fourth International
  Conference on Architectural Support for Programming Languages and Operating
  Systems}} (Providence, RI, USA) \emph{(\bibinfo{series}{ASPLOS ’19})}.
  \bibinfo{publisher}{Association for Computing Machinery},
  \bibinfo{address}{New York, NY, USA}, \bibinfo{pages}{395–410}.
\newblock
\showISBNx{9781450362405}
\urldef\tempurl%
\url{https://doi.org/10.1145/3297858.3304060}
\showDOI{\tempurl}


\bibitem[\protect\citeauthoryear{Tkachenko}{Tkachenko}{2018}]%
        {Tkachenko18}
\bibfield{author}{\bibinfo{person}{Vadim Tkachenko}.}
  \bibinfo{year}{2018}\natexlab{}.
\newblock \bibinfo{title}{20-30\% Performance Hit from the Spectre Bug Fix on
  Ubuntu}.
\newblock
  \bibinfo{howpublished}{https://www.percona.com/blog/2018/01/23/20-30-performance-hit-spectre-bug-fix-ubuntu/}.
\newblock


\bibitem[\protect\citeauthoryear{Tromer, Osvik, and Shamir}{Tromer
  et~al\mbox{.}}{2010}]%
        {PRIME+PROBE}
\bibfield{author}{\bibinfo{person}{Eran Tromer}, \bibinfo{person}{Dag~Arne
  Osvik}, {and} \bibinfo{person}{Adi Shamir}.} \bibinfo{year}{2010}\natexlab{}.
\newblock \showarticletitle{Efficient Cache Attacks on AES, and
  Countermeasures}.
\newblock \bibinfo{journal}{\emph{J. Cryptol.}} \bibinfo{volume}{23},
  \bibinfo{number}{1} (\bibinfo{date}{Jan.} \bibinfo{year}{2010}),
  \bibinfo{pages}{37--71}.
\newblock
\showISSN{0933-2790}
\urldef\tempurl%
\url{https://doi.org/10.1007/s00145-009-9049-y}
\showDOI{\tempurl}


\bibitem[\protect\citeauthoryear{Van~Bulck, Moghimi, Schwarz, Lipp, Minkin,
  Genkin, Yuval, Sunar, Gruss, and Piessens}{Van~Bulck et~al\mbox{.}}{2020}]%
        {vanbulck2020lvi}
\bibfield{author}{\bibinfo{person}{Jo Van~Bulck}, \bibinfo{person}{Daniel
  Moghimi}, \bibinfo{person}{Michael Schwarz}, \bibinfo{person}{Moritz Lipp},
  \bibinfo{person}{Marina Minkin}, \bibinfo{person}{Daniel Genkin},
  \bibinfo{person}{Yarom Yuval}, \bibinfo{person}{Berk Sunar},
  \bibinfo{person}{Daniel Gruss}, {and} \bibinfo{person}{Frank Piessens}.}
  \bibinfo{year}{2020}\natexlab{}.
\newblock \showarticletitle{{LVI: Hijacking Transient Execution through
  Microarchitectural Load Value Injection}}. In \bibinfo{booktitle}{\emph{41th
  IEEE Symposium on Security and Privacy (S\&P'20)}}.
\newblock


\bibitem[\protect\citeauthoryear{Venet and Brat}{Venet and Brat}{2004}]%
        {Venet04}
\bibfield{author}{\bibinfo{person}{Arnaud Venet} {and}
  \bibinfo{person}{Guillaume Brat}.} \bibinfo{year}{2004}\natexlab{}.
\newblock \showarticletitle{Precise and Efficient Static Array Bound Checking
  for Large Embedded C Programs}.
\newblock \bibinfo{journal}{\emph{SIGPLAN Not.}} \bibinfo{volume}{39},
  \bibinfo{number}{6} (\bibinfo{date}{June} \bibinfo{year}{2004}),
  \bibinfo{pages}{231–242}.
\newblock
\showISSN{0362-1340}
\urldef\tempurl%
\url{https://doi.org/10.1145/996893.996869}
\showDOI{\tempurl}


\bibitem[\protect\citeauthoryear{Volpano, Smith, and Irvine}{Volpano
  et~al\mbox{.}}{1996}]%
        {Volpano:Smith:Irvine:Sound}
\bibfield{author}{\bibinfo{person}{D. Volpano}, \bibinfo{person}{G. Smith},
  {and} \bibinfo{person}{C. Irvine}.} \bibinfo{year}{1996}\natexlab{}.
\newblock \showarticletitle{{A Sound Type System for Secure Flow Analysis}}.
\newblock \bibinfo{journal}{\emph{{J. Computer Security}}} \bibinfo{volume}{4},
  \bibinfo{number}{3} (\bibinfo{year}{1996}), \bibinfo{pages}{167--187}.
\newblock


\bibitem[\protect\citeauthoryear{Wang, Chattopadhyay, Gotovchits, Mitra, and
  Roychoudhury}{Wang et~al\mbox{.}}{2018}]%
        {oo7}
\bibfield{author}{\bibinfo{person}{Guanhua Wang}, \bibinfo{person}{Sudipta
  Chattopadhyay}, \bibinfo{person}{Ivan Gotovchits}, \bibinfo{person}{Tulika
  Mitra}, {and} \bibinfo{person}{Abhik Roychoudhury}.}
  \bibinfo{year}{2018}\natexlab{}.
\newblock \showarticletitle{oo7: Low-overhead Defense against Spectre Attacks
  via Binary Analysis}.
\newblock \bibinfo{journal}{\emph{CoRR}}  \bibinfo{volume}{abs/1807.05843}
  (\bibinfo{year}{2018}).
\newblock
\showeprint[arxiv]{1807.05843}
\urldef\tempurl%
\url{http://arxiv.org/abs/1807.05843}
\showURL{%
\tempurl}


\bibitem[\protect\citeauthoryear{Watt, Renner, Popescu, Cauligi, and
  Stefan}{Watt et~al\mbox{.}}{2019}]%
        {Watt:2019}
\bibfield{author}{\bibinfo{person}{Conrad Watt}, \bibinfo{person}{John Renner},
  \bibinfo{person}{Natalie Popescu}, \bibinfo{person}{Sunjay Cauligi}, {and}
  \bibinfo{person}{Deian Stefan}.} \bibinfo{year}{2019}\natexlab{}.
\newblock \showarticletitle{CT-wasm: Type-driven Secure Cryptography for the
  Web Ecosystem}.
\newblock \bibinfo{journal}{\emph{Proc. ACM Program. Lang.}}
  \bibinfo{volume}{3}, \bibinfo{number}{POPL}, Article \bibinfo{articleno}{77}
  (\bibinfo{date}{Jan.} \bibinfo{year}{2019}), \bibinfo{numpages}{29}~pages.
\newblock
\showISSN{2475-1421}
\urldef\tempurl%
\url{https://doi.org/10.1145/3290390}
\showDOI{\tempurl}


\bibitem[\protect\citeauthoryear{Wu and Wang}{Wu and Wang}{2019}]%
        {Wu:2019}
\bibfield{author}{\bibinfo{person}{Meng Wu} {and} \bibinfo{person}{Chao Wang}.}
  \bibinfo{year}{2019}\natexlab{}.
\newblock \showarticletitle{Abstract Interpretation Under Speculative
  Execution}. In \bibinfo{booktitle}{\emph{Proceedings of the 40th ACM SIGPLAN
  Conference on Programming Language Design and Implementation}} (Phoenix, AZ,
  USA) \emph{(\bibinfo{series}{PLDI 2019})}. \bibinfo{publisher}{ACM},
  \bibinfo{address}{New York, NY, USA}, \bibinfo{pages}{802--815}.
\newblock
\showISBNx{978-1-4503-6712-7}
\urldef\tempurl%
\url{https://doi.org/10.1145/3314221.3314647}
\showDOI{\tempurl}


\bibitem[\protect\citeauthoryear{Yan, Choi, Skarlatos, Morrison, Fletcher, and
  Torrellas}{Yan et~al\mbox{.}}{2018}]%
        {Yan:2018}
\bibfield{author}{\bibinfo{person}{Mengjia Yan}, \bibinfo{person}{Jiho Choi},
  \bibinfo{person}{Dimitrios Skarlatos}, \bibinfo{person}{Adam Morrison},
  \bibinfo{person}{Christopher~W. Fletcher}, {and} \bibinfo{person}{Josep
  Torrellas}.} \bibinfo{year}{2018}\natexlab{}.
\newblock \showarticletitle{InvisiSpec: Making Speculative Execution Invisible
  in the Cache Hierarchy}. In \bibinfo{booktitle}{\emph{Proceedings of the 51st
  Annual IEEE/ACM International Symposium on Microarchitecture}} (Fukuoka,
  Japan) \emph{(\bibinfo{series}{MICRO-51})}. \bibinfo{publisher}{IEEE Press},
  \bibinfo{address}{Piscataway, NJ, USA}, \bibinfo{pages}{428--441}.
\newblock
\showISBNx{978-1-5386-6240-3}
\urldef\tempurl%
\url{https://doi.org/10.1109/MICRO.2018.00042}
\showDOI{\tempurl}


\bibitem[\protect\citeauthoryear{Yarom and Falkner}{Yarom and Falkner}{2014}]%
        {FLUSH+RELOAD}
\bibfield{author}{\bibinfo{person}{Yuval Yarom} {and} \bibinfo{person}{Katrina
  Falkner}.} \bibinfo{year}{2014}\natexlab{}.
\newblock \showarticletitle{FLUSH+RELOAD: A High Resolution, Low Noise, L3
  Cache Side-Channel Attack}. In \bibinfo{booktitle}{\emph{23rd {USENIX}
  Security Symposium ({USENIX} Security 14)}}. \bibinfo{publisher}{{USENIX}
  Association}, \bibinfo{address}{San Diego, CA}, \bibinfo{pages}{719--732}.
\newblock
\showISBNx{978-1-931971-15-7}
\urldef\tempurl%
\url{https://www.usenix.org/conference/usenixsecurity14/technical-sessions/presentation/yarom}
\showURL{%
\tempurl}


\bibitem[\protect\citeauthoryear{Yu, Yan, Khyzha, Morrison, Torrellas, and
  Fletcher}{Yu et~al\mbox{.}}{2019}]%
        {STT-Yu-19}
\bibfield{author}{\bibinfo{person}{Jiyong Yu}, \bibinfo{person}{Mengjia Yan},
  \bibinfo{person}{Artem Khyzha}, \bibinfo{person}{Adam Morrison},
  \bibinfo{person}{Josep Torrellas}, {and} \bibinfo{person}{Christopher~W.
  Fletcher}.} \bibinfo{year}{2019}\natexlab{}.
\newblock \showarticletitle{Speculative Taint Tracking {(STT):} {A}
  Comprehensive Protection for Speculatively Accessed Data}. In
  \bibinfo{booktitle}{\emph{Proceedings of the 52nd Annual {IEEE/ACM}
  International Symposium on Microarchitecture, {MICRO} 2019, Columbus, OH,
  USA, October 12-16, 2019}}. \bibinfo{pages}{954--968}.
\newblock
\urldef\tempurl%
\url{https://doi.org/10.1145/3352460.3358274}
\showDOI{\tempurl}


\bibitem[\protect\citeauthoryear{Zinzindohou{\'e}, Bhargavan, Protzenko, and
  Beurdouche}{Zinzindohou{\'e} et~al\mbox{.}}{2017}]%
        {HACL}
\bibfield{author}{\bibinfo{person}{Jean-Karim Zinzindohou{\'e}},
  \bibinfo{person}{Karthikeyan Bhargavan}, \bibinfo{person}{Jonathan
  Protzenko}, {and} \bibinfo{person}{Benjamin Beurdouche}.}
  \bibinfo{year}{2017}\natexlab{}.
\newblock \showarticletitle{{HACL * : A Verified Modern Cryptographic
  Library}}. In \bibinfo{booktitle}{\emph{{ACM Conference on Computer and
  Communications Security (CCS)}}}. \bibinfo{address}{Dallas, United States}.
\newblock
\urldef\tempurl%
\url{https://hal.inria.fr/hal-01588421}
\showURL{%
\tempurl}


\end{thebibliography}

\ifextended
\clearpage
\newpage
\appendix

\section{Full Calculus}
\label{app:full-calculus}

% \mv[inline]{Do we need event |fail|? I don't think so, yes we need}

% \mv[inline]{Explain why we decorate the calculus (arrays) and use restricted syntax.}

\begin{figure}[h]
\rulesize
% \begin{framed}
\centering
% \begin{subfigure}{.9\textwidth}
\begin{align*}
\text{Arrays: } & \ensuremath{\Varid{a}}\ \ensuremath{\Coloneqq}\ \ensuremath{\{\mskip1.5mu \Varid{base}\;\Varid{n},}\ \ensuremath{\mathit{length}\;\Varid{n},}\ \ensuremath{\Varid{label}\;\ell\mskip1.5mu\}}
\\
\text{Values: } & \ensuremath{\Varid{v}}\ \ensuremath{\Coloneqq}\ \ensuremath{\Varid{n}\;\;|\;\;\Varid{b}\;\;|\;\;\Varid{a}}
\\
\text{Expressions: } & \ensuremath{\Varid{e}}\ \ensuremath{\Coloneqq}\ \ensuremath{\Varid{v}\;\;|\;\;\Varid{x}\;\;|\;\;\Varid{e}_{1}\mathbin{+}\Varid{e}_{2}\;\;|\;\;\Varid{e}_{1}\leq \Varid{e}_{2}\;\;|\;\;\Varid{e}_{1}}\ \ensuremath{\mathbin{?}}\ \ensuremath{\Varid{e}_{2}}\ \ensuremath{\mathbin{:}}\ \ensuremath{\Varid{e}_{3}\;\;|\;\;\Varid{e}\;\otimes\;\Varid{e}}
\\
\text{Right-hand Sides: } & \ensuremath{\Varid{r}}\ \ensuremath{\Coloneqq}\ \ensuremath{\Varid{e}\;\;|\;\;(\mathbin{*}_{\ell}\;\Varid{e})\;\;|\;\;\Varid{a}{[}\Varid{e}{]}}
\\
\text{Commands: } & \ensuremath{\Varid{c}}\ \ensuremath{\Coloneqq}\ \ensuremath{\mathbf{skip}\;\;|\;\;\Varid{x}\mathbin{:=}\Varid{r}\;\;|\;\;(\mathbin{*}_{\ell}\;\Varid{e})\mathrel{=}\Varid{e}\;\;|\;\;\Varid{a}{[}\Varid{e}_{1}{]}\mathbin{:=}\Varid{e}_{2}\;\;|\;\;\Varid{x}\mathbin{:=}\mathbf{protect}(\Varid{r})}
\\ & \ \ \ \ \ \ \ \ \ \ensuremath{\;|\;\;\mathbf{if}\;\Varid{e}\;\mathbf{then}\;\Varid{c}_{1}\;\mathbf{else}\;\Varid{c}_{2}\;\;|\;\;\mathbf{while}\;\Varid{e}\;\mathbf{do}\;\Varid{c}\;\;|\;\;\mathbf{fail}\;\;|\;\;\Varid{c}_{1};\Varid{c}_{2}} \\
% & & \\
\text{Instructions: } & \ensuremath{\Varid{i}}\ \ensuremath{\Coloneqq}\
\ensuremath{\mathbf{nop}\;\;|\;\;\mathbf{fail}(\Varid{p})\;\;|\;\;\Varid{x}\mathbin{:=}\Varid{e}\;\;|\;\;\Varid{x}\mathbin{:=}\mathbf{load}_{\ell}(\Varid{e})\;\;|\;\;\mathbf{store}_{\ell}(\Varid{e}_{1},\Varid{e}_{2})}
\\ & \ \ \ \ \ \ \ \ \ \ensuremath{\;|\;\;\Varid{x}\mathbin{:=}\mathbf{protect}(\Varid{e})\;\;|\;\;\mathbf{guard}(\Varid{e}^{\Varid{b}},\Varid{cs},\Varid{p})}
\\
\text{Predictions: } & \ensuremath{\Varid{b}}\ \ensuremath{\in}\ \ensuremath{\{\mskip1.5mu \mathbf{true},}\ \ensuremath{\mathbf{false}\mskip1.5mu\}}
\\
\text{Guard and Fail Identifiers: } & \ensuremath{\Varid{p}}\ \ensuremath{\in}\ \ensuremath{\mathbb{N}}
\\
\text{Directives: } & \ensuremath{\Varid{d}}\ \ensuremath{\Coloneqq}\ \ensuremath{\mathbf{fetch}\;\;|\;\;\mathbf{fetch}\;\Varid{b}\;\;|\;\;\mathbf{exec}\;\Varid{n}\;\;|\;\;\mathbf{retire}}
\\
\text{Schedules: } & \ensuremath{\Conid{D}}\ \ensuremath{\Coloneqq}\ \ensuremath{[\mskip1.5mu \mskip1.5mu]\;\;|\;\;\Varid{d}\mathbin{:}\Conid{D}}
\\
\text{Observations: } & \ensuremath{\Varid{o}}\ \ensuremath{\Coloneqq}\ \ensuremath{\epsilon\;\;|\;\;\mathbf{fail}(\Varid{p})\;\;|\;\;\mathbf{read}(\Varid{n},\Varid{ps})\;\;|\;\;\mathbf{write}(\Varid{n},\Varid{ps})\;\;|\;\;\mathbf{rollback}(\Varid{p})}
\\
\text{Observation Traces: } & \ensuremath{\Conid{O}}\ \ensuremath{\Coloneqq}\ \ensuremath{\epsilon\;\;|\;\;\Varid{o}\;{{\mkern-2mu\cdot\mkern-2mu}}\;\Conid{O}}
\\
\text{Reorder Buffers: } & \ensuremath{\Varid{is}}\ \ensuremath{\Coloneqq}\ \ensuremath{\Varid{i}\mathbin{:}\Varid{is}\;\;|\;\;[\mskip1.5mu \mskip1.5mu]}
\\
\text{Command Stacks: } & \ensuremath{\Varid{cs}}\ \ensuremath{\Coloneqq}\ \ensuremath{\Varid{c}\mathbin{:}\Varid{cs}\;\;|\;\;[\mskip1.5mu \mskip1.5mu]}
\\
\text{Memory Stores: } &  \ensuremath{\mu}\ \ensuremath{\in}\ \ensuremath{\mathbb{N}\rightharpoonup\Conid{Value}} \\
\text{Variable Maps: } & \ensuremath{\rho}\ \ensuremath{\in}\ \ensuremath{\Conid{Var}\to \Conid{Value}} \\
\text{Configurations: } & \ensuremath{\Conid{C}}\ \ensuremath{\Coloneqq}\ \ensuremath{\langle\Varid{is},\Varid{cs},\mu,\rho\rangle}
\end{align*}
% \vspace{-\baselineskip}
% \captionsetup{format=caption-with-line}
\caption{Decorated syntax. Pointer operators (e.g., \ensuremath{\mathbin{*}_{\ell}\;\Varid{e}}) and memory instructions (e.g., \ensuremath{\mathbf{load}_{\ell}(\Varid{e})}) are annotated with a security label \ensuremath{\ell}, which represents the sensitivity of the data stored at the corresponding memory accesses.}
\label{fig:decorated-syntax}
% \end{subfigure}%
\end{figure}

\begin{figure}[t]
\begin{subfigure}{0.9\textwidth}
%\vspace{\baselineskip}
\begin{mathpar}
\inferrule[Fetch-Skip]
{}
{\ensuremath{\langle\Varid{is},\mathbf{skip}\mathbin{:}\Varid{cs},\mu,\rho\rangle\stepT{\mathbf{fetch}}{\epsilon}\langle\Varid{is}+{\mkern-9mu+}\ [\mskip1.5mu \mathbf{nop}\mskip1.5mu],\Varid{cs},\mu,\rho\rangle}}
\and
\inferrule[Fetch-Fail]
{\ensuremath{\fresh{\Varid{p}}} }
{\ensuremath{\langle\Varid{is},\mathbf{fail}\mathbin{:}\Varid{cs},\mu,\rho\rangle\stepT{\mathbf{fetch}}{\epsilon}\langle\Varid{is}+{\mkern-9mu+}\ [\mskip1.5mu \mathbf{fail}(\Varid{p})\mskip1.5mu],\Varid{cs},\mu,\rho\rangle}}
\and
\inferrule[Fetch-Asgn]
{}
{\ensuremath{\langle\Varid{is},\Varid{x}\mathbin{:=}\Varid{e}\mathbin{:}\Varid{cs},\mu,\rho\rangle\stepT{\mathbf{fetch}}{\epsilon}\langle\Varid{is}+{\mkern-9mu+}\ [\mskip1.5mu \Varid{x}\mathbin{:=}\Varid{e}\mskip1.5mu],\Varid{cs},\mu,\rho\rangle}}
\and
\inferrule[Fetch-Seq]
{}
{\ensuremath{\langle\Varid{is},\Varid{c}_{1};\Varid{c}_{2}\mathbin{:}\Varid{cs},\mu,\rho\rangle\stepT{\mathbf{fetch}}{\epsilon}\langle\Varid{is},\Varid{c}_{1}\mathbin{:}\Varid{c}_{2}\mathbin{:}\Varid{cs},\mu,\rho\rangle}}
\and
\inferrule[Fetch-Ptr-Load]
{\ensuremath{\Varid{c}}\ \ensuremath{\mathrel{=}}\ \ensuremath{\Varid{x}\mathbin{:=}\mathbin{*}_{\ell}\;\Varid{e}} \\ \ensuremath{\Varid{i}\mathrel{=}\Varid{x}\mathbin{:=}\mathbf{load}_{\ell}(\Varid{e})} }
{\ensuremath{\langle\Varid{is},\Varid{c}\mathbin{:}\Varid{cs},\mu,\rho\rangle\stepT{\mathbf{fetch}}{\epsilon}\langle\Varid{is}+{\mkern-9mu+}\ [\mskip1.5mu \Varid{i}\mskip1.5mu],\Varid{cs},\mu,\rho\rangle}}
\and
\inferrule[Fetch-Ptr-Store]
{\ensuremath{\Varid{c}}\ \ensuremath{\mathrel{=}}\ \ensuremath{\mathbin{*}_{\ell}\;\Varid{e}_{1}\mathbin{:=}\Varid{e}_{2}} \\ \ensuremath{\Varid{i}\mathrel{=}\mathbf{store}_{\ell}(\Varid{e}_{1},\Varid{e}_{2})} }
{\ensuremath{\langle\Varid{is},\Varid{c}\mathbin{:}\Varid{cs},\mu,\rho\rangle\stepT{\mathbf{fetch}}{\epsilon}\langle\Varid{is}+{\mkern-9mu+}\ [\mskip1.5mu \Varid{i}\mskip1.5mu],\Varid{cs},\mu,\rho\rangle}}
\and
\inferrule[Fetch-Array-Load]
{\ensuremath{\Varid{c}}\ \ensuremath{\mathrel{=}}\ \ensuremath{\Varid{x}\mathbin{:=}\Varid{a}{[}\Varid{e}_{1}{]}} \quad \ensuremath{\Varid{e}\mathrel{=}\Varid{e}_{1}<\mathit{length}(\Varid{a})} \quad \ensuremath{\Varid{e'}\mathrel{=}\Varid{base}(\Varid{a})\mathbin{+}\Varid{e}_{1}} \\\\
\ensuremath{\ell\mathrel{=}\Varid{label}(\Varid{a})} \qquad \ensuremath{\Varid{c'}}\ \ensuremath{\mathrel{=}}\ \ensuremath{\mathbf{if}}\ \ensuremath{\Varid{e}}\ \ensuremath{\mathbf{then}}\ \ensuremath{\Varid{x}\mathbin{:=}\mathbin{*}_{\ell}\;\Varid{e'}}\ \ensuremath{\mathbf{else}}\ \ensuremath{\mathbf{fail}}}
{\ensuremath{\langle\Varid{is},\Varid{c}\mathbin{:}\Varid{cs},\mu,\rho\rangle\stepT{\mathbf{fetch}}{\epsilon}\langle\Varid{is},\Varid{c'}\mathbin{:}\Varid{cs},\mu,\rho\rangle}}
\and
\inferrule[Fetch-Array-Store]
{\ensuremath{\Varid{c}}\ \ensuremath{\mathrel{=}}\ \ensuremath{\Varid{a}{[}\Varid{e}_{1}{]}\mathbin{:=}\Varid{e}_{2}} \quad \ensuremath{\Varid{e}\mathrel{=}\Varid{e}_{1}<\mathit{length}(\Varid{a})} \quad \ensuremath{\Varid{e'}\mathrel{=}\Varid{base}(\Varid{a})\mathbin{+}\Varid{e}_{1}} \\\\ \ensuremath{\ell\mathrel{=}\Varid{label}(\Varid{a})} \qquad
 \ensuremath{\Varid{c'}}\ \ensuremath{\mathrel{=}}\ \ensuremath{\mathbf{if}}\ \ensuremath{\Varid{e}}\ \ensuremath{\mathbf{then}}\ \ensuremath{\mathbin{*}_{\ell}\;\Varid{e'}\mathbin{:=}\Varid{e}}\ \ensuremath{\mathbf{else}}\ \ensuremath{\mathbf{fail}}}
{\ensuremath{\langle\Varid{is},\Varid{c}\mathbin{:}\Varid{cs},\mu,\rho\rangle\stepT{\mathbf{fetch}}{\epsilon}\langle\Varid{is},\Varid{c'}\mathbin{:}\Varid{cs},\mu,\rho\rangle}}
\and
\inferrule[Fetch-If-True]
{\ensuremath{\Varid{c}}\ \ensuremath{\mathrel{=}}\ \ensuremath{\mathbf{if}\;\Varid{e}\;\mathbf{then}\;\Varid{c}_{1}\;\mathbf{else}\;\Varid{c}_{2}} \qquad \ensuremath{\Varid{b}\mathrel{=}\mathbf{true}} \\\\ \ensuremath{\fresh{\Varid{p}}} \qquad \ensuremath{\Varid{i}\mathrel{=}\mathbf{guard}(\Varid{e}^{\Varid{b}},\Varid{c}_{2}\mathbin{:}\Varid{cs},\Varid{p})}}
{\ensuremath{\langle\Varid{is},\Varid{c}\mathbin{:}\Varid{cs},\mu,\rho\rangle\stepT{\mathbf{fetch}\;\Varid{b}}{\epsilon}\langle\Varid{is}+{\mkern-9mu+}\ [\mskip1.5mu \Varid{i}\mskip1.5mu],\Varid{c}_{1}\mathbin{:}\Varid{cs},\mu,\rho\rangle}}
\and
\inferrule[Fetch-If-False]
{\ensuremath{\Varid{c}}\ \ensuremath{\mathrel{=}}\ \ensuremath{\mathbf{if}\;\Varid{e}\;\mathbf{then}\;\Varid{c}_{1}\;\mathbf{else}\;\Varid{c}_{2}} \qquad \ensuremath{\Varid{b}\mathrel{=}\mathbf{false}} \\\\ \ensuremath{\fresh{\Varid{p}}} \qquad \ensuremath{\Varid{i}\mathrel{=}\mathbf{guard}(\Varid{e}^{\Varid{b}},\Varid{c}_{1}\mathbin{:}\Varid{cs},\Varid{p})}}
{\ensuremath{\langle\Varid{is},\Varid{c}\mathbin{:}\Varid{cs},\mu,\rho\rangle\stepT{\mathbf{fetch}\;\Varid{b}}{\epsilon}\langle\Varid{is}+{\mkern-9mu+}\ [\mskip1.5mu \Varid{i}\mskip1.5mu],\Varid{c}_{2}\mathbin{:}\Varid{cs},\mu,\rho\rangle}}
\and
\inferrule[Fetch-While]
{\ensuremath{\Varid{c}_{1}\mathrel{=}\Varid{c};\mathbf{while}\;\Varid{e}\;\mathbf{do}\;\Varid{c}} \qquad \ensuremath{\Varid{c}_{2}\mathrel{=}\mathbf{if}\;\Varid{e}\;\mathbf{then}\;\Varid{c}_{1}\;\mathbf{else}}\ \ensuremath{\mathbf{skip}} }
{\ensuremath{\langle\Varid{is},\mathbf{while}\;\Varid{e}\;\mathbf{do}\;\Varid{c}\mathbin{:}\Varid{cs},\mu,\rho\rangle\stepT{\mathbf{fetch}}{\epsilon}\langle\Varid{is},\Varid{c}_{2}\mathbin{:}\Varid{cs},\mu,\rho\rangle}}
\end{mathpar}
\captionsetup{format=caption-with-line}
\caption{Fetch stage.}
\label{fig:full-fetch-stage}
\end{subfigure}%

\caption{Full semantics.}
% \end{framed}
\end{figure}

\begin{figure}[p]
\ContinuedFloat
% \begin{framed}
\centering
\begin{subfigure}{.90\textwidth}
\begin{mathpar}
\inferrule[Execute]
{\ensuremath{{\vert}\Varid{is}_{1}{\vert}\mathrel{=}\Varid{n}\mathbin{-}\mathrm{1}} \\
 \ensuremath{\rho'\mathrel{=}\phi(\Varid{is}_{1},\rho)} \\
 \ensuremath{\langle\Varid{is}_{1},\Varid{i},\Varid{is}_{2},\Varid{cs}\rangle\xrsquigarrow{(\mu,\rho',\Varid{o})}\langle\Varid{is'},\Varid{cs'}\rangle} }
{\ensuremath{\langle\Varid{is}_{1}+{\mkern-9mu+}\ [\mskip1.5mu \Varid{i}\mskip1.5mu]+{\mkern-9mu+}\ \Varid{is}_{2},\Varid{cs},\mu,\rho\rangle\stepT{\mathbf{exec}\;\Varid{n}}{\Varid{o}}\langle\Varid{is'},\Varid{cs'},\mu,\rho\rangle}}
\and
\inferrule[Exec-Asgn]
{\ensuremath{\Varid{i}}\ \ensuremath{\mathrel{=}}\ \ensuremath{\Varid{x}\mathbin{:=}\Varid{e}} \\
 \ensuremath{\Varid{v}\mathrel{=}\llbracket \Varid{e}\rrbracket^{\rho}} \\
 \ensuremath{\Varid{i'}}\ \ensuremath{\mathrel{=}}\ \ensuremath{\Varid{x}\mathbin{:=}\Varid{v}} }
{\ensuremath{\langle\Varid{is}_{1},\Varid{i},\Varid{is}_{2},\Varid{cs}\rangle\xrsquigarrow{(\mu,\rho,\epsilon)}\langle\Varid{is}_{1}+{\mkern-9mu+}\ [\mskip1.5mu \Varid{i'}\mskip1.5mu]+{\mkern-9mu+}\ \Varid{is}_{2},\Varid{cs}\rangle}}
\and
% \inferrule[Exec-Branch-Ok]
% {|i = (br(lbl e b, cs', p))| \\ |denot e rho = b | }
% {|< is1, i, is2, cs > (exec (mu, rho, eps)) < is1 ++ [nop] ++ is2, cs > |}
% \and
% \inferrule[Exec-Branch-Mispredict]
% {|i = (br(lbl e b, cs', p))| \\ |denot e rho =/= b | }
% {|< is1, i, is2, cs > (exec (mu, rho, rollback p)) < is1, cs' > |}
\inferrule[Exec-Branch-Ok]
{\ensuremath{\Varid{i}\mathrel{=}\mathbf{guard}(\Varid{e}^{\Varid{b}},\Varid{cs'},\Varid{p})} \\ \ensuremath{\llbracket \Varid{e}\rrbracket^{\rho}\mathrel{=}\Varid{b}} \\ \ensuremath{\Varid{ps}\mathrel{=}\Moon{\Varid{is}_{1}}} }
{\ensuremath{\langle\Varid{is}_{1},\Varid{i},\Varid{is}_{2},\Varid{cs}\rangle\xrsquigarrow{(\mu,\rho,\epsilon)}\langle\Varid{is}_{1}+{\mkern-9mu+}\ [\mskip1.5mu \mathbf{nop}\mskip1.5mu]+{\mkern-9mu+}\ \Varid{is}_{2},\Varid{cs}\rangle}}
\and
\inferrule[Exec-Branch-Mispredict]
{\ensuremath{\Varid{i}\mathrel{=}\mathbf{guard}(\Varid{e}^{\Varid{b}},\Varid{cs'},\Varid{p})} \\ \ensuremath{\Varid{b'}\mathrel{=}\llbracket \Varid{e}\rrbracket^{\rho}} \\ \ensuremath{\Varid{b'}\neq\Varid{b}} \\ \ensuremath{\Varid{ps}\mathrel{=}\Moon{\Varid{is}_{1}}} }
{\ensuremath{\langle\Varid{is}_{1},\Varid{i},\Varid{is}_{2},\Varid{cs}\rangle\xrsquigarrow{(\mu,\rho,\mathbf{rollback}(\Varid{p}))}\langle\Varid{is}_{1}+{\mkern-9mu+}\ [\mskip1.5mu \mathbf{nop}\mskip1.5mu],\Varid{cs'}\rangle}}
\and
\inferrule[Exec-Load]
{\ensuremath{\Varid{i}}\ \ensuremath{\mathrel{=}}\ \ensuremath{\Varid{x}\mathbin{:=}\mathbf{load}_{\ell}(\Varid{e})} \\
 \ensuremath{\mathbf{store}_{\ell'}(\anonymous ,\anonymous )\;\not\in\;\Varid{is}_{1}} \\
 \ensuremath{\Varid{n}\mathrel{=}\llbracket \Varid{e}\rrbracket^{\rho}} \\\\
 \ensuremath{\Varid{ps}\mathrel{=}\Moon{\Varid{is}_{1}}} \\
 \ensuremath{\Varid{i'}}\ \ensuremath{\mathrel{=}}\ \ensuremath{\Varid{x}\mathbin{:=}\mu(\Varid{n})} }
{\ensuremath{\langle\Varid{is}_{1},\Varid{i},\Varid{is}_{2},\Varid{cs}\rangle\xrsquigarrow{(\mu,\rho,\mathbf{read}(\Varid{n},\Varid{ps}))}\langle\Varid{is}_{1}+{\mkern-9mu+}\ [\mskip1.5mu \Varid{i'}\mskip1.5mu]+{\mkern-9mu+}\ \Varid{is}_{2},\Varid{cs}\rangle}}
\and
\inferrule[Exec-Store]
{\ensuremath{\Varid{i}\mathrel{=}\mathbf{store}_{\ell}(\Varid{e}_{1},\Varid{e}_{2})} \\
\ensuremath{\Varid{n}\mathrel{=}\llbracket \Varid{e}_{1}\rrbracket^{\rho}} \\
\ensuremath{\Varid{v}\mathrel{=}\llbracket \Varid{e}_{2}\rrbracket^{\rho}} \\\\
\ensuremath{\Varid{ps}\mathrel{=}\Moon{\Varid{is}_{1}}} \\
\ensuremath{\Varid{i'}\mathrel{=}\mathbf{store}(\Varid{n},\Varid{v})}}
{\ensuremath{\langle\Varid{is}_{1},\Varid{i},\Varid{is}_{2},\Varid{cs}\rangle\xrsquigarrow{(\mu,\rho,\mathbf{write}(\Varid{n},\Varid{ps}))}\langle\Varid{is}_{1}+{\mkern-9mu+}\ [\mskip1.5mu \Varid{i'}\mskip1.5mu]+{\mkern-9mu+}\ \Varid{is}_{2},\Varid{cs}\rangle}}
%
% MV: Not sure why we had them separated. Maybe to consider store forwards.
% \inferrule[Exec-Store-Addr]
% {|i = (store(e1, e2))| \\ |n = denot e1 rho| \\ |i' = store(n, e2)|}
% {|< is1, i, is2, cs > (exec (mu, rho, eps)) < is1 ++ [i'] ++ is2, cs > |}
% \and
% \inferrule[Exec-Store-Value]
% {|i = (store(n, e))| \\ |v = denot e rho| \\
%  |ps = (spec is1)| \\ |i' = store(n, v)| }
% {|< is1, i, is2, cs > (exec (mu, rho, write (n,ps))) < is1 ++ [i'] ++ is2, cs > |}
\end{mathpar}
\captionsetup{format=caption-with-line}
\caption{Execute stage.}
\label{fig:full-execute-stage}
\end{subfigure}%
% \end{framed}
% \end{figure}

% It's all already in the paper
% \begin{figure}[t]
% \begin{framed}
\begin{subfigure}{.90\textwidth}
\vspace{\baselineskip}
\begin{mathpar}
\inferrule[Retire-Nop]
{}
{\ensuremath{\langle\mathbf{nop}\mathbin{:}\Varid{is},\Varid{cs},\mu,\rho\rangle\stepT{\mathbf{retire}}{\epsilon}\langle\Varid{is},\Varid{cs},\mu,\rho\rangle}}
\and
\inferrule[Retire-Asgn]
{}
{\ensuremath{\langle\Varid{x}\mathbin{:=}\Varid{v}\mathbin{:}\Varid{is},\Varid{cs},\mu,\rho\rangle\stepT{\mathbf{retire}}{\epsilon}\langle\Varid{is},\Varid{cs},\mu,\update{\rho}{\Varid{x}}{\Varid{v}}\rangle}}
\and
\inferrule[Retire-Store]
{\ensuremath{\Varid{i}\mathrel{=}\mathbf{store}_{\ell}(\Varid{n},\Varid{v})}} % \\ |mu' = update(mu)(n)(v)|}
{\ensuremath{\langle\Varid{i}\mathbin{:}\Varid{is},\Varid{cs},\mu,\rho\rangle\stepT{\mathbf{retire}}{\epsilon}\langle\Varid{is},\Varid{cs},\update{\mu}{\Varid{n}}{\Varid{v}},\rho\rangle}}
\and
\inferrule[Retire-Fail]
{}
{\ensuremath{\langle\mathbf{fail}(\Varid{p})\mathbin{:}\Varid{is},\Varid{cs},\mu,\rho\rangle\stepT{\mathbf{retire}}{\mathbf{fail}(\Varid{p})}\langle[\mskip1.5mu \mskip1.5mu],[\mskip1.5mu \mskip1.5mu],\mu,\rho\rangle}}
\end{mathpar}
\captionsetup{format=caption-with-line}
\caption{Retire stage.}
\label{fig:full-retire-stage}
\end{subfigure}%
% \begin{figure}[t]
% \begin{framed}
% \centering
\caption{Full semantics (continued).}
\end{figure}

\begin{figure}[p]
\small
\ContinuedFloat
\begin{subfigure}{0.9\textwidth}
\centering
\begin{minipage}{0.4\textwidth}
\centering
\begin{subfigure}{\textwidth}
\begin{align*}
&\ensuremath{\phi(\rho,[\mskip1.5mu \mskip1.5mu])\mathrel{=}\rho} \\
&\ensuremath{\phi(\rho,(\Varid{x}\mathbin{:=}\Varid{v})\mathbin{:}\Varid{is})\mathrel{=}\phi(\update{\rho}{\Varid{x}}{\Varid{v}},\Varid{is})} \\
&\ensuremath{\phi(\rho,(\Varid{x}\mathbin{:=}\Varid{e})\mathbin{:}\Varid{is})\mathrel{=}\phi(\update{\rho}{\Varid{x}}{\bot},\Varid{is})} \\
&\ensuremath{\phi(\rho,(\Varid{x}\mathbin{:=}\mathbf{load}_{\ell}(\Varid{e}))\mathbin{:}\Varid{is})\mathrel{=}\phi(\update{\rho}{\Varid{x}}{\bot},\Varid{is})} \\
&\ensuremath{\phi(\rho,(\Varid{x}\mathbin{:=}\mathbf{protect}(\Varid{e}))\mathbin{:}\Varid{is})\mathrel{=}\phi(\update{\rho}{\Varid{x}}{\bot},\Varid{is})} \\
&\ensuremath{\phi(\rho,\Varid{i}\mathbin{:}\Varid{is})\mathrel{=}\phi(\rho,\Varid{is})}
\end{align*}
\vspace{-\baselineskip}
\captionsetup{format=caption-with-line}
\captionof{figure}{Transient Variable Map.}
\label{fig:full-transient-var-map}
\end{subfigure}%
\\
\begin{subfigure}{\textwidth}
% \vspace{\baselineskip}
\begin{align*}
&\ensuremath{\Moon{[\mskip1.5mu \mskip1.5mu]}\mathrel{=}[\mskip1.5mu \mskip1.5mu]} \\
&\ensuremath{\Moon{\mathbf{guard}(\Varid{e}^{\Varid{b}},\Varid{cs},\Varid{p})\mathbin{:}\Varid{is}}\mathrel{=}\Varid{p}\mathbin{:}\Moon{\Varid{is}}} \\
&\ensuremath{\Moon{\mathbf{fail}(\Varid{p})\mathbin{:}\Varid{is}}\mathrel{=}\Varid{p}\mathbin{:}\Moon{\Varid{is}}} \\
&\ensuremath{\Moon{\Varid{i}\mathbin{:}\Varid{is}}\mathrel{=}\Moon{\Varid{is}}}
\end{align*}
\vspace{-\baselineskip}
\captionsetup{format=caption-with-line}
\captionof{figure}{Pending Fail and Guard Identifiers.}
\end{subfigure}%
\end{minipage}\hspace{.1\textwidth}
\centering
\begin{minipage}{0.4\textwidth}
\begin{align*}
&\ensuremath{\llbracket \Varid{v}\rrbracket^{\rho}\mathrel{=}\Varid{v}} \\
&\ensuremath{\llbracket \Varid{x}\rrbracket^{\rho}\mathrel{=}\rho(\Varid{x})} \\
&\ensuremath{\llbracket \Varid{e}_{1}\mathbin{+}\Varid{e}_{2}\rrbracket^{\rho}\mathrel{=}\llbracket \Varid{e}_{1}\rrbracket^{\rho}\mathbin{+}\llbracket \Varid{e}_{2}\rrbracket^{\rho}} \\
&\ensuremath{\llbracket \Varid{e}_{1}<\Varid{e}_{2}\rrbracket^{\rho}\mathrel{=}\llbracket \Varid{e}_{1}\rrbracket^{\rho}<\llbracket \Varid{e}_{2}\rrbracket^{\rho}} \\
&\ensuremath{\llbracket \Varid{e}_{1}\;\otimes\;\Varid{e}_{2}\rrbracket^{\rho}\mathrel{=}\llbracket \Varid{e}_{1}\rrbracket^{\rho}\;\otimes\;\llbracket \Varid{e}_{2}\rrbracket^{\rho}} \\
&\ensuremath{\llbracket \Varid{e}_{1}\mathbin{?}\Varid{e}_{2}\mathbin{:}\Varid{e}_{3}\rrbracket^{\rho}\mathrel{=}}
\begin{cases}
\ensuremath{\llbracket \Varid{e}_{2}\rrbracket^{\rho}} & \text{ if } \ensuremath{\llbracket \Varid{e}_{1}\rrbracket^{\rho}\mathrel{=}\mathbf{true}} \\
\ensuremath{\llbracket \Varid{e}_{3}\rrbracket^{\rho}} & \text{ if } \ensuremath{\llbracket \Varid{e}_{1}\rrbracket^{\rho}\mathrel{=}\mathbf{false}} \\
\ensuremath{\bot} & \text{ if } \ensuremath{\llbracket \Varid{e}_{1}\rrbracket^{\rho}\mathrel{=}\bot}
\end{cases}\\
&\ensuremath{\llbracket \Varid{base}(\Varid{e})\rrbracket^{\rho}\mathrel{=}}
\begin{cases}
\ensuremath{\Varid{n}} & \text{ if } \ensuremath{\llbracket \Varid{e}\rrbracket^{\rho}\mathrel{=}\{\mskip1.5mu \Varid{n},\anonymous ,\anonymous \mskip1.5mu\}} \\
\ensuremath{\bot} & \text{ if } \ensuremath{\llbracket \Varid{e}\rrbracket^{\rho}\mathrel{=}\bot}
\end{cases} \\
&\ensuremath{\llbracket \mathit{length}(\Varid{e})\rrbracket^{\rho}\mathrel{=}}
\begin{cases}
\ensuremath{\Varid{n}} & \text{ if } \ensuremath{\llbracket \Varid{e}\rrbracket^{\rho}\mathrel{=}\{\mskip1.5mu \anonymous ,\Varid{n},\anonymous \mskip1.5mu\}} \\
\ensuremath{\bot} & \text{ if } \ensuremath{\llbracket \Varid{e}\rrbracket^{\rho}\mathrel{=}\bot}
\end{cases}
\end{align*}
\captionsetup{format=caption-with-line}
\captionof{figure}{Evaluation Function.}
\end{minipage}
\vspace{.5\baselineskip}
\captionsetup{format=caption-with-line}
\caption{Helper functions.}
\end{subfigure}%
\\
\caption{Full semantics (continued).}
% \end{framed}
\end{figure}

\begin{figure}[p]
\ContinuedFloat
% \begin{framed}
\centering
\begin{subfigure}{.9\textwidth}
\begin{mathpar}
% Fetch rule for protect for arrays (reduces to Fetch-Protect)
\inferrule[Fetch-Protect-Ptr]
{\ensuremath{\Varid{c}}\ \ensuremath{\mathrel{=}}\ \ensuremath{\Varid{x}\mathbin{:=}\mathbf{protect}(\mathbin{*}_{\ell}\;\Varid{e})} \\\\
 \ensuremath{\Varid{c}_{1}}\ \ensuremath{\mathrel{=}}\ \ensuremath{\Varid{x'}\mathbin{:=}\mathbin{*}_{\ell}\;\Varid{e}} \\
 \ensuremath{\Varid{c}_{2}}\ \ensuremath{\mathrel{=}}\ \ensuremath{\Varid{x}\mathbin{:=}\mathbf{protect}(\Varid{x'})}}
{\ensuremath{\langle\Varid{is},\Varid{c}\mathbin{:}\Varid{cs},\mu,\rho\rangle\stepT{\mathbf{fetch}}{\epsilon}\langle\Varid{is},\Varid{c}_{1}\mathbin{:}\Varid{c}_{2}\mathbin{:}\Varid{cs},\mu,\rho\rangle}}
\and
\inferrule[Fetch-Protect-Array]
{\ensuremath{\Varid{c}}\ \ensuremath{\mathrel{=}}\ \ensuremath{\Varid{x}\mathbin{:=}\mathbf{protect}(\Varid{a}{[}\Varid{e}{]})} \\\\
 \ensuremath{\Varid{c}_{1}}\ \ensuremath{\mathrel{=}}\ \ensuremath{\Varid{x'}\mathbin{:=}\Varid{a}{[}\Varid{e}{]}} \\
 \ensuremath{\Varid{c}_{2}}\ \ensuremath{\mathrel{=}}\ \ensuremath{\Varid{x}\mathbin{:=}\mathbf{protect}(\Varid{x'})}}
{\ensuremath{\langle\Varid{is},\Varid{c}\mathbin{:}\Varid{cs},\mu,\rho\rangle\stepT{\mathbf{fetch}}{\epsilon}\langle\Varid{is},\Varid{c}_{1}\mathbin{:}\Varid{c}_{2}\mathbin{:}\Varid{cs},\mu,\rho\rangle}}
\and
\inferrule[Fetch-Protect-Expr]
{\ensuremath{\Varid{c}}\ \ensuremath{\mathrel{=}}\ \ensuremath{\Varid{x}\mathbin{:=}\mathbf{protect}(\Varid{e})} \\
 \ensuremath{\Varid{i}}\ \ensuremath{\mathrel{=}}\ \ensuremath{\Varid{x}\mathbin{:=}\mathbf{protect}(\Varid{e})}}
{\ensuremath{\langle\Varid{is},\Varid{c}\mathbin{:}\Varid{cs},\mu,\rho\rangle\stepT{\mathbf{fetch}}{\epsilon}\langle\Varid{is}+{\mkern-9mu+}\ [\mskip1.5mu \Varid{i}\mskip1.5mu],\Varid{cs},\mu,\rho\rangle}}
\and
\inferrule[Exec-Protect$_1$]
{\ensuremath{\Varid{i}}\ \ensuremath{\mathrel{=}}\ \ensuremath{\Varid{x}\mathbin{:=}\mathbf{protect}(\Varid{e})} \\
 \ensuremath{\Varid{v}\mathrel{=}\llbracket \Varid{e}\rrbracket^{\rho}} \\
 \ensuremath{\Varid{i'}}\ \ensuremath{\mathrel{=}}\ \ensuremath{\Varid{x}\mathbin{:=}\mathbf{protect}(\Varid{v})} }
{\ensuremath{\langle\Varid{is}_{1},\Varid{i},\Varid{is}_{2},\Varid{cs}\rangle\xrsquigarrow{(\mu,\rho,\epsilon)}\langle\Varid{is}_{1}+{\mkern-9mu+}\ [\mskip1.5mu \Varid{i'}\mskip1.5mu]+{\mkern-9mu+}\ \Varid{is}_{2},\Varid{cs}\rangle}}
\and
\inferrule[Exec-Protect$_2$]
{\ensuremath{\Varid{i}}\ \ensuremath{\mathrel{=}}\ \ensuremath{\Varid{x}\mathbin{:=}\mathbf{protect}(\Varid{v})} \\
 \ensuremath{\mathbf{guard}(\anonymous ,\anonymous ,\anonymous )\;\not\in\;\Varid{is}_{1}} \\
 \ensuremath{\Varid{i'}}\ \ensuremath{\mathrel{=}}\ \ensuremath{\Varid{x}\mathbin{:=}\Varid{v}}}
{\ensuremath{\langle\Varid{is}_{1},\Varid{i},\Varid{is}_{2},\Varid{cs}\rangle\xrsquigarrow{(\mu,\rho,\epsilon)}\langle\Varid{is}_{1}+{\mkern-9mu+}\ [\mskip1.5mu \Varid{i'}\mskip1.5mu]+{\mkern-9mu+}\ \Varid{is}_{2},\Varid{cs}\rangle}}
\end{mathpar}
\captionsetup{format=caption-with-line}
\caption{Semantics of hardware-based \ensuremath{\mathbf{protect}}. }
\label{fig:full-semantics-protect}
\end{subfigure}%
\\
\begin{subfigure}{.9\textwidth}
\vspace{\baselineskip}
\begin{mathpar}
% \inferrule[Exec-CMOV]
% {| i =  x := e ? : (sub e true) : (sub e false)| \\
%  |b = denot e rho| \\ |i' = x := sub e b| \\ |ps = (spec is1)| }
% {|< is1, i, is2, cs > (exec (mu, rho, branch (b , ps))) < is1 ++ [ i' ] ++ is2 , cs > |}
% \and
\inferrule[Fetch-Protect-SLH]
{\ensuremath{\Varid{c}}\ \ensuremath{\mathrel{=}}\ \ensuremath{\Varid{x}\mathbin{:=}\mathbf{protect}(\Varid{a}{[}\Varid{e}{]})} \\
 \ensuremath{\Varid{e}_{1}}\ \ensuremath{\mathrel{=}}\ \ensuremath{\Varid{e}_{1}<\mathit{length}(\Varid{a})} \\
 \ensuremath{\Varid{e}_{2}}\ \ensuremath{\mathrel{=}}\ \ensuremath{\Varid{base}(\Varid{a})\mathbin{+}\Varid{e}} \\
 \ensuremath{\ell\mathrel{=}\Varid{label}(\Varid{a})} \\
 \ensuremath{\Varid{c}_{1}}\ \ensuremath{\mathrel{=}}\ \ensuremath{\Varid{r}\mathbin{:=}\Varid{e}_{1}} \\
 \ensuremath{\Varid{c}_{2}}\ \ensuremath{\mathrel{=}}\ \ensuremath{\Varid{r}\mathbin{:=}\Varid{r}}\ \ensuremath{\mathbin{?}}\ \ensuremath{\overline{\mathrm{1}}}\ \ensuremath{\mathbin{:}}\ \ensuremath{\overline{\mathrm{0}}} \\
 \ensuremath{\Varid{c}_{3}}\ \ensuremath{\mathrel{=}}\ \ensuremath{\Varid{x}\mathbin{:=}\mathbin{*}_{\ell}\;(\Varid{e}_{2}\;\otimes\;\Varid{r})} \\
 \ensuremath{\Varid{c'}}\ \ensuremath{\mathrel{=}}\ \ensuremath{\Varid{c}_{1};}\ \ensuremath{\mathbf{if}}\ \ensuremath{\Varid{r}}\ \ensuremath{\mathbf{then}}\ \ensuremath{\Varid{c}_{2};}\ \ensuremath{\Varid{c}_{3}}\ \ensuremath{\mathbf{else}}\ \ensuremath{\mathbf{fail}}}
{\ensuremath{\langle\Varid{is},\Varid{c}\mathbin{:}\Varid{cs},\mu,\rho\rangle\stepT{\mathbf{fetch}}{\epsilon}\langle\Varid{is},\Varid{c'}\mathbin{:}\Varid{cs},\mu,\rho\rangle}}
\end{mathpar}
\captionsetup{format=caption-with-line}
\caption{Semantics of SLH-based \ensuremath{\mathbf{protect}}.}
\end{subfigure}

\begin{subfigure}{.9\textwidth}
\vspace{\baselineskip}
\begin{mathpar}
\inferrule[Done]
{}
{\ensuremath{\langle[\mskip1.5mu \mskip1.5mu],[\mskip1.5mu \mskip1.5mu],\mu,\rho\rangle\Downarrow_{\epsilon}^{[\mskip1.5mu \mskip1.5mu]}\langle[\mskip1.5mu \mskip1.5mu],[\mskip1.5mu \mskip1.5mu],\mu,\rho\rangle}}
\and
\inferrule[Step]
{\ensuremath{\langle\Varid{is},\Varid{cs},\mu,\rho\rangle\stepT{\Varid{d}}{\Varid{o}}\langle\Varid{is'},\Varid{cs'},\mu',\rho'\rangle} \\
 \ensuremath{\langle\Varid{is'},\Varid{cs'},\mu',\rho'\rangle\Downarrow_{\Conid{O}}^{\Conid{D}}\langle\Varid{is''},\Varid{cs''},\mu'',\rho''\rangle}}
{\ensuremath{\langle\Varid{is},\Varid{cs},\mu,\rho\rangle\Downarrow_{(\Varid{o}\;{{\mkern-2mu\cdot\mkern-2mu}}\;\Conid{O})}^{\Varid{d}\mathbin{:}\Conid{D}}\langle\Varid{is''},\Varid{cs''},\mu'',\rho''\rangle}}
\end{mathpar}
\captionsetup{format=caption-with-line}
\caption{Speculative big-step semantics.}
\label{fig:speculative-big-step}
\end{subfigure}

\caption{Full semantics (continued).}
\label{fig:full-calculus}
% \end{framed}
\end{figure}
\clearpage
\subsection{Full Type System}
\label{app:type-inference}
\begin{figure}[p]
% \begin{framed}
\centering
\begin{subfigure}{\textwidth}
\begin{align*}
\text{Transient-flow Lattice :} & \quad \ensuremath{\mathscr{T}\mathrel{=}\{\mskip1.5mu \ensuremath{\Concrete},\ensuremath{\Transient}\mskip1.5mu\}}  \quad \text{ where } \quad \ensuremath{\ensuremath{\Concrete}\;\flows\;\tau}, \ensuremath{\ensuremath{\Transient}\;\flows\;\ensuremath{\Transient}}, \ensuremath{\ensuremath{\Transient}\;\not\flows\;\ensuremath{\Concrete}} \\
\text{Transient-flow types: } & \quad \ensuremath{\tau\;\in\;\mathscr{T}} \\
\text{Typing Context: } & \quad \ensuremath{\Gamma\;\in\;\Conid{Var}\rightharpoonup\mathscr{T}}
\end{align*}
\vspace{-\baselineskip}
\captionsetup{format=caption-with-line}
\caption{Transient-flow lattice and syntax.}
\end{subfigure}

\begin{subfigure}[b]{\textwidth}
    \rulesize
\begin{mathpar}
\inferrule[Value]
{}
{\ensuremath{\Gamma\vdash\Varid{v}\mathbin{:}\tau} \ \shadedCons{\cnsSep \emptyset}}
\and
\inferrule[Var]
{\ensuremath{\Gamma(\Varid{x})\mathrel{=}\tau}}
{\ensuremath{\Gamma\vdash\Varid{x}\mathbin{:}\tau} \; \shadedCons{\cnsSep \ x \flows \alpha_x} }
\and
\inferrule[Proj]
{\ensuremath{\Varid{f}\;\in\;\{\mskip1.5mu \mathit{length}(\cdot ),\Varid{base}(\cdot )\mskip1.5mu\}} \\
 \ensuremath{\Gamma\vdash\Varid{e}\mathbin{:}\tau_{1}} \; \shadedCons{\cnsSep k} \\
 \ensuremath{\tau_{1}\;\flows\;\tau} }
{\ensuremath{\Gamma\vdash\Varid{f}(\Varid{e})\mathbin{:}\tau} \; \shadedCons{\cnsSep k \cup (e \flows \ensuremath{\Varid{f}(\Varid{e})})}}
\and
\inferrule[Bop]
{ \ensuremath{\oplus\;\in\;\{\mskip1.5mu \mathbin{+},\leq \mskip1.5mu\}} \\
  \ensuremath{\Varid{i}\;\in\;\{\mskip1.5mu \mathrm{1},\mathrm{2}\mskip1.5mu\}} \\ \ensuremath{\Gamma\vdash\Varid{e}_{\Varid{i}}\mathbin{:}\tau_{\Varid{i}}\;\shadedCons{\cnsSep\;\Varid{k}_{\Varid{i}}}} \\ \ensuremath{\tau_{\Varid{i}}\;\flows\;\tau}}
  % |Gamma :- (sub e 1) : tau1 (shadedCons (cnsArrow  (sub k 1))) | \\
  % |Gamma :- (sub e 2) : tau2 (shadedCons (cnsArrow  (sub k 2))) | \\
  % | tau_1 canFlowTo tau| \\
  % | tau_2 canFlowTo tau|
% }
{\ensuremath{\Gamma\vdash\Varid{e}_{1}\;\oplus\;\Varid{e}_{2}\mathbin{:}\tau}  \; \shadedCons{\cnsSep k_1 \cup k_2
    \cup (e_1 \flows e_1 \oplus e_2 )
     \cup (e_2 \flows e_1 \oplus e_2 )}
}
\and
\inferrule[Select]
{\ensuremath{\Varid{i}\;\in\;\{\mskip1.5mu \mathrm{1},\mathrm{2},\mathrm{3}\mskip1.5mu\}} \\ \ensuremath{\Gamma\vdash\Varid{e}_{\Varid{i}}\mathbin{:}\tau_{\Varid{i}}\;\shadedCons{\cnsSep\;\Varid{k}_{\Varid{i}}}} \\ \ensuremath{\tau_{\Varid{i}}\;\flows\;\tau}}
{\ensuremath{\Gamma\vdash\Varid{e}_{1}\mathbin{?}\Varid{e}_{2}\mathbin{:}\Varid{e}_{3}\mathbin{:}\tau\;\shadedCons{\cnsSep\;\Varid{k}_{1}\; \cup \;\Varid{k}_{2}\; \cup \;\Varid{k}_{3}\; \cup \;\Varid{e}_{\Varid{i}}\;\flows\;(\Varid{e}_{1}\mathbin{?}\Varid{e}_{2}\mathbin{:}\Varid{e}_{3})}}}
\and
\inferrule[Ptr-Read]
{\ensuremath{\Gamma\vdash\Varid{e}\mathbin{:}\ensuremath{\Concrete}} \; \shadedCons{\cnsSep k}}
{\ensuremath{\Gamma\vdash\mathbin{*}_{\ell}\;\Varid{e}\mathbin{:}\ensuremath{\Transient}} \; \shadedCons{\cnsSep k \cup (e \flows \Concrete) \cup (\Transient \flows e)}}
\and
\inferrule[Array-Read]
% {|Gamma :- e1 : St| \; \shadedCons{\cnsSep k_1} \\ |Gamma :-  e2 : St| \; \shadedCons{\cnsSep k_2}}
{\ensuremath{\Gamma\vdash\Varid{e}\mathbin{:}\ensuremath{\Concrete}} \; \shadedCons{\cnsSep k}}
{\ensuremath{\Gamma\vdash\Varid{a}{[}\Varid{e}{]}\mathbin{:}\ensuremath{\Transient}} \; \shadedCons{\cnsSep k \
 \cup ( e \flows \Concrete ) \cup  (\Transient \flows a[e])}
}
\end{mathpar}
\captionsetup{format=caption-with-line}
\caption{Typing Rules for Expressions and Arrays.}
\end{subfigure}%

\end{figure}

\begin{figure}[t]
\ContinuedFloat
\begin{subfigure}[b]{\textwidth}
\rulesize
\begin{mathpar}
\inferrule[Skip]
{}
{\ensuremath{\Gamma\vdash\mathbf{skip}} \  \shadedCons{\cnsSep \emptyset}}
\and
\inferrule[Fail]
{}
{\ensuremath{\Gamma\vdash\mathbf{fail}} \  \shadedCons{\cnsSep \emptyset}}
\and
\inferrule[Seq]
{\ensuremath{\Gamma\vdash\Varid{c}_{1}} \ \shadedCons{\cnsSep k_1} \\
 \ensuremath{\Gamma\vdash\Varid{c}_{2}} \ \shadedCons{\cnsSep k_2} }
{\ensuremath{\Gamma\vdash\Varid{c}_{1};\Varid{c}_{2}} \ \shadedCons{\cnsSep k_1 \cup k_2 }}
\and
\inferrule[Asgn]
{\ensuremath{\Gamma\vdash\Varid{r}\mathbin{:}\tau} \ \shadedCons{\cnsSep k} \\
 \ensuremath{\tau\;\flows\;\Gamma(\Varid{x})}}
{\ensuremath{\Gamma,\protectedSet\vdash\Varid{x}\mathbin{:=}\Varid{r}} \ \shadedCons{\cnsSep k \cup (r \flows x)}}
\and
\inferrule[Ptr-Write]
{\ensuremath{\Gamma\vdash\Varid{e}_{1}\mathbin{:}\ensuremath{\Concrete}} \ \shadedCons{\cnsSep k_1} \\
 \ensuremath{\Gamma\vdash\Varid{e}_{2}\mathbin{:}\tau} \ \shadedCons{\cnsSep k_2}  }
{\ensuremath{\Gamma,\protectedSet\vdash\ast\Varid{e}_{1}\mathbin{:=}\Varid{e}_{2}} \ \shadedCons{\cnsSep k_1 \cup k_2 \cup (e_1 \flows \Concrete)}}
\and
\inferrule[Ptr-Write-Spectre-1.1]
{\ensuremath{\Gamma\vdash\Varid{e}_{1}\mathbin{:}\ensuremath{\Concrete}} \ \shadedCons{\cnsSep k_1} \\
 \ensuremath{\Gamma\vdash\Varid{e}_{2}\mathbin{:}\ensuremath{\Concrete}} \ \shadedCons{\cnsSep k_2}  }
{\ensuremath{\Gamma,\protectedSet\vdash\ast\Varid{e}_{1}\mathbin{:=}\Varid{e}_{2}} \ \shadedCons{\cnsSep k_1 \cup k_2 \cup (e_1 \flows \Concrete) \cup (e_2 \flows \Concrete)}}
\and
\inferrule[Array-Write]
{\ensuremath{\Gamma\vdash\Varid{e}_{1}\mathbin{:}\ensuremath{\Concrete}} \ \shadedCons{\cnsSep k_1} \\
 \ensuremath{\Gamma\vdash\Varid{e}_{2}\mathbin{:}\tau} \ \shadedCons{\cnsSep k_2} }
{\ensuremath{\Gamma,\protectedSet\vdash\Varid{a}{[}\Varid{e}_{1}{]}\mathbin{:=}\Varid{e}_{2}} \ \shadedCons{\cnsSep k_1 \cup k_2 \cup (e_1 \flows \Concrete) }}
\and
\inferrule[Array-Write-Spectre-1.1]
{\ensuremath{\Gamma\vdash\Varid{e}_{1}\mathbin{:}\ensuremath{\Concrete}} \ \shadedCons{\cnsSep k_1} \\
 \ensuremath{\Gamma\vdash\Varid{e}_{2}\mathbin{:}\ensuremath{\Concrete}} \ \shadedCons{\cnsSep k_2} }
{\ensuremath{\Gamma,\protectedSet\vdash\Varid{a}{[}\Varid{e}_{1}{]}\mathbin{:=}\Varid{e}_{2}} \ \shadedCons{\cnsSep k_1 \cup k_2 \cup (e_1 \flows \Concrete) \cup (e_2 \flows \Concrete)}}
% OLD rule with arbitrary expressions for arrays
% \inferrule[Array-Write]
% {| Gamma :- e1 : St | \ \shadedCons{\cnsSep k_1} \\
%  | Gamma :- e2 : St | \ \shadedCons{\cnsSep k_2} \\
%  | Gamma :- e3 : St | \ \shadedCons{\cnsSep k_3} }
% {|Gamma, Prot :- array e1 e2 := e3| \ \shadedCons{\cnsSep k_1 \cup k_2 \cup k_3 \cup (e_1 \flows \Concrete) \cup (e_2 \flows \Concrete) \cup (e_3 \flows \Concrete)}}
\and
\inferrule[Protect]
{\ensuremath{\Gamma\vdash\Varid{r}\mathbin{:}\tau} \ \shadedCons{\cnsSep k}}
{\ensuremath{\Gamma,\protectedSet\vdash\Varid{x}\mathbin{:=}\mathbf{protect}(\Varid{r})} \ \shadedCons{\cnsSep k}}
\and
\inferrule[Asgn-Prot]
{\ensuremath{\Gamma\vdash\Varid{r}\mathbin{:}\tau} \ \shadedCons{\cnsSep k}\\
\ensuremath{\Varid{x}\;\in\;\protectedSet} }
{\ensuremath{\Gamma,\protectedSet\vdash\Varid{x}\mathbin{:=}\Varid{r}} \ \shadedCons{\cnsSep k \cup (r \flows x)}}
\and
% \inferrule[Stable-Read]
% {|Gamma :- (sub e 1) : St| \; \shadedCons{\cnsSep k}  \\ |Gamma :- (sub e 2) : St|}
% {|Gamma, Prot :- x := stable_read(e1, e2)| \; \shadedCons{\cnsSep
%     k \cup (e_1 \flows \Concrete) \cup (e_2 \flows \Concrete)}}
% \and
\inferrule[If-Then-Else]
{\ensuremath{\Gamma\vdash\Varid{e}\mathbin{:}\ensuremath{\Concrete}} \; \shadedCons{\cnsSep k} \\ \ensuremath{\Gamma,\protectedSet\vdash\Varid{c}_{\mathrm{1}}} \; \shadedCons{\cnsSep k_1} \\ \ensuremath{\Gamma,\protectedSet\vdash\Varid{c}_{\mathrm{2}}} \; \shadedCons{\cnsSep k_2}}
{\ensuremath{\Gamma,\protectedSet\vdash\mathbf{if}\;\Varid{e}\;\mathbf{then}\;\Varid{c}_{1}\;\mathbf{else}\;\Varid{c}_{2}} \; \shadedCons{\cnsSep k \cup
  k_1 \cup k_2 \cup (e \flows \Concrete)}}
\and
\inferrule[While]
{\ensuremath{\Gamma\vdash\Varid{e}\mathbin{:}\ensuremath{\Concrete}} \; \shadedCons{\cnsSep k_1} \\
 \ensuremath{\Gamma,\protectedSet\vdash\Varid{c}} \; \shadedCons{\cnsSep k_2}}
{\ensuremath{\Gamma,\protectedSet\vdash\mathbf{while}\;\Varid{e}\;\mathbf{do}\;\Varid{c}} \; \shadedCons{\cnsSep k_1 \cup k_2 \cup (e \flows \Concrete)}}
\end{mathpar}
\captionsetup{format=caption-with-line}
\caption{Typing Rules for Commands.}
\end{subfigure}
\begin{subfigure}{\textwidth}
\centering
\begin{align*}%{tabular}{l l c l}
\text{Atoms }&  \quad a\ \ensuremath{\Coloneqq} \  \alpha_x \mid\ r \\
\text{Constraints }& \quad k\ \ensuremath{\Coloneqq} \ a \flows \Concrete\ \mid\ \Transient \flows a\ \mid\ a \flows a \mid\ k \cup k\ \mid\ \emptyset \\
\text{Solutions }& \quad \sigma\ \in\  \textsc{Atoms}\ \uplus\ \ensuremath{\mathscr{T}}\ \mapsto\ \ensuremath{\mathscr{T}} \quad \text{where} \quad \ensuremath{\sigma(\ensuremath{\Concrete})\mathrel{=}\ensuremath{\Concrete}}, \ensuremath{\sigma(\ensuremath{\Transient})\mathrel{=}\ensuremath{\Transient}}
\end{align*}
\rulesize
  \begin{mathpar}
  \inferrule[Sol-Transient]
  {\Transient \flows \sigma(a_2)}
  {\sigma \vdash \Transient \flows a_2}
  \and
  \inferrule[Sol-Stable]
  {\sigma(a_1) \flows \Concrete}
  {\sigma \vdash a_1 \flows \Concrete}
  \and
  \inferrule[Sol-Flow]
  {\sigma(a_1) \flows \sigma(a_2)}
  {\sigma \vdash a_1 \flows a_2}
  \and
  \inferrule[Sol-Set]
  {\sigma \vdash c_1 \dots \sigma \vdash c_n}
  {\sigma \vdash \theSet{c_1, \dots,  c_n}}

\end{mathpar}
\captionsetup{format=caption-with-line}
\caption{Type Constraints and Satisfiability.}
\end{subfigure}%

\caption{Transient-flow type system and \colorbox{constColor}{type
    constraints generation}, and constraints satisfiability.\label{fig:app-transient-ts}}
% \end{framed}
\end{figure}

\begin{lemma}[Transient-Flow Type Preservation]
If \ensuremath{\Gamma\vdash\Conid{C}} and \ensuremath{\Conid{C}\;\stepT{\Varid{d}}{\Conid{O}}\;\Conid{C}'}, then \ensuremath{\Gamma\vdash\Conid{C}'}.
\label{lemma:tr-preservation}
\end{lemma}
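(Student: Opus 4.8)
The plan is to prove this as a standard subject-reduction (preservation) result, by case analysis on the rule used to derive the step $\ensuremath{\Conid{C}\;\stepT{\Varid{d}}{\Conid{O}}\;\Conid{C}'}$. First I would fix the meaning of the configuration typing judgment $\ensuremath{\Gamma\vdash\Conid{C}}$ for $\ensuremath{\Conid{C}\mathrel{=}\langle\Varid{is},\Varid{cs},\mu,\rho\rangle}$: it should assert that every command on the stack $\ensuremath{\Varid{cs}}$ type-checks under $\ensuremath{\Gamma}$, that every instruction in the reorder buffer $\ensuremath{\Varid{is}}$ respects the transient-flow discipline (reusing the command/expression rules of \Cref{fig:app-transient-ts}, reading $\ensuremath{\mathbf{load}_{\ell}(\Varid{e})}$ as a transient source and $\ensuremath{\mathbf{store}_{\ell}(\Varid{e}_{1},\Varid{e}_{2})}$ as requiring $\ensuremath{\Varid{e}_{1}}$ stable), and---crucially---that each pending $\ensuremath{\mathbf{guard}(\Varid{e}^{\Varid{b}},\Varid{cs'},\Varid{p})}$ carries a rollback stack $\ensuremath{\Varid{cs'}}$ that is itself well-typed under $\ensuremath{\Gamma}$. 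The key observation that makes preservation go through is that the type system is \emph{flow-insensitive}: $\ensuremath{\Gamma}$ fixes each variable's type once and for all, so the environment never changes along a reduction and the entire burden is re-deriving a typing derivation for the rewritten syntactic object each rule produces.

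For the \textbf{fetch} rules I would invert the command typing to obtain typings for the sub-parts. The structural rules (\srule{Fetch-Seq}, \srule{Fetch-Asgn}, \srule{Fetch-Ptr-Load}) are immediate from inversion of \srule{Seq}, \srule{Asgn}, etc. The interesting case is \srule{Fetch-Array-Load}, which rewrites $\ensuremath{\Varid{x}\mathbin{:=}\Varid{a}{[}\Varid{e}{]}}$ into $\ensuremath{\mathbf{if}\;\Varid{e}_{1}\;\mathbf{then}\;\Varid{x}\mathbin{:=}\mathbin{*}\Varid{e}_{2}\;\mathbf{else}\;\mathbf{fail}}$: here I would check the generated command type-checks using that the original \srule{Array-Read} forced the index $\ensuremath{\Varid{e}}$ to be stable, that $\ensuremath{\mathit{length}(\Varid{a})}$ and $\ensuremath{\Varid{base}(\Varid{a})}$ are stable for a constant array (by \srule{Proj}), hence the guard condition $\ensuremath{\Varid{e}_{1}\mathrel{=}\Varid{e}<\mathit{length}(\Varid{a})}$ is stable (as \srule{If-Then-Else} demands) and the address $\ensuremath{\Varid{e}_{2}\mathrel{=}\Varid{base}(\Varid{a})\mathbin{+}\Varid{e}}$ is stable (as \srule{Ptr-Read} demands), while the transient result of $\ensuremath{\mathbin{*}\Varid{e}_{2}}$ matches the transient target $\ensuremath{\Varid{x}}$. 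The array-store and protect fetch rules (\srule{Fetch-Protect-Array}, \srule{Fetch-Protect-SLH}) are handled analogously, and for \srule{Fetch-If-True}/\srule{Fetch-If-False} inversion of \srule{If-Then-Else} yields both branches well-typed and $\ensuremath{\Varid{e}}$ stable, which lets me type the new guard together with its stored rollback stack.

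For the \textbf{execute} rules I would work through the auxiliary relation $\ensuremath{\rightsquigarrow}$. The central lever is the polymorphism of \srule{Value}: a resolved value can be assigned any type, so when \srule{Exec-Asgn} turns $\ensuremath{\Varid{x}\mathbin{:=}\Varid{e}}$ into $\ensuremath{\Varid{x}\mathbin{:=}\Varid{v}}$ (and \srule{Exec-Load} turns $\ensuremath{\Varid{x}\mathbin{:=}\mathbf{load}(\Varid{e})}$ into $\ensuremath{\Varid{x}\mathbin{:=}\mu(\Varid{n})}$), I simply re-type $\ensuremath{\Varid{v}}$ at $\ensuremath{\Gamma(\Varid{x})}$ so that $\ensuremath{\tau\;\flows\;\Gamma(\Varid{x})}$ still holds. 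Rule \srule{Exec-Branch-Ok} replaces a guard with $\ensuremath{\mathbf{nop}}$ (trivially well-typed via \srule{Skip}-style typing), and the protect rules \srule{Exec-Protect$_1$}/\srule{Exec-Protect$_2$} are similar. The \textbf{retire} rules remove a front instruction or empty the configuration (\srule{Retire-Fail}), and since $\ensuremath{\Gamma}$ is syntactic the updates to $\ensuremath{\mu}$ and $\ensuremath{\rho}$ are irrelevant to the judgment, so well-typedness of the suffix suffices.

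I expect the main obstacle to be \srule{Exec-Branch-Mispredict}, where the entire command stack is discarded and replaced by the guard's stored rollback stack $\ensuremath{\Varid{cs'}}$, and the buffer is truncated to $\ensuremath{\Varid{is}_{1}+{\mkern-9mu+}\ [\mskip1.5mu \mathbf{nop}\mskip1.5mu]}$. This is exactly the step that forces the configuration-typing invariant to carry well-typedness of the rollback stacks stored inside every $\ensuremath{\mathbf{guard}}$; getting that invariant stated strongly enough that it is both \emph{established} by the fetch rules that create guards and \emph{consumed} here is the delicate part of the argument. Truncating $\ensuremath{\Varid{is}_{2}}$ is harmless since a prefix of a well-typed buffer is well-typed, and $\ensuremath{\mathbf{nop}}$ is well-typed; the only real content is that $\ensuremath{\Varid{cs'}}$ was recorded well-typed at fetch time and $\ensuremath{\Gamma}$ has not moved. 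Once the invariant is phrased correctly, every remaining case is routine inversion-plus-reconstruction.
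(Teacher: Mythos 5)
Your proposal is correct and follows essentially the same route as the paper, which proves this lemma simply ``by case analysis on the reduction step and typing judgment'': the configuration-typing judgment in the paper's appendix already bakes in exactly the invariants you identify, namely that the rule \srule{Guard} for instructions requires the stored rollback stack to be well-typed (which is what makes \srule{Exec-Branch-Mispredict} go through) and that rule \srule{Value} lets resolved values be retyped at any type (which handles \srule{Exec-Asgn} and \srule{Exec-Load}). Your elaboration of the individual cases is consistent with the paper's definitions, so there is nothing to add.
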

\begin{proof}
  By case analysis on the reduction step and typing judgment.
\end{proof}

\begin{figure}[!th]
% \begin{framed}
\centering

\begin{subfigure}{\textwidth}
\begin{mathpar}
\inferrule[Nop]
{}
{\ensuremath{\Gamma\vdash\mathbf{nop}}}
\and
\inferrule[Fail]
{}
{\ensuremath{\Gamma\vdash\mathbf{fail}(\Varid{p})}}
\and
\inferrule[Asgn]
{\ensuremath{\Gamma\vdash\Varid{e}\mathbin{:}\tau} \\ \ensuremath{\tau\;\flows\;\Gamma(\Varid{x})} }
{\ensuremath{\Gamma\vdash\Varid{x}\mathbin{:=}\Varid{e}}}
\and
\inferrule[Protect]
{\ensuremath{\Gamma\vdash\Varid{e}\mathbin{:}\tau}}
{\ensuremath{\Gamma\vdash\Varid{x}\mathbin{:=}\mathbf{protect}(\Varid{e})}}
\and
\inferrule[Load]
{\ensuremath{\Gamma\vdash\Varid{e}\mathbin{:}\ensuremath{\Concrete}}}
{\ensuremath{\Gamma\vdash\Varid{x}\mathbin{:=}\mathbf{load}_{\ell}(\Varid{e})}}
\and
\inferrule[Store]
{\ensuremath{\Gamma\vdash\Varid{e}_{1}\mathbin{:}\ensuremath{\Concrete}} \\
 \ensuremath{\Gamma\vdash\Varid{e}_{2}\mathbin{:}\tau}}
{\ensuremath{\Gamma\vdash\mathbf{store}_{\ell}(\Varid{e}_{1},\Varid{e}_{2})}}
\and
\inferrule[Store-Spectre-1.1]
{\ensuremath{\Gamma\vdash\Varid{e}_{1}\mathbin{:}\ensuremath{\Concrete}} \\
 \ensuremath{\Gamma\vdash\Varid{e}_{2}\mathbin{:}\ensuremath{\Concrete}}}
{\ensuremath{\Gamma\vdash\mathbf{store}_{\ell}(\Varid{e}_{1},\Varid{e}_{2})}}
\and
\inferrule[Guard]
{\ensuremath{\Gamma\vdash\Varid{e}\mathbin{:}\ensuremath{\Concrete}} \\
 \ensuremath{\Gamma\vdash\Varid{cs}}}
{\ensuremath{\Gamma\vdash\mathbf{guard}(\Varid{e}^{\Varid{b}},\Varid{cs},\Varid{p})}}
\end{mathpar}
\captionsetup{format=caption-with-line}
\caption{Instructions \ensuremath{\Gamma\vdash\Varid{i}}. }
\label{fig:full-ts-instr}
\end{subfigure}%

\begin{subfigure}{\textwidth}
\begin{mathpar}
\inferrule[Cmd-Stack-Empty]
{}
{\ensuremath{\Gamma\vdash[\mskip1.5mu \mskip1.5mu]}}
\and
\inferrule[Cmd-Stack-Cons]
{\ensuremath{\Gamma\vdash\Varid{c}} \\ \ensuremath{\Gamma\vdash\Varid{cs}}}
{\ensuremath{\Gamma\vdash\Varid{c}\mathbin{:}\Varid{cs}}}
\and
\inferrule[RB-Empty]
{}
{\ensuremath{\Gamma\vdash[\mskip1.5mu \mskip1.5mu]}}
\and
\inferrule[RB-Cons]
{\ensuremath{\Gamma\vdash\Varid{i}} \\ \ensuremath{\Gamma\vdash\Varid{is}}}
{\ensuremath{\Gamma\vdash\Varid{i}\mathbin{:}\Varid{is}}}
\end{mathpar}
\captionsetup{format=caption-with-line}
\caption{Command stack \ensuremath{\Gamma\vdash\Varid{cs}} and reorder buffer \ensuremath{\Gamma\vdash\Varid{is}} .}
\end{subfigure}%

\begin{subfigure}{\textwidth}
\begin{mathpar}
\inferrule[Conf]
{\ensuremath{\Gamma\vdash\Varid{is}} \\ \ensuremath{\Gamma\vdash\Varid{cs}}}
{\ensuremath{\Gamma\vdash\langle\Varid{is},\Varid{cs},\mu,\rho\rangle}}
\end{mathpar}
\captionsetup{format=caption-with-line}
\caption{Configurations \ensuremath{\Gamma\vdash\Conid{C}}.}
\end{subfigure}

\caption{Typing rules for the speculative processor.}
\label{fig:full-transient-flow-ts}
% \end{framed}
\end{figure}

%%% Local Variables:
%%% mode: latex
%%% TeX-master: "Makefile"
%%% End:
\clearpage
\newpage
\subsection{Constant-Time Type System}
\begin{figure}[t]
% \begin{framed}
\centering
\begin{subfigure}{.9\textwidth}
\begin{align*}
\text{Security Lattice :} & \quad \ensuremath{\mathscr{L}\mathrel{=}\{\mskip1.5mu \Blue{\Conid{L}},\Red{\Conid{H}}\mskip1.5mu\}}  \quad \text{ where } \quad \ensuremath{\Blue{\Conid{L}}\;\flows\;\ell}, \ensuremath{\Red{\Conid{H}}\;\flows\;\Red{\Conid{H}}}, \ensuremath{\Red{\Conid{H}}\;\not\flows\;\Blue{\Conid{L}}} \\
\text{Security Labels: } & \quad \ensuremath{\ell\;\in\;\mathscr{L}} \\
\text{Typing Context: } & \quad \ensuremath{\Gamma\;\in\;\Conid{Var}\rightharpoonup\mathscr{L}}
\end{align*}
\vspace{-\baselineskip}
\captionsetup{format=caption-with-line}
\caption{Two-point security lattice and syntax.}
\label{fig:2-lattice}
\end{subfigure}%

\begin{subfigure}{.9\textwidth}
\rulesize
\begin{mathpar}
\inferrule[Var]
{}
{\ensuremath{\Gamma\;\vdash_{\textsc{ct}}\;\Varid{x}\mathbin{:}\Gamma(\Varid{x})}}
\and
\inferrule[Array]
{\ensuremath{\Varid{a}\mathrel{=}\{\mskip1.5mu \anonymous ,\anonymous ,\ell\mskip1.5mu\}} \\ \ensuremath{\ell\mathrel{=}\Gamma(\Varid{a})}}
{\ensuremath{\Gamma\;\vdash_{\textsc{ct}}\;\Varid{a}\mathbin{:}\ell}}
\and
\inferrule[Val]
{\ensuremath{\Varid{v}\neq\Varid{a}} }
{\ensuremath{\Gamma\;\vdash_{\textsc{ct}}\;\Varid{v}\mathbin{:}\ell}}
\and
\inferrule[Proj]
{\ensuremath{\Varid{f}\;\in\;\{\mskip1.5mu \mathit{length}(\cdot ),\Varid{base}(\cdot )\mskip1.5mu\}} \\ \ensuremath{\Gamma\;\vdash_{\textsc{ct}}\;\Varid{e}\mathbin{:}\ell}}
{\ensuremath{\Gamma\;\vdash_{\textsc{ct}}\;\Varid{f}(\Varid{e})\mathbin{:}\Blue{\Conid{L}}}}
\and
\inferrule[Bop]
{\ensuremath{\Gamma\;\vdash_{\textsc{ct}}\;\Varid{e}_{1}\mathbin{:}\ell} \\
 \ensuremath{\Gamma\;\vdash_{\textsc{ct}}\;\Varid{e}_{2}\mathbin{:}\ell} }
{\ensuremath{\Gamma\;\vdash_{\textsc{ct}}\;(\Varid{e}_{1}\;\oplus\;\Varid{e}_{2})\mathbin{:}\ell}}
\and
\inferrule[Select]
{\ensuremath{\Gamma\;\vdash_{\textsc{ct}}\;\Varid{e}_{1}\mathbin{:}\Blue{\Conid{L}}} \\
 \ensuremath{\Gamma\;\vdash_{\textsc{ct}}\;\Varid{e}_{2}\mathbin{:}\ell} \\
 \ensuremath{\Gamma\;\vdash_{\textsc{ct}}\;\Varid{e}_{3}\mathbin{:}\ell}}
{\ensuremath{\Gamma\;\vdash_{\textsc{ct}}\;(\Varid{e}_{1}\mathbin{?}\Varid{e}_{2}\mathbin{:}\Varid{e}_{3})\mathbin{:}\ell}}
\and
\inferrule[Array-Read]
{\ensuremath{\Gamma\;\vdash_{\textsc{ct}}\;\Varid{e}_{1}\mathbin{:}\ell} \\
 \ensuremath{\Gamma\;\vdash_{\textsc{ct}}\;\Varid{e}_{2}\mathbin{:}\Blue{\Conid{L}}}}
{\ensuremath{\Gamma\;\vdash_{\textsc{ct}}\;\Varid{e}_{1}{[}\Varid{e}_{2}{]}\mathbin{:}\ell}}
% \and
% \inferrule[Array-Read$_{|HH|}$]
% {|i inset {1,2}| \\ |Gamma (sub (:-) CT) (sub e i) : HH|}
% {|Gamma (sub (:-) CT) (array e1 e2) : HH |}
\and
\inferrule[Ptr-Read]
{\ensuremath{\Gamma\;\vdash_{\textsc{ct}}\;\Varid{e}\mathbin{:}\Blue{\Conid{L}}}}
{\ensuremath{\Gamma\;\vdash_{\textsc{ct}}\;(\mathbin{*}_{\ell}\;\Varid{e})\mathbin{:}\ell}}
% {|f inset {length,base}| \\ |Gamma (sub (:-) CT) e : l|}
% {|Gamma (sub (:-) CT) (eval f e) : l |}
\and
\inferrule[Sub]
{\ensuremath{\Gamma\;\vdash_{\textsc{ct}}\;\Varid{e}\mathbin{:}\ell_{1}} \\ \ensuremath{\ell_{1}\;\flows\;\ell_{2}}}
{\ensuremath{\Gamma\;\vdash_{\textsc{ct}}\;\Varid{e}\mathbin{:}\ell_{2}}}
\end{mathpar}
\captionsetup{format=caption-with-line}
\caption{Expressions: \ensuremath{\Gamma\;\vdash_{\textsc{ct}}\;\Varid{e}\mathbin{:}\ell}.}
\label{fig:ct-ts-expr}
\end{subfigure}%

\begin{subfigure}{.9\textwidth}
\begin{mathpar}
\inferrule[Skip]
{}
{\ensuremath{\Gamma\;\vdash_{\textsc{ct}}\;\mathbf{skip}}}
\and
\inferrule[Fail]
{}
{\ensuremath{\Gamma\;\vdash_{\textsc{ct}}\;\mathbf{fail}}}
\and
\inferrule[Asgn]
{\ensuremath{\Gamma\;\vdash_{\textsc{ct}}\;\Varid{r}\mathbin{:}\ell} \\ \ensuremath{\ell\;\flows\;\Gamma(\Varid{x})}}
{\ensuremath{\Gamma\;\vdash_{\textsc{ct}}\;\Varid{x}\mathbin{:=}\Varid{r}}}
\and
\inferrule[Protect]
{\ensuremath{\Gamma\;\vdash_{\textsc{ct}}\;\Varid{r}\mathbin{:}\ell} \\ \ensuremath{\ell\;\flows\;\Gamma(\Varid{x})}}
{\ensuremath{\Gamma\;\vdash_{\textsc{ct}}\;\Varid{x}\mathbin{:=}\mathbf{protect}(\Varid{r})}}
\and
\inferrule[Array-Write]
{\ensuremath{\Gamma\;\vdash_{\textsc{ct}}\;\Varid{e}_{1}\mathbin{:}\ell_{1}} \\
 \ensuremath{\Gamma\;\vdash_{\textsc{ct}}\;\Varid{e}_{2}\mathbin{:}\Blue{\Conid{L}}} \\
 \ensuremath{\Gamma\;\vdash_{\textsc{ct}}\;\Varid{e}\mathbin{:}\ell} \\ \ensuremath{\ell\;\flows\;\ell_{1}}}
{\ensuremath{\Gamma\;\vdash_{\textsc{ct}}\;\Varid{e}_{1}{[}\Varid{e}_{2}{]}\mathbin{:=}\Varid{e}}}
\and
\inferrule[Ptr-Write]
{\ensuremath{\Gamma\;\vdash_{\textsc{ct}}\;\Varid{e}_{1}\mathbin{:}\Blue{\Conid{L}}} \\
 \ensuremath{\Gamma\;\vdash_{\textsc{ct}}\;\Varid{e}_{2}\mathbin{:}\ell_{2}} \\ \ensuremath{\ell_{2}\;\flows\;\ell}}
{\ensuremath{\Gamma\;\vdash_{\textsc{ct}}\;(\mathbin{*}_{\ell})\;\Varid{e}_{1}\mathbin{:=}\Varid{e}_{2}}}
\and
\inferrule[If]
{\ensuremath{\Gamma\;\vdash_{\textsc{ct}}\;\Varid{e}\mathbin{:}\Blue{\Conid{L}}} \\
 \ensuremath{\Gamma\;\vdash_{\textsc{ct}}\;\Varid{c}_{1}} \\
 \ensuremath{\Gamma\;\vdash_{\textsc{ct}}\;\Varid{c}_{2}} }
{\ensuremath{\Gamma\;\vdash_{\textsc{ct}}\;\mathbf{if}\;\Varid{e}\;\mathbf{then}\;\Varid{c}_{1}\;\mathbf{else}\;\Varid{c}_{2}}}
\and
\inferrule[While]
{\ensuremath{\Gamma\;\vdash_{\textsc{ct}}\;\Varid{e}\mathbin{:}\Blue{\Conid{L}}} \\
 \ensuremath{\Gamma\;\vdash_{\textsc{ct}}\;\Varid{c}} }
{\ensuremath{\Gamma\;\vdash_{\textsc{ct}}\;\mathbf{while}\;\Varid{e}\;\Varid{c}}}
\and
\inferrule[Seq]
{\ensuremath{\Gamma\;\vdash_{\textsc{ct}}\;\Varid{c}_{1}} \\
 \ensuremath{\Gamma\;\vdash_{\textsc{ct}}\;\Varid{c}_{1}} }
{\ensuremath{\Gamma\;\vdash_{\textsc{ct}}\;\Varid{c}_{1};\Varid{c}_{2}}}
\end{mathpar}
\captionsetup{format=caption-with-line}
\caption{Commands \ensuremath{\Gamma\;\vdash_{\textsc{ct}}\;\Varid{c}}.}
\label{fig:ct-ts-cmds}
\end{subfigure}%
\end{figure}

\begin{figure}[t]
\ContinuedFloat
\begin{subfigure}{.9\textwidth}
\begin{mathpar}
\inferrule[Nop]
{}
{\ensuremath{\Gamma\;\vdash_{\textsc{ct}}\;\mathbf{nop}}}
\and
\inferrule[Fail]
{}
{\ensuremath{\Gamma\;\vdash_{\textsc{ct}}\;\mathbf{fail}(\Varid{p})}}
\and
\inferrule[Asgn]
{\ensuremath{\Gamma\;\vdash_{\textsc{ct}}\;\Varid{e}\mathbin{:}\ell} \\ \ensuremath{\ell\;\flows\;\Gamma(\Varid{x})} }
{\ensuremath{\Gamma\;\vdash_{\textsc{ct}}\;\Varid{x}\mathbin{:=}\Varid{e}}}
\and
\inferrule[Protect]
{\ensuremath{\Gamma\;\vdash_{\textsc{ct}}\;\Varid{e}\mathbin{:}\ell} \\ \ensuremath{\ell\;\flows\;\Gamma(\Varid{x})}}
{\ensuremath{\Gamma\;\vdash_{\textsc{ct}}\;\Varid{x}\mathbin{:=}\mathbf{protect}(\Varid{e})}}
\and
\inferrule[Load]
{\ensuremath{\Gamma\;\vdash_{\textsc{ct}}\;\Varid{e}\mathbin{:}\Blue{\Conid{L}}} \\ \ensuremath{\ell\;\flows\;\Gamma(\Varid{x})}}
{\ensuremath{\Gamma\;\vdash_{\textsc{ct}}\;\Varid{x}\mathbin{:=}\mathbf{load}_{\ell}(\Varid{e})}}
\and
\inferrule[Store]
{\ensuremath{\Gamma\;\vdash_{\textsc{ct}}\;\Varid{e}_{1}\mathbin{:}\Blue{\Conid{L}}} \\
 \ensuremath{\Gamma\;\vdash_{\textsc{ct}}\;\Varid{e}_{2}\mathbin{:}\ell_{2}} \\ \ensuremath{\ell_{2}\;\flows\;\ell}}
{\ensuremath{\Gamma\;\vdash_{\textsc{ct}}\;\mathbf{store}_{\ell}(\Varid{e}_{1},\Varid{e}_{2})}}
\and
\inferrule[Guard]
{\ensuremath{\Gamma\;\vdash_{\textsc{ct}}\;\Varid{e}\mathbin{:}\Blue{\Conid{L}}} \\
 \ensuremath{\Gamma\;\vdash_{\textsc{ct}}\;\Varid{cs}}}
{\ensuremath{\Gamma\;\vdash_{\textsc{ct}}\;\mathbf{guard}(\Varid{e}^{\Varid{b}},\Varid{cs},\Varid{p})}}
\end{mathpar}
\captionsetup{format=caption-with-line}
\caption{Instructions \ensuremath{\Gamma\;\vdash_{\textsc{ct}}\;\Varid{i}}.}
\label{fig:ct-ts-instr}
\end{subfigure}%

\begin{subfigure}{.9\textwidth}
\begin{mathpar}
\inferrule[Cmd-Stack-Empty]
{}
{\ensuremath{\Gamma\;\vdash_{\textsc{ct}}\;[\mskip1.5mu \mskip1.5mu]}}
\and
\inferrule[Cmd-Stack-Cons]
{\ensuremath{\Gamma\;\vdash_{\textsc{ct}}\;\Varid{c}} \\ \ensuremath{\Gamma\;\vdash_{\textsc{ct}}\;\Varid{cs}}}
{\ensuremath{\Gamma\;\vdash_{\textsc{ct}}\;\Varid{c}\mathbin{:}\Varid{cs}}}
\and
\inferrule[RB-Empty]
{}
{\ensuremath{\Gamma\;\vdash_{\textsc{ct}}\;[\mskip1.5mu \mskip1.5mu]}}
\and
\inferrule[RB-Cons]
{\ensuremath{\Gamma\;\vdash_{\textsc{ct}}\;\Varid{i}} \\ \ensuremath{\Gamma\;\vdash_{\textsc{ct}}\;\Varid{is}}}
{\ensuremath{\Gamma\;\vdash_{\textsc{ct}}\;\Varid{i}\mathbin{:}\Varid{is}}}
\end{mathpar}
\captionsetup{format=caption-with-line}
\caption{Command stack \ensuremath{\Gamma\;\vdash_{\textsc{ct}}\;\Varid{cs}} and reorder buffer \ensuremath{\Gamma\;\vdash_{\textsc{ct}}\;\Varid{is}} .}
\end{subfigure}%

\begin{subfigure}{.9\textwidth}
\begin{mathpar}
\inferrule[Conf]
{\ensuremath{\Gamma\;\vdash_{\textsc{ct}}\;\Varid{is}} \\ \ensuremath{\Gamma\;\vdash_{\textsc{ct}}\;\Varid{cs}}}
{\ensuremath{\Gamma\;\vdash_{\textsc{ct}}\langle\Varid{is},\Varid{cs},\mu,\rho\rangle}}
\end{mathpar}
\captionsetup{format=caption-with-line}
\caption{Configurations \ensuremath{\Gamma\;\vdash_{\textsc{ct}}\;\Conid{C}}.}
\end{subfigure}
\caption{Constant-time Type System.}
\label{fig:ct-ts}
% \end{framed}
\end{figure}

The patches computed by \tool enforce speculative constant time
(\Cref{def:STC}) only for programs that are already \emph{sequential}
constant-time.
Enforcing sequential constant time is not a goal of \tool because this
problem has already been addressed in previous work
\cite{Watt:2019,libsignalSP}.
Therefore, the soundness guarantees of \tool (\Cref{thm:ts-sound})
rely on a separate, but standard, type-system to enforce sequential
constant time.
To this end, we simply adopt the constant-time programming discipline
from~\cite{Watt:2019,libsignalSP} by disallowing secret-dependent
branches and memory accesses.
Given a security policy \ensuremath{\Blue{\Conid{L}}} that specifies the set of public
variables and arrays (memory addresses containing public data), a
program satisfies sequential constant time if it is well-typed
according to the type-system from Figure~\ref{fig:ct-ts}.
Figure \ref{fig:2-lattice} defines the classic two-point lattice
consisting of public (\ensuremath{\Blue{\Conid{L}}}) and secret (\ensuremath{\Red{\Conid{H}}}) security levels, which
disallows secret-to-public flows of information (\ensuremath{\Red{\Conid{H}}\;\not\flows\;\Blue{\Conid{L}}}).
The type-system relies on a typing environment \ensuremath{\Gamma} to map
variables and arrays to their security level, which we derive from the
given security policy as follows:
\begin{align*}
\ensuremath{CT_{\mkern-1mu\Blue{\Conid{L}}}(\Varid{c})} \ \triangleq \ \ensuremath{\Gamma\;\vdash_{\textsc{ct}}\;\Varid{c}} \quad \text{where} \quad \ensuremath{\Gamma(\Varid{x})} =
\begin{cases}
\ensuremath{\Blue{\Conid{L}}}, & \text{if } \ensuremath{\Varid{x}\;\in\;\Blue{\Conid{L}}}  \\
\ensuremath{\Red{\Conid{H}}}, & \text{otherwise }  \\
\end{cases}
\end{align*}

% In the definition above, the typing environment |Gamma| maps variables
% and memory addresses to their sensitivity (public |LL| or secret |HH|)
% according to the given security policy |LL|.
%
The typing rules from Figure \ref{fig:ct-ts-expr} and
\ref{fig:ct-ts-cmds} are fairly standard.
Figure \ref{fig:ct-ts-expr} defines the typing judgment for expressions,
i.e., \ensuremath{\Gamma\;\vdash_{\textsc{ct}}\;\Varid{e}\mathbin{:}\ell}, which indicates that expression \ensuremath{\Varid{e}}
has sensitivity at most \ensuremath{\ell} under typing context \ensuremath{\Gamma}.
Rule \srule{Var} types variables
% \footnote{For simplicity,
%   we assume that arrays are always referenced through variables, which
%   determine their security level.}
according to the typing context
\ensuremath{\Gamma} and ground values can assume any label in rule \srule{Value}; rule \srule{Array} projects the label \ensuremath{\ell} contained in the array \ensuremath{\Varid{a}}, which must coincide with the label specified by the typing context, i.e., \ensuremath{\ell\mathrel{=}\Gamma(\Varid{a})}.
These labels can be upgraded via rule \srule{Sub} and are otherwise
propagated in rules \srule{Fun,Bop,Select,Array-Read,Ptr-Read}.\footnote{
  Technically, our type-system does not prevent an attacker from
  forging a public pointer to secret memory via rules \srule{Val} and
  \srule{Ptr-Read}.
  In the following, we assume that pointers are not directly casted
  from integers, but they only appear when evaluating array accesses
  and thus have the same security level of the array.}
Additionally, rule \srule{Array-Read} disallows memory reads that may
depend on secret data by typing the index \ensuremath{\Varid{e}_{2}} as public (\ensuremath{\Blue{\Conid{L}}}).

Figure \ref{fig:ct-ts-cmds} defines the typing judgment for commands,
i.e., \ensuremath{\Gamma\;\vdash_{\textsc{ct}}\;\Varid{c}}, which indicates that program \ensuremath{\Varid{c}} is
constant time under sequential execution.
Rules \srule {Asgn,Protect} disallow (protected) secret-to-public
assignments (\ensuremath{\Red{\Conid{H}}\;\not\flows\;\Blue{\Conid{L}}}).
Rule \srule{Array-Write} is similar but additionally disallows memory-writes that depend on secret data by typing the index \ensuremath{\Varid{e}_{2}} as public (\ensuremath{\Blue{\Conid{L}}}), like rule \srule{Array-Read}.

% which uses the judgment for expression defined above to approximate
% the sensitivity of expressions.
%
The combination of rules \srule{Array-Read} and \srule{Array-Write}
makes programs that exhibit secret-dependent memory accesses
ill-typed.
In a similar way, rules \srule{If} and \srule{While} forbid branches
that may depend on secret data by typing the conditional \ensuremath{\Varid{e}} as public
(\ensuremath{\Blue{\Conid{L}}}).

\begin{lemma}[Constant-Time Type Preservation]
If \ensuremath{\Gamma\;\vdash_{\textsc{ct}}\;\Conid{C}} and \ensuremath{\Conid{C}\;\stepT{\Varid{d}}{\Conid{O}}\;\Conid{C}'}, then \ensuremath{\Gamma\;\vdash_{\textsc{ct}}\;\Conid{C}'}.
\label{lemma:ct-preservation}
\end{lemma}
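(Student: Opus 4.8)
The plan is to prove the lemma by case analysis on the reduction step $C \xrightarrow{d}_O C'$, mirroring the companion Lemma~\ref{lemma:tr-preservation} for the transient-flow system. Writing $C = \langle is, cs, \mu, \rho\rangle$, I would first invert rule \srule{Conf} to obtain $\Gamma \vdash_{\textsc{ct}} is$ and $\Gamma \vdash_{\textsc{ct}} cs$, and then reconstruct these two judgments for $C'$; since \srule{Conf} ignores $\mu$ and $\rho$, the memory store and variable map play no role. The directive $d$ determines which pipeline stage fired, so the cases group into fetch, execute, and retire, and I treat each group in turn.

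For the fetch stage, every rule pops a command and either flattens it or emits processor instructions, so I would show that command typing is preserved by flattening and that it implies the corresponding instruction typing. The routine cases (\srule{Fetch-Skip}, \srule{Fetch-Seq}, \srule{Fetch-Asgn}, \srule{Fetch-Ptr-Load}, \srule{Fetch-Ptr-Store}) follow by inverting the matching command rule and re-applying the instruction rule with identical premises. The interesting cases are the array accesses and branches. For \srule{Fetch-Array-Load}, the command $x := a[e]$ is rewritten to $\mathbf{if}\ e_1\ \mathbf{then}\ x := *_\ell e_2\ \mathbf{else}\ \mathbf{fail}$; I would check that $e_1 = e < \mathit{length}(a)$ types as $\Blue{\Conid{L}}$ (rule \srule{Proj} gives $\mathit{length}(a) : \Blue{\Conid{L}}$, and \srule{Array-Read} forces the index $e$ public), that $e_2 = \mathit{base}(a)+e$ is likewise public, and that loading into $x$ at the array's label $\ell = \Gamma(a)$ satisfies $\ell \flows \Gamma(x)$ --- the very constraint carried by the original \srule{Asgn}/\srule{Array-Read} derivation. \srule{Fetch-Array-Store} is symmetric. \srule{Fetch-If-True} and \srule{Fetch-If-False} emit a $\mathbf{guard}$ whose stored rollback stack $c_2 : cs$ is well-typed because the not-taken branch $c_2$ is well-typed by inverting \srule{If} and $cs$ is well-typed by assumption, so the premises of rule \srule{Guard} are met; \srule{Fetch-While} merely unrolls the loop into an $\mathbf{if}$, whose well-typedness follows from inverting \srule{While}.

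For the execute stage, rule \srule{Execute} delegates to the auxiliary relation $\rightsquigarrow$, so I would show that $\rightsquigarrow$ preserves well-typedness of the buffer and command stack. The assignment, load, store, and protect cases (\srule{Exec-Asgn}, \srule{Exec-Load}, \srule{Exec-Store}, and the protect rules) replace an instruction by one whose right-hand side is a resolved value; these go through because rule \srule{Val} lets any non-array value assume any security label, so the value can be retyped at exactly the label demanded by the original instruction (e.g.\ at $\Gamma(x)$ for an assignment), discharging the $\ell \flows \Gamma(x)$ side condition. The crucial case is \srule{Exec-Branch-Mispredict}, where the command stack is discarded and replaced by the guard's rollback stack $cs'$: this is well-typed precisely because rule \srule{Guard} records $\Gamma \vdash_{\textsc{ct}} cs'$ as a premise, so the invariant was stored in the guard when it was fetched. \srule{Exec-Branch-Ok} replaces the guard with $\mathbf{nop}$, trivially typed by \srule{Nop}.

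Finally, the retire stage is immediate: \srule{Retire-Nop}, \srule{Retire-Asgn}, and \srule{Retire-Store} drop the head of a well-typed buffer (leaving a well-typed suffix) and touch only $\mu$ or $\rho$, which \srule{Conf} ignores, while \srule{Retire-Fail} yields $\langle [\,], [\,], \mu, \rho\rangle$, typed by \srule{RB-Empty} and \srule{Cmd-Stack-Empty}. The main obstacle is the label bookkeeping in the fetch-stage expansion of array operations: I must confirm that injecting the bounds check never introduces a secret-dependent condition or address --- i.e.\ that publicness of lengths, bases, and indices propagates through the generated arithmetic --- and that the rewritten pointer access inherits exactly the array's label so that the $\ell \flows \Gamma(x)$ obligation survives. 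The guard-related cases, by contrast, are uniform once one observes that rule \srule{Guard} was designed to carry the typing obligation for the rollback stack, which is exactly what the misprediction case consumes.
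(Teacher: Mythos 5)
Your proposal is correct and follows exactly the strategy the paper itself uses, namely case analysis on the reduction step and the typing judgment; the paper states only this one-line strategy, and your fleshed-out cases (the bounds-check expansion preserving publicness via \srule{Proj}, the guard rule carrying the typing of the rollback stack for the misprediction case, and values being retypeable at any label in the execute stage) are precisely the details that make it go through.
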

\begin{proof}
  By case analysis on the reduction step and typing judgment.
\end{proof}

%%% Local Variables:
%%% mode: latex
%%% TeX-master: "Makefile"
%%% End:
\clearpage
\newpage
\section{Proofs}
\label{app:proofs}

\subsection{Consistency}

\begin{figure}[t]
%\begin{framed}
\begin{mathpar}
\inferrule[Skip]
{}
{\ensuremath{\langle\mu,\rho\rangle\Downarrow_{\epsilon}^{\mathbf{skip}}\langle\mu,\rho\rangle}}
\and
\inferrule[Fail]
{}
{\ensuremath{\langle\mu,\rho\rangle\Downarrow_{\mathbf{fail}}^{\mathbf{fail}}\langle\mu,\rho\rangle}}
\and
\inferrule[Asgn]
{\ensuremath{\Varid{v}\mathrel{=}\llbracket \Varid{e}\rrbracket^{\rho}}}
{\ensuremath{\langle\mu,\rho\rangle\Downarrow_{\epsilon}^{\Varid{x}\mathbin{:=}\Varid{e}}\langle\mu,\update{\rho}{\Varid{x}}{\Varid{v}}\rangle}}
\and
\inferrule[Protect]
{\ensuremath{\langle\mu,\rho\rangle\Downarrow_{\Conid{O}}^{\Varid{x}\mathbin{:=}\Varid{r}}\langle\mu,\rho'\rangle}}
{\ensuremath{\langle\mu,\rho\rangle\Downarrow_{\Conid{O}}^{\Varid{x}\mathbin{:=}\mathbf{protect}(\Varid{r})}\langle\mu,\rho'\rangle}}
\and
\inferrule[Ptr-Read]
{\ensuremath{\Varid{n}\mathrel{=}\llbracket \Varid{e}\rrbracket^{\rho}} \\
 \ensuremath{\Varid{v}\mathrel{=}\mu(\Varid{n})} }
{\ensuremath{\langle\mu,\rho\rangle\Downarrow_{\mathbf{read}(\Varid{n})}^{\Varid{x}\mathbin{:=}\mathbin{*}\Varid{e}}\langle\mu,\update{\rho}{\Varid{x}}{\Varid{v}}\rangle}}
\and
\inferrule[Array-Read]
{\ensuremath{\Varid{n}\mathrel{=}\llbracket \Varid{e}\rrbracket^{\rho}} \\
 \ensuremath{\Varid{n}<\mathit{length}(\Varid{a})}  \\\\ \ensuremath{\Varid{n'}\mathrel{=}\Varid{base}(\Varid{a})\mathbin{+}\Varid{n}} \\
 \ensuremath{\Varid{v}\mathrel{=}\mu(\Varid{n'})}}
{\ensuremath{\langle\mu,\rho\rangle\Downarrow_{\mathbf{read}(\Varid{n'})}^{\Varid{x}\mathbin{:=}\Varid{a}{[}\Varid{e}{]}}\langle\mu,\update{\rho}{\Varid{x}}{\Varid{v}}\rangle}}
\and
\inferrule[Array-Read-Fail]
{\ensuremath{\Varid{n}\mathrel{=}\llbracket \Varid{e}\rrbracket^{\rho}} \\
 \ensuremath{\Varid{n}\geq\mathit{length}(\Varid{a})}}
{\ensuremath{\langle\mu,\rho\rangle\Downarrow_{\mathbf{fail}}^{\Varid{x}\mathbin{:=}\Varid{a}{[}\Varid{e}{]}}\langle\mu,\rho\rangle}}
\and
\inferrule[Ptr-Write]
{\ensuremath{\Varid{n}\mathrel{=}\llbracket \Varid{e}_{1}\rrbracket^{\rho}} \\ \ensuremath{\Varid{v}\mathrel{=}\llbracket \Varid{e}_{2}\rrbracket^{\rho}}}
{\ensuremath{\langle\mu,\rho\rangle\Downarrow_{\mathbf{write}(\Varid{n})}^{\mathbin{*}\Varid{e}_{1}\mathbin{:=}\Varid{e}_{2}}\langle\update{\mu}{\Varid{n}}{\Varid{v}},\rho\rangle}}
\and
\inferrule[Array-Write]
{\ensuremath{\Varid{n}\mathrel{=}\llbracket \Varid{e}_{1}\rrbracket^{\rho}} \\
 \ensuremath{\Varid{v}\mathrel{=}\llbracket \Varid{e}_{2}\rrbracket^{\rho}} \\\\ \ensuremath{\Varid{n}<\mathit{length}(\Varid{a})} \\
 \ensuremath{\Varid{n'}\mathrel{=}\Varid{base}(\Varid{a})\mathbin{+}\Varid{n}}}
{\ensuremath{\langle\mu,\rho\rangle\Downarrow_{\mathbf{write}(\Varid{n'})}^{\Varid{a}{[}\Varid{e}_{1}{]}\mathbin{:=}\Varid{e}_{2}}\langle\update{\mu}{\Varid{n'}}{\Varid{v}},\rho\rangle}}
\and
\inferrule[Array-Write-Fail]
{\ensuremath{\Varid{n}\mathrel{=}\llbracket \Varid{e}_{1}\rrbracket^{\rho}} \\
 \ensuremath{\Varid{n}\geq\mathit{length}(\Varid{a})}}
{\ensuremath{\langle\mu,\rho\rangle\Downarrow_{\mathbf{fail}}^{\Varid{a}{[}\Varid{e}_{1}{]}\mathbin{:=}\Varid{e}_{2}}\langle\mu,\rho\rangle}}
\and
% We don't have stable read anymore
% \inferrule[Stable-Read]
% {|< mu, rho'> (bstep (x := (array e1 e2)) O < mu, rho' > |}
% {|< mu, rho > (bstep (x := stable_read(e1, e2)) O) < mu, rho' >|}
% \and
\inferrule[If-Then-Else]
{\ensuremath{\Varid{c}\mathrel{=}\mathbf{if}\;\Varid{e}\;\mathbf{then}\;\Varid{c}_{\mathbf{true}}\;\mathbf{else}\;\Varid{c}_{\mathbf{false}}} \\ \ensuremath{\Varid{b}\mathrel{=}\llbracket \Varid{e}\rrbracket^{\rho}} \\ \ensuremath{\langle\mu,\rho\rangle\Downarrow_{\Conid{O}}^{\Varid{c}_{\Varid{b}}}\langle\mu',\rho'\rangle}}
{\ensuremath{\langle\mu,\rho\rangle\Downarrow_{\Conid{O}}^{\Varid{c}}\langle\mu',\rho'\rangle}}
\and
\inferrule[While-True]
{\ensuremath{\llbracket \Varid{e}\rrbracket^{\rho}\mathrel{=}\mathbf{true}} \\ \ensuremath{\Varid{c'}\mathrel{=}\Varid{c};\mathbf{while}\;\Varid{e}\;\mathbf{do}\;\Varid{c}} \\ \ensuremath{\langle\mu,\rho\rangle\Downarrow_{\Conid{O}}^{\Varid{c'}}\langle\mu',\rho'\rangle}}
{\ensuremath{\langle\mu,\rho\rangle\Downarrow_{\Conid{O}}^{\mathbf{while}\;\Varid{e}\;\mathbf{do}\;\Varid{c}}\langle\mu',\rho'\rangle}}
\and
\inferrule[While-False]
{\ensuremath{\llbracket \Varid{e}\rrbracket^{\rho}\mathrel{=}\mathbf{false}}}
{\ensuremath{\langle\mu,\rho\rangle\Downarrow_{\Conid{O}}^{\mathbf{while}\;\Varid{e}\;\mathbf{do}\;\Varid{c}}\langle\mu,\rho\rangle}}
\and
\inferrule[Seq]
{\ensuremath{\langle\mu,\rho\rangle\Downarrow_{\Conid{O}_{1}}^{\Varid{c}_{1}}\langle\mu',\rho'\rangle} \\ \ensuremath{\mathbf{fail}\;\not\in\;\Conid{O}_{1}} \\
 \ensuremath{\langle\mu',\rho'\rangle\Downarrow_{\Conid{O}_{2}}^{\Varid{c}_{2}}\langle\mu'',\rho''\rangle} }
{\ensuremath{\langle\mu,\rho\rangle\Downarrow_{(\Conid{O}_{1}\;{{\mkern-2mu\cdot\mkern-2mu}}\;\Conid{O}_{2})}^{\Varid{c}_{1};\Varid{c}_{2}}\langle\mu'',\rho''\rangle}}
\and
\inferrule[Seq-Fail]
{\ensuremath{\langle\mu,\rho\rangle\Downarrow_{\Conid{O}_{1}}^{\Varid{c}_{1}}\langle\mu',\rho'\rangle} \\ \ensuremath{\mathbf{fail}\;\in\;\Conid{O}_{1}}}
{\ensuremath{\langle\mu,\rho\rangle\Downarrow_{\Conid{O}_{1}}^{\Varid{c}_{1};\Varid{c}_{2}}\langle\mu',\rho'\rangle}}
\end{mathpar}
\caption{Sequential big-step semantics with observations.} \label{fig:sequential-semantics}
%\end{framed}
\end{figure}

% In the following, we use the term \emph{schedule} to refer to a list
% of processor directives |D|.
%
%% \begin{def}[Omniscent Schedule]
%% Given a program and an initial state (memory and variables), we say
%% that a schedule |D| is omniscent if and only if the speculative
%% execution of the program with the state under schedule |D| never
%% causes a misprediction.
%% \end{def}
%

\mypara{Notation}
In the following, we write \ensuremath{\mathbf{rollback}\;\not\in\;\Conid{O}} to denote that no rollback observation occurs in the observation trace \ensuremath{\Conid{O}}, i.e., \ensuremath{\forall\;\Varid{p}.\mathbf{rollback}(\Varid{p})\;\not\in\;\Conid{O}} and similarly \ensuremath{\mathbf{fail}\;\not\in\;\Conid{O}} for \ensuremath{\forall\;\Varid{p}.\mathbf{fail}(\Varid{p})\;\not\in\;\Conid{O}}.
The notation \ensuremath{\Conid{C}\;\stepT{\Conid{D}}{\Conid{O}}\;\Conid{C}'} denotes a multi-step speculative reduction, i.e., a sequence of zero or more small-step reductions from configuration \ensuremath{\Conid{C}} to \ensuremath{\Conid{C}'}
that follow the directives in schedule \ensuremath{\Conid{D}} and generate observation trace \ensuremath{\Conid{O}}.
We consider equivalence of observation traces \emph{up to silent observations} \ensuremath{\epsilon}, i.e., \ensuremath{\forall\;\Conid{O}.\epsilon\;{{\mkern-2mu\cdot\mkern-2mu}}\;\Conid{O}\mathrel{=}\Conid{O}\;{{\mkern-2mu\cdot\mkern-2mu}}\;\epsilon\mathrel{=}\Conid{O}}, and ignoring fail identifiers, i.e., \ensuremath{\mathbf{fail}(\Varid{p})\mathrel{=}\mathbf{fail}}.

We begin by defining the notion of \emph{valid directive}, i.e., a directive that can be followed by a given configuration without getting stuck.

\begin{definition}[Valid Directive]
A directive \ensuremath{\Varid{d}} is valid for a configuration \ensuremath{\Conid{C}} iff there exists a configuration \ensuremath{\Conid{C}'} and an observation \ensuremath{\Varid{o}} such that \ensuremath{\Conid{C}\;\stepT{\Varid{d}}{\Varid{o}}\;\Conid{C}'}.
\label{def:valid-directive}
\end{definition}

Similarly, we define the notion of \emph{valid schedule}, i.e., a list of valid directives that completely evaluates a given program without getting the processor stuck and re-executing any instruction.

\begin{definition}[Valid Schedule]
A schedule \ensuremath{\Conid{D}} is \emph{valid} for a configuration \ensuremath{\Conid{C}} iff there exists a final configuration \ensuremath{\Conid{C}'} and a sequence of observations \ensuremath{\Conid{O}} such that \ensuremath{\Conid{C}\;\Downarrow_{\Conid{O}}^{\Conid{D}}\;\Conid{C}'} and every fetched instruction is executed at most once.
\label{def:valid}
\end{definition}

\begin{lemma}[Consistency of Valid Schedules] For all
    schedules \ensuremath{\Conid{D}_{1}} and \ensuremath{\Conid{D}_{2}} valid for configuration \ensuremath{\Conid{C}}, if \ensuremath{\Conid{C}\;\Downarrow_{\Conid{O}_{1}}^{\Conid{D}_{1}}\;\Conid{C}_{1}}, \ensuremath{\Conid{C}\;\Downarrow_{\Conid{O}_{2}}^{\Conid{D}_{2}}\;\Conid{C}_{2}}, \ensuremath{\mathbf{rollback}\;\not\in\;\Conid{O}_{1}} and
    \ensuremath{\mathbf{rollback}\;\not\in\;\Conid{O}_{2}}, then \ensuremath{\Conid{C}_{1}\mathrel{=}\Conid{C}_{2}} and \ensuremath{\Conid{O}_{1}\cong\Conid{O}_{2}}.
\label{lemma:valid-consistent}
\end{lemma}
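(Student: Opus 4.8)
The plan is to reduce both runs to a single canonical, in-order run and to argue that the committed memory, the committed variable map, and the multiset of memory observations are all functions of $c$, $\mu$, and $\rho$ alone. Since both $\Conid{D}_1$ and $\Conid{D}_2$ are \emph{valid}, each run terminates in a configuration with an empty reorder buffer and an empty command stack, so $\Conid{C}_1 = \Conid{C}_2$ reduces to $\mu_1' = \mu_2'$ together with $\rho_1' = \rho_2'$; this is what I would establish, alongside $\Conid{O}_1 \cong \Conid{O}_2$.

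First I would pin down the control-flow path. A $\mathbf{guard}$ instruction can only leave the reorder buffer by first being executed, and executing it yields either \srule{Exec-Branch-Ok} (rewriting it to $\mathbf{nop}$) or \srule{Exec-Branch-Mispredict} (emitting $\mathbf{rollback}$). Since the runs are valid (so the buffer is eventually emptied, forcing every fetched guard to be executed) and rollback-free (so no guard mispredicts), every branch prediction recorded by a $\mathbf{fetch}\,\Varid{b}$ directive must coincide with the actual value of the corresponding condition. Consequently both schedules fetch, in program order, exactly the same flattened instruction stream---the one determined by the unique non-speculative path of $c$ through $\langle\mu,\rho\rangle$---modulo the choice of fresh guard and fail identifiers. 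I would make this precise via an invariant tying the pair (reorder buffer, command stack) to a ``remaining computation,'' and proving that this fetched stream is determined independently of how fetches are interleaved with execs and retires.

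Next comes the core: a local \emph{exchange lemma} stating that, along a valid rollback-free run, two consecutive directives acting on independent portions of the configuration may be swapped without changing the resulting configuration or the multiset of emitted observations. The cases are exec/exec, fetch/exec, fetch/retire, and exec/retire. The delicate case is exec/exec: I would show that whenever an instruction is enabled its operands are already resolved under the transient map $\phi$ (a pending assignment marks its target $\bot$ and thereby stalls any dependent instruction), so the value it computes is exactly its final committed value and is unaffected by executing any other independent instruction first. The memory-ordering side conditions---in particular the $\mathbf{store}(\anonymous,\anonymous)\not\in\Varid{is}_1$ premise of \srule{Exec-Load} and the in-order \srule{Retire-Store}---ensure loads cannot bypass pending stores, so every $\mathbf{read}$ observes the value the in-order run would commit. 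Iterating the exchange lemma lets me transport any valid rollback-free schedule to the canonical schedule that fetches, executes, and retires each instruction to completion in program order, preserving $\mu'$, $\rho'$, and the observation multiset. Applying this to both $\Conid{D}_1$ and $\Conid{D}_2$ and invoking determinacy of the canonical run yields $\mu_1'=\mu_2'$, $\rho_1'=\rho_2'$, hence $\Conid{C}_1=\Conid{C}_2$, and $\Conid{O}_1\cong\Conid{O}_2$ up to permutation. I would deliberately route through a canonical speculative run rather than the sequential semantics, since this lemma is a stepping stone toward \Cref{thm:consistent} and must not assume it.

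The main obstacle is the exchange lemma, and within it two points deserve care. First, the transient-map reasoning: I must argue that out-of-order execution never forwards a stale or racy value, which hinges precisely on the fact that an unresolved assignment sets its target to $\bot$ and thereby stalls any instruction that would read it, so only the genuinely latest value is ever observed. Second, the observation component $\mathbf{read}(\Varid{n},\Varid{ps})$ and $\mathbf{write}(\Varid{n},\Varid{ps})$ carries the list $\Varid{ps}=\Moon{\Varid{is}_1}$ of pending guard and fail identifiers, which depends both on \emph{how many} preceding guards have already been resolved to $\mathbf{nop}$ and on the freshly chosen identifier values, and so genuinely differs between interleavings. Here I would lean on the fact that $\cong$ abstracts these prediction identifiers---exactly as the projection $\Conid{O}{\downarrow}$ does---and that rollback-freedom guarantees every emitted read and write lies on the committed path (there are no misspeculated memory operations to discard), so that after this abstraction the two traces carry identical multisets of address-tagged reads and writes.
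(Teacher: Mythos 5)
Your proposal is correct in substance but takes a genuinely different route from the paper. The paper's proof sketch argues \emph{directly}: validity plus rollback-freedom imply that every fetched instruction is executed exactly once and that the two schedules fetch and retire corresponding instructions in the same order; it then observes that the value an instruction computes is fixed by its data dependencies (an \ensuremath{\mathbf{exec}} directive is only valid once the transient map \ensuremath{\phi} no longer returns \ensuremath{\bot} for its operands), so corresponding instructions resolve to the same values regardless of interleaving; finally it matches observations by tagging each fetched instruction with a unique index and pairing the observations of equally-indexed instructions. You instead normalize: an exchange lemma commuting adjacent independent directives, iterated to rewrite each run into a single canonical in-order run, followed by determinacy of that run. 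Both hinge on the same two facts (rollback-freedom pins the control-flow path; the \ensuremath{\bot}-marking in \ensuremath{\phi} guarantees operands are only ever read at their final values), so the mathematical content is the same. Your route is structurally cleaner---the exchange lemma is a concrete, case-checkable statement, whereas the paper's ``same index \ensuremath{\Rightarrow} same observation'' step is asserted rather than derived---but it costs you the index-shifting bookkeeping for \ensuremath{\mathbf{exec}\;\Varid{n}} and \ensuremath{\mathbf{retire}} under swaps (the paper only pays this price later, in the General Consistency proof, via the \ensuremath{\downarrow^{1}} renumbering). The paper's route avoids defining a canonical run but leaves the permutation argument informal.

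One point where you are more careful than the paper, and where you should not overstate what the definitions give you: the \ensuremath{\Varid{ps}=\Moon{\Varid{is}_{1}}} component of \ensuremath{\mathbf{read}(\Varid{n},\Varid{ps})} really does differ between interleavings, and the paper's \ensuremath{\cong} is only stated as permutation equivalence (ignoring \ensuremath{\epsilon} and fail identifiers), while the projection \ensuremath{\Conid{O}{\downarrow}} is applied only in \Cref{thm:consistent}, not in this lemma. The paper's own sketch silently claims \ensuremath{\Varid{o}_{1}\mathrel{=}\Varid{o}_{2}} for matched loads, which is false as literally stated when a preceding guard is resolved in one schedule but still pending in the other; your fix (coarsening \ensuremath{\cong} to abstract pending-identifier lists) is a reasonable repair, but it amounts to amending the lemma's equivalence rather than proving the stated one, so flag it explicitly rather than attributing it to \ensuremath{\Conid{O}{\downarrow}}.
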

\begin{proofsketch}
By design, our processor fetch and retire instructions \emph{in-order} and
  can only execute them \emph{out-of-order}.
In this case, we know additionally that the schedules \ensuremath{\Conid{D}_{1}} and \ensuremath{\Conid{D}_{2}} are valid
(Def.~\ref{def:valid}) and no rollbacks occur during the executions, therefore
we deduce that (1) each fetched instruction is executed \emph{exactly once}
before it is eventually retired, and (2) the \ensuremath{\mathbf{fetch}} and \ensuremath{\mathbf{retire}} directives in
the schedules fetch and retire \emph{corresponding instructions} in the
\emph{same order}.
Therefore, to prove that the executions reach the same final configurations (\ensuremath{\Conid{C}_{1}\mathrel{=}\Conid{C}_{2}}), it suffices to show that the executions evaluate the operands of
corresponding instructions to the same value.
Intuitively, the particular order with which instructions are executed does not
affect the value of their operands, which is determined only by the data they
depend on.
In particular, if the schedule does not respect these data dependencies the
processor simply gets \emph{stuck}.\footnote{As explained in
  \Cref{sec:semantics}, the transient variable map and the semantics rules of
  the processor inhibit (1) instructions whose dependencies have not yet been
  resolved, which manifests as the value of an operand \ensuremath{\Varid{e}} being undefined,
  i.e., \ensuremath{\llbracket \Varid{e}\rrbracket^{\rho}\mathrel{=}\bot}, and (2) instructions that may access stale data,
  i.e., rule \srule{Exec-Load} requires that no store is pending in the
  buffer.}
However the two executions cannot get stuck because schedules \ensuremath{\Conid{D}_{1}} and \ensuremath{\Conid{D}_{2}} are
valid by assumption, and thus each \ensuremath{\mathbf{exec}} directive in the schedules is also
\emph{valid} (Def.~\ref{def:valid-directive}).
Therefore, in the executions \ensuremath{\Conid{C}\;\Downarrow_{\Conid{O}_{1}}^{\Conid{D}_{1}}\;\Conid{C}_{1}} and \ensuremath{\Conid{C}\;\Downarrow_{\Conid{O}_{2}}^{\Conid{D}_{2}}\;\Conid{C}_{2}}
above, the dependencies of each instruction executed must have already been resolved
(out-of-order) or committed (in-order) by previous \ensuremath{\mathbf{exec}} and \ensuremath{\mathbf{retire}}
directives in schedules \ensuremath{\Conid{D}_{1}} and \ensuremath{\Conid{D}_{2}}.
Thus, the operands of each instruction must evaluate to the same value and we
conclude that \ensuremath{\Conid{C}_{1}\mathrel{=}\Conid{C}_{2}}.
%
%  the dependencies of each executed instruction are either committed in the memory store and the variable map (via previous retired instructions)
%  or in resolved
% instructions pending in the reorder buffer.
%
% In \emph{valid} schedules, all |execute| directives
% are also valid (Def.~\ref{def:valid-directive}), thus the data needed by the instruction must have already been committed or evaluated in previous instructions, and thus we conclude that |C1 = C2|.
%
%   Since the processor fetches and retires instruction \emph{in-order},
%   each pair of corresponding |fetch| and |retire| directives in the schedules operates
%   on the same command and instruction, respectively.
% %
% In contrast, the |execute| directives in the schedule can execute the program instructions \emph{out-of-order}.
% %
% However, since (1) each fetched instruction is executed at most once (\emph{valid} schedules), and (2) executions do not rollback, each fetched instruction is executed \emph{exactly once}.

%  and these indexes identify uniquely matching observations.

%   Since the observation traces contain no rollback events, the
%   \emph{valid} schedules |D1| and |D2| contain exactly the same number
%   of |fetch|, |execute|, and |retire| directives, and furthermore the
%   predictions |b| in the |fetch b| directives at corresponding branch
%   points are identical.
% %
To prove that the observation traces are equal up to permutation, i.e., \ensuremath{\Conid{O}_{1}\cong\Conid{O}_{2}}, we \emph{match} individual observations generated by corresponding directives and
instructions.
% |fetch| and |retire| directives, and rely
% on the fact that instructions are execute
%
The observations generated by \ensuremath{\mathbf{fetch}} and \ensuremath{\mathbf{retire}} directives are easier to
match because these directives proceed \emph{in-order}.
%
% In contrast, |execute| directives proceed out-of-order
%  and rely on the assumption that each instruction is executed exactly once
% To related these observations, we pairs of corresponding |fetch| and |retire| instructions (they must )
%   Since the processor fetches and retires instruction \emph{in-order},
%   each pair of corresponding |fetch| and |retire| directives in the schedules operates
%   on the same command and instruction, respectively.
%
To relate their observations, we compute the pairs of matching \ensuremath{\mathbf{fetch}} and
\ensuremath{\mathbf{retire}} directives from the schedules and use the indexes of individual
directives to relate the corresponding observations.
Formally, let \ensuremath{\Conid{D}_{1}\mathrel{=}[\mskip1.5mu (\Varid{d}_{1})_{\mathrm{1}},\mathbin{...},(\Varid{d}_{1})_{\Varid{n}}\mskip1.5mu]} and \ensuremath{\Conid{D}_{2}\mathrel{=}[\mskip1.5mu (\Varid{d}_{2})_{\mathrm{1}},\mathbin{...},(\Varid{d}_{2})_{\Varid{n}}\mskip1.5mu]} be the list of directives of the first and second
schedule, respectively.
Let \ensuremath{((\Varid{d}_{1})_{\Varid{i}},(\Varid{d}_{2})_{\Varid{j}})_{\Varid{k}}} be the pair matching
the \ensuremath{\Varid{k}}-th \ensuremath{\mathbf{fetch}} directives, i.e., where directives \ensuremath{(\Varid{d}_{1})_{\Varid{i}}\mathrel{=}\mathbf{fetch}\mathrel{=}(\Varid{d}_{2})_{\Varid{j}}} (or \ensuremath{(\Varid{d}_{1})_{\Varid{i}}\mathrel{=}\mathbf{fetch}\;\Varid{b}\mathrel{=}(\Varid{d}_{2})_{\Varid{j}}}) are the \ensuremath{\Varid{k}}-th \ensuremath{\mathbf{fetch}} directives in schedules \ensuremath{\Conid{D}_{1}} and
\ensuremath{\Conid{D}_{2}}.
Then, the indexes \ensuremath{\Varid{i}\leftrightarrow\Varid{j}} in each pair indicate that the \ensuremath{\Varid{i}}-th and \ensuremath{\Varid{j}}-th observations in the observation traces \ensuremath{\Conid{O}_{1}} and \ensuremath{\Conid{O}_{2}} are related.
In particular, for each \ensuremath{\mathbf{fetch}} matching pair \ensuremath{((\Varid{d}_{1})_{\Varid{i}},(\Varid{d}_{2})_{\Varid{j}})_{\Varid{k}}} in \ensuremath{\Conid{D}_{1}} and \ensuremath{\Conid{D}_{2}}, we relate the individual observations \ensuremath{(\Varid{o}_{1})_{\Varid{i}}\leftrightarrow(\Varid{o}_{2})_{\Varid{j}}} (where \ensuremath{\Varid{o}_{1}\mathrel{=}\epsilon\mathrel{=}\Varid{o}_{2}}) of the observation traces \ensuremath{\Conid{O}_{1}\mathrel{=}(\Varid{o}_{1})_{\mathrm{1}}\mathbin{...}(\Varid{o}_{1})_{\Varid{n}}} and \ensuremath{\Conid{O}_{2}\mathrel{=}(\Varid{o}_{2})_{\mathrm{1}}\mathbin{...}(\Varid{o}_{2})_{\Varid{n}}}.
% \footnote{Technically, |n| directives can
% generate a greater number of observations because some commands are translated into multiple instructions and thus the index |i| and |j| should be shifted appropriately.}
%
The procedure for matching observations generated by \ensuremath{\mathbf{retire}}
directives is similar.
Since the schedules can execute instruction \emph{out-of-order}, we cannot apply exactly the same technique to \ensuremath{\mathbf{exec}} directives.
Instead, we annotate the instructions in the reorder buffer with a unique index
when they are fetched and relate the observations generated by the instructions
with the same index.
%
% For example, in the annotated reorder buffer |is = [sub i j, ... sub i (j + k)]|, the instruction |sub i j|
% is the |j|-th instruction fetched.
%
Intuitively, these indexes identify uniquely matching observations because each
fetched instruction is executed \emph{exactly once} (as explained above).
Formally, for each instruction \ensuremath{(\Varid{i}_{1})_{\Varid{j}}} whose execution generates the
observation \ensuremath{(\Varid{o}_{1})_{\Varid{j'}}} in the first execution, we have a corresponding
instruction \ensuremath{(\Varid{i}_{2})_{\Varid{j}}} with observation \ensuremath{(\Varid{o}_{2})_{\Varid{j''}}} in the second
execution, where in general \ensuremath{\Varid{j}\neq\Varid{j'}\neq\Varid{j''}} because instructions can be executed out-of-order.
Since these instructions are annotated with the same index \ensuremath{\Varid{j}}, they are
identical, i.e., \ensuremath{\Varid{i}_{1}\mathrel{=}\Varid{i}_{2}}, their operands evaluate to the same value (as
explained above), and thus generate the same observations \ensuremath{\Varid{o}_{1}\mathrel{=}\Varid{o}_{2}} and hence
\ensuremath{(\Varid{o}_{1})_{\Varid{j'}}\leftrightarrow(\Varid{o}_{2})_{\Varid{j''}}}.

%  or otherwise the two executions would not
% reduce to the same final configuration.
\end{proofsketch}

% \begin{lemma}[Adjust Fetch]
%   If |C (bstep (D1 ++ [d] ++ D2) O) C'|, |d = fetch| or |d = fetch b|,
%   |fetch noin D1|, and |fetch| is a \emph{valid} directive for |C|,
%   then |C (bstep ([d] ++ D1 ++ D2) O') C'|, and |O ~= O'|.
% \end{lemma}

% \begin{lemma}[Adjust Execute]
%   If |C (bstep (D1 ++ [d] ++ D2) O) C'|, |d = exec n|
% executing instruction |i| and let |d' = exec n'|
% be the directive for evaluating |i| before |D1|.
% If |d'| is valid,

%   then |C (bstep ([d] ++ D1 ++ D2) O') C'|, and |O ~= O'|.
% \end{lemma}

% and
% directive |d| is \emph{valid} for |C|, then
% \begin{itemize}
% \item If |d = fetch|
% \item If |d = execute n|, then there exist |n'|, such that |C (bstep (D1 ++ [execute n'] ++ D2) O') C'|, and |O ~= O'|
% \item
% \end{itemize}
% \end{lemma}

\begin{definition}[Sequential Schedule]
A schedule \ensuremath{\Conid{D}} is \emph{sequential} for a  configuration \ensuremath{\Conid{C}} iff there exists a final configuration \ensuremath{\Conid{C}'}
and observation trace \ensuremath{\Conid{O}} such that \ensuremath{\Conid{C}\;\Downarrow_{\Conid{O}}^{\Conid{D}}\;\Conid{C}'}
 and:
\begin{enumerate}
\item all instructions are executed in-order, as soon as they are fetched;
\item all instructions are retired as soon as they are executed;
\item no misprediction occurs, i.e., \ensuremath{\mathbf{rollback}\;\not\in\;\Conid{O}}.
\end{enumerate}
\end{definition}

The following lemma shows that we can simulate a sequential execution
on our processor through a sequential schedule.
%
%%  thus translating the sequential
%% execution judgment into a corresponding speculative judgment that
%% generates the same observations.
%
%% In the first lemma, we translate sequential executions into
%% a specific corresponding execution on our idealized processor  \emph{}.

\begin{lemma}[Sequential Consistency]
\label{lemma1}
Given a sequential execution \ensuremath{\langle\mu,\rho\rangle\Downarrow_{\Conid{O}}^{\Varid{c}}\langle\mu',\rho'\rangle},
there exists a valid, sequential schedule \ensuremath{\Conid{D}}, such that \ensuremath{\langle[\mskip1.5mu \mskip1.5mu],[\mskip1.5mu \Varid{c}\mskip1.5mu],\mu,\rho\rangle\Downarrow_{\Conid{O}}^{\Conid{D}}\langle[\mskip1.5mu \mskip1.5mu],[\mskip1.5mu \mskip1.5mu],\mu',\rho'\rangle}.
\end{lemma}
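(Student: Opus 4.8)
The plan is to induct on the derivation of the sequential execution $\langle\mu,\rho\rangle\Downarrow_{\mathcal{O}}^{c}\langle\mu',\rho'\rangle$ (\Cref{fig:sequential-semantics}), building for each rule a schedule fragment that fetches the command, executes the resulting instruction(s) strictly in order, and retires each one immediately — always supplying the branch prediction that agrees with the actual value of the guarded condition, so that every $\mathbf{guard}$ resolves via \srule{Exec-Branch-Ok} and no $\mathbf{rollback}$ is ever emitted. The resulting schedule is then sequential and valid by construction: instructions are fetched and retired in order, each is executed exactly once, and the run terminates with empty reorder buffer and command stack. The one structural ingredient needed to glue sub-schedules together is a \emph{locality} (frame) lemma: if a sequential schedule $\mathcal{D}$ drives $\langle[],[c],\mu,\rho\rangle$ to $\langle[],[],\mu',\rho'\rangle$ with trace $\mathcal{O}$, then the same $\mathcal{D}$ drives $\langle[],c:cs,\mu,\rho\rangle$, for any tail $cs$, to $\langle[],cs,\mu',\rho'\rangle$ when $\mathbf{fail}\notin\mathcal{O}$ and to $\langle[],[],\mu',\rho'\rangle$ when $\mathbf{fail}\in\mathcal{O}$, in both cases with the same trace.

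For the base cases I would exhibit the fragments directly. \srule{Skip} uses $[\mathbf{fetch},\mathbf{retire}]$ (via \srule{Fetch-Skip} then \srule{Retire-Nop}); \srule{Asgn} uses $[\mathbf{fetch},\mathbf{exec}\ 1,\mathbf{retire}]$; the pointer and store cases are analogous. The delicate base case is \srule{Array-Read}: \srule{Fetch-Array-Load} rewrites the read into a bounds-checked conditional and \srule{Fetch-If-True} (predicting $\mathbf{true}$, correct since $n<\mathit{length}(a)$) pushes a $\mathbf{guard}$. Because the schedule is sequential I execute and retire this guard — \srule{Exec-Branch-Ok} turns it into $\mathbf{nop}$, then \srule{Retire-Nop} removes it — \emph{before} fetching the load. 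Hence when \srule{Exec-Load} fires the prefix is empty, $\Moon{[]}=[]$, and the emitted $\mathbf{read}(n',[])$ matches the sequential $\mathbf{read}(n')$ under the convention that an empty pending-identifier list is elided. For \srule{Array-Read-Fail} I instead predict $\mathbf{false}$ (since $e_1$ evaluates to false), fetch and retire the guard as above, then fetch $\mathbf{fail}$ and retire it with \srule{Retire-Fail}, which empties the configuration and emits $\mathbf{fail}(p)\equiv\mathbf{fail}$; the symmetric array/pointer write cases are handled the same way.

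For the inductive cases I would compose sub-schedules using the frame lemma. For \srule{Seq} with premises $\langle\mu,\rho\rangle\Downarrow_{\mathcal{O}_1}^{c_1}\langle\mu',\rho'\rangle$ and $\langle\mu',\rho'\rangle\Downarrow_{\mathcal{O}_2}^{c_2}\langle\mu'',\rho''\rangle$, the induction hypothesis yields sequential schedules $\mathcal{D}_1,\mathcal{D}_2$; the fragment for $c_1;c_2$ is $\mathbf{fetch}:(\mathcal{D}_1\plus\mathcal{D}_2)$, where \srule{Fetch-Seq} produces stack $c_1:c_2:[]$, the first regime of the frame lemma lets $\mathcal{D}_1$ run with tail $[c_2]$ (reaching $\langle[],[c_2],\mu',\rho'\rangle$), and $\mathcal{D}_2$ then finishes, giving trace $\epsilon\cdot\mathcal{O}_1\cdot\mathcal{O}_2=\mathcal{O}_1\cdot\mathcal{O}_2$. \srule{Seq-Fail} invokes instead the second regime: since $\mathbf{fail}\in\mathcal{O}_1$, running $\mathcal{D}_1$ from $c_1:c_2:[]$ ends with \srule{Retire-Fail} erasing the whole stack — the still-pending $c_2$ included — reaching $\langle[],[],\mu',\rho'\rangle$ as required. \srule{If-Then-Else} and \srule{While-True}/\srule{While-False} are handled by fetching the (possibly \srule{Fetch-While}-unfolded) conditional with the prediction $b=\llbracket e\rrbracket^{\rho}$ dictated by the premise, resolving the guard without misprediction, and then invoking the induction hypothesis on the taken branch through the frame lemma.

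I expect the frame lemma to be the main obstacle, for two reasons. First, a $\mathbf{guard}$ fetched while processing $c$ records the \emph{current} command stack in its rollback component (e.g. $\mathbf{guard}(e^b,c_2:cs,p)$), so running a fragment below a longer tail $cs$ genuinely changes the intermediate configurations; the argument must show these differences are confined to the never-used rollback fields and are therefore invisible both to the final configuration and to the silent observation a correctly-predicted guard emits. Second, the lemma's two regimes — tail $cs$ preserved when $\mathbf{fail}\notin\mathcal{O}$ versus the entire stack erased when $\mathbf{fail}\in\mathcal{O}$ — must be established together, by a careful induction on the length of $\mathcal{D}$ showing that each individual $\mathbf{fetch}$/$\mathbf{exec}$/$\mathbf{retire}$ step commutes with appending $cs$ below the stack, with \srule{Retire-Fail} as the sole step that consumes the tail. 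The remaining routine checks — that the construction satisfies the sequentiality clauses (in-order immediate execution and retirement, no rollback) and validity (\Cref{def:valid}: full evaluation with each fetched instruction executed at most once) — follow immediately from the shape of the fragments.
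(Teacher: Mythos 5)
Your proposal is correct and follows essentially the same route as the paper's proof: induction on the sequential derivation, constructing for each rule a sequential schedule that supplies the correct branch prediction (so every guard resolves via \srule{Exec-Branch-Ok} and the pending-identifier lists stay empty), and composing sub-schedules for \srule{Seq}, \srule{Seq-Fail}, conditionals and loops. Your explicit frame lemma is simply a more careful formulation of the ``lifting'' step the paper invokes informally in the \srule{Seq} case, and your observation that the tail only affects the never-used rollback components of correctly-predicted guards is exactly the justification the paper leaves implicit.
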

\begin{proof}
The proof is constructive.
We perform induction on the sequential execution judgment \ensuremath{\langle\mu,\rho\rangle\Downarrow_{\Conid{O}}^{\Varid{c}}\langle\mu',\rho'\rangle} defined in Figure~\ref{fig:sequential-semantics} and
we construct the corresponding sequential schedule for it.
Then, we translate the individual sequential reductions in the derivation tree into corresponding speculative reductions that follow the schedule.
In the following, we ignore the security labels that annotate commands.

\vspace{\baselineskip}
\begin{description}[font=\normalfont,wide, labelwidth=!, labelindent=0pt]
\setlength{\parskip}{\baselineskip}
\item[Case \srule{Skip}.] We define the sequential schedule \ensuremath{\Conid{D}\mathrel{=}[\mskip1.5mu \mathbf{fetch},\mathbf{retire}\mskip1.5mu]} and construct the corresponding speculative big-step reductions consisting of \srule{Step} applied to \srule{Fetch-Skip}, \srule{Step} applied to \srule{Retire-Nop} and \srule{Done}.
These reductions form the big-step \ensuremath{\langle[\mskip1.5mu \mskip1.5mu],[\mskip1.5mu \mathbf{skip}\mskip1.5mu],\mu,\rho\rangle\Downarrow_{\epsilon}^{\Conid{D}}\langle[\mskip1.5mu \mskip1.5mu],[\mskip1.5mu \mskip1.5mu],\mu,\rho\rangle} that corresponds to the sequential reduction \ensuremath{\langle\mu,\rho\rangle\Downarrow_{\epsilon}^{\mathbf{skip}}\langle\mu,\rho\rangle}.
\item[Case \srule{Fail}.] Analogous to the previous case, but using rules \srule{Fetch-Fail} and \srule{Retire-Fail}. Notice that in this case the  observation trace generated in the sequential execution contains only observation \ensuremath{\mathbf{fail}(\Varid{p})} for some fresh identifier \ensuremath{\Varid{p}}, which corresponds to the trace generated by the sequential schedule up to silent events and fail identifiers, i.e., \ensuremath{\epsilon\;{{\mkern-2mu\cdot\mkern-2mu}}\;\mathbf{fail}\mathrel{=}\mathbf{fail}(\Varid{p})}.
\item[Case \srule{Asgn}.] We define the sequential schedule \ensuremath{\Conid{D}\mathrel{=}[\mskip1.5mu \mathbf{fetch},\mathbf{exec}\;\mathrm{1},\mathbf{retire}\mskip1.5mu]} and construct the corresponding speculative big-step reduction using rules \srule{Fetch-Asgn}, \srule{Execute} applied to \srule{Exec-Asgn}, and \srule{Retire-Asgn}.
These reductions form the big-step \ensuremath{\langle[\mskip1.5mu \mskip1.5mu],\Varid{x}\mathbin{:=}\Varid{e},\mu,\rho\rangle\Downarrow_{\epsilon}^{\Conid{D}}\langle[\mskip1.5mu \mskip1.5mu],[\mskip1.5mu \mskip1.5mu],\mu,\update{\rho}{\Varid{x}}{\llbracket \Varid{e}\rrbracket^{\rho'}}\rangle}, where \ensuremath{\rho'} is the transient variable map computed in rule \srule{Execute}.
Since the reorder buffer is initially empty, the transient map is identical to the initial variable map, i.e., \ensuremath{\rho'\mathrel{=}\phi(\rho,[\mskip1.5mu \mskip1.5mu])\mathrel{=}\rho}, and thus the speculative reduction corresponds to the sequential reduction \ensuremath{\langle\mu,\rho\rangle\Downarrow_{\epsilon}^{\Varid{x}\mathbin{:=}\Varid{e}}\langle\mu,\update{\rho}{\Varid{x}}{\llbracket \Varid{e}\rrbracket^{\rho}}\rangle}.
%
%   Notice that in rule \srule{Execute} the reorder buffer |is| contains
%   only the assignment instruction |x := e|, i.e., |is = [ x := e
%   ]|, therefore the prefix of the buffer used to compute the transient
%   variable map |rho'| is \emph{empty} and the transient map is
%   identical to the initial map |rho|, i.e., |rho' = eval phi (rho,
%   []) = rho|.
% %
% As a result, both executions assign value |v = denot e rho = denot e rho'| to variable |x| in the variable maps, which remain identical in the final configurations.
%

\item[Case \srule{Ptr-Read}.]
The translation is similar to the previous case and  follows the same sequential schedule \ensuremath{\Conid{D}\mathrel{=}[\mskip1.5mu \mathbf{fetch},\mathbf{exec}\;\mathrm{1},\mathbf{retire}\mskip1.5mu]}.
 % and \srule{Fetch-Ptr-Store}, \srule{Exec-Store}, \srule{Retire-Store},
%
First we apply rule \srule{Fetch-Ptr-Read}, which fetches command \ensuremath{\Varid{x}\mathbin{:=}\ast\Varid{e}} from the command stack and inserts the corresponding instruction \ensuremath{\Varid{x}\mathbin{:=}\mathbf{load}(\Varid{e})} in the empty reorder buffer.
Then, we apply rule \srule{Exec-Load}, which evaluates \ensuremath{\Varid{e}} to the same address \ensuremath{\Varid{n}} obtained in \srule{Ptr-Read} (as explained above).
Notice that the premise \ensuremath{\mathbf{store}(\anonymous ,\anonymous )\;\not\in\;\Varid{is}_{1}} of rule
\srule{Exec-Load} holds trivially, because the reorder buffer is initially empty and thus \ensuremath{\Varid{is}_{1}\mathrel{=}[\mskip1.5mu \mskip1.5mu]}.
The memory store in the sequential and speculative reductions are equal by assumption, therefore the same value \ensuremath{\Varid{v}\mathrel{=}\mu(\Varid{n})} is loaded from memory and assigned to variable \ensuremath{\Varid{x}} and the final variable maps remain equal after applying rule \srule{Retire-Asgn} like in the previous case.
As a result, we obtain the speculative reduction \ensuremath{\langle[\mskip1.5mu \mskip1.5mu],[\mskip1.5mu \Varid{x}\mathbin{:=}\ast\Varid{e}\mskip1.5mu],\mu,\rho\rangle\Downarrow_{\mathbf{read}(\Varid{n},[\mskip1.5mu \mskip1.5mu])}^{\Conid{D}}\langle[\mskip1.5mu \mskip1.5mu],[\mskip1.5mu \mskip1.5mu],\mu,\update{\rho}{\Varid{x}}{\mu(\Varid{n})}\rangle} corresponding to the sequential reduction \ensuremath{\langle\mu,\rho\rangle\Downarrow_{\mathbf{read}(\Varid{n})}^{\Varid{x}\mathbin{:=}\ast\Varid{e}}\langle\mu,\update{\rho}{\Varid{x}}{\mu(\Varid{n})}\rangle}.
Notice that the observations \ensuremath{\mathbf{read}(\Varid{n},[\mskip1.5mu \mskip1.5mu])} and \ensuremath{\mathbf{read}(\Varid{n})} are considered equivalent because the list of guard identifiers is \emph{empty}, i.e., \ensuremath{\Moon{\Varid{is}_{1}}\mathrel{=}[\mskip1.5mu \mskip1.5mu]} from \ensuremath{\Varid{is}_{1}\mathrel{=}[\mskip1.5mu \mskip1.5mu]} in rule \srule{Exec-Load}.

% Lastly, the sequential and speculative observation traces are equal up to silent observations |eps| and empty

%  reorder buffer contains no guard instruction, thus the list
% of pending guard identifiers in the observation |read(n,[])| is \emph{empty} and thus it corresponds to
% the observation generated in the speculative step
% corresponds to the sequential observation |read(n)| because .

%  (2) the list of guard identifiers ps stored in the observations isalsoempty,i.e.,􏰍is1􏰎=[]sinceis1=[].Furthermore,theinitial variable maps and memory stores in the sequential and speculative executions are equal, therefore the operands of the instructions evaluate to the same value (as explained above). Therefore, the events generated by the speculative steps, i.e., read(n, [ ]) and

% write(n,[ ]), correspond to those generated by the sequential steps, i.e., read(n) and write(n), up to the empty list of observations.

% reductions read the same memory address or write the same value to the same memory address, therefore the variable maps and memorystoresremainequalinthefinalconfigurationsofthetwo executions.

\item[Case \srule{Ptr-Write}.] Analogous to case \srule{Ptr-Read}.
\item[Case \srule{Array-Read}.]
The sequential reduction reveals that the array is read \emph{in-bounds}, therefore we ensure that the speculative execution follows the correct branch after the bounds check by supplying prediction \ensuremath{\mathbf{true}} for the bounds-check condition.
Formally, we define  the sequential schedule \ensuremath{\Conid{D}\mathrel{=}[\mskip1.5mu \mathbf{fetch},\mathbf{fetch}\;\mathbf{true},\mathbf{exec}\;\mathrm{1},\mathbf{retire},\mathbf{fetch},\mathbf{exec}\;\mathrm{1},\mathbf{retire}\mskip1.5mu]}  for rule \srule{Array-Read}.
The first \ensuremath{\mathbf{fetch}} directive is processed by rule \srule{Fetch-Array-Load}, which converts command \ensuremath{\Varid{x}\mathbin{:=}\Varid{a}{[}\Varid{e}_{1}{]}} into the corresponding bounds-checked pointer read, i.e., command \ensuremath{\mathbf{if}}\ \ensuremath{\Varid{e}}\ \ensuremath{\mathbf{then}}\ \ensuremath{\Varid{x}\mathbin{:=}\ast\Varid{e'}}\ \ensuremath{\mathbf{else}}\ \ensuremath{\mathbf{fail}}, where expression \ensuremath{\Varid{e}\mathrel{=}\Varid{e}_{1}<\mathit{length}(\Varid{a})} is the bounds-check condition and expression \ensuremath{\Varid{e'}\mathrel{=}\Varid{base}(\Varid{a})\mathbin{+}\Varid{e}_{1}} computes the memory address for the pointer.
Then, directive \ensuremath{\mathbf{fetch}\;\mathbf{true}} is processed by rule
\srule{Fetch-If-True}, which speculatively follows the \ensuremath{\mathbf{then}}
branch and produces the guard \ensuremath{\mathbf{guard}(\Varid{e}^{\mathbf{true}},[\mskip1.5mu \mathbf{fail}\mskip1.5mu],\Varid{p})} for some
fresh guard identifier \ensuremath{\Varid{p}}, and pushes command \ensuremath{\Varid{x}\mathbin{:=}\ast\Varid{e'}} on the
empty command stack.
The guard instruction is then immediately resolved by rule \srule{Execute} applied to \srule{Exec-Branch-Ok} which consumes the first directive \ensuremath{\mathbf{exec}\;\mathrm{1}}.
In rule \srule{Array-Read}, the premises \ensuremath{\Varid{n}\mathrel{=}\llbracket \Varid{e}\rrbracket^{\rho}} and \ensuremath{\Varid{n}<\mathit{length}(\Varid{a})} ensure that the bounds-check condition succeeds as predicted in rule
\srule{Exec-Branch-Ok}, i.e., \ensuremath{\llbracket \Varid{e}\rrbracket^{\rho}\mathrel{=}\mathbf{true}}, which then rewrites the guard instruction to \ensuremath{\mathbf{nop}}.
Then, \ensuremath{\mathbf{nop}} is retired by the next \ensuremath{\mathbf{retire}} directive and the remaining directives \ensuremath{[\mskip1.5mu \mathbf{fetch},\mathbf{exec}\;\mathrm{1},\mathbf{retire}\mskip1.5mu]} process the pointer read as in case \srule{Ptr-Read}.

\item[Case \srule{Array-Write}.] Analogous to case \srule{Array-Read}.

\item[Case \srule{Array-Read-Fail}.]
The sequential reduction reveals that the array is read  \emph{out-of-bounds}, therefore we supply prediction \ensuremath{\mathbf{false}} to the bounds-check guard in the speculative reduction.
Formally, we define  the sequential schedule \ensuremath{\Conid{D}\mathrel{=}[\mskip1.5mu \mathbf{fetch},\mathbf{fetch}\;\mathbf{false},\mathbf{exec}\;\mathrm{1},\mathbf{retire},\mathbf{fetch},\mathbf{retire}\mskip1.5mu]} for rule \srule{Array-Read-Fail}.
The first \ensuremath{\mathbf{fetch}} directive is processed by rule
\srule{Fetch-Array-Load} as in case \srule{Array-Read}.
Then, directive \ensuremath{\mathbf{fetch}\;\mathbf{false}} is processed by rule \srule{Fetch-If-False}, which generates guard \ensuremath{\mathbf{guard}(\Varid{e}^{\mathbf{false}},[\mskip1.5mu \Varid{x}\mathbin{:=}\ast\Varid{e'}\mskip1.5mu],\Varid{p})} and command stack \ensuremath{[\mskip1.5mu \mathbf{fail}\mskip1.5mu]}, instead.
Similarly to the previous case, the \ensuremath{\mathbf{exec}\;\mathrm{1}} directive resolves the guard correctly
via rule \srule{Exec-Branch-Ok}, rewriting it to \ensuremath{\mathbf{nop}}, which is then retired by the next \ensuremath{\mathbf{retire}} directive.
Finally, directive \ensuremath{\mathbf{fetch}} pops \ensuremath{\mathbf{fail}} from the stack, generates a fresh identifier \ensuremath{\Varid{p}} and inserts instruction \ensuremath{\mathbf{fail}(\Varid{p})} in the reorder buffer, which is then retired by directive \ensuremath{\mathbf{retire}}, halting the processor with observation \ensuremath{\mathbf{fail}(\Varid{p})}.

\item[Case \srule{Array-Write-Fail}.] Analogous to case \srule{Array-Read-Fail}.

\item[Case \srule{Protect}.] Analogous to case \srule{Asgn}, \srule{Array-Read}, and \srule{Array-Read-Fail}.

\item[Case \srule{If-Then-Else}.]
Let \ensuremath{\Varid{c}\mathrel{=}\mathbf{if}\;\Varid{e}\;\mathbf{then}\;\Varid{c}_{\mathbf{true}}\;\mathbf{else}\;\Varid{c}_{\mathbf{false}}}
be the command executed in rule \srule{If-Then-Else} and let \ensuremath{\Varid{b}\mathrel{=}\llbracket \Varid{e}\rrbracket^{\rho}} be the value of the conditional
under initial variable map \ensuremath{\rho}.
Then, we define \ensuremath{\Conid{D}_{1}\mathrel{=}[\mskip1.5mu \mathbf{fetch}\;\Varid{b},\mathbf{exec}\;\mathrm{1},\mathbf{retire}\mskip1.5mu]} as the first
part of the corresponding sequential schedule.
First, we consume directive \ensuremath{\mathbf{fetch}\;\Varid{b}} via rule \srule{Fetch-If-\ensuremath{\Varid{b}}}, which follows the correct prediction and inserts instruction \ensuremath{\mathbf{guard}(\Varid{e}^{\Varid{b}},[\mskip1.5mu \Varid{c}_{(\neg\Varid{b})}\mskip1.5mu],\Varid{p})} in the empty buffer and pushes command \ensuremath{\Varid{c}_{\Varid{b}}} on the empty command stack.
Then, we process directive \ensuremath{\mathbf{exec}\;\mathrm{1}}  through rule \srule{Execute} applied to \srule{Exec-Branch-Ok}, which rewrites the guard to instruction \ensuremath{\mathbf{nop}}, which is lastly retired by rule \srule{Retire-Nop}.
By composing these small-steps, we obtain the multi-step reduction \ensuremath{\langle[\mskip1.5mu \mskip1.5mu],[\mskip1.5mu \Varid{c}\mskip1.5mu],\rho,\mu\rangle\stepT{\Conid{D}_{1}}{\epsilon}\langle[\mskip1.5mu \mskip1.5mu],[\mskip1.5mu \Varid{c}_{\Varid{b}}\mskip1.5mu],\rho,\mu\rangle}.
Next, we apply our induction hypothesis to the reduction \ensuremath{\langle\rho,\mu\rangle\Downarrow_{\Conid{O}}^{\Varid{c}_{\Varid{b}}}\langle\rho',\mu'\rangle},
giving us the second part of the sequential schedule, i.e., \ensuremath{\Conid{D}_{2}}, and speculative big-step \ensuremath{\langle[\mskip1.5mu \mskip1.5mu],[\mskip1.5mu \Varid{c}_{\Varid{b}}\mskip1.5mu],\rho,\mu\rangle\Downarrow_{\Conid{O}}^{\Conid{D}_{2}}\langle\rho',\mu'\rangle}.
We then conclude the proof by defining the complete sequential schedule \ensuremath{\Conid{D}\mathrel{=}\Conid{D}_{1}+{\mkern-9mu+}\ \Conid{D}_{2}} and composing the multi-step reduction \ensuremath{\langle[\mskip1.5mu \mskip1.5mu],[\mskip1.5mu \Varid{c}\mskip1.5mu],\rho,\mu\rangle\stepT{\Conid{D}_{1}}{\epsilon}\langle[\mskip1.5mu \mskip1.5mu],[\mskip1.5mu \Varid{c}_{\Varid{b}}\mskip1.5mu],\rho,\mu\rangle} with the big-step reduction \ensuremath{\langle[\mskip1.5mu \mskip1.5mu],[\mskip1.5mu \Varid{c}_{\Varid{b}}\mskip1.5mu],\rho,\mu\rangle\Downarrow_{\Conid{O}}^{\Conid{D}_{2}}\langle\rho',\mu'\rangle}, thus obtaining  the big-step \ensuremath{\langle[\mskip1.5mu \mskip1.5mu],[\mskip1.5mu \Varid{c}\mskip1.5mu],\mu,\rho\rangle\Downarrow_{\Conid{O}}^{\Conid{D}}} \ensuremath{\langle[\mskip1.5mu \mskip1.5mu],[\mskip1.5mu \mskip1.5mu],\mu',\rho'\rangle}.

\item[Case \srule{While-\ensuremath{\Varid{b}}}.]
Both the sequential and the speculative semantics  progressively unroll \ensuremath{\mathbf{while}} loops into a sequence of conditionals, therefore this case follows similarly to case \srule{If-Then-Else}, i.e., using the conditional value \ensuremath{\Varid{b}} computed in the sequential reduction as prediction in the \ensuremath{\mathbf{fetch}\;\Varid{b}} directive of the sequential schedule.

\item[Case \srule{Seq}.]
From the sequential reduction \ensuremath{\langle\mu,\rho\rangle\Downarrow_{(\Conid{O}_{1}\;{{\mkern-2mu\cdot\mkern-2mu}}\;\Conid{O}_{2})}^{\Varid{c}_{1};\Varid{c}_{2}}\langle\mu'',\rho''\rangle}, we have two sub-reductions \ensuremath{\langle\mu,\rho\rangle\Downarrow_{\Conid{O}_{1}}^{\Varid{c}_{1}}\langle\mu',\rho'\rangle} and \ensuremath{\langle\mu',\rho'\rangle\Downarrow_{\Conid{O}_{2}}^{\Varid{c}_{2}}\langle\mu'',\rho''\rangle}, where \ensuremath{\mathbf{fail}\;\not\in\;\Conid{O}_{1}}.
First, we apply our induction hypothesis to the first reduction
\ensuremath{\langle\mu,\rho\rangle\Downarrow_{\Conid{O}_{1}}^{\Varid{c}_{1}}\langle\mu',\rho'\rangle}, obtaining the first part of the sequential schedule, i.e., \ensuremath{\Conid{D}_{1}}, and a big-step reduction
\ensuremath{\langle[\mskip1.5mu \mskip1.5mu],[\mskip1.5mu \Varid{c}_{1}\mskip1.5mu],\mu,\rho\rangle\Downarrow_{\Conid{O}_{1}}^{\Conid{D}_{1}}\langle[\mskip1.5mu \mskip1.5mu],[\mskip1.5mu \mskip1.5mu],\mu',\rho'\rangle}.
Intuitively, we can \emph{lift} this reduction to use initial stack \ensuremath{[\mskip1.5mu \Varid{c}_{1},\Varid{c}_{2}\mskip1.5mu]} and obtain the multi-step reduction \ensuremath{\langle[\mskip1.5mu \mskip1.5mu],[\mskip1.5mu \Varid{c}_{1},\Varid{c}_{2}\mskip1.5mu],\mu,\rho\rangle\stepT{\Conid{D}_{1}}{\Conid{O}_{1}}\langle[\mskip1.5mu \mskip1.5mu],[\mskip1.5mu \Varid{c}_{2}\mskip1.5mu],\mu',\rho'\rangle}, where the second command \ensuremath{\Varid{c}_{2}} remains unchanged on the resulting stack because no failures or rollbacks occur during the execution, i.e., \ensuremath{\mathbf{fail}\;\not\in\;\Conid{O}_{1}} and \ensuremath{\mathbf{rollback}\;\not\in\;\Conid{O}_{1}}
Then, we apply our induction hypothesis to the second reduction \ensuremath{\langle\mu',\rho'\rangle\Downarrow_{\Conid{O}_{2}}^{\Varid{c}_{2}}\langle\mu'',\rho''\rangle} and obtain the second part of the sequential schedule, i.e., \ensuremath{\Conid{D}_{2}}, and a big-step reduction
\ensuremath{\langle[\mskip1.5mu \mskip1.5mu],[\mskip1.5mu \Varid{c}_{2}\mskip1.5mu],\mu',\rho\rangle\Downarrow_{\Conid{O}_{2}}^{\Conid{D}_{2}}\langle[\mskip1.5mu \mskip1.5mu],[\mskip1.5mu \mskip1.5mu],\mu'',\rho''\rangle}.
By composing these reductions, we obtain the big-step reduction \ensuremath{\langle[\mskip1.5mu \mskip1.5mu],[\mskip1.5mu \Varid{c}_{1},\Varid{c}_{2}\mskip1.5mu],\mu,\rho\rangle\Downarrow_{(\Conid{O}_{1}\;{{\mkern-2mu\cdot\mkern-2mu}}\;\Conid{O}_{2})}^{\Conid{D}_{1}+{\mkern-9mu+}\ \Conid{D}_{2}}\langle[\mskip1.5mu \mskip1.5mu],[\mskip1.5mu \mskip1.5mu],\mu'',\rho''\rangle}.
Finally, we define the complete sequential schedule \ensuremath{\Conid{D}\mathrel{=}[\mskip1.5mu \mathbf{fetch}\mskip1.5mu]+{\mkern-9mu+}\ \Conid{D}_{1}+{\mkern-9mu+}\ \Conid{D}_{2}} and compose the small-step
\ensuremath{\langle[\mskip1.5mu \mskip1.5mu],[\mskip1.5mu \Varid{c}_{1};\Varid{c}_{2}\mskip1.5mu],\mu,\rho\rangle\stepT{\mathbf{fetch}}{\epsilon}\langle[\mskip1.5mu \mskip1.5mu],[\mskip1.5mu \Varid{c}_{1},\Varid{c}_{2}\mskip1.5mu],\mu,\rho\rangle} obtained via rule \srule{Fetch-Seq} with the big-step  reduction \ensuremath{\langle[\mskip1.5mu \mskip1.5mu],[\mskip1.5mu \Varid{c}_{1},\Varid{c}_{2}\mskip1.5mu],\mu,\rho\rangle\Downarrow_{(\Conid{O}_{1}\;{{\mkern-2mu\cdot\mkern-2mu}}\;\Conid{O}_{2})}^{\Conid{D}_{1}+{\mkern-9mu+}\ \Conid{D}_{2}}\langle[\mskip1.5mu \mskip1.5mu],[\mskip1.5mu \mskip1.5mu],\mu'',\rho''\rangle}, thus obtaining the corresponding big-step reduction \ensuremath{\langle[\mskip1.5mu \mskip1.5mu],[\mskip1.5mu \Varid{c}_{1};\Varid{c}_{2}\mskip1.5mu],\mu,\rho\rangle\Downarrow_{(\Conid{O}_{1}\;{{\mkern-2mu\cdot\mkern-2mu}}\;\Conid{O}_{2})}^{\Conid{D}}\langle[\mskip1.5mu \mskip1.5mu],[\mskip1.5mu \mskip1.5mu],\mu'',\rho''\rangle}.

% { }
% {|< mu, rho > (bstep (c1;c2) (O1 cdot O2)) < mu'', rho'' >|}
% \and
% \inferrule[Seq-Fail]
% {|< mu, rho > (bstep c1 O1) < mu', rho' > | \\ |fail inset O1|}
% {|< mu, rho > (bstep (c1;c2) O1) < mu', rho' >|}

\item[Case \srule{Seq-Fail}.]
Analogous to case \srule{Seq}.
First, we apply our induction hypothesis to the reduction
\ensuremath{\langle\mu,\rho\rangle\Downarrow_{\Conid{O}_{1}}^{\Varid{c}_{1}}\langle\mu',\rho'\rangle} where \ensuremath{\mathbf{fail}\;\in\;\Conid{O}_{1}}, and obtain part of the sequential schedule, i.e., \ensuremath{\Conid{D}_{1}}, and a big-step reduction
\ensuremath{\langle[\mskip1.5mu \mskip1.5mu],[\mskip1.5mu \Varid{c}_{1}\mskip1.5mu],\mu,\rho\rangle\Downarrow_{\Conid{O}_{1}}^{\Conid{D}_{1}}\langle[\mskip1.5mu \mskip1.5mu],[\mskip1.5mu \mskip1.5mu],\mu',\rho'\rangle}.
In contrast to the previous case, when we lift this reduction
to use initial stack  \ensuremath{[\mskip1.5mu \Varid{c}_{1};\Varid{c}_{2}\mskip1.5mu]}, we obtain the big-step reduction \ensuremath{\langle[\mskip1.5mu \mskip1.5mu],[\mskip1.5mu \Varid{c}_{1},\Varid{c}_{2}\mskip1.5mu],\mu,\rho\rangle\Downarrow_{\Conid{O}_{1}}^{\Conid{D}_{1}}\langle[\mskip1.5mu \mskip1.5mu],[\mskip1.5mu \mskip1.5mu],\mu',\rho'\rangle},
because a failure occurs during the execution of command \ensuremath{\Varid{c}_{1}}, i.e., \ensuremath{\mathbf{rollback}\;\in\;\Conid{O}_{1}}, which aborts the execution and empties the command stack.
Finally, we define the complete sequential schedule \ensuremath{\Conid{D}\mathrel{=}[\mskip1.5mu \mathbf{fetch}\mskip1.5mu]+{\mkern-9mu+}\ \Conid{D}_{1}}  and compose the small-step
\ensuremath{\langle[\mskip1.5mu \mskip1.5mu],[\mskip1.5mu \Varid{c}_{1};\Varid{c}_{2}\mskip1.5mu],\mu,\rho\rangle\stepT{\mathbf{fetch}}{\epsilon}\langle[\mskip1.5mu \mskip1.5mu],[\mskip1.5mu \Varid{c}_{1},\Varid{c}_{2}\mskip1.5mu],\mu,\rho\rangle} obtained via rule \srule{Fetch-Seq} with the big-step reduction \ensuremath{\langle[\mskip1.5mu \mskip1.5mu],[\mskip1.5mu \Varid{c}_{1};\Varid{c}_{2}\mskip1.5mu],\mu,\rho\rangle\Downarrow_{\Conid{O}_{1}}^{\Conid{D}_{1}}\langle[\mskip1.5mu \mskip1.5mu],[\mskip1.5mu \mskip1.5mu],\mu',\rho'\rangle}, thus obtaining the corresponding big-step reduction \ensuremath{\langle[\mskip1.5mu \mskip1.5mu],[\mskip1.5mu \Varid{c}_{1};\Varid{c}_{2}\mskip1.5mu],\mu,\rho\rangle\Downarrow_{\Conid{O}_{1}}^{\Conid{D}}\langle[\mskip1.5mu \mskip1.5mu],[\mskip1.5mu \mskip1.5mu],\mu',\rho'\rangle}.

\end{description}

\end{proof}

\mypara{$J$-Equivalence}
In order to prove consistency between the speculative and sequential
semantics, we make use of an auxiliary relation called
$J$-equivalence, defined in Figure~\ref{fig:j-equiv}.
This relation captures program configurations of two executions that \emph{may} have diverged due to a
single misprediction (without loss of generality we assume that the second
configuration in the relation is at fault).
If \ensuremath{J\mathrel{=}\bot}, then rule \srule{Synch} ensures that the configurations are identical, i.e., the two executions have not diverged.
If \ensuremath{J\;\in\;\mathbb{N}}, the two executions have diverged on the $J+1$-th
(guard) instruction in the reorder buffer, but agree on the first $J$
instructions and have identical architectural state (memory store and
variable map).
Formally, rule \srule{Suc} ensures that the first $J$ instructions in
the buffers are equal i.e., \ensuremath{\Varid{i}_{1}\mathrel{=}\Varid{i}_{2}}, while the command stacks are
unrelated because the executions have followed different paths after they
have diverged, i.e., in general \ensuremath{\Varid{cs}_{1}\neq\Varid{cs}_{2}}.
The $J+1$ instruction in the buffer of the first configuration
(representing the correct execution) is the pending guard instruction,
which is mispredicted in the second configuration.
This instruction represents a \emph{synchronization point} for the two executions, therefore rule \srule{Zero} requires the guard instructions in the
configurations to be unresolved, i.e., \ensuremath{\Varid{i}_{1}\mathrel{=}\mathbf{guard}(\Varid{e}^{\Varid{b}_{1}},\Varid{cs}_{1},\Varid{p})} and \ensuremath{\Varid{i}_{2}\mathrel{=}\mathbf{guard}(\Varid{e}^{\Varid{b}_{2}},\Varid{cs}_{1},\Varid{p})}, with different predictions, i.e., \ensuremath{\Varid{b}_{1}\neq\Varid{b}_{2}}, and that the rollback command stack in \ensuremath{\Varid{i}_{2}} is identical to the stack of the first configuration, i.e., \ensuremath{\Varid{cs}_{1}}.
% \mv[inline]{Footnote: here we could in principle relax this by demanding that the roll-back stack is a subset of the sequential stack.}
%
Notice that the first buffer contains only the guard instruction buffer, which simulates the first execution waiting for the second execution to catch up.
 This ensures that the two executions will fully synchronize again via rule
 \srule{Synch} after the pending guard is executed.
%  which requires the whole
% configurations to be identical.
%

% \mv[inline]{Transitivity for |J|-equivalence (same |J|)? It doesn't seem to hold because we treat the left configuration differently. We should not need it!}

\begin{figure}[t]
%\begin{framed}
\begin{mathpar}
% \inferrule[Zero-Nop]
% {\Red{{|i1 = nop| \\ |i2 = br(lbl e b, cs1, p)|}}}
% {\Red{{|<i1 : [], cs1, mu, rho> (sub (=) 0) < i2 : is2, cs2, mu, rho>|}}}
% \and
\inferrule[Zero]
{\ensuremath{\Varid{i}_{1}\mathrel{=}\mathbf{guard}(\Varid{e}^{\Varid{b}_{1}},\Varid{cs}_{2}',\Varid{p})} \\ \ensuremath{\Varid{i}_{2}\mathrel{=}\mathbf{guard}(\Varid{e}^{\Varid{b}_{2}},\Varid{cs}_{1},\Varid{p})} \\ \ensuremath{\Varid{b}_{1}\neq\Varid{b}_{2}} }
{\ensuremath{\langle\Varid{i}_{1}\mathbin{:}[\mskip1.5mu \mskip1.5mu],\Varid{cs}_{1},\mu,\rho\rangle\mathrel{=}_{\mathrm{0}}\langle\Varid{i}_{2}\mathbin{:}\Varid{is}_{2},\Varid{cs}_{2},\mu,\rho\rangle}}
\and
\inferrule[Suc]
{\ensuremath{\Varid{i}_{1}\mathrel{=}\Varid{i}_{2}} \\ \ensuremath{\langle\Varid{is}_{1},\Varid{cs}_{1},\mu,\rho\rangle\mathrel{=}_{J}\langle\Varid{is}_{2},\Varid{cs}_{2},\mu,\rho\rangle} }
{\ensuremath{\langle\Varid{i}_{1}\mathbin{:}\Varid{is}_{1},\Varid{cs}_{1},\mu,\rho\rangle\mathrel{=}_{\mathrm{1}\mathbin{+}J}\;(\Varid{i}_{2}\mathbin{:}\Varid{is}_{2},\Varid{cs}_{2},\mu,\rho\rangle}}
\and
\inferrule[Synch]
{\ensuremath{\Conid{C}_{1}\mathrel{=}\Conid{C}_{2}}}
{\ensuremath{\Conid{C}_{1}\;\mathrel{=}_{\bot}\;\Conid{C}_{2}}}
\end{mathpar}
\caption{$J$-Equivalence \ensuremath{\Conid{C}_{1}\;\mathrel{=}_{J}\;\Conid{C}_{2}}, where \ensuremath{J\;\in\;\mathbb{N}\;\uplus\;\{\mskip1.5mu \bot\mskip1.5mu\}}.}
\label{fig:j-equiv}
%\end{framed}
\end{figure}

\begin{lemma}[Determinism]
\begin{enumerate}
\item If \ensuremath{\Conid{C}\;\stepT{\Varid{d}}{\Varid{o}_{1}}\;\Conid{C}_{1}} and \ensuremath{\Conid{C}\;\stepT{\Varid{d}}{\Varid{o}_{2}}\;\Conid{C}_{2}}, then \ensuremath{\Conid{C}_{1}\mathrel{=}\Conid{C}_{2}} and \ensuremath{\Varid{o}_{1}\mathrel{=}\Varid{o}_{2}}.
\item If \ensuremath{\Conid{C}\;\Downarrow_{\Conid{O}_{1}}^{\Conid{D}}\;\Conid{C}_{1}} and \ensuremath{\Conid{C}\;\Downarrow_{\Conid{O}_{2}}^{\Conid{D}}\;\Conid{C}_{2}}, then \ensuremath{\Conid{C}_{1}\mathrel{=}\Conid{C}_{2}} and \ensuremath{\Conid{O}_{1}\mathrel{=}\Conid{O}_{2}}.
\end{enumerate}
\label{lemma:determinism}
\end{lemma}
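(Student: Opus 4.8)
The plan is to prove both parts by purely syntactic reasoning: part~(1) by a case analysis on the directive $d$, and part~(2) by induction on the schedule $D$ with part~(1) as the key lemma. The guiding observation is that the auxiliary relation $\rightsquigarrow$ is applied to a single, uniquely determined instruction, and that the evaluation function $\llbracket\cdot\rrbracket^{\rho}$, the transient variable map $\phi$, the identifier extraction $\Moon{\cdot}$, and all the list operations ($\plus$, indexing) are \emph{functions} of their inputs. Hence the only possible source of nondeterminism would be two distinct reduction rules firing under the same directive on the same configuration, and the whole proof reduces to ruling this out.

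For part~(1) I would proceed by cases on $d$. When $d = \mathbf{fetch}$, the applicable rule is fixed by the head command of the stack $cs$: each syntactic command form (skip, fail, assignment, sequence, pointer/array load and store, while, and the protect commands) is matched by exactly one fetch rule, and a conditional does not step under bare $\mathbf{fetch}$. When $d = \mathbf{fetch}\ b$, only a conditional at the head steps, and the prediction $b$ selects \srule{Fetch-If-True} or \srule{Fetch-If-False} uniquely. When $d = \mathbf{exec}\ n$, rule \srule{Execute} splits the buffer as $is_1 \plus [i] \plus is_2$ with $|is_1| = n-1$; since list indexing is deterministic this fixes $is_1$, $i$, $is_2$, and hence $\rho' = \phi(is_1,\rho)$. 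It then suffices to check that $\rightsquigarrow$ is deterministic on $i$, which I would do by verifying that its rules are pairwise mutually exclusive by the syntactic form of $i$ together with their side conditions---e.g.\ \srule{Exec-Branch-Ok} versus \srule{Exec-Branch-Mispredict} are separated by $\llbracket e\rrbracket^{\rho} = b$ versus $\neq b$. Finally, when $d = \mathbf{retire}$, the head instruction of $is$ selects at most one of \srule{Retire-Nop}, \srule{Retire-Asgn}, \srule{Retire-Store}, \srule{Retire-Fail} (any other head is stuck, which is vacuously deterministic). In every case the resulting configuration and the emitted observation $o$ are computed from the premises by the functions above, so $C_1 = C_2$ and $o_1 = o_2$.

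For part~(2) I would induct on $D$. If $D = [\,]$, only \srule{Done} applies---\srule{Step} concludes a judgment with a non-empty schedule $d : D$---so both derivations are the identity and $C_1 = C_2$, $O_1 = \epsilon = O_2$. If $D = d : D'$, only \srule{Step} applies, giving intermediate steps $C \stepT{d}{o_i} C_i'$ and tails $C_i' \Downarrow_{O_i'}^{D'} C_i$ with $O_i = o_i \cdot O_i'$ for $i \in \{1,2\}$. Part~(1) yields $C_1' = C_2'$ and $o_1 = o_2$, and the induction hypothesis, applied at the common configuration $C_1' = C_2'$, gives $C_1 = C_2$ and $O_1' = O_2'$; concatenating observations then gives $O_1 = O_2$.

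The only delicate point is the treatment of fresh identifiers in \srule{Fetch-Fail} and \srule{Fetch-If-True/False}: for the \emph{strict} equalities claimed here, $\fresh{p}$ must select a canonical identifier that is a function of the configuration (e.g.\ the least unused natural number) rather than an arbitrary fresh one, and I would make this convention explicit so that the generated $p$ coincides across the two derivations. With that in place, the remaining work is the routine but careful verification that the exec-stage auxiliary rules partition their domain---in particular that the apparent overlap between \srule{Exec-Protect$_1$} and \srule{Exec-Protect$_2$} is excluded by the value/non-value distinction on the protected argument, so that exactly one rule fires.
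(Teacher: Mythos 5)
Your proposal is correct and follows essentially the same route as the paper, which disposes of part~(1) by observing that the directive and configuration uniquely determine the applicable rule and of part~(2) by induction using part~(1); your version simply spells out the case analysis in full. Your caveat about $\fresh{p}$ needing to be a deterministic function of the configuration is well taken and matches the convention the paper adopts (a footnote elsewhere assumes the fresh-identifier generation is deterministic, e.g.\ via a counter).
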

\begin{proof}
  For Lemma~\ref{lemma:determinism}.1 (\emph{single-step determinism}),
  we observe that the directive \ensuremath{\Varid{d}} and the configuration \ensuremath{\Conid{C}}
  \emph{uniquely} and \emph{fully} determine which rule of the
  semantics can be executed and the observation generated. For
  Lemma~\ref{lemma:determinism}.2 (\emph{big-step determinism}), we
  apply single-step determinism (Lemma~\ref{lemma:determinism}.1) to
  each small-step reduction and induction.
\end{proof}

\begin{lemma}[Single-Step Consistency]
If \ensuremath{\Conid{C}_{1}\;=_{J}\;\Conid{C}_{2}}, \ensuremath{\Conid{C}_{1}\;\stepT{\Varid{d}}{\Varid{o}_{1}}\;\Conid{C}_{1}'}, \ensuremath{\Conid{C}_{2}\;\stepT{\Varid{d}}{\Varid{o}_{2}}\;\Conid{C}_{2}'} and either:
\begin{enumerate}
\item \ensuremath{J\mathrel{=}\bot};
\item \ensuremath{J\;\in\;\mathbb{N}} and \ensuremath{\Varid{d}\mathrel{=}\mathbf{retire}}
\item \ensuremath{J\;\in\;\mathbb{N}}, \ensuremath{\Varid{d}\mathrel{=}\mathbf{exec}\;\Varid{n}}, and \ensuremath{\Varid{n}<J\mathbin{+}\mathrm{1}};
% \item |J inset (lbl Nat (+))|, |d = execute n|, and |n :<: J|;
\end{enumerate}
Then \ensuremath{\Varid{o}_{1}\mathrel{=}\Varid{o}_{2}} and there exists \ensuremath{\Conid{J'}} such that \ensuremath{\Conid{C}_{1}'\;\mathrel{=}_{\Conid{J'}}\;\Conid{C}_{2}'}.
\label{lemma:single-step-consistency}
\end{lemma}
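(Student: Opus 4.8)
The plan is to proceed by case analysis on the three disjuncts in the hypothesis, in each case inverting the derivation of $C_1 =_J C_2$ and then exploiting that the portions of the two reorder buffers on which $C_1$ and $C_2$ agree are exactly those the directive $d$ is permitted to touch. Throughout I would use single-step determinism (Lemma~\ref{lemma:determinism}.1) together with the observation that $J$-equivalence is insensitive to a uniform change of the common architectural state $(\mu,\rho)$: rules \srule{Zero}, \srule{Suc}, and \srule{Synch} constrain only the instruction buffers and command stacks (and require the two architectural components to coincide), not the particular values they store, so any derivation of $C_1 =_{J} C_2$ survives replacing the shared $(\mu,\rho)$ by any other shared $(\mu',\rho')$.

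For the first case $J=\bot$, rule \srule{Synch} gives $C_1 = C_2$; applying the same directive $d$ to identical configurations yields $C_1' = C_2'$ and $o_1 = o_2$ by determinism, and I close with $J' = \bot$. For the second case ($J\in\mathbb{N}$, $d=\mathbf{retire}$), I would first note that $J\geq 1$: if $J=0$ then by \srule{Zero} both buffers carry a $\mathbf{guard}$ at their head, which no \srule{Retire} rule can consume, contradicting the assumption that both configurations step, so this sub-case is vacuous. When $J\geq 1$, inverting \srule{Suc} shows the two buffers share their head instruction $i$ and that the tails are $(J{-}1)$-equivalent under the same $(\mu,\rho)$. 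Retiring $i$ emits the same observation and performs the same update to $(\mu,\rho)$ on both sides; if $i=\mathbf{fail}(p)$ then \srule{Retire-Fail} empties both sides to the identical $\langle[\,],[\,],\mu,\rho\rangle$ and I take $J'=\bot$, while for $\mathbf{nop}$, assignment, or store the common tails remain $(J{-}1)$-equivalent after the uniform architectural update, giving $J'=J-1$.

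The third case ($J\in\mathbb{N}$, $d=\mathbf{exec}\,n$, $n\leq J$) is the main obstacle and needs the most care. The constraint $n\leq J$ places the executed instruction strictly inside the shared prefix, so by repeatedly inverting \srule{Suc} the length-$(n{-}1)$ prefixes and the $n$-th instruction $i_n$ of the two buffers coincide. Consequently the transient variable map $\phi$ computed from the common prefix agrees, and since the auxiliary relation $\rightsquigarrow$ (rules \srule{Exec-Asgn}, \srule{Exec-Load}, \srule{Exec-Branch-Ok}, \srule{Exec-Branch-Mispredict}, and the rest) reads only $i_n$, the prefix, and $(\mu,\rho)$, it rewrites $i_n$ to the same instruction and emits the same observation $o_1=o_2$; in particular the pending-identifier list $\Moon{is_1}$ matches because it is a function of the shared prefix alone. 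I would then split on whether executing $i_n$ triggers a rollback. If it does not, the rewritten instruction is reinserted in place, so both buffers retain their first $J$ (now-updated) instructions followed by the divergence guard and the witness for $J$-equivalence is preserved verbatim, giving $J'=J$. If $i_n$ is a guard that mispredicts, \srule{Exec-Branch-Mispredict} truncates both buffers to the common prefix plus $\mathbf{nop}$ and installs the rollback stack recorded in $i_n$; since $i_n$ is literally identical in $C_1$ and $C_2$, both configurations become identical and I take $J'=\bot$.

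The delicate points I expect to spend effort on are: verifying that the mispredict branch of case three collapses $J$-equivalence all the way to $\bot$ rather than leaving a stale index (this follows because the discarded suffix contains the divergence guard); confirming that no rule of $\rightsquigarrow$ covertly depends on the differing suffix $is_2$, which is why I would enumerate the execute rules and check that each reads only prefix-local data; and handling the boundary $J=1$ in the retire case, where $J'=0$ is re-established by \srule{Zero}. These are all routine once the structural invariant --- ``$C_1$'s buffer is the shared prefix followed solely by the divergence guard, and $C_2$'s is the shared prefix, the mispredicted guard, then arbitrary $is_2$'' --- is made explicit and tracked through each rewrite.
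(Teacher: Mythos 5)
Your proposal is correct and follows essentially the same route as the paper's proof: case~1 by determinism via \srule{Synch}, case~2 by ruling out $J=0$ and retiring the identical head instruction (with $J'=J-1$, or $\bot$ for \srule{Retire-Fail}), and case~3 by observing that the shared prefix forces equal transient variable maps and hence identical rewrites and observations, with the mispredict sub-case flushing the divergence point and collapsing to $=_\bot$. The structural invariant you make explicit at the end is exactly what the paper's \srule{Zero}/\srule{Suc} inversion delivers, so no gap remains.
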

\begin{proof}
\begin{enumerate}[wide, labelwidth=!, labelindent=0pt]
\item Trivial. We derive \ensuremath{\Conid{C}_{1}\mathrel{=}\Conid{C}_{2}} from \ensuremath{\Conid{C}_{1}\;\mathrel{=}_{\bot}\;\Conid{C}_{2}} (rule \srule{Synch}), then we apply \emph{single-step determinism} (Lemma~\ref{lemma:determinism}.1) and obtain \ensuremath{\Varid{o}_{1}\mathrel{=}\Varid{o}_{2}} and \ensuremath{\Conid{C}_{1}'\mathrel{=}\Conid{C}_{2}'}, which implies \ensuremath{\Conid{C}_{1}'\;\mathrel{=}_{\bot}\;\Conid{C}_{2}'} by rule \srule{Synch}.
\item Let \ensuremath{\Varid{is}_{1}} and \ensuremath{\Varid{is}_{2}} be the reorder buffers of \ensuremath{\Conid{C}_{1}\;=_{J}\;\Conid{C}_{2}}, respectively.
If \ensuremath{\Varid{d}\mathrel{=}\mathbf{retire}}, then the index \ensuremath{J} must be non-zero, i.e., \ensuremath{J>\mathrm{0}},\footnote{If \ensuremath{J\mathrel{=}\mathrm{0}}, then \ensuremath{\Conid{C}_{1}\;\mathrel{=}_{\mathrm{0}}\;\Conid{C}_{2}} implies that \ensuremath{\Conid{C}_{2}} retires a \emph{pending} guard instruction (rule \srule{Zero}), which
contradicts the single-step \ensuremath{\Conid{C}_{2}\;\stepT{\mathbf{retire}}{\Varid{o}_{2}}\;\Conid{C}_{2}'}.} therefore there exists \ensuremath{\Conid{J'}} such that \ensuremath{\Conid{C}_{1}\;\mathrel{=}_{\mathrm{1}\mathbin{+}\Conid{J'}}\;\Conid{C}_{2}}.
Then, rule \srule{Suc} implies that \ensuremath{\Varid{is}_{1}\mathrel{=}\Varid{i}_{1}\mathbin{:}\Varid{is}_{1}'} and \ensuremath{\Varid{is}_{2}\mathrel{=}\Varid{i}_{2}\mathbin{:}\Varid{is}_{2}'} where \ensuremath{\Varid{i}_{1}\mathrel{=}\Varid{i}_{2}}, and thus \ensuremath{\Conid{C}_{1}} and \ensuremath{\Conid{C}_{2}} retire \emph{identical} instructions.
Therefore, the reductions preserve equality of the variable map and memory store and generate the identical observations, i.e., \ensuremath{\Varid{o}_{1}\mathrel{=}\Varid{o}_{2}}.
Since instructions \ensuremath{\Varid{i}_{1}\mathrel{=}\Varid{i}_{2}} are removed from the respective reorder buffer, the resulting configurations are \ensuremath{\Conid{J'}}-equivalence, i.e., \ensuremath{\Conid{C}_{1}'\;\mathrel{=}_{\Conid{J'}}\;\Conid{C}_{2}'} for rules \srule{Retire-Nop}, \srule{Retire-Asgn}, and \srule{Retire-Store}, and \ensuremath{\Conid{C}_{1}'\;\mathrel{=}_{\bot}\;\Conid{C}_{2}'} for rule \srule{Retire-Fail}. \hfill
\item Let \ensuremath{\Varid{is}_{1}\mathrel{=}\Varid{is}_{1}'+{\mkern-9mu+}\ [\mskip1.5mu \Varid{i}_{1}\mskip1.5mu]+{\mkern-9mu+}\ \Varid{is}_{1}''} such that \ensuremath{{\vert}\Varid{is}_{1}'{\vert}\mathrel{=}\Varid{n}\mathbin{-}\mathrm{1}}
and thus \ensuremath{\Varid{i}_{1}} is the instruction executed by \ensuremath{\Conid{C}_{1}} and let \ensuremath{\Varid{is}_{2}} in \ensuremath{\Conid{C}_{2}} be similarly decomposed according to rule \srule{Execute} (Fig.~\ref{fig:full-execute-stage}).
Since \ensuremath{\Conid{C}_{1}\;=_{J}\;\Conid{C}_{2}}, then the configurations have equal memory stores, i.e., \ensuremath{\mu_{1}\mathrel{=}\mu_{2}}, variable maps, i.e., \ensuremath{\rho_{1}\mathrel{=}\rho_{2}}, executed instructions \ensuremath{\Varid{i}_{1}\mathrel{=}\Varid{i}_{2}}, and instruction prefixes \ensuremath{\Varid{is}_{1}'\mathrel{=}\Varid{is}_{2}'} since \ensuremath{\Varid{n}<J\mathbin{+}\mathrm{1}}.
Therefore, the \emph{transient variable maps} computed in rule
\srule{Execute} are the same, i.e., \ensuremath{\rho_{1}'\mathrel{=}\phi(\Varid{is}_{1},\rho_{1})\mathrel{=}\phi(\Varid{is}_{2},\rho_{2})\mathrel{=}\rho_{2}'}.
Thus, the configurations \ensuremath{\Conid{C}_{1}\;=_{J}\;\Conid{C}_{2}} execute the same instruction \ensuremath{\Varid{i}_{1}\mathrel{=}\Varid{i}_{2}}, under the same transient variable maps \ensuremath{\rho_{1}'\mathrel{=}\rho_{2}'} and memories \ensuremath{\mu_{1}\mathrel{=}\mu_{2}}, i.e., \ensuremath{\langle\Varid{is}_{1}',\Varid{i}_{1},\Varid{is}_{1}'',\Varid{cs}_{1}\rangle\xrsquigarrow{(\mu_{1},\rho_{1}',\Varid{o}_{1})}\langle\Varid{is}_{1}''',\Varid{cs}_{1}'\rangle} and
\ensuremath{\langle\Varid{is}_{2}',\Varid{i}_{2},\Varid{is}_{2}'',\Varid{cs}_{2}\rangle\xrsquigarrow{(\mu_{2},\rho_{2}',\Varid{o}_{2})}\langle\Varid{is}_{2}''',\Varid{cs}_{2}'\rangle}.
Since rule \srule{Execute} does not commit changes to the variable map and memory store, it suffices to show that \ensuremath{\Varid{o}_{1}\mathrel{=}\Varid{o}_{2}} and that there exists a \ensuremath{\Conid{J'}} such that the first \ensuremath{\Conid{J'}} instructions in \ensuremath{\Varid{is}_{1}'''} and \ensuremath{\Varid{is}_{2}'''} are identical.
Since \ensuremath{\Varid{is}_{1}'\mathrel{=}\Varid{is}_{2}'}, \ensuremath{\Varid{i}_{1}\mathrel{=}\Varid{i}_{2}}, and \ensuremath{\rho_{1}'\mathrel{=}\rho_{2}'}, the configurations step according to the same rule,
evaluate equal expressions with equal transient maps and  produce equal resolved instructions, i.e., \ensuremath{\Varid{i}_{1}'\mathrel{=}\Varid{i}_{2}'} in rules \srule{Exec-Asgn}, \srule{Exec-Load}, \srule{Exec-Store} and \ensuremath{\mathbf{nop}} in rule \srule{Exec-Branch-Ok} and \srule{Exec-Branch-Mispredict}, and thus \ensuremath{\Varid{o}_{1}\mathrel{=}\Varid{o}_{2}}.
Furthermore, for rule \srule{Exec-Branch-Mispredict} the subfixes \ensuremath{\Varid{is}_{1}''} and \ensuremath{\Varid{is}_{2}''} containing the mispredicted guard are flushed from \ensuremath{\Varid{is}_{1}'''} and \ensuremath{\Varid{is}_{2}'''} (since \ensuremath{\Varid{n}<J\mathbin{+}\mathrm{1}}), and thus the configurations become fully synchronized, i.e., \ensuremath{\Conid{C}_{1}'\;\mathrel{=}_{\bot}\;\Conid{C}_{2}'}.
For all the other rules, the resulting buffers \ensuremath{\Varid{is}_{1}'''} and \ensuremath{\Varid{is}_{2}'''}
are identical to \ensuremath{\Varid{is}_{1}} and \ensuremath{\Varid{is}_{2}}, respectively, except for the
resolved instructions \ensuremath{\Varid{i}_{1}'\mathrel{=}\Varid{i}_{2}'}, therefore the configurations remain
\ensuremath{J}-equivalent, i.e., \ensuremath{\Conid{C}_{1}'\;=_{J}\;\Conid{C}_{2}'}.

% \Red{Notice however that neither configuration can step according to \srule{Exec-Branch-Mispredict}, because |i1 = i2| and |o1 :<>: rollbackN|.}
%

\end{enumerate}
\end{proof}

Before proving consistency between the sequential and speculative semantics, we define the \emph{filtering function} \ensuremath{\Conid{O}{\downarrow}}, which rewrites rollback and mispredicted observations to the \emph{silent} observation \ensuremath{\epsilon}.

\begin{definition}[Filtering Function]
Given an observation trace \ensuremath{\Conid{O}}, let \ensuremath{\Conid{P}} be the set of  identifiers of mispredicted guards and fail instructions in \ensuremath{\Conid{O}}, i.e., \ensuremath{\Conid{P}\mathrel{=}\{\mskip1.5mu \Varid{p}\;\;|\;\;\mathbf{rollback}(\Varid{p})\;\in\;\Conid{O}\mskip1.5mu\}\; \cup \;\{\mskip1.5mu \Varid{p}\;\;|\;\;\mathbf{fail}(\Varid{p})\;\in\;\Conid{O}\mskip1.5mu\}}. Then, we define the filtering function \ensuremath{\Conid{O}{\downarrow}} by case analysis on \ensuremath{\Conid{O}}:
 % |down O = down emptyset|.
\begin{align*}
&\ensuremath{\epsilon{\downarrow}\mathrel{=}\epsilon} \\
&\ensuremath{(\Varid{o}\;{{\mkern-2mu\cdot\mkern-2mu}}\;\Conid{O}){\downarrow}\mathrel{=}}
\begin{cases}
 \ensuremath{\epsilon\;{{\mkern-2mu\cdot\mkern-2mu}}\;\Conid{O}{\downarrow}} & \text{ if } \ensuremath{\Varid{o}\mathrel{=}\mathbf{rollback}(\Varid{p})} \\
 \ensuremath{\mathbf{fail}\;{{\mkern-2mu\cdot\mkern-2mu}}\;\Conid{O}{\downarrow}} & \text{ if } \ensuremath{\Varid{o}\mathrel{=}\mathbf{fail}\;\Varid{p}} \\
 \ensuremath{\epsilon\;{{\mkern-2mu\cdot\mkern-2mu}}\;\Conid{O}{\downarrow}} & \text{ if } \ensuremath{\Varid{o}\mathrel{=}\mathbf{load}(\Varid{n},\Varid{ps})}\ \ensuremath{\land}\ \ensuremath{\Varid{ps}\; \cap \;\Conid{P}\neq\varnothing} \\
 \ensuremath{\epsilon\;{{\mkern-2mu\cdot\mkern-2mu}}\;\Conid{O}{\downarrow}} & \text{ if } \ensuremath{\Varid{o}\mathrel{=}\mathbf{store}(\Varid{n},\Varid{ps})}\ \ensuremath{\land}\ \ensuremath{\Varid{ps}\; \cap \;\Conid{P}\neq\varnothing} \\
\ensuremath{\Varid{o}\;{{\mkern-2mu\cdot\mkern-2mu}}\;\Conid{O}{\downarrow}} & \text{ otherwise } \\
\end{cases}
\end{align*}
\label{def:filter-function}
\end{definition}

% To this end, we rely on
%
% In the consistency proof, we progressively reorder the directives of one schedule
% after the other to preserve |J|-equivalence, while ensuring that the
% modified schedule remains \emph{valid}.

The proof of consistency relies on the following more general lemma.

\begin{lemma}[General Consistency]
For all configurations \ensuremath{\Conid{C}_{1}} and \ensuremath{\Conid{C}_{2}} such that \ensuremath{\Conid{C}_{1}\;\mathrel{=}_{J}\;\Conid{C}_{2}}, \emph{valid} schedules \ensuremath{\Conid{D}_{1}} and \ensuremath{\Conid{D}_{2}}, observations \ensuremath{\Conid{O}_{1}} and \ensuremath{\Conid{O}_{2}} such that \ensuremath{\mathbf{rollback}\;\not\in\;\Conid{O}_{1}}, if \ensuremath{\Conid{C}_{1}\;\Downarrow_{\Conid{O}_{1}}^{\Conid{D}_{1}}\;\Conid{C}_{1}'} and \ensuremath{\Conid{C}_{2}\;\Downarrow_{\Conid{O}_{2}}^{\Conid{D}_{2}}\;\Conid{C}_{2}'}, then \ensuremath{\Conid{C}_{1}'\mathrel{=}\Conid{C}_{2}'} and \ensuremath{\Conid{O}_{1}\cong\Conid{O}_{2}{\downarrow}}.
\label{lemma:general-consistency}
\end{lemma}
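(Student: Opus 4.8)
The plan is to reduce the arbitrary reference schedule $D_1$ to a convenient one and then induct on the at-fault execution. Since $D_1$ is a \emph{valid}, rollback-free schedule for $C_1$, Lemma~\ref{lemma:valid-consistent} tells me that $C_1'$ and $O_1$ are determined (up to $=$ and $\cong$) by $C_1$ alone, independently of which valid rollback-free schedule is chosen. So it suffices to \emph{construct} one convenient rollback-free schedule $\hat D_1$ for $C_1$, reaching some $\hat C_1'$ with trace $\hat O_1$, to prove $\hat C_1' = C_2'$ and $\hat O_1 \cong O_2{\downarrow}$, and then to transfer the result back to the given $D_1$ via Lemma~\ref{lemma:valid-consistent} (using $\mathbf{rollback}\notin O_1$ and $\mathbf{rollback}\notin\hat O_1$). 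This decouples the argument from the specific reference schedule and lets me build $\hat D_1$ in lock-step with $C_2$'s directives, so that Lemma~\ref{lemma:single-step-consistency} becomes directly applicable.

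The core is an induction on $|D_2|$ that consumes $C_2$'s directives one at a time while maintaining the invariant $C_1 =_J C_2$ (Figure~\ref{fig:j-equiv}) and growing $\hat D_1$, with a case split on $J$ and the leading directive $d$ of $D_2$. In the \emph{synchronized} case ($J=\bot$, so $C_1=C_2$), if $d$ is not a fetch of a branch that is eventually mispredicted in the given run of $C_2$, I mirror $d$ into $\hat D_1$ and step both configurations; Lemma~\ref{lemma:single-step-consistency}(1) keeps them $\bot$-equivalent and matches the two observations. If instead $d=\mathbf{fetch}\,b$ fetches a branch whose guard \emph{will} resolve as a misprediction in $C_2$, I let $\hat D_1$ fetch the opposite, correct branch via \srule{Fetch-If-False}/\srule{Fetch-If-True}; this creates a fresh divergence that reestablishes the invariant at $J=0$ through rule \srule{Zero}, the rollback stack recorded in $C_2$'s guard coinciding with $C_1$'s restored command stack. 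In the \emph{diverged} case ($J\in\mathbb{N}$), $C_1$ is ``frozen'' at its correctly-predicted guard and I advance $C_2$ alone on the mis-speculated suffix (fetches and execs at positions past the guard, including arbitrarily nested inner mispredictions): every such observation carries the pending identifier $p$ of the outer guard and is therefore erased by $O_2{\downarrow}$ (Definition~\ref{def:filter-function}), matching the empty reference behaviour, and since no retire can cross the pending guard, $\mu$ and $\rho$ remain equal, preserving $C_1=_J C_2$. Prefix retires and prefix execs are mirrored into $\hat D_1$ via Lemma~\ref{lemma:single-step-consistency}(2,3), decrementing $J$, until the decisive step $\mathbf{exec}\,(J{+}1)$ executes the outer guard: \srule{Exec-Branch-Mispredict} rolls $C_2$ back, emitting $\mathbf{rollback}(p)$ which filters to $\epsilon$, while the mirrored exec fires \srule{Exec-Branch-Ok} in $C_1$, restoring $C_1 =_\bot C_2$.

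To close, I would note that when $D_2$ is exhausted $C_2$ is final, of the form $\langle[\,],[\,],\cdot,\cdot\rangle$, hence has no pending guards; this forces $J=\bot$ and $\hat C_1' = C_2'$. Validity and rollback-freedom of $\hat D_1$ hold by construction, since every guard inserted into $\hat D_1$ is predicted correctly and every fetched instruction is executed exactly once. The trace equality $\hat O_1 \cong O_2{\downarrow}$ follows from the same index-matching argument used in Lemma~\ref{lemma:valid-consistent}: mirrored directives contribute pointwise-equal observations, while the unmirrored mis-speculated and rollback observations of $C_2$ are exactly those discarded by the filter. A final appeal to Lemma~\ref{lemma:valid-consistent} on $D_1$ and $\hat D_1$, together with Lemma~\ref{lemma:determinism} where single-step determinacy is needed, yields $C_1' = \hat C_1' = C_2'$ and $O_1 \cong \hat O_1 \cong O_2{\downarrow}$.

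The hardest part will be the look-ahead needed to build $\hat D_1$: deciding, at each branch fetch in the synchronized phase, whether to copy $C_2$'s prediction or to diverge requires consulting the \emph{eventual} resolution of that guard in the given terminating run $C_2\Downarrow^{D_2}_{O_2}C_2'$, and then arguing that the schedule assembled from these per-guard decisions is still a single valid schedule. Intertwined with this is the trace bookkeeping under out-of-order execution, namely showing that every wrong-path observation is filtered because its identifier list $\Moon{\cdot}$ meets the mispredicted-identifier set $P$, and that every real observation is matched up to permutation; as in Lemma~\ref{lemma:valid-consistent}, this is where most of the technical care will be concentrated.
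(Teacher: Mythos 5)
Your proposal is correct in substance and rests on the same three pillars as the paper's proof---the $=_J$ invariant, Lemma~\ref{lemma:single-step-consistency} for matched steps, and Lemma~\ref{lemma:valid-consistent} plus the filter of Definition~\ref{def:filter-function} for everything schedule- and trace-related---but it organizes them differently. The paper keeps the \emph{given} schedule $D_1$ alive throughout: it inducts on the pair of big-step derivations and, whenever $C_2$'s next directive has no matching directive at the head of $D_1$, it \emph{permutes} $D_1$ to anticipate the matching $\mathbf{fetch}$/$\mathbf{exec}$/$\mathbf{retire}$, invoking Lemma~\ref{lemma:valid-consistent} at each such anticipation to argue the permuted schedule reaches the same final state with a $\cong$-equivalent trace. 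You instead discard $D_1$ up front, synthesize a fresh rollback-free schedule $\hat D_1$ in lock-step with $D_2$, and invoke Lemma~\ref{lemma:valid-consistent} exactly once at the end to transfer back to $D_1$. The trade-off is real: the paper's incremental permutation never needs look-ahead, because at each branch the correct prediction is read directly off the given $D_1$ (its rollback-freedom guarantees every $\mathbf{fetch}\ b_1$ in $D_1$ is correct, so a disagreeing $\mathbf{fetch}\ b_2$ in $D_2$ must be the misprediction); your construction must instead peek at the \emph{eventual} resolution of each guard in $C_2$'s run to choose $\hat D_1$'s prediction, and must separately argue both that this resolution is well-defined for guards fetched in the synchronized phase (it is, since all earlier pending guards are then correctly predicted and cannot flush the new one) and that the assembled $\hat D_1$ is a single valid schedule. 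You correctly flag this as the delicate point; once it is discharged, your route arguably yields a cleaner induction (driven purely by $|D_2|$, with schedule-equivalence reasoning confined to one final application) at the cost of a more involved construction of the reference schedule.
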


\mypara{Proof Outline}
The proof consists in \emph{synchronizing} the two
executions by carefully reordering the directives of one execution after the other.
The fact that the second schedule may interleave directives
concerning both correct \emph{and} misspeculated paths complicates the proof.
Therefore, we rely on \ensuremath{J}-equivalence
to deal with configurations that are only \emph{partially} synchronized.
In particular, \ensuremath{J}-equivalence denotes whether two
configurations are \emph{fully} synchronized (\ensuremath{J\mathrel{=}\bot}) or
synchronized \emph{up to \ensuremath{J} instructions} in the reorder buffers (\ensuremath{J\;\in\;\mathbb{N}}).
Using \ensuremath{J}-equivalence, we can distinguish directives that \emph{must}
be synchronized (e.g., those that retire or execute an instruction at
index \ensuremath{\Varid{n}<J\mathbin{+}\mathrm{1}}), from those that belong to a misspeculated path
(e.g., executing a mispredicted instruction at index \ensuremath{\Varid{n}>J\mathbin{+}\mathrm{1}}), as
well as synchronization points (i.e., resolving the pending
mispredicted guard at index \ensuremath{\Varid{n}\mathrel{=}J\mathbin{+}\mathrm{1}}).

\begin{proof}
By induction on the big-step reductions and the $J$-equivalence relation.
If the first reduction is \srule{Done}, then \ensuremath{\Conid{C}_{1}\mathrel{=}\langle[\mskip1.5mu \mskip1.5mu],[\mskip1.5mu \mskip1.5mu],\mu,\rho\rangle\mathrel{=}\Conid{C}_{1}'} and \ensuremath{\Conid{O}_{1}\mathrel{=}\epsilon}.
Since the reorder buffer in \ensuremath{\Conid{C}_{1}} is \emph{empty} and
\ensuremath{\Conid{C}_{1}\;\mathrel{=}_{J}\;\Conid{C}_{2}}, we deduce that \ensuremath{J\mathrel{=}\bot} (only rule \srule{Synch} applies to empty buffers) and thus the buffer of \ensuremath{\Conid{C}_{2}} is also empty.
Therefore, the second reduction is also \srule{Done}, thus \ensuremath{\Conid{C}_{2}\mathrel{=}\Conid{C}_{2}'} and \ensuremath{\Conid{O}_{2}\mathrel{=}\epsilon}, and hence \ensuremath{\Conid{C}_{1}'\mathrel{=}\Conid{C}_{2}'} and \ensuremath{\Conid{O}_{1}\mathrel{=}\Conid{O}_{2}{\downarrow}}.
In the inductive case both big-step reductions follow rule \srule{Step},
therefore for \ensuremath{\Varid{i}\;\in\;\{\mskip1.5mu \mathrm{1},\mathrm{2}\mskip1.5mu\}}, we have schedules \ensuremath{\Conid{D}_{\Varid{i}}\mathrel{=}\Varid{d}_{\Varid{i}}\mathbin{:}\Conid{D}'_{\Varid{i}}}, observation traces \ensuremath{\Conid{O}_{\Varid{i}}\mathrel{=}\Varid{o}_{\Varid{i}}\mathbin{:}\Conid{O}'_{\Varid{i}}}, and two pairs of small- and big-steps, i.e.,
\ensuremath{\Conid{C}_{\Varid{i}}\;\stepT{\Varid{d}_{\Varid{i}}}{\Varid{o}_{\Varid{i}}}\;\Conid{C}'_{\Varid{i}}} and \ensuremath{\Conid{C}'_{\Varid{i}}\;\Downarrow_{\Conid{O}'_{\Varid{i}}}^{\Conid{D}'_{\Varid{i}}}\;\Conid{C}''_{\Varid{i}}}.
We will refer to these names in the rest of the proof.

If \ensuremath{\Varid{d}_{1}\mathrel{=}\Varid{d}_{2}}, then we apply Lemma~\ref{lemma:single-step-consistency} (if the index \ensuremath{J} and the directives \ensuremath{\Varid{d}_{1}\mathrel{=}\Varid{d}_{2}} fulfill either of the conditions of the lemma)\footnote{ Otherwise, if \ensuremath{\Varid{d}_{1}\mathrel{=}\Varid{d}_{2}} but \ensuremath{J\neq\bot} and \ensuremath{\Varid{d}_{2}\mathrel{=}\mathbf{fetch}} (\ensuremath{\Varid{d}_{2}\mathrel{=}\mathbf{fetch}\;\Varid{b}} ), see case~\ref{case:d2-fetch}. If  \ensuremath{J\;\in\;\mathbb{N}}, \ensuremath{\Varid{d}_{1}\mathrel{=}\Varid{d}_{2}}, \ensuremath{\Varid{d}_{1}\mathrel{=}\mathbf{exec}\;\Varid{n}}, but \ensuremath{\Varid{n}\mathrel{=}J\mathbin{+}\mathrm{1}}, see case~\ref{case:exec-n-J}. In this case, notice that \ensuremath{\Varid{n}} cannot be greater than \ensuremath{J\mathbin{+}\mathrm{1}}, i.e., $n \ngtr J + 1$, because
the buffer of \ensuremath{\Conid{C}_{1}} has only \ensuremath{J\mathbin{+}\mathrm{1}} instructions, which contradicts the step \ensuremath{\Conid{C}_{1}\;\stepT{\mathbf{exec}\;\Varid{n}}{\Varid{o}_{1}}\;\Conid{C}_{1}'}.} and induction.
% \footnote{
% Since |C1 jeq C2| and |J inset Nat|, the reorder buffer of |C1| has length |J| and therefore |n| cannot be greater than |J|, i.e., |n <= J|.} below.
% Otherwise:
% \begin{itemize}
% \item If |J inset Nat| and |d1 = fetch| (|d1 = fetch b|), see \Red{case \ref{}}.
% \item If |J inset Nat|, |d1 = execute n|, and |n = J|, see \Red{case \ref{}}.
% \end{itemize}
% If |d1 = d2|, then the proof follows trivially by induction
% \mv[inline]{Briefly cover the case for |d1 = retire = d2| (use $J$-equivalence), |d1 = execute n = d2|, if |n :<: J| use $J$-equivalence, but what happens if |n :>: J|?.},
In the following, we focus on the general case where \ensuremath{\Varid{d}_{1}\neq\Varid{d}_{2}} and rely on $J$-equivalence to determine whether
 the configurations are fully synchronized (\ensuremath{J\mathrel{=}\bot}) or
a mispredicted guard is pending in the second configuration (\ensuremath{J\;\in\;\mathbb{N}}) at index \ensuremath{J\mathbin{+}\mathrm{1}}.
%
  % In particular, we show that we can always rearrange the schedule of one
  % execution to follow the other, producing equivalent executions that
  % generate the same observations only in a different order.
%
%
%\begin{CompactItemize}
\vspace{\baselineskip}
\begin{enumerate}[wide, labelwidth=!, labelindent=0pt]
\setlength{\parskip}{\baselineskip}
\item \label{case:split} \ensuremath{J\mathrel{=}\bot} and \ensuremath{\Varid{d}_{1}\mathrel{=}\mathbf{fetch}\;\Varid{b}_{1}}, \ensuremath{\Varid{d}_{2}\mathrel{=}\mathbf{fetch}\;\Varid{b}_{2}}, and \ensuremath{\Varid{b}_{1}\neq\Varid{b}_{2}}. Since the first
  execution does not contain mispredictions (\ensuremath{\mathbf{rollback}\;\not\in\;\Conid{O}_{1}}), we know that the second configuration enters a mispredicted
  branch and diverges \emph{temporarily} from the first.
The initial configurations \ensuremath{\Conid{C}_{1}} and \ensuremath{\Conid{C}_{2}} are synchronized (\ensuremath{J\mathrel{=}\bot}) and step via rule \srule{Fetch-If-$b_1$} and \srule{Fetch-If-$b_2$}, respectively, i.e., \ensuremath{\Conid{C}_{1}\;\stepT{\mathbf{fetch}\;\Varid{b}_{1}}{\epsilon}\;\Conid{C}_{1}'} and
\ensuremath{\Conid{C}_{1}\;\stepT{\mathbf{fetch}\;\Varid{b}_{2}}{\epsilon}\;\Conid{C}_{1}'}, both generating a \emph{silent} observation \ensuremath{\epsilon}.
Since \ensuremath{J\mathrel{=}\bot} the reorder buffers of \ensuremath{\Conid{C}_{1}} and \ensuremath{\Conid{C}_{2}} are identical, i.e., \ensuremath{\Varid{is}_{1}\mathrel{=}\Varid{is}_{2}} from rule \srule{Synch}, and remain related after the small-step.
In particular, let \ensuremath{\Conid{J'}\mathrel{=}{\vert}\Varid{is}_{1}{\vert}\mathrel{=}{\vert}\Varid{is}_{2}{\vert}}, then the intermediate configurations \ensuremath{\Conid{C}_{1}'} and \ensuremath{\Conid{C}_{2}'} are $(1+J')$-equivalent, i.e., \ensuremath{\Conid{C}_{1}'\;\mathrel{=}_{\mathrm{1}\mathbin{+}\Conid{J'}}\;\Conid{C}_{2}'} by $J'$ applications of rule \srule{Suc} and then rule \srule{Zero}.
Then, we apply our induction hypothesis to
\ensuremath{\Conid{C}_{1}'\;\Downarrow_{\Conid{D}_{1}'}^{\Conid{O}_{1}'}\;\Conid{C}_{1}''} and \ensuremath{\Conid{C}_{2}'\;\Downarrow_{\Conid{D}_{2}'}^{\Conid{O}_{2}'}\;\Conid{C}_{2}''} and conclude that the final configurations are identical, i.e., \ensuremath{\Conid{C}_{1}''\mathrel{=}\Conid{C}_{2}''}, and the observation traces are equal up to permutation and filtering, i.e., \ensuremath{\Conid{O}_{1}'\cong\Conid{O}_{2}'{\downarrow}} thus \ensuremath{\epsilon\;{{\mkern-2mu\cdot\mkern-2mu}}\;\Conid{O}_{1}'\cong\epsilon\;{{\mkern-2mu\cdot\mkern-2mu}}\;\Conid{O}_{2}'{\downarrow}}, i.e., \ensuremath{\Conid{O}_{1}\mathrel{=}\Conid{O}_{2}{\downarrow}}.

\item \ensuremath{J\mathrel{=}\bot} and \ensuremath{\Varid{d}_{2}\mathrel{=}\mathbf{fetch}} (\ensuremath{\Varid{d}_{2}\mathrel{=}\mathbf{fetch}\;\Varid{b}_{2}}) and \ensuremath{\Varid{d}_{1}\neq\Varid{d}_{2}}.
Let \ensuremath{\Varid{c}_{2}} be the command fetched, i.e., the command on top of the stack \ensuremath{\Varid{cs}_{2}} of configuration \ensuremath{\Conid{C}_{2}}.
Similarly, let \ensuremath{\Varid{cs}_{1}} be the stack of the first configuration \ensuremath{\Conid{C}_{1}}.
Since the configurations are $J$-equivalent, i.e., \ensuremath{\Conid{C}_{1}\;=_{J}\;\Conid{C}_{2}} and synchronized (\ensuremath{J\mathrel{=}\bot}), then \ensuremath{\Varid{cs}_{1}\mathrel{=}\Varid{cs}_{2}} from rule \srule{Synch}, therefore there exists a command \ensuremath{\Varid{c}_{1}\mathrel{=}\Varid{c}_{2}} on top of \ensuremath{\Varid{cs}_{1}}.
Then, we anticipate the next \ensuremath{\mathbf{fetch}} directive in the first schedule
(one must exist because schedule \ensuremath{\Conid{D}_{1}} is \emph{valid})
% \footnote{\Red{Technically, the schedule may not contain a |fetch| directive, if the reorder buffer contains a pending |fail| instruction. Even if this is the case, we can safely prepend an extra |fetch| directive to the first schedule to keep the two executions synchronized without affecting its outcome. }}
in order to fetch \ensuremath{\Varid{c}_{1}} and preserve \ensuremath{J}-equivalence.
Formally, let \ensuremath{\Conid{D}_{1}\mathrel{=}\Conid{D}_{1}'+{\mkern-9mu+}\ [\mskip1.5mu \mathbf{fetch}\mskip1.5mu]+{\mkern-9mu+}\ \Conid{D}_{1}''}, such that sub-schedule \ensuremath{\Conid{D}_{1}'}
contains no \ensuremath{\mathbf{fetch}} directives and let \ensuremath{\Conid{D}_{1}'''\mathrel{=}[\mskip1.5mu \mathbf{fetch}\mskip1.5mu]+{\mkern-9mu+}\ \Conid{D}_{1}'+{\mkern-9mu+}\ \Conid{D}_{1}''} be the
alternative schedule where directive \ensuremath{\mathbf{fetch}} is anticipated.
Intuitively, schedule \ensuremath{\Conid{D}_{1}'''} remains valid (Def.~\ref{def:valid}) for \ensuremath{\Conid{C}_{1}} because the
new instruction fetched end up at the end of the reorder buffer \ensuremath{\Varid{is}_{1}}
and thus do not interfere with the \ensuremath{\mathbf{retire}} directive in \ensuremath{\Conid{D}_{1}'}, which
operate on \ensuremath{\Varid{is}_{1}}, and with the \ensuremath{\mathbf{exec}\;\Varid{n}}
instructions in \ensuremath{\Conid{D}_{1}'}, which remain valid at their original index \ensuremath{\mathrm{1}\leq \Varid{n}\leq {\vert}\Varid{is}_{1}{\vert}}.
Since \ensuremath{\Conid{D}_{1}} and \ensuremath{\Conid{D}_{1}'''} are valid, we apply
Lemma~\ref{lemma:valid-consistent} and deduce that there exists an intermediate
configuration \ensuremath{\Conid{C}_{1}'''} such that \ensuremath{\Conid{C}_{1}\;\stepT{\mathbf{fetch}}{\epsilon}\;\Conid{C}_{1}'''}, which then reduces
to the \emph{same} final configuration \ensuremath{\Conid{C}_{1}''}, i.e., \ensuremath{\Conid{C}_{1}'''\;\Downarrow_{\Conid{O}_{1}''}^{\Conid{D}_{1}'+{\mkern-9mu+}\ \Conid{D}_{1}''}\;\Conid{C}_{1}''} for some observation trace \ensuremath{\Conid{O}_{1}''}.
Furthermore, Lemma~\ref{lemma:valid-consistent} ensures that the observation
trace \ensuremath{\Conid{O}_{1}'''\mathrel{=}\epsilon\;{{\mkern-2mu\cdot\mkern-2mu}}\;\Conid{O}_{1}''} generated by the alternative schedule \ensuremath{\Conid{D}_{1}'''} is equivalent
up to permutation to the observation trace \ensuremath{\Conid{O}_{1}\mathrel{=}\Varid{o}_{1}\;{{\mkern-2mu\cdot\mkern-2mu}}\;\Conid{O}_{1}'} generated by the original schedule \ensuremath{\Conid{D}_{1}}, i.e., we have that \ensuremath{\Conid{O}_{1}\cong\Conid{O}_{1}'''}.
% |O1 = o1 : O1' ~= eps :
% O1'' = O1'''|.
%
Since \ensuremath{\Conid{C}_{1}\;=_{J}\;\Conid{C}_{2}}, \ensuremath{J\mathrel{=}\bot}, \ensuremath{\Conid{C}_{1}\;\stepT{\mathbf{fetch}}{\epsilon}\;\Conid{C}_{1}'''}, and \ensuremath{\Conid{C}_{2}\;\stepT{\mathbf{fetch}}{\epsilon}\;\Conid{C}_{2}'},
we apply  Lemma~\ref{lemma:single-step-consistency} and obtain that there exists a \ensuremath{\Conid{J'}} such that the intermediate configurations are \ensuremath{\Conid{J'}}-equivalent, i.e., \ensuremath{\Conid{C}_{1}'''\;\mathrel{=}_{\Conid{J'}}\;\Conid{C}_{2}'}.
Then, we apply our induction hypothesis to reductions \ensuremath{\Conid{C}_{1}'''\;\Downarrow_{\Conid{O}_{1}''}^{\Conid{D}_{1}'+{\mkern-9mu+}\ \Conid{D}_{1}''}\;\Conid{C}_{1}''} and \ensuremath{\Conid{C}_{2}'\;\Downarrow_{\Conid{O}_{2}'}^{\Conid{D}_{2}'}\;\Conid{C}_{2}''}, deduce that the final configurations are identical, i.e., \ensuremath{\Conid{C}_{1}''\mathrel{=}\Conid{C}_{2}''}, and the observation traces are equal up to permutation and filtering \ensuremath{\Conid{O}_{1}''\cong\Conid{O}_{2}'{\downarrow}} and thus \ensuremath{\epsilon\;{{\mkern-2mu\cdot\mkern-2mu}}\;\Conid{O}_{1}''\cong\epsilon\;{{\mkern-2mu\cdot\mkern-2mu}}\;\Conid{O}_{2}'{\downarrow}}, i.e., \ensuremath{\Conid{O}_{1}'''\cong\Conid{O}_{2}{\downarrow}}.
Finally, from \ensuremath{\Conid{O}_{1}\cong\Conid{O}_{1}'''} and \ensuremath{\Conid{O}_{1}'''\cong\Conid{O}_{2}{\downarrow}} we derive \ensuremath{\Conid{O}_{1}\cong\Conid{O}_{2}{\downarrow}} by transitivity.
Notice that if \ensuremath{\Varid{d}_{2}\mathrel{=}\mathbf{fetch}\;\Varid{b}_{2}}, then the next fetch directive in \ensuremath{\Conid{D}_{1}} is of the same type, i.e., \ensuremath{\mathbf{fetch}\;\Varid{b}_{1}},
and the proof follows as above if \ensuremath{\Varid{b}_{1}\mathrel{=}\Varid{b}_{2}} and as in case \ref{case:split}) if \ensuremath{\Varid{b}_{1}\neq\Varid{b}_{2}}.

\item \ensuremath{J\mathrel{=}\bot} and \ensuremath{\Varid{d}_{1}\mathrel{=}\mathbf{fetch}} (\ensuremath{\Varid{d}_{1}\mathrel{=}\mathbf{fetch}\;\Varid{b}_{1}}) and \ensuremath{\Varid{d}_{1}\neq\Varid{d}_{2}}.
Symmetric to the previous case.

\item \label{case:d2-fetch} \ensuremath{J\neq\bot} and \ensuremath{\Varid{d}_{2}\mathrel{=}\mathbf{fetch}} (\ensuremath{\Varid{d}_{2}\mathrel{=}\mathbf{fetch}\;\Varid{b}}).
Since \ensuremath{J\neq\bot}, a mispredicted
  guard is pending in the buffer of \ensuremath{\Conid{C}_{2}} at
  index \ensuremath{J\mathbin{+}\mathrm{1}} and thus the instruction fetched
in the step \ensuremath{\Conid{C}_{2}\;\stepT{\mathbf{fetch}}{\epsilon}\;\Conid{C}_{2}'} is going to be flushed
  eventually.
  We observe that this instruction is appended \emph{at the end} of the buffer, thus at
  an index greater than \ensuremath{J\mathbin{+}\mathrm{1}}, and therefore \ensuremath{\Conid{C}_{1}} remains
  \ensuremath{J}-equivalent with \ensuremath{\Conid{C}_{2}'}, i.e., \ensuremath{\Conid{C}_{1}\;=_{J}\;\Conid{C}_{2}'}.
We apply our induction hypothesis to \ensuremath{\Conid{C}_{1}\;\Downarrow_{\Conid{O}_{1}}^{\Conid{D}_{1}}\;\Conid{C}_{1}''} and \ensuremath{\Conid{C}_{2}'\;\Downarrow_{\Conid{O}_{2}'}^{\Conid{D}_{2}'}\;\Conid{C}_{2}''} and deduce that the final configurations are identical, i.e., \ensuremath{\Conid{C}_{1}''\mathrel{=}\Conid{C}_{2}''}, and the observation traces are equivalent up to permutation and filtering, i.e., \ensuremath{\Conid{O}_{1}\cong\Conid{O}_{2}'{\downarrow}}.
Notice that the step from \ensuremath{\Conid{C}_{2}} to \ensuremath{\Conid{C}_{2}'} generates a  \emph{silent} observation \ensuremath{\epsilon}, i.e., \ensuremath{\Conid{O}_{2}\mathrel{=}\epsilon\;{{\mkern-2mu\cdot\mkern-2mu}}\;\Conid{O}_{2}'\mathrel{=}\Conid{O}_{2}'}, and therefore \ensuremath{\Conid{O}_{1}\cong\Conid{O}_{2}'{\downarrow}} implies
 \ensuremath{\Conid{O}_{1}\cong\Conid{O}_{2}{\downarrow}}.

\item \ensuremath{\Varid{d}_{2}\mathrel{=}\mathbf{exec}\;\Varid{n}} and \ensuremath{\Varid{n}>J\mathbin{+}\mathrm{1}}.
  The instruction executed in the step \ensuremath{\Conid{C}_{2}\;\stepT{\mathbf{exec}\;\Varid{n}}{\Varid{o}_{2}}\;\Conid{C}_{2}'}
  is at index \ensuremath{\Varid{n}>J\mathbin{+}\mathrm{1}} and thus it belongs to a mispredicted path and will be \emph{squashed}.
In particular, the second buffer contains a pending, mispredicted guard instruction at
  index \ensuremath{J\mathbin{+}\mathrm{1}} with prediction identifier \ensuremath{\Varid{p}} such that \ensuremath{\mathbf{rollback}(\Varid{p})\;\in\;\Conid{O}_{2}'}.
Therefore the observation \ensuremath{\Varid{o}_{2}} generated in the step is
  tainted with the prediction identifier \ensuremath{\Varid{p}} of the guard and
  thus rewritten to \emph{silent} event \ensuremath{\epsilon} (Def.~\ref{def:filter-function}), i.e., \ensuremath{(\Varid{o}_{2}\;{{\mkern-2mu\cdot\mkern-2mu}}\;\Conid{O}_{2}'){\downarrow}\mathrel{=}\epsilon\;{{\mkern-2mu\cdot\mkern-2mu}}\;\Conid{O}_{2}'{\downarrow}\mathrel{=}\Conid{O}_{2}'{\downarrow}}.
  The resolved instruction is reinserted in the buffer at its original
  index \ensuremath{\Varid{n}>J\mathbin{+}\mathrm{1}} and thus \ensuremath{\Conid{C}_{1}} remains \ensuremath{J}-equivalent to \ensuremath{\Conid{C}_{2}'}, i.e.,
  \ensuremath{\Conid{C}_{1}\;=_{J}\;\Conid{C}_{2}'}, and the lemma follows by induction on \ensuremath{\Conid{C}_{1}\;\Downarrow_{\Conid{O}_{1}}^{\Conid{D}_{1}}\;\Conid{C}_{1}''} and \ensuremath{\Conid{C}_{2}'\;\Downarrow_{\Conid{O}_{2}'}^{\Conid{D}_{2}'}\;\Conid{C}_{2}''} as in the previous case.
%
%
% \mv[inline]{Rollback should also have a stack |ps|? No rollback are already filtered out in |down O|.}
%
%
\item \ensuremath{\Varid{d}_{2}\mathrel{=}\mathbf{exec}\;\Varid{n}} and \ensuremath{\Varid{n}<J\mathbin{+}\mathrm{1}} or \ensuremath{J\mathrel{=}\bot}.
Since \ensuremath{\Varid{n}<J\mathbin{+}\mathrm{1}} or \ensuremath{J\mathrel{=}\bot}, the instruction
executed in step \ensuremath{\Conid{C}_{2}\;\stepT{\mathbf{exec}\;\Varid{n}}{\Varid{o}_{2}}\;\Conid{C}_{2}'}
 will \emph{not} be squashed and thus we synchronize
the first schedule to execute it as well.
Let \ensuremath{\Varid{is}_{2}\mathrel{=}\Varid{is}_{2}'+{\mkern-9mu+}\ [\mskip1.5mu \Varid{i}_{2}\mskip1.5mu]+{\mkern-9mu+}\ \Varid{is}_{2}''} be the reorder buffer of \ensuremath{\Conid{C}_{2}} such that \ensuremath{{\vert}\Varid{is}_{2}'{\vert}\mathrel{=}\Varid{n}\mathbin{-}\mathrm{1}}
and thus \ensuremath{\Varid{i}_{2}} is the executed instruction at index \ensuremath{\Varid{n}}.
From \ensuremath{J}-equivalence, i.e., \ensuremath{\Conid{C}_{1}\;=_{J}\;\Conid{C}_{2}}, we deduce that
the buffer \ensuremath{\Varid{is}_{1}} of \ensuremath{\Conid{C}_{1}} can be decomposed
like \ensuremath{\Varid{is}_{2}}, i.e., \ensuremath{\Varid{is}_{1}\mathrel{=}\Varid{is}_{1}'+{\mkern-9mu+}\ [\mskip1.5mu \Varid{i}_{1}\mskip1.5mu]+{\mkern-9mu+}\ \Varid{is}_{1}''} where \ensuremath{{\vert}\Varid{is}_{1}'{\vert}\mathrel{=}\Varid{n}\mathbin{-}\mathrm{1}<J} and thus \ensuremath{\Varid{is}_{1}'\mathrel{=}\Varid{is}_{2}'} and \ensuremath{\Varid{i}_{1}\mathrel{=}\Varid{i}_{2}}.
Then, we observe that directive \ensuremath{\Varid{d}_{2}\mathrel{=}\mathbf{exec}\;\Varid{n}} is \emph{valid} also for \ensuremath{\Conid{C}_{1}} (Def.~\ref{def:valid-directive}) and we adjust the schedule \ensuremath{\Conid{D}_{1}} to anticipate the directive that executes that instruction.\footnote{In particular, the fact that \ensuremath{\Varid{d}_{2}\mathrel{=}\mathbf{exec}\;\Varid{n}} is valid for \ensuremath{\Conid{C}_{2}} implies that the data-dependencies of \ensuremath{\Varid{i}_{1}\mathrel{=}\Varid{i}_{2}} are also resolved in the prefix \ensuremath{\Varid{is}_{1}'\mathrel{=}\Varid{is}_{2}'} and thus \ensuremath{\Varid{d}_{2}} is valid for \ensuremath{\Conid{C}_{1}}. }
%
 % i.e., there exists a configuration |C1'''| such that |C1 (step2 d2 o2) C1'''|.
%
% (In particular, |J|-equivalence implies the prerequisites of rules
% \srule{Exec-Load} and \srule{Exec-Protect$_2$}).
%
Formally, let \ensuremath{\Conid{D}_{1}\mathrel{=}\Conid{D}_{1}'+{\mkern-9mu+}\ [\mskip1.5mu \Varid{d}_{1}'\mskip1.5mu]+{\mkern-9mu+}\ \Conid{D}_{1}''} be the original schedule, where \ensuremath{\Varid{d}_{1}'} is the \ensuremath{\mathbf{exec}\;\Varid{n'}}
directive that evaluates \ensuremath{\Varid{i}_{1}} and let \ensuremath{\Conid{D}_{1}'''\mathrel{=}[\mskip1.5mu \mathbf{exec}\;\Varid{n}\mskip1.5mu]+{\mkern-9mu+}\ \Conid{D}_{1}'+{\mkern-9mu+}\ \Conid{D}_{1}''} be
the alternative schedule where the execution of \ensuremath{\Varid{i}_{1}} is anticipated.\footnote{
  The position \ensuremath{\Varid{n'}} where instruction \ensuremath{\Varid{i}_{1}} lies in the buffer when it gets
  executed following \ensuremath{\Conid{D}_{1}} can be different from \ensuremath{\Varid{n}} because other instructions
  may be retired in \ensuremath{\Conid{D}_{1}'}, i.e., in general \ensuremath{\Varid{n}\neq\Varid{n'}}.  }
Intuitively, schedule \ensuremath{\Conid{D}_{1}'''} is \emph{valid} because
directive \ensuremath{\mathbf{exec}\;\Varid{n}} does not cause any rollback (\ensuremath{\mathbf{rollback}\;\not\in\;\Conid{O}_{1}}) and thus it cannot interfere with
any \ensuremath{\mathbf{fetch}}, \ensuremath{\mathbf{exec}}, or \ensuremath{\mathbf{retire}} directive in \ensuremath{\Conid{D}_{1}'},
which were already valid in the original schedule.
%
% \mv[inline]{Not entirely correct because there is a cyclic dependency between |N|, |D1'|, |n'| and |d'| }
% \Red{Since no rollbacks occur in the first execution, we can recover the correct
%   position and therefore uniquely identify directive |d1'| based on the the
%   number of |retire| directives in the prefix |D1'| (|fetch| directives do not
%   affect the count). Specifically, |n' = N + n|, where |N = size {d div d inset D1' /\ d = retire}|.}
% %
%
Since \ensuremath{\Conid{D}_{1}} and \ensuremath{\Conid{D}_{1}'''} are valid, we apply
Lemma~\ref{lemma:valid-consistent} and deduce that there exists an intermediate
configuration \ensuremath{\Conid{C}_{1}'''} such that \ensuremath{\Conid{C}_{1}\;\stepT{\mathbf{exec}\;\Varid{n}}{\Varid{o}_{1}'}\;\Conid{C}_{1}'''}, which then reduces
to the \emph{same} final configuration \ensuremath{\Conid{C}_{1}''}, i.e., \ensuremath{\Conid{C}_{1}'''\;\Downarrow_{\Conid{O}_{1}''}^{\Conid{D}_{1}'+{\mkern-9mu+}\ \Conid{D}_{1}''}\;\Conid{C}_{1}''} for some observation trace \ensuremath{\Conid{O}_{1}''}.
Furthermore, Lemma~\ref{lemma:valid-consistent} ensures that the observation
trace \ensuremath{\Conid{O}_{1}'''\mathrel{=}\Varid{o}_{1}'\mathbin{:}\Conid{O}_{1}''} generated by the alternative schedule \ensuremath{\Conid{D}_{1}'''} is equivalent
up to permutation to the observation trace \ensuremath{\Conid{O}_{1}\mathrel{=}\Varid{o}_{1}\mathbin{:}\Conid{O}_{1}'} generated by the original schedule \ensuremath{\Conid{D}_{1}}, i.e., \ensuremath{\Conid{O}_{1}\cong\Conid{O}_{1}'''}.
% i.e., |o1 : O1' ~= o1' :
% O1''|.
%
Since \ensuremath{\Conid{C}_{1}\;=_{J}\;\Conid{C}_{2}}, \ensuremath{\Conid{C}_{1}\;\stepT{\mathbf{exec}\;\Varid{n}}{\Varid{o}_{1}'}\;\Conid{C}_{1}'''}, and \ensuremath{\Conid{C}_{2}\;\stepT{\mathbf{exec}\;\Varid{n}}{\Varid{o}_{2}}\;\Conid{C}_{2}'}, we apply  Lemma~\ref{lemma:single-step-consistency} and obtain that the observations generated are equivalent, i.e.,  \ensuremath{\Varid{o}_{1}'\mathrel{=}\Varid{o}_{2}}, and there exists \ensuremath{\Conid{J'}} such that the intermediate configurations are \ensuremath{\Conid{J'}}-equivalent, i.e.,
\ensuremath{\Conid{C}_{1}'''\;\mathrel{=}_{\Conid{J'}}\;\Conid{C}_{2}'}.
Then, we apply our induction hypothesis to reductions \ensuremath{\Conid{C}_{1}'''\;\Downarrow_{\Conid{O}_{1}''}^{\Conid{D}_{1}'+{\mkern-9mu+}\ \Conid{D}_{1}''}\;\Conid{C}_{1}''} and \ensuremath{\Conid{C}_{2}'\;\Downarrow_{\Conid{O}_{2}'}^{\Conid{D}_{2}'}\;\Conid{C}_{2}''}
and  deduce that the final configurations are identical, i.e., \ensuremath{\Conid{C}_{1}''\mathrel{=}\Conid{C}_{2}''}, and the observation traces are  equivalent up to permutation and filtering, i.e., \ensuremath{\Conid{O}_{1}''\cong\Conid{O}_{2}'{\downarrow}}.
From \ensuremath{\Conid{O}_{1}''\cong\Conid{O}_{2}'{\downarrow}} and \ensuremath{\Varid{o}_{1}'\mathrel{=}\Varid{o}_{2}}, we have \ensuremath{\Varid{o}_{1}'\mathbin{:}\Conid{O}_{1}''\cong\Varid{o}_{2}\mathbin{:}\Conid{O}_{2}'{\downarrow}}, i.e., \ensuremath{\Conid{O}_{1}'''\cong\Conid{O}_{2}{\downarrow}}.
Finally, from \ensuremath{\Conid{O}_{1}\cong\Conid{O}_{1}'''} and \ensuremath{\Conid{O}_{1}'''\cong\Conid{O}_{2}{\downarrow}} we derive \ensuremath{\Conid{O}_{1}\cong\Conid{O}_{2}{\downarrow}} by transitivity.

\item \label{case:exec-n-J} \ensuremath{\Varid{d}_{2}\mathrel{=}\mathbf{exec}\;\Varid{n}} and \ensuremath{\Varid{n}\mathrel{=}J\mathbin{+}\mathrm{1}}.
  Since \ensuremath{\Varid{n}\mathrel{=}J\mathbin{+}\mathrm{1}}, the second configuration \ensuremath{\Conid{C}_{2}} resolves the
  mispredicted guard and performs a rollback, i.e., the second
  small-step is \ensuremath{\Conid{C}_{2}\;\stepT{\mathbf{exec}\;\Varid{n}}{\mathbf{rollback}(\Varid{p})}\;\Conid{C}_{2}'}, where
  \ensuremath{\Varid{p}} is the identifier of the guard.
Then, we anticipate the execution of the corresponding guard instruction in the first configuration \ensuremath{\Conid{C}_{1}}, so that the two executions synchronize again.
Let \ensuremath{\Varid{is}_{2}\mathrel{=}\Varid{is}_{2}'+{\mkern-9mu+}\ [\mskip1.5mu \Varid{i}_{2}\mskip1.5mu]+{\mkern-9mu+}\ \Varid{is}_{2}''} be the buffer of \ensuremath{\Conid{C}_{2}}, where \ensuremath{{\vert}\Varid{is}_{2}'{\vert}\mathrel{=}\Varid{n}\mathbin{-}\mathrm{1}} and thus \ensuremath{\Varid{i}_{2}} is the mispredicted guard instruction.
From \ensuremath{J}-equivalence, i.e., \ensuremath{\Conid{C}_{1}\;=_{J}\;\Conid{C}_{2}}, we deduce that
the buffer \ensuremath{\Varid{is}_{1}} of \ensuremath{\Conid{C}_{1}} can be decomposed similarly, i.e., \ensuremath{\Varid{is}_{1}\mathrel{=}\Varid{is}_{1}'+{\mkern-9mu+}\ [\mskip1.5mu \Varid{i}_{1}\mskip1.5mu]},  where \ensuremath{{\vert}\Varid{is}_{1}'{\vert}\mathrel{=}\Varid{n}\mathbin{-}\mathrm{1}} and thus \ensuremath{\Varid{is}_{1}'\mathrel{=}\Varid{is}_{2}'} and \ensuremath{\Varid{i}_{1}} is the corresponding  guard instruction, but correctly predicted.
In particular, from rule \srule{Zero}, we learn that \ensuremath{\Varid{i}_{1}\mathrel{=}\mathbf{guard}(\Varid{e}^{\Varid{b}_{1}},\Varid{cs}_{2}',\Varid{p})} and \ensuremath{\Varid{i}_{2}\mathrel{=}\mathbf{guard}(\Varid{e}^{\Varid{b}_{2}},\Varid{cs}_{1},\Varid{p})}, where  \ensuremath{\Varid{b}_{1}\neq\Varid{b}_{2}}
and \ensuremath{\Varid{cs}_{1}} is the command stack of \ensuremath{\Conid{C}_{1}}.
Let \ensuremath{\Conid{D}_{1}\mathrel{=}\Conid{D}_{1}'+{\mkern-9mu+}\ [\mskip1.5mu \Varid{d}_{1}'\mskip1.5mu]+{\mkern-9mu+}\ \Conid{D}_{1}''} be the original schedule where \ensuremath{\Varid{d}_{1}'}
is the \ensuremath{\mathbf{exec}\;\Varid{n'}} directive that evaluates the guard instruction
\ensuremath{\Varid{i}_{1}} and let \ensuremath{\Conid{D}_{1}'''\mathrel{=}[\mskip1.5mu \mathbf{exec}\;\Varid{n}\mskip1.5mu]+{\mkern-9mu+}\ \Conid{D}_{1}'+{\mkern-9mu+}\ \Conid{D}_{1}''} be the valid,
alternative schedule that anticipates it.
%
% \footnote{ The position where the guard lies when it gets evaluated in the first execution needs not to be the same as in the second execution, i.e., in general |n' :<>: n|.
% %
% Since the first execution is free of rollbacks, we can recover the correct offset and therefore uniquely identify directive |d1'| based on the the number of |retire| directives in the prefix |D1'| (|fetch| directives do not affect the count).
% %
% Specifically, |n' = N + n|, where |N = size {d div d inset D1' /\ d = retire}|.}
%
We apply Lemma~\ref{lemma:valid-consistent} to valid schedules \ensuremath{\Conid{D}_{1}} and \ensuremath{\Conid{D}_{1}'''}
and deduce that there exists an intermediate configuration \ensuremath{\Conid{C}_{1}'''} such that \ensuremath{\Conid{C}_{1}\;\stepT{\mathbf{exec}\;\Varid{n}}{\epsilon}\;\Conid{C}_{1}'''} via rule \srule{Exec-Branch-Ok}, which then reduces to the \emph{same} final
configuration \ensuremath{\Conid{C}_{1}''}, i.e., \ensuremath{\Conid{C}_{1}'''\;\Downarrow_{\Conid{O}_{1}''}^{\Conid{D}_{1}'+{\mkern-9mu+}\ \Conid{D}_{1}''}\;\Conid{C}_{1}''} for some
observation trace \ensuremath{\Conid{O}_{1}''}.
Furthermore, the lemma ensures that the observation trace
\ensuremath{\Conid{O}_{1}'''\mathrel{=}\epsilon\;{{\mkern-2mu\cdot\mkern-2mu}}\;\Conid{O}_{1}''} generated by the alternative schedule \ensuremath{\Conid{D}_{1}'''} is equivalent up to permutation to the  observation trace \ensuremath{\Conid{O}_{1}\mathrel{=}\Varid{o}_{1}\;{{\mkern-2mu\cdot\mkern-2mu}}\;\Conid{O}_{1}'} generated by the original schedule \ensuremath{\Conid{D}_{1}}, i.e., \ensuremath{\Conid{O}_{1}\cong\Conid{O}_{1}'''}.
Notice that rule \srule{Exec-Branch-Ok} rewrites the buffer \ensuremath{\Varid{is}_{1}} in \ensuremath{\Conid{C}_{1}} to \ensuremath{\Varid{is}_{1}'\mathrel{=}\Varid{is}_{1}+{\mkern-9mu+}\ [\mskip1.5mu \mathbf{nop}\mskip1.5mu]+{\mkern-9mu+}\ [\mskip1.5mu \mskip1.5mu]} in \ensuremath{\Conid{C}_{1}'}, which is identical to the buffer \ensuremath{\Varid{is}_{2}'\mathrel{=}\Varid{is}_{1}+{\mkern-9mu+}\ [\mskip1.5mu \mathbf{nop}\mskip1.5mu]} obtained from the step \ensuremath{\Conid{C}_{2}\;\stepT{\mathbf{exec}\;\Varid{n}}{\mathbf{rollback}(\Varid{p})}\;\Conid{C}_{2}'} via rule \srule{Exec-Branch-Mispredict}, which additionally replaces the command stack of \ensuremath{\Conid{C}_{2}} with \ensuremath{\Varid{cs}_{1}} in \ensuremath{\Conid{C}_{2}'}.
As a result, the intermediate configurations \ensuremath{\Conid{C}_{1}'''} and \ensuremath{\Conid{C}_{2}'} have identical reorder buffers (\ensuremath{\Varid{is}_{1}'\mathrel{=}\Varid{is}_{2}'}) and command stacks (\ensuremath{\Varid{cs}_{1}\mathrel{=}\Varid{cs}_{1}}), memory stores and variable maps, which are equal in \ensuremath{\Conid{C}_{1}\;=_{J}\;\Conid{C}_{2}} and
left unchanged in \ensuremath{\Conid{C}_{1}'''} and \ensuremath{\Conid{C}_{2}'} by the small steps.
Therefore, the configurations \ensuremath{\Conid{C}_{1}'''} and \ensuremath{\Conid{C}_{2}'} are fully synchronized
again, i.e., \ensuremath{\Conid{C}_{1}'''\;\mathrel{=}_{\bot}\;\Conid{C}_{2}''} via rule \srule{Synch} and we can apply our induction hypothesis to the reductions \ensuremath{\Conid{C}_{1}'''\;\Downarrow_{\Conid{O}_{1}''}^{\Conid{D}_{1}'+{\mkern-9mu+}\ \Conid{D}_{1}''}\;\Conid{C}_{1}''} and \ensuremath{\Conid{C}_{2}'\;\Downarrow_{\Conid{O}_{2}'}^{\Conid{D}_{2}'}\;\Conid{C}_{2}''}.
Then, we deduce that the final configurations are identical, i.e., \ensuremath{\Conid{C}_{1}''\mathrel{=}\Conid{C}_{2}''}, and that the observation traces are equal up to permutation and filtering, i.e., \ensuremath{\Conid{O}_{1}''\cong\Conid{O}_{2}'{\downarrow}}.
Notice that the filter function rewrites the rollback observation
\ensuremath{\mathbf{rollback}(\Varid{p})} to the \emph{silent} observation \ensuremath{\epsilon}
(Def~\ref{def:filter-function}) in the second observation trace, i.e.,
\ensuremath{\Conid{O}_{2}{\downarrow}\mathrel{=}(\mathbf{rollback}(\Varid{p})\;{{\mkern-2mu\cdot\mkern-2mu}}\;\Conid{O}_{2}'){\downarrow}\mathrel{=}\epsilon\;{{\mkern-2mu\cdot\mkern-2mu}}\;\Conid{O}_{2}'{\downarrow}\mathrel{=}\Conid{O}_{2}'{\downarrow}}, and that the alternative observation trace starts with a silent
observation as well, i.e., \ensuremath{\Conid{O}_{1}'''\mathrel{=}\epsilon\;{{\mkern-2mu\cdot\mkern-2mu}}\;\Conid{O}_{1}''}, therefore from
\ensuremath{\Conid{O}_{1}''\cong\Conid{O}_{2}'{\downarrow}} above we obtain \ensuremath{\Conid{O}_{1}'''\cong\Conid{O}_{2}{\downarrow}}.
%
%
% Since |O1''' = eps cdot O1''|, |O2 = (rollback p) cdot O2'|, and |down O2 = down ((rollback p cdot O2')) = eps cdot (down O2') | (\Red{Def~\ref{}}), we
% rewrite the equation |eps cdot O1'' ~= eps cdot (down O2')| from above to
%
Finally, from \ensuremath{\Conid{O}_{1}\cong\Conid{O}_{1}'''} and \ensuremath{\Conid{O}_{1}'''\cong\Conid{O}_{2}{\downarrow}} we derive \ensuremath{\Conid{O}_{1}\cong\Conid{O}_{2}{\downarrow}} by transitivity.

\item \ensuremath{\Varid{d}_{2}\mathrel{=}\mathbf{retire}}.
Let \ensuremath{\Varid{i}_{2}} be the instruction that gets retired, i.e., the
\emph{first} instruction in the reorder buffer of \ensuremath{\Conid{C}_{2}}, and let \ensuremath{\Varid{i}_{1}} be the first instruction in the buffer of \ensuremath{\Conid{C}_{1}}.
From $J$-equivalence, i.e., \ensuremath{\Conid{C}_{1}\;=_{J}\;\Conid{C}_{2}}, we know that either the two
configurations are fully synchronized, i.e., \ensuremath{J\mathrel{=}\bot}, or \ensuremath{\Varid{i}_{2}} \emph{precedes} the
mispredicted guard at index \ensuremath{J\mathbin{+}\mathrm{1}}, i.e., \ensuremath{J>\mathrm{0}}.\footnote{ In particular, if \ensuremath{J\mathrel{=}\mathrm{0}}, then
  rule \srule{Zero} implies that instruction \ensuremath{\Varid{i}_{2}} is the mispredicted
  \emph{guard}, which cannot be retired, contradicting the step \ensuremath{\Conid{C}_{2}\;\stepT{\mathbf{retire}}{\Varid{o}_{2}}\;\Conid{C}_{2}'}. }
Therefore, from \ensuremath{\Conid{C}_{1}\;=_{J}\;\Conid{C}_{2}} it follows that \ensuremath{\Varid{i}_{1}\mathrel{=}\Varid{i}_{2}} and thus also instruction \ensuremath{\Varid{i}_{1}} is resolved and can be retired, i.e., directive \ensuremath{\mathbf{retire}} is valid for \ensuremath{\Conid{C}_{1}}.
Then, we anticipate the next \ensuremath{\mathbf{retire}} directive in the first schedule
(one must exist because \ensuremath{\Conid{D}_{1}} is \emph{valid}), in order to retire \ensuremath{\Varid{i}_{2}} and preserve \ensuremath{J}-equivalence.
Formally, let \ensuremath{\Conid{D}_{1}\mathrel{=}\Conid{D}_{1}'+{\mkern-9mu+}\ [\mskip1.5mu \mathbf{retire}\mskip1.5mu]+{\mkern-9mu+}\ \Conid{D}_{1}''} be the original
schedule, where \ensuremath{\Conid{D}_{1}'} does not contain any \ensuremath{\mathbf{retire}} directive.
Then, we define the alternative, \emph{valid} schedule \ensuremath{\Conid{D}_{1}'''\mathrel{=}[\mskip1.5mu \mathbf{retire}\mskip1.5mu]+{\mkern-9mu+}\ \Conid{D}_{1}'\downarrow^1+{\mkern-9mu+}\ \Conid{D}_{1}''}, where \ensuremath{\Conid{D}_{1}'\downarrow^1} decreases by \ensuremath{\mathrm{1}} the
indexes of all \ensuremath{\mathbf{exec}\;\Varid{n}} directives in \ensuremath{\Conid{D}_{1}'}  to account for the early retirement of \ensuremath{\Varid{i}_{1}}.
%
% Intuitively, schedule |D1'''| is also valid because
% the |retire| directive does not interfere with the |fetch| and |execute| directives in |D1'|, which remain valid after shifting
% \emph{adjusted} |execute| directives in |down1 D1'|.
%
We apply Lemma~\ref{lemma:valid-consistent} to valid schedules \ensuremath{\Conid{D}_{1}} and \ensuremath{\Conid{D}_{1}'''}
and deduce that there exists an intermediate configuration \ensuremath{\Conid{C}_{1}'''} such that \ensuremath{\Conid{C}_{1}\;\stepT{\mathbf{retire}}{\Varid{o}_{1}'}\;\Conid{C}_{1}'''}
which then reduces
to the \emph{same} final configuration \ensuremath{\Conid{C}_{1}''}, i.e., \ensuremath{\Conid{C}_{1}'''\;\Downarrow_{\Conid{O}_{1}''}^{\Conid{D}_{1}'\downarrow^1+{\mkern-9mu+}\ \Conid{D}_{1}''}\;\Conid{C}_{1}''}, for some observation trace \ensuremath{\Conid{O}_{1}''}.
Furthermore, Lemma~\ref{lemma:valid-consistent} ensures that the observation
trace \ensuremath{\Conid{O}_{1}'''\mathrel{=}\Varid{o}_{1}'\mathbin{:}\Conid{O}_{1}''} generated by the alternative schedule \ensuremath{\Conid{D}_{1}'''} is equivalent
up to permutation to the observation trace \ensuremath{\Conid{O}_{1}\mathrel{=}\Varid{o}_{1}\mathbin{:}\Conid{O}_{1}'} generated by the original schedule \ensuremath{\Conid{D}_{1}}, i.e., we have \ensuremath{\Conid{O}_{1}\cong\Conid{O}_{1}'''}.
% i.e., |o1 : O1' ~= o1' :
% O1''|.
%
Since \ensuremath{\Conid{C}_{1}\;=_{J}\;\Conid{C}_{2}}, \ensuremath{\Conid{C}_{1}\;\stepT{\mathbf{retire}}{\Varid{o}_{1}'}\;\Conid{C}_{1}'''}, and \ensuremath{\Conid{C}_{2}\;\stepT{\mathbf{retire}}{\Varid{o}_{2}}\;\Conid{C}_{2}'}, we apply  Lemma~\ref{lemma:single-step-consistency} and obtain that the observations generated are equivalent, i.e.,  \ensuremath{\Varid{o}_{1}'\mathrel{=}\Varid{o}_{2}}, and there exists \ensuremath{\Conid{J'}} such that the intermediate configurations remain \ensuremath{\Conid{J'}}-equivalent, i.e., \ensuremath{\Conid{C}_{1}'''\;\mathrel{=}_{\Conid{J'}}\;\Conid{C}_{2}'}.
% \footnote{In particular, if |J inset
%   Nat|, then |J' = J - 1| and the intermediate configurations are
%   $J-1$-equivalent (|C1''' (sub (=) (J-1)) C2'|) because one
%   instruction has been retired, or |J' = bot| if |J = bot| and the
%   intermediate configurations remain |bot|-equivalent, i.e., |C1'''
%   (sub (=) bot) C2'|.}
%
Then, we apply our induction hypothesis to reductions \ensuremath{\Conid{C}_{1}'''\;\Downarrow_{\Conid{O}_{1}''}^{\Conid{D}_{1}'\downarrow^1+{\mkern-9mu+}\ \Conid{D}_{1}''}\;\Conid{C}_{1}''} and \ensuremath{\Conid{C}_{2}'\;\Downarrow_{\Conid{O}_{2}'}^{\Conid{D}_{2}'}\;\Conid{C}_{2}''} and
deduce that the final configurations are identical, i.e., \ensuremath{\Conid{C}_{1}''\mathrel{=}\Conid{C}_{2}''}, and the observation traces are equivalent up to
permutation and filtering, i.e., \ensuremath{\Conid{O}_{1}''\cong\Conid{O}_{2}'{\downarrow}}.
From \ensuremath{\Conid{O}_{1}''\cong\Conid{O}_{2}'{\downarrow}} and \ensuremath{\Varid{o}_{1}'\mathrel{=}\Varid{o}_{2}}, we have \ensuremath{\Varid{o}_{1}'\mathbin{:}\Conid{O}_{1}''\cong\Varid{o}_{2}\mathbin{:}\Conid{O}_{2}'{\downarrow}}, i.e., \ensuremath{\Conid{O}_{1}'''\cong\Conid{O}_{2}{\downarrow}}.
Finally, from \ensuremath{\Conid{O}_{1}\cong\Conid{O}_{1}'''} and \ensuremath{\Conid{O}_{1}'''\cong\Conid{O}_{2}{\downarrow}} we derive \ensuremath{\Conid{O}_{1}\cong\Conid{O}_{2}{\downarrow}} by transitivity.

\end{enumerate}
% \end{CompactItemize}

\end{proof}

\begin{thm}[Speculative Consistency]
For all programs \ensuremath{\Varid{c}}, initial memory stores \ensuremath{\mu}, variable maps \ensuremath{\rho},
and valid directives \ensuremath{\Conid{D}}, such that \ensuremath{\langle\mu,\rho\rangle\Downarrow_{\Conid{O}}^{\Varid{c}}\langle\mu',\rho'\rangle}
and \ensuremath{\langle[\mskip1.5mu \mskip1.5mu],[\mskip1.5mu \Varid{c}\mskip1.5mu],\mu,\rho\rangle\Downarrow_{\Conid{O}'}^{\Conid{D}}\langle[\mskip1.5mu \mskip1.5mu],[\mskip1.5mu \mskip1.5mu],\mu'',\rho''\rangle},
then \ensuremath{\mu'\mathrel{=}\mu''}, \ensuremath{\rho'\mathrel{=}\rho''}, and \ensuremath{\Conid{O}\cong\Conid{O}'{\downarrow}}.
\end{thm}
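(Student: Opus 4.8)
The plan is to obtain the theorem as a short composition of the two lemmas already established earlier: Sequential Consistency (\Cref{lemma1}) and General Consistency (\Cref{lemma:general-consistency}). The idea is to first turn the given sequential execution into a canonical \emph{speculative} execution that follows a sequential schedule, and then compare this canonical execution against the arbitrary one driven by the directives $D$.

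First, I would apply \Cref{lemma1} to the hypothesis $\langle \mu, \rho\rangle \Downarrow_O^c \langle \mu', \rho'\rangle$. This yields a valid, sequential schedule $D_s$ together with a speculative big-step reduction $\langle [\,], [c], \mu, \rho\rangle \Downarrow_O^{D_s} \langle [\,], [\,], \mu', \rho'\rangle$ producing the \emph{same} observation trace $O$ and the same final store and variable map. Because $D_s$ is sequential, no misprediction occurs during this execution, so $\mathbf{rollback} \notin O$. At this point I have two complete speculative executions that start from the \emph{identical} configuration $C = \langle [\,], [c], \mu, \rho\rangle$: the canonical one following $D_s$ with trace $O$ and final configuration $\langle [\,], [\,], \mu', \rho'\rangle$, and the given one following $D$ with trace $O'$ and final configuration $\langle [\,], [\,], \mu'', \rho''\rangle$.

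Next, since both executions begin from the same configuration, they are trivially fully synchronized, i.e., $C =_\bot C$ by rule \srule{Synch}. I would then instantiate \Cref{lemma:general-consistency} with $C_1 = C_2 = C$, $J = \bot$, $D_1 = D_s$, $D_2 = D$, $O_1 = O$, and $O_2 = O'$. The hypotheses are met: both schedules are valid ($D_s$ by \Cref{lemma1}, $D$ by assumption) and the no-rollback side condition $\mathbf{rollback} \notin O_1$ holds for $O_1 = O$ as noted above. The conclusion then delivers $\langle [\,], [\,], \mu', \rho'\rangle = \langle [\,], [\,], \mu'', \rho''\rangle$, whence $\mu' = \mu''$ and $\rho' = \rho''$, together with $O \cong O'{\downarrow}$, which is exactly the statement of the theorem.

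The theorem is thus a clean corollary; the genuine difficulty lives in the two supporting lemmas. In \Cref{lemma1} the obstacle is the constructive translation of each sequential reduction rule into the right sequence of $\mathbf{fetch}$/$\mathbf{exec}$/$\mathbf{retire}$ directives, supplying the branch prediction that matches the sequentially-computed condition so that every guard resolves successfully and no rollback is generated. The harder obstacle is \Cref{lemma:general-consistency}: it must reconcile an \emph{arbitrary} out-of-order, possibly mis-speculating schedule $D$ with the sequential one by permuting directives, using $J$-equivalence to track configurations that agree only up to the first pending mispredicted guard, showing that directives executing instructions on a squashed path produce observations erased by the filtering function ${\downarrow}$, and that each fetch/execute/retire on the correct prefix can be anticipated in the sequential schedule while preserving validity (via \Cref{lemma:valid-consistent}) and the equivalence of traces up to permutation.
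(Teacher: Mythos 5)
Your proposal is correct and matches the paper's own proof exactly: it applies Lemma~\ref{lemma1} to obtain a valid, rollback-free sequential schedule and a speculative execution producing trace $O$, then instantiates Lemma~\ref{lemma:general-consistency} on the two speculative executions from the identical (hence $\bot$-equivalent) initial configuration to conclude. Your closing remarks on where the real difficulty lies also accurately reflect the structure of the paper's argument.
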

\begin{proof}
First, we apply Lemma~\ref{lemma1} (\emph{sequential consistency}) to the sequential reduction \ensuremath{\langle\mu,\rho\rangle\Downarrow_{\Conid{O}}^{\Varid{c}}\langle\mu',\rho'\rangle} and obtain a valid, sequential schedule \ensuremath{\Conid{D}'} such that \ensuremath{\langle[\mskip1.5mu \mskip1.5mu],[\mskip1.5mu \Varid{c}\mskip1.5mu],\mu,\rho\rangle\Downarrow_{\Conid{O}}^{\Conid{D}'}\langle[\mskip1.5mu \mskip1.5mu],[\mskip1.5mu \mskip1.5mu]\;\mu',\rho'\rangle} where \ensuremath{\mathbf{rollback}\;\not\in\;\Conid{O}}.
Let \ensuremath{\Conid{C}\mathrel{=}\langle[\mskip1.5mu \mskip1.5mu],[\mskip1.5mu \Varid{c}\mskip1.5mu],\mu,\rho\rangle}, \ensuremath{\Conid{C}'\mathrel{=}\langle[\mskip1.5mu \mskip1.5mu],[\mskip1.5mu \mskip1.5mu]\;\mu',\rho'\rangle}, and \ensuremath{\Conid{C}''\mathrel{=}\langle[\mskip1.5mu \mskip1.5mu],[\mskip1.5mu \mskip1.5mu],\mu'',\rho''\rangle}.
We now have two speculative reductions \ensuremath{\Conid{C}\;\Downarrow_{\Conid{O}'}^{\Conid{D}}\;\Conid{C}'} and \ensuremath{\Conid{C}\;\Downarrow_{\Conid{O}}^{\Conid{D}'}\;\Conid{C}''}, where \ensuremath{\Conid{C}\;\mathrel{=}_{J}\;\Conid{C}} from rule \srule{Synch}.
Then, we apply Lemma~\ref{lemma:general-consistency} (\emph{general consistency}) to these reductions and obtain
 \ensuremath{\Conid{O}\cong\Conid{O}'{\downarrow}} and \ensuremath{\Conid{C}'\mathrel{=}\Conid{C}''}, which implies \ensuremath{\mu'\mathrel{=}\mu''} and \ensuremath{\rho'\mathrel{=}\rho''}.
\end{proof}

\subsection{Security}

% \begin{lemma}[Properties of |LL|-equivalence]
% For all syntactic categories, the |LL|-equivalence relation |&~| is:
% \begin{enumerate}[label=(\alph*)]
% \item Reflexive: |forall x|, |x &~ x|
% \item Symmetric: |forall x y|, if |x &~ y|, then |y &~ x|.
% \item Transitive: |forall x y z|, if |x &~ y|, |y &~ z|, then |x &~ z|.
% \end{enumerate}
% \end{lemma}
% \begin{proof}
% By induction on the relation judgments.
% \end{proof}

\mypara{\ensuremath{\Blue{\Conid{L}}}-Equivalence}
The soundness proof of our type system relies on a relation called
\ensuremath{\Blue{\Conid{L}}}-equivalence, which intuitively relates configurations that are
\emph{indistinguishable} to an attacker that can observe only public data, which we identify with security label \ensuremath{\Conid{L}}.
Figure \ref{fig:low-equivalence} formally defines \ensuremath{\Blue{\Conid{L}}}-equivalence for the categories of our calculus.
Two configurations \ensuremath{\Conid{C}_{1}\mathrel{=}\langle\Varid{is}_{1},\Varid{cs}_{1},\mu_{1},\rho_{1}\rangle} and \ensuremath{\Conid{C}_{2}\mathrel{=}\langle\Varid{is}_{2},\Varid{cs}_{2},\mu_{2},\rho_{2}\rangle} are \ensuremath{\Blue{\Conid{L}}}-equivalent under typing environment \ensuremath{\Gamma},
written \ensuremath{\Gamma\vdash\Conid{C}_{1}\approx_{\Blue{\Conid{L}}}\Conid{C}_{2}}, if their reorder buffers are
\ensuremath{\Blue{\Conid{L}}}-equivalent, i.e., \ensuremath{\Gamma\vdash\Varid{is}_{1}\approx_{\Blue{\Conid{L}}}\Varid{is}_{2}}, they have the same
command stacks, i.e., \ensuremath{\Varid{cs}_{1}\mathrel{=}\Varid{cs}_{2}}, and their architectural state,
i.e., variable maps and memory stores, are related, i.e., \ensuremath{\rho_{1}\approx_{\Blue{\Conid{L}}}\rho_{2}} and \ensuremath{\mu_{1}\approx_{\Blue{\Conid{L}}}\mu_{2}}, respectively.
Syntactic equivalence for the command stacks ensures that programs cannot leak secret data through the instruction cache, e.g., by branching on secret data.
For the architectural state \ensuremath{\Blue{\Conid{L}}}-equivalence is standard; values of
public variables and stored at public memory addresses must be equal
(rules \srule{VarMap} and \srule{Memory} in
Fig.~\ref{fig:equiv-varmaps}).
Figure \ref{fig:equiv-stacks} defines \emph{pointwise
}\ensuremath{\Blue{\Conid{L}}}-equivalence for reorder buffers, which ensures that related
buffers have the same length.
Instructions are related (\ensuremath{\Gamma\vdash\Varid{i}_{1}\approx_{\Blue{\Conid{L}}}\Varid{i}_{2}} in
Fig.~\ref{fig:equiv-instr}) only if they are of the same kind, e.g.,
both instructions are guards, loads, assignments etc.
For most instructions, \ensuremath{\Blue{\Conid{L}}}-equivalence is just syntactic equivalence, e.g., for rules \srule{Nop} and \srule{Fail}.
In particular, to avoid leaking through the instruction cache, we
demand syntactic equivalence for guard instructions (rule
\srule{Guard}), where all components (condition \ensuremath{\Varid{e}}, predicted value
\ensuremath{\Varid{b}}, rollback stack \ensuremath{\Varid{cs}}, and prediction identifier \ensuremath{\Varid{p}}) must be
identical.
Similarly, to avoid leaks through the data cache, rule \srule{Load} relates load
instructions as long as they read the same address and update the
same variable.
We impose a similar restriction for store instructions (\srule{Store$_{\ensuremath{\Blue{\Conid{L}}}}$, Store$_{\ensuremath{\Red{\Conid{H}}}}$}) and additionally require equal expressions for stores that update public addresses (\srule{Store$_{\ensuremath{\Blue{\Conid{L}}}}$}), which are identified by the label annotation \ensuremath{\Blue{\Conid{L}}} that decorates the instruction itself.
Protect instructions allow their argument to be different because they
allow their arguments are allowed to be evaluated transiently
(\srule{Protect}).
Assignments are related as long as they update the same variable.
If the variable is public and typed stable in the typing environment,
the values assigned must be identical
(\srule{Asgn$_{\ensuremath{\Blue{\Conid{L}}\land\ensuremath{\Concrete}}}$}), but can be different for secret or
transient variables (\srule{Asgn$_{\ensuremath{\Red{\Conid{H}}\land\ensuremath{\Transient}}}$}).
This relaxation permits public, but \emph{transient}, variables
to temporarily assume different, secret values resulting from off-bounds array reads.
Later, we prove that when these assignments are retired, these values are necessarily public, and therefore equal.
%

%
% Compared to |LL|-equivalence for architectural variable maps, i.e., \srule{VarMap},
Since \emph{transient} variable maps are computed from the pending
assignments in the reorder buffer
(Fig.~\ref{fig:full-transient-var-map}), we relax \ensuremath{\Blue{\Conid{L}}}-equivalence
in rule \srule{Transient-VarMap} similarly to rule
\srule{Asgn$_{\ensuremath{\Blue{\Conid{L}}\land\ensuremath{\Concrete}}}$}.

% type-system relies on a key invariant of the semantics called
% |LL|-equivalence.
% %
% Intuitively, |LL|-equivalence relates configurations that are indistinguishable to an attacker that can observe only the
% public (|LL|) data.
% %
% Then, our theorem

% %
% well-typed programs preserve |LL|-equivalence when executed under our speculative semantics following a list of attacker-provided directives.
% %
%

\begin{figure}[p]
% \begin{framed}
\centering
\begin{subfigure}{.85\textwidth}
\begin{mathpar}
\inferrule[Nop]
{}
{\ensuremath{\Gamma\vdash\mathbf{nop}\approx_{\Blue{\Conid{L}}}\mathbf{nop}}}
\and
\inferrule[Fail]
{}
{\ensuremath{\Gamma\vdash\mathbf{fail}(\Varid{p})\approx_{\Blue{\Conid{L}}}\mathbf{fail}(\Varid{p})}}
\and
\inferrule[Guard]
{}
{\ensuremath{\Gamma\vdash\mathbf{guard}(\Varid{e}^{\Varid{b}},\Varid{cs},\Varid{p})\approx_{\Blue{\Conid{L}}}\mathbf{guard}(\Varid{e}^{\Varid{b}},\Varid{cs},\Varid{p})}}
\and
\inferrule[Asgn$_{\ensuremath{\Blue{\Conid{L}}\land\ensuremath{\Concrete}}}$]
{\ensuremath{\Varid{x}\;\in\;\Blue{\Conid{L}}} \\ \ensuremath{\Gamma(\Varid{x})\mathrel{=}\ensuremath{\Concrete}}}
{\ensuremath{\Gamma\vdash\Varid{x}\mathbin{:=}\Varid{e}\approx_{\Blue{\Conid{L}}}\Varid{x}\mathbin{:=}\Varid{e}}}
\and
\inferrule[Asgn$_{\ensuremath{\Red{\Conid{H}}\lor\ensuremath{\Transient}}}$]
{\ensuremath{\Varid{x}\;\not\in\;\Blue{\Conid{L}}}\ \ensuremath{\lor}\ \ensuremath{\Gamma(\Varid{x})\mathrel{=}\ensuremath{\Transient}}}
{\ensuremath{\Gamma\vdash\Varid{x}\mathbin{:=}\Varid{e}_{1}\approx_{\Blue{\Conid{L}}}\Varid{x}\mathbin{:=}\Varid{e}_{2}}}
\and
\inferrule[Protect]
{}
{\ensuremath{\Gamma\vdash\Varid{x}\mathbin{:=}\mathbf{protect}(\Varid{e}_{1})\approx_{\Blue{\Conid{L}}}\Varid{x}\mathbin{:=}\mathbf{protect}(\Varid{e}_{2})}}
\and
\inferrule[Load]
{}
{\ensuremath{\Gamma\vdash\Varid{x}\mathbin{:=}\mathbf{load}(\Varid{e})\approx_{\Blue{\Conid{L}}}\Varid{x}\mathbin{:=}\mathbf{load}(\Varid{e})}}
\and
\inferrule[Store$_{\ensuremath{\Red{\Conid{H}}}}$]
{}
{\ensuremath{\Gamma\vdash\mathbf{store}_{\Red{\Conid{H}}}(\Varid{e},\Varid{e}_{1})\approx_{\Blue{\Conid{L}}}\mathbf{store}_{\Red{\Conid{H}}}(\Varid{e},\Varid{e}_{2})}}
\and
\inferrule[Store$_{\ensuremath{\Blue{\Conid{L}}}}$]
{}
{\ensuremath{\Gamma\vdash\mathbf{store}_{\Blue{\Conid{L}}}(\Varid{e}_{1},\Varid{e}_{2})\approx_{\Blue{\Conid{L}}}\mathbf{store}_{\Blue{\Conid{L}}}(\Varid{e}_{1},\Varid{e}_{2})}}
\and
\end{mathpar}
\captionsetup{format=caption-with-line}
\caption{Instructions \ensuremath{\Gamma\vdash\Varid{i}_{1}\approx_{\Blue{\Conid{L}}}\Varid{i}_{2}}.}
\label{fig:equiv-instr}
\end{subfigure}%

% \begin{subfigure}{.85\textwidth}
% \begin{mathpar}
% \inferrule[Skip]
% {}
% {|Gamma :- skip &~ skip|}
% \and
% \inferrule[Fail]
% {}
% {|Gamma :- fail &~ fail|}
% \and
% \inferrule[Seq]
% {|Gamma :- c1 &~ c1'| \\
%  |Gamma :- c2 &~ c2'|}
% {|Gamma :- c1;c2 &~ c1';c2'|}
% \and
% \inferrule[Asgn$_{|LL /\ St|}$]
% {| x inset LL | \\ |(eval Gamma x) = St|}
% {|Gamma :- x := r &~ x := r|}
% \and
% \inferrule[Asgn$_{|HH \/ Tr|}$]
% {| x notin LL|\ |\/|\ |(eval Gamma x) = Tr|}
% {|Gamma :- x := r1 &~ x := r2|}
% \and
% \inferrule[Protect$_{|LL|}$]
% {|x inset LL| \\ ???}
% {|Gamma :- x := protect(r) &~ x := protect(r)|}
% \and
% \inferrule[Protect$_{|HH|}$]
% {|x notin LL|}
% {|Gamma :- x := protect(r1) &~ x := protect(r2)|}
% \and
% \inferrule[Array-Write$_{|LL|}$]
% {|a inset LL|}
% {|Gamma :- (array a e1) := e2 &~ (array a e1) := e2|}
% \and
% \inferrule[Array-Write$_{|HH|}$]
% {|a notin LL|}
% {|Gamma :- (array a e) := e1 &~ (array a e) := e2|}
% \and
% \inferrule[Ptr-Write$_{|LL|}$]
% {}
% {|Gamma :- ((sub (*) LL) e1 := e2) &~ ((sub (*) LL) e1 := e2)|}
% \and
% \inferrule[Ptr-Write$_{|HH|}$]
% {}
% {|Gamma :- ((sub (*) HH) e := e1) &~ ((sub (*) HH) e := e2)|}
% \and
% \inferrule[If]
% {|c1 &~ c1'| \\ |c2 &~ c2'|}
% {|Gamma :- if e then c1 else c2 &~ if e then c1' else c2'|}
% \and
% \inferrule[While]
% {|c &~ c'|}
% {|Gamma :- while e do c &~ while e do c'|}
% \and
% \end{mathpar}
% \captionsetup{format=caption-with-line}
% \caption{Commands |Gamma :- c1 &~ c2|.}
% \label{fig:equiv-cmd}
% \end{subfigure}
% \caption{|LL|-equivalence.}
% \end{framed}
% \end{figure}

% \begin{figure}[p]
% \ContinuedFloat
% \begin{framed}
% \centering
\begin{subfigure}{.85\textwidth}
\begin{mathpar}
\inferrule[RB-Empty]
{}
{\ensuremath{\Gamma\vdash[\mskip1.5mu \mskip1.5mu]\approx_{\Blue{\Conid{L}}}[\mskip1.5mu \mskip1.5mu]}}
\and
\inferrule[RB-Cons]
{\ensuremath{\Gamma\vdash\Varid{i}_{1}\approx_{\Blue{\Conid{L}}}\Varid{i}_{2}} \\ \ensuremath{\Gamma\vdash\Varid{is}_{1}\approx_{\Blue{\Conid{L}}}\Varid{is}_{2}}}
{\ensuremath{\Gamma\vdash(\Varid{i}_{1}\mathbin{:}\Varid{is}_{1})\approx_{\Blue{\Conid{L}}}(\Varid{i}_{2}\mathbin{:}\Varid{is}_{2})}}
% \and
% \inferrule[Cmd-Stack-Empty]
% {}
% {|Gamma :- [] &~ []|}
% \and
% \inferrule[Cmd-Stack-Cons]
% {|Gamma :- c1 &~ c2| \\ |Gamma :- cs1 &~ cs2|}
% {|Gamma :- (c1 : cs1) &~ (c2 : cs2)|}
\end{mathpar}
\captionsetup{format=caption-with-line}
\caption{Reorder buffers \ensuremath{\Gamma\vdash\Varid{is}_{1}\approx_{\Blue{\Conid{L}}}\Varid{is}_{2}}.}
% and command stacks (|Gamma :- cs1 &~ cs2|).}
\label{fig:equiv-stacks}
\end{subfigure}

\begin{subfigure}{.85\textwidth}
\begin{mathpar}
\inferrule[VarMap]
{
% |Dom rho1 = Dom rho2| \\
 \ensuremath{\forall\;\Varid{x}\;\in\;\Blue{\Conid{L}}}\ .\ \ensuremath{\rho_{1}(\Varid{x})\mathrel{=}\rho_{2}(\Varid{x})}}
{\ensuremath{\rho_{1}\approx_{\Blue{\Conid{L}}}\rho_{2}}}
\and
\inferrule[Memory]
{
% |Dom rho1 = Dom rho2| \\
 \ensuremath{\forall\;\Varid{a}\;\in\;\Blue{\Conid{L}}}\ .\ \ensuremath{\forall\;\Varid{n}\;\in\;\{\mskip1.5mu \Varid{base}(\Varid{a}),\mathbin{...},\Varid{base}(\Varid{a})\mathbin{+}\mathit{length}(\Varid{a})\mathbin{-}\mathrm{1}\mskip1.5mu\}}\ \ensuremath{.}\ \ensuremath{\mu_{1}(\Varid{n})\mathrel{=}\mu_{2}(\Varid{n})}}
{\ensuremath{\mu_{1}\approx_{\Blue{\Conid{L}}}\mu_{2}}}
\end{mathpar}
\captionsetup{format=caption-with-line}
\caption{Variable maps (\ensuremath{\rho_{1}\approx_{\Blue{\Conid{L}}}\rho_{2}}) and memories (\ensuremath{\mu_{1}\approx_{\Blue{\Conid{L}}}\mu_{2}}).}
\label{fig:equiv-varmaps}
\end{subfigure}

\begin{subfigure}{.85\textwidth}
\begin{mathpar}
\inferrule[Conf]
{\ensuremath{\Gamma\vdash\Varid{is}_{1}\approx_{\Blue{\Conid{L}}}\Varid{is}_{2}} \\ \ensuremath{\Varid{cs}_{1}\mathrel{=}\Varid{cs}_{2}} \\ \ensuremath{\rho_{1}\approx_{\Blue{\Conid{L}}}\rho_{2}} \\ \ensuremath{\mu_{1}\approx_{\Blue{\Conid{L}}}\mu_{2}}}
{\ensuremath{\Gamma\vdash\langle\Varid{is}_{1},\Varid{cs}_{1},\mu_{1},\rho_{1}\rangle\approx_{\Blue{\Conid{L}}}\langle\Varid{is}_{2},\Varid{cs}_{2},\mu_{2},\rho_{2}\rangle}}
\end{mathpar}
\captionsetup{format=caption-with-line}
\caption{Configurations \ensuremath{\Gamma\vdash\Varid{c}_{1}\approx_{\Blue{\Conid{L}}}\Varid{c}_{2}}.}
\end{subfigure}

\begin{subfigure}{.85\textwidth}
\begin{mathpar}
\inferrule[Transient-VarMap]
{\ensuremath{\forall\;\Varid{x}\;\in\;\Blue{\Conid{L}}}\ \ensuremath{\land}\ \ensuremath{\Gamma(\Varid{x})\mathrel{=}\ensuremath{\Concrete}}\ .\ \ensuremath{\rho_{1}(\Varid{x})\mathrel{=}\rho_{2}(\Varid{x})}}
{\ensuremath{\Gamma\vdash\rho_{1}\approx_{\Blue{\Conid{L}}}\rho_{2}}}
\end{mathpar}
\captionsetup{format=caption-with-line}
\caption{Transient variable map \ensuremath{\Gamma\vdash\rho_{1}\approx_{\Blue{\Conid{L}}}\rho_{2}}.}
\label{fig:equiv-transient-varmap}
\end{subfigure}

\caption{\ensuremath{\Blue{\Conid{L}}}-equivalence.}
% \caption{|LL|-equivalence (cont.)}
\label{fig:low-equivalence}
% \end{framed}
\end{figure}

\begin{lemma}[\ensuremath{\Blue{\Conid{L}}}-Equivalence for Transient Variable Maps]
If \ensuremath{\Gamma\vdash\Varid{is}_{1}\approx_{\Blue{\Conid{L}}}\Varid{is}_{2}} and \ensuremath{\Gamma\vdash\rho_{1}\approx_{\Blue{\Conid{L}}}\rho_{2}}, then \ensuremath{\Gamma\vdash\phi(\Varid{is}_{1},\rho_{1})\approx_{\Blue{\Conid{L}}}\phi(\Varid{is}_{2},\rho_{2})}.
\label{lemma:trans-var-maps}
\end{lemma}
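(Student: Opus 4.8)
The plan is to prove this by structural induction on the derivation of the reorder-buffer equivalence $\Gamma \vdash \Varid{is}_{1}\approx_{\Blue{\Conid{L}}}\Varid{is}_{2}$ (equivalently, on the common length of the two buffers, which the pointwise rules \srule{RB-Empty} and \srule{RB-Cons} of \Cref{fig:equiv-stacks} force to be equal). The idea is that $\phi$ consumes the buffer one instruction at a time, and each step either leaves both variable maps untouched or updates a single variable $x$ in lockstep; so I would peel off the leading instruction, show that the one-step update preserves the transient equivalence $\Gamma \vdash \rho_{1}\approx_{\Blue{\Conid{L}}}\rho_{2}$ of rule \srule{Transient-VarMap} (\Cref{fig:equiv-transient-varmap}), and then appeal to the induction hypothesis on the tails.

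In the base case both buffers are empty, so $\phi(\Varid{is}_{i},\rho_{i})=\rho_{i}$ and the conclusion is literally the second hypothesis. In the inductive step, \srule{RB-Cons} supplies $\Gamma\vdash\Varid{i}_{1}\approx_{\Blue{\Conid{L}}}\Varid{i}_{2}$ together with $\Gamma\vdash\Varid{is}_{1}'\approx_{\Blue{\Conid{L}}}\Varid{is}_{2}'$, and I would case-split on the form of the equivalent pair $\Varid{i}_{1},\Varid{i}_{2}$ according to the rules of \Cref{fig:equiv-instr}, noting that each such rule pairs instructions of the same kind. For every case the obligation is the same: exhibit updated maps $\rho_{1}',\rho_{2}'$ (the $\rho$-arguments handed to the recursive call of $\phi$) such that $\Gamma\vdash\rho_{1}'\approx_{\Blue{\Conid{L}}}\rho_{2}'$, after which the induction hypothesis applied to $\Varid{is}_{1}',\Varid{is}_{2}',\rho_{1}',\rho_{2}'$ closes the case.

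The case analysis itself splits into three regimes. For instructions that $\phi$ ignores via its catch-all clause---namely $\mathbf{nop}$, $\mathbf{fail}(\Varid{p})$, $\mathbf{guard}$, and both store forms---neither map changes, so equivalence is preserved trivially. For $\mathbf{load}$ (rule \srule{Load}) and for $\Varid{x}\mathbin{:=}\mathbf{protect}(\cdot)$ (rule \srule{Protect}), $\phi$ sets $\rho[\Varid{x}\mapsto\bot]$ on both sides regardless of the arguments, so the two maps agree on $\Varid{x}$ (both $\bot$) and are unchanged elsewhere. The interesting regime is assignment: under \srule{Asgn$_{\ensuremath{\Blue{\Conid{L}}\land\ensuremath{\Concrete}}}$} the two instructions are syntactically identical, so $\phi$ performs the same update (to a value or to $\bot$) on both maps; under \srule{Asgn$_{\ensuremath{\Red{\Conid{H}}\lor\ensuremath{\Transient}}}$} the updates may diverge, but the side condition $\Varid{x}\not\in\Blue{\Conid{L}}\lor\Gamma(\Varid{x})=\ensuremath{\Transient}$ is exactly the negation of ``$\Varid{x}$ is public and stable,'' so the divergence on $\Varid{x}$ is invisible to \srule{Transient-VarMap}, while every other variable is left untouched and hence still agrees on public-stable indices.

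The main (though ultimately routine) obstacle is the mismatch between how $\phi$ and the equivalence relation classify assignments: $\phi$ branches on \emph{resolution status}---a resolved $\Varid{x}\mathbin{:=}\Varid{v}$ writes the value $\Varid{v}$, an unresolved $\Varid{x}\mathbin{:=}\Varid{e}$ writes $\bot$---whereas the instruction-equivalence rules branch on the \emph{flow/security status} of $\Varid{x}$. In particular \srule{Asgn$_{\ensuremath{\Red{\Conid{H}}\lor\ensuremath{\Transient}}}$} can pair a resolved $\Varid{x}\mathbin{:=}\Varid{v}_{1}$ with an unresolved $\Varid{x}\mathbin{:=}\Varid{e}_{2}$, so $\phi$ writes $\Varid{v}_{1}$ on one side and $\bot$ on the other; the argument survives only because $\Varid{x}$ is then not public-stable and so carries no constraint. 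I would therefore take care to enumerate the resolved/unresolved sub-cases inside each assignment rule and confirm this alignment holds uniformly; once that bookkeeping is in place, the induction goes through directly.
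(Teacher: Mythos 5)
Your proof is correct and follows essentially the same route as the paper's: induction on the pointwise buffer-equivalence derivation, with a case analysis on the instruction-equivalence rules, grouping the cases into no-update instructions, $\bot$-writing instructions (loads and protects), and the two assignment regimes. You also correctly isolate the one subtle point the paper handles---that \srule{Asgn$_{\Red{\Conid{H}}\lor\Transient}$} may pair a resolved with an unresolved assignment, which is harmless precisely because the updated variable is then not public-and-stable and so escapes the constraint of \srule{Transient-VarMap}.
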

\begin{proof}
  By induction on the \ensuremath{\Blue{\Conid{L}}}-equivalence judgment for reorder buffers
  (Fig.~\ref{fig:equiv-stacks}). The base case \srule{RB-Empty}
  follows by hypothesis (\ensuremath{\Gamma\vdash\rho_{1}\approx_{\Blue{\Conid{L}}}\rho_{2}}).
  In the inductive case \srule{RB-Cons}, we know that the next
  instructions in the buffers are equivalent, i.e., \ensuremath{\Gamma\vdash\Varid{i}_{1}\approx_{\Blue{\Conid{L}}}\Varid{i}_{2}} and the rest of the buffers are equivalent, i.e., \ensuremath{\Gamma\vdash\Varid{is}_{1}\approx_{\Blue{\Conid{L}}}\Varid{is}_{2}}.
  In order to apply the induction hypothesis, we need to show first
  that updating equivalent maps with equivalent instructions gives
  equivalent variable maps. We proceed by case analysis on the
  judgment for equivalent instruction (\ensuremath{\Gamma\vdash\Varid{i}_{1}\approx_{\Blue{\Conid{L}}}\Varid{i}_{2}} in
  Fig.~\ref{fig:equiv-instr}).
  By definition of the judgment, \ensuremath{\Varid{i}_{1}} and \ensuremath{\Varid{i}_{2}} are the same kind of
  instruction and cases \srule{Nop,Fail,Guard,Store} do not update the
  transient variable maps (they are handled by the last case of the
  function \ensuremath{\phi} in Fig.~\ref{fig:full-transient-var-map}), which
  remain equivalent (\ensuremath{\Gamma\vdash\rho_{1}\approx_{\Blue{\Conid{L}}}\rho_{2}}).
  In case \srule{Load}, both instructions are identical unresolved
  loads (\ensuremath{\Varid{i}_{1}\mathrel{=}\Varid{i}_{2}\mathrel{=}\Varid{x}\mathbin{:=}\mathbf{load}(\Varid{e})}), therefore variable \ensuremath{\Varid{x}} becomes
  undefined in both variable maps, which remain equivalent regardless
  of the sensitivity and type of \ensuremath{\Varid{x}}, i.e., \ensuremath{\Gamma\vdash\update{\rho_{1}}{\Varid{x}}{\bot}\approx_{\Blue{\Conid{L}}}\update{\rho_{2}}{\Varid{x}}{\bot}} from rule \srule{Transient-VarMap}.
  Similarly, in case \srule{Asgn$_{\ensuremath{\Blue{\Conid{L}}\land\ensuremath{\Concrete}}}$} the two instructions
  are identical assignments (\ensuremath{\Varid{i}_{1}\mathrel{=}\Varid{i}_{2}\mathrel{=}\Varid{x}\mathbin{:=}\Varid{e}}) and variable \ensuremath{\Varid{x}} is public (\ensuremath{\Varid{x}\;\in\;\Blue{\Conid{L}}})
  and stable (\ensuremath{\Gamma(\Varid{x})\mathrel{=}\ensuremath{\Concrete}}).
Since the assignments are identical, variable \ensuremath{\Varid{x}} gets updated in the same way in the
  respective variable maps, which remain equivalent whether the
  assignments are resolved (\ensuremath{\Varid{e}\mathrel{=}\Varid{v}} and \ensuremath{\Gamma\vdash\update{\rho_{1}}{\Varid{x}}{\Varid{v}}\approx_{\Blue{\Conid{L}}}\update{\rho_{2}}{\Varid{x}}{\Varid{v}}}) or not (\ensuremath{\Varid{e}\neq\Varid{v}} and \ensuremath{\Gamma\vdash\update{\rho_{1}}{\Varid{x}}{\bot}\approx_{\Blue{\Conid{L}}}\update{\rho_{2}}{\Varid{x}}{\bot}}). %according to rule \srule{Transient-VarMap}.
  In case \srule{Asgn$_{\ensuremath{\Red{\Conid{H}}\lor\ensuremath{\Transient}}}$}, the two instructions update
  variable \ensuremath{\Varid{x}} with possibly different expressions (\ensuremath{\Varid{i}_{1}\mathrel{=}\Varid{x}\mathbin{:=}\Varid{e}_{1}\neq\Varid{x}\mathbin{:=}\Varid{e}_{2}\mathrel{=}\Varid{i}_{2}}). This includes the cases where one instruction is
  resolved (e.g., \ensuremath{\Varid{e}_{1}\mathrel{=}\Varid{v}_{1}}) and the other is not (\ensuremath{\Varid{e}_{2}\neq\Varid{v}_{2}}), or
  both are resolved, but to different values (\ensuremath{\Varid{e}_{1}\mathrel{=}\Varid{v}_{1}\neq\Varid{v}_{2}\mathrel{=}\Varid{e}_{2}}).
  In these cases, variable \ensuremath{\Varid{x}} assumes different values in the
  resulting transient variable maps, but the maps remain nevertheless
  equivalent according to rule \srule{Transient-VarMap}, e.g., \ensuremath{\Gamma\vdash\update{\rho_{1}}{\Varid{x}}{\Varid{v}_{1}}\approx_{\Blue{\Conid{L}}}\update{\rho_{2}}{\Varid{x}}{\bot}}, because \ensuremath{\Varid{x}} is secret (\ensuremath{\Varid{x}\;\not\in\;\Blue{\Conid{L}}}) or transient (\ensuremath{\Gamma(\Varid{x})\mathrel{=}\ensuremath{\Transient}}).
  In case \srule{Protect}, both variables are protected, i.e., \ensuremath{\Gamma\vdash\Varid{x}\mathbin{:=}\mathbf{protect}(\Varid{e}_{1})\approx_{\Blue{\Conid{L}}}\Varid{x}\mathbin{:=}\mathbf{protect}(\Varid{e}_{2})}, and therefore, they are
  undefined in the transient variable maps (regardless of whether the
  expressions are resolved or not) and thus the maps remain related,
  i.e., \ensuremath{\Gamma\vdash\update{\rho_{1}}{\Varid{x}}{\bot}\approx_{\Blue{\Conid{L}}}\update{\rho_{2}}{\Varid{x}}{\bot}}.
Now that we have established that after processing the next
instructions in the buffers the resulting transient variable maps are
\ensuremath{\Blue{\Conid{L}}}-equivalent, we conclude the proof of the lemma by applying the
induction hypothesis.

%  |Gamma :- (update rho
% the transient variable maps obtained by updating equivalent variable maps  with equivalent instructions are also equivalent, i.e.,
% have to show that the updates to the transient variable map caused by the |LL|
\end{proof}

\begin{lemma}[Evaluations of Public Stable Expressions]
If \ensuremath{\Gamma\vdash\Varid{e}\mathbin{:}\ensuremath{\Concrete}}, \ensuremath{\vdash_{\textsc{ct}}\;\Varid{e}\mathbin{:}\Blue{\Conid{L}}}, and \ensuremath{\Gamma\vdash\rho_{1}\approx_{\Blue{\Conid{L}}}\rho_{2}}, then \ensuremath{\llbracket \Varid{e}\rrbracket^{\rho_{1}}\mathrel{=}\llbracket \Varid{e}\rrbracket^{\rho_{2}}}.
\label{lemma:evaluation-public-stable}
\end{lemma}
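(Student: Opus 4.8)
The plan is to proceed by structural induction on the expression $e$, inverting the two typing derivations $\Gamma \vdash e : \Concrete$ and $\vdash_{\textsc{ct}} e : \Blue{\Conid{L}}$ in lock-step and reading the evaluation function $\llbracket \cdot \rrbracket^{\rho}$ clause by clause. The value case is immediate, since $\llbracket v \rrbracket^{\rho_1} = v = \llbracket v \rrbracket^{\rho_2}$ independently of the maps. The \emph{variable} case is the heart of the argument and is precisely where both hypotheses are consumed: from $\Gamma \vdash x : \Concrete$ rule \srule{Var} forces $\Gamma(x) = \Concrete$, and from $\vdash_{\textsc{ct}} x : \Blue{\Conid{L}}$ we obtain $x \in \Blue{\Conid{L}}$. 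These are exactly the two side-conditions of rule \srule{Transient-VarMap}, so $\Gamma \vdash \rho_1 \approx_{\Blue{\Conid{L}}} \rho_2$ yields $\rho_1(x) = \rho_2(x)$ and hence $\llbracket x \rrbracket^{\rho_1} = \llbracket x \rrbracket^{\rho_2}$ (with the convention that the equation also holds when both sides are $\bot$).

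For the compound arithmetic, comparison, and bitwise operators $e_1 \oplus e_2$, I would invert rule \srule{Bop} on both sides: the transient-flow premises $\tau_i \flows \Concrete$ force each subexpression to be stable, since $\Transient \not\flows \Concrete$, and the constant-time premises give each subexpression a common label which, for a conclusion at $\Blue{\Conid{L}}$, must itself be $\Blue{\Conid{L}}$. The induction hypothesis then gives $\llbracket e_i \rrbracket^{\rho_1} = \llbracket e_i \rrbracket^{\rho_2}$, and congruence of $\llbracket \cdot \rrbracket^{\rho}$ closes the case. The ternary $e_1 \mathbin{?} e_2 \mathbin{:} e_3$ is analogous via rule \srule{Select}, using additionally that the guard $e_1$ evaluates equally under both maps so that the same branch is selected, with the $\bot$ subcase discharged directly by the definition of $\llbracket \cdot \rrbracket^{\rho}$.

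One point of care is that the constant-time system includes a subsumption rule \srule{Sub}. Since $\Blue{\Conid{L}}$ is the bottom of the security lattice and $\Red{\Conid{H}} \not\flows \Blue{\Conid{L}}$, any trailing use of \srule{Sub} concluding $\Blue{\Conid{L}}$ is vacuous---its premise already derives $\Blue{\Conid{L}}$---so I can strip such applications and assume the last rule is syntax-directed before inverting. I would record this as a short inversion observation at the outset rather than re-deriving it in each case; the transient-flow system has no free-standing subsumption rule, so on that side inversion is already immediate.

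The main obstacle is the \srule{Proj} case, namely $\mathit{base}(e)$ and $\mathit{length}(e)$: the constant-time rule assigns label $\Blue{\Conid{L}}$ \emph{regardless} of the label of $e$, so the induction hypothesis is not directly available when $e$ is secret, and a naive congruence argument breaks. I would resolve this by exploiting the calculus restriction that arrays occur only as constant values, so that the argument of a projection is a literal array descriptor whose base and length are fixed metadata; then $\llbracket \mathit{base}(e) \rrbracket^{\rho}$ and $\llbracket \mathit{length}(e) \rrbracket^{\rho}$ are determined statically and are therefore independent of the variable map, giving equality across $\rho_1$ and $\rho_2$ without appeal to any hypothesis on $e$. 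When $e$ does type at $\Blue{\Conid{L}}$ in the constant-time system the induction hypothesis alternatively applies directly, and the projections agree because $\llbracket e \rrbracket^{\rho_1} = \llbracket e \rrbracket^{\rho_2}$. This is the one case where the public-metadata convention of projections, together with the constant-array restriction, does the real work.
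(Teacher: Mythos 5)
Your proposal is correct and follows essentially the same route as the paper: induction on the typing derivations, with the variable case discharged by rule \srule{Transient-VarMap} (using $x \in \Blue{\Conid{L}}$ and $\Gamma(x) = \Concrete$) and the compound cases by inversion plus the induction hypothesis. Your treatment of the \srule{Proj} case via the constant-array/public-metadata convention is exactly the assumption the paper invokes parenthetically (``the length and the base of secret arrays are assumed to be public information''), just spelled out more carefully.
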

\begin{proof}
  By induction on the typing judgments, where the transient-flow and
  constant-time type systems ensure that the sub-expressions of \ensuremath{\Varid{e}}
  are also typed as \ensuremath{\Blue{\Conid{L}}} and \ensuremath{\ensuremath{\Concrete}}.
  (The length and the base of secret arrays are assumed to be public
  information like in~\cite{libsignalSP}).
  All the inductive cases follow directly by induction hypothesis. The
  base case for values is trivial and we derive equality in the base
  case for variables from \ensuremath{\Blue{\Conid{L}}}-equivalence of the transient variable
  maps (rule \srule{Transient-VarMap} in
  Fig.~\ref{fig:equiv-transient-varmap}).

\end{proof}
% \mv[inline]{Technically, I think we should define the syntax for arrays as |array a e| instead of |array e1 e2|.}

\begin{lemma}[Equal Guard Identifiers]
If \ensuremath{\Gamma\vdash\Varid{is}_{1}\approx_{\Blue{\Conid{L}}}\Varid{is}_{2}}, then \ensuremath{\Moon{\Varid{is}_{1}}\mathrel{=}\Moon{\Varid{is}_{2}}}.
\label{lemma:equal-guard-identifiers}
\end{lemma}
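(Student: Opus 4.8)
<br>

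The statement to prove is Lemma~\ref{lemma:equal-guard-identifiers}: if $\Gamma \vdash \Varid{is}_1 \approx_{\Blue{\Conid{L}}} \Varid{is}_2$, then $\Moon{\Varid{is}_1} = \Moon{\Varid{is}_2}$. The plan is to proceed by straightforward induction on the $\Blue{\Conid{L}}$-equivalence judgment for reorder buffers defined in Figure~\ref{fig:equiv-stacks}, exactly mirroring the structure used in the proof of Lemma~\ref{lemma:trans-var-maps}. The key observation is that the function $\Moon{\cdot}$ (Pending Fail and Guard Identifiers) extracts identifiers \emph{only} from $\mathbf{guard}$ and $\mathbf{fail}$ instructions, and that $\Blue{\Conid{L}}$-equivalence of instructions forces these two kinds of instructions to be \emph{syntactically identical}, including their identifier $\Varid{p}$.

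First I would set up the induction on the derivation of $\Gamma \vdash \Varid{is}_1 \approx_{\Blue{\Conid{L}}} \Varid{is}_2$. The base case is rule \srule{RB-Empty}, where $\Varid{is}_1 = \Varid{is}_2 = [\mskip1.5mu \mskip1.5mu]$, and we have $\Moon{[\mskip1.5mu \mskip1.5mu]} = [\mskip1.5mu \mskip1.5mu] = \Moon{[\mskip1.5mu \mskip1.5mu]}$ directly from the definition of $\Moon{\cdot}$. For the inductive case \srule{RB-Cons}, we have $\Varid{is}_1 = \Varid{i}_1 : \Varid{is}_1'$ and $\Varid{is}_2 = \Varid{i}_2 : \Varid{is}_2'$ with premises $\Gamma \vdash \Varid{i}_1 \approx_{\Blue{\Conid{L}}} \Varid{i}_2$ and $\Gamma \vdash \Varid{is}_1' \approx_{\Blue{\Conid{L}}} \Varid{is}_2'$. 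The induction hypothesis gives $\Moon{\Varid{is}_1'} = \Moon{\Varid{is}_2'}$, so it remains to show that the head instructions contribute the same identifiers.

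The crux is a case analysis on $\Gamma \vdash \Varid{i}_1 \approx_{\Blue{\Conid{L}}} \Varid{i}_2$ (Figure~\ref{fig:equiv-instr}). By inspection of the rules, $\Varid{i}_1$ and $\Varid{i}_2$ are always the same kind of instruction. In cases \srule{Guard} and \srule{Fail}, the rules demand syntactic equality, so $\Varid{i}_1 = \Varid{i}_2$ and in particular they carry the same identifier $\Varid{p}$; thus $\Moon{\Varid{i}_1 : \Varid{is}_1'} = \Varid{p} : \Moon{\Varid{is}_1'}$ and $\Moon{\Varid{i}_2 : \Varid{is}_2'} = \Varid{p} : \Moon{\Varid{is}_2'}$, which are equal by the induction hypothesis. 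In every other case (\srule{Nop}, \srule{Load}, the assignment and protect rules, and both store rules), neither $\Varid{i}_1$ nor $\Varid{i}_2$ is a guard or fail instruction, so they are handled by the catch-all clause $\Moon{\Varid{i} : \Varid{is}} = \Moon{\Varid{is}}$; hence $\Moon{\Varid{i}_1 : \Varid{is}_1'} = \Moon{\Varid{is}_1'} = \Moon{\Varid{is}_2'} = \Moon{\Varid{i}_2 : \Varid{is}_2'}$ again by the induction hypothesis.

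This lemma is essentially routine and I anticipate no genuine obstacle; the only point requiring care is confirming that the $\Blue{\Conid{L}}$-equivalence relation never relates a guard or fail instruction to one with a \emph{different} identifier. This is precisely guaranteed by the syntactic-equality formulation of rules \srule{Guard} and \srule{Fail}, which the paper motivates as necessary to prevent leaks through the instruction cache. The slightly delicate bookkeeping is ensuring the case analysis on $\Gamma \vdash \Varid{i}_1 \approx_{\Blue{\Conid{L}}} \Varid{i}_2$ is exhaustive and that each non-guard, non-fail case indeed falls into the last clause of $\Moon{\cdot}$, which follows immediately from the disjointness of instruction kinds.
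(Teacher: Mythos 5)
Your proof is correct and follows essentially the same route as the paper: induction on the $\Blue{\Conid{L}}$-equivalence judgment for reorder buffers, with the key observation that rules \srule{Guard} and \srule{Fail} force syntactically identical instructions (hence identical identifiers), while all other instruction kinds fall into the catch-all clause of $\Moon{\cdot}$. Your write-up is simply a more detailed elaboration of the paper's two-sentence argument.
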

\begin{proof}
  By induction on the \ensuremath{\Blue{\Conid{L}}}-equivalence judgment for reorder buffers
  (Fig.~\ref{fig:equiv-stacks}).
  The only interesting cases are the inductive cases, when the
  \ensuremath{\Blue{\Conid{L}}}-instructions at the beginning of the reorder buffer are guard or fail instructions.
  By inspection of the definitions from Figure~\ref{fig:equiv-instr},
  we see that when this occurs, both instructions are guards or fail instructions and share
  the same prediction identifier (i.e., rule \srule{Guard} and \srule{Fail}).
\end{proof}

In the following we write \ensuremath{\Gamma\;\vdash_{\textsc{sct}}\;\Conid{C}} to indicate that
configuration \ensuremath{\Conid{C}} is well-typed with respect to \emph{both} the
constant-time type system (Fig.~\ref{fig:ct-ts}), i.e., \ensuremath{\Gamma\;\vdash_{\textsc{ct}}\;\Conid{C}}, and the
transient-flow type system (Fig.~\ref{fig:full-transient-flow-ts}), i.e., \ensuremath{\Gamma\vdash\Conid{C}}, under the respective
typing contexts.

\begin{lemma}[\ensuremath{\Blue{\Conid{L}}}-equivalence Preservation]
If \ensuremath{\Gamma\;\vdash_{\textsc{sct}}\;\Conid{C}_{1}}, \ensuremath{\Gamma\;\vdash_{\textsc{sct}}\;\Conid{C}_{2}}, \ensuremath{\Gamma\vdash\Conid{C}_{1}\approx_{\Blue{\Conid{L}}}\Conid{C}_{2}}, \ensuremath{\Conid{C}_{1}\;\stepT{\Varid{d}}{\Varid{o}_{1}}\;\Conid{C}_{1}'}, and \ensuremath{\Conid{C}_{2}\;\stepT{\Varid{d}}{\Varid{o}_{2}}\;\Conid{C}_{2}'}, then \ensuremath{\Conid{C}_{2}\approx_{\Blue{\Conid{L}}}\Conid{C}_{2}'} and \ensuremath{\Varid{o}_{1}\mathrel{=}\Varid{o}_{2}}.
\label{lemma:preservation-small-step}
\end{lemma}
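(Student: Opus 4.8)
The plan is to prove Lemma~\ref{lemma:preservation-small-step} by case analysis on the directive \ensuremath{\Varid{d}} and the corresponding reduction rules applied in the two executions. Because both configurations follow the \emph{same} directive \ensuremath{\Varid{d}} and are \ensuremath{\Blue{\Conid{L}}}-equivalent (hence have syntactically identical command stacks, \ensuremath{\Varid{cs}_{1}\mathrel{=}\Varid{cs}_{2}}, and pointwise \ensuremath{\Blue{\Conid{L}}}-equivalent reorder buffers), the first observation is that both executions must step via the \emph{same} semantic rule: the rule selected is determined by \ensuremath{\Varid{d}} together with the shape of the top command (for \ensuremath{\mathbf{fetch}}) or the \ensuremath{\Varid{n}}th instruction (for \ensuremath{\mathbf{exec}}), and \ensuremath{\Blue{\Conid{L}}}-equivalence guarantees these have matching shapes. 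I would organize the proof along the three pipeline stages.

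For the \ensuremath{\mathbf{fetch}} stage, since \ensuremath{\Varid{cs}_{1}\mathrel{=}\Varid{cs}_{2}}, both executions pop and flatten the identical top command, appending identical instructions (or identical guards with the \emph{same} fresh identifier \ensuremath{\Varid{p}}, relying on both executions drawing from a shared fresh-name supply) to their buffers; the appended instructions are trivially \ensuremath{\Blue{\Conid{L}}}-equivalent and both observations are silent \ensuremath{\epsilon}, so \ensuremath{\Blue{\Conid{L}}}-equivalence is preserved by rules \srule{RB-Cons} and \srule{Conf}. For the \ensuremath{\mathbf{retire}} stage, both buffers begin with \ensuremath{\Blue{\Conid{L}}}-equivalent instructions; the key subtlety is rule \srule{Retire-Asgn} committing \ensuremath{\Varid{x}\mathbin{:=}\Varid{v}} to the architectural variable map. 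Here I would invoke the constant-time and transient-flow typing (\ensuremath{\Gamma\;\vdash_{\textsc{sct}}\;\Conid{C}_{\Varid{i}}}) together with \Cref{lemma:tr-preservation} and \Cref{lemma:ct-preservation} to argue that a retired assignment to a \emph{public} variable must carry a public, \emph{stable} value --- so by the time it commits, the relaxation permitted in rule \srule{Asgn$_{\ensuremath{\Red{\Conid{H}}\lor\ensuremath{\Transient}}}$} no longer applies and the two committed values coincide, preserving \srule{VarMap}. Retire-Store and Retire-Fail follow analogously, using \srule{Store$_{\ensuremath{\Blue{\Conid{L}}}}$} for the public-memory case.

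The \ensuremath{\mathbf{exec}} stage is the technical heart. First I would apply \Cref{lemma:trans-var-maps} to obtain \ensuremath{\Blue{\Conid{L}}}-equivalent \emph{transient} variable maps \ensuremath{\rho_{1}'\approx_{\Blue{\Conid{L}}}\rho_{2}'} from the \ensuremath{\Blue{\Conid{L}}}-equivalent prefixes, then proceed by cases on the auxiliary relation \ensuremath{\rightsquigarrow}. For \srule{Exec-Asgn}, \srule{Exec-Load}, \srule{Exec-Store}, and the guard rules, I would use \Cref{lemma:evaluation-public-stable} to show that any operand that is typed public-and-stable (addresses in loads/stores, guard conditions, array indices) evaluates identically under \ensuremath{\rho_{1}'} and \ensuremath{\rho_{2}'}. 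This yields equal addresses for loads and stores (hence \ensuremath{\Blue{\Conid{L}}}-equivalent resolved instructions via \srule{Load}/\srule{Store}), and --- crucially for the observations --- equal address components \ensuremath{\Varid{n}} in \ensuremath{\mathbf{read}(\Varid{n},\Varid{ps})} and \ensuremath{\mathbf{write}(\Varid{n},\Varid{ps})}; the pending-identifier lists \ensuremath{\Varid{ps}} match by \Cref{lemma:equal-guard-identifiers}, giving \ensuremath{\Varid{o}_{1}\mathrel{=}\Varid{o}_{2}}. For guards, equal condition values force both executions to take the same branch (both \srule{Exec-Branch-Ok} or both \srule{Exec-Branch-Mispredict}), producing identical \ensuremath{\mathbf{nop}}/rollback observations and --- in the mispredict case --- identical rollback command stacks, since \ensuremath{\Varid{cs}_{1}\mathrel{=}\Varid{cs}_{2}} and the guards carried identical stored stacks.

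The main obstacle I anticipate is the interaction between \emph{transient} loads and \ensuremath{\Blue{\Conid{L}}}-equivalence preservation in \srule{Exec-Load}. A speculative out-of-bounds load may read genuinely different secret values into the same public variable \ensuremath{\Varid{x}} in the two runs; this must \emph{not} break the invariant. The resolution is that rule \srule{Array-Read} types every array read as \ensuremath{\Transient}, so the target variable of any such load is typed \ensuremath{\Transient} in \ensuremath{\Gamma}, and the resulting resolved assignment is related only by the relaxed rule \srule{Asgn$_{\ensuremath{\Red{\Conid{H}}\lor\ensuremath{\Transient}}}$}, which tolerates differing values --- so \ensuremath{\Blue{\Conid{L}}}-equivalence of the buffers is maintained. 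I would need to check carefully that the type-preservation lemmas justify assigning these loaded values a \ensuremath{\Transient} type in the reorder-buffer typing (\Cref{fig:full-ts-instr}), and that the \srule{Exec-Load} premise \ensuremath{\mathbf{store}(\anonymous ,\anonymous )\;\not\in\;\Varid{is}_{1}} is satisfied or violated \emph{simultaneously} in both runs --- which follows because \ensuremath{\Blue{\Conid{L}}}-equivalence preserves the kind (load vs.\ store) of every instruction in the prefix. Establishing this simultaneity uniformly is the one place where the pointwise structural equivalence of buffers does real work beyond the data-flow reasoning.
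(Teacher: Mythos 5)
Your proposal matches the paper's proof essentially step for step: the same case analysis over the three pipeline stages, the same auxiliary lemmas (transient-map equivalence via Lemma~\ref{lemma:trans-var-maps}, Lemma~\ref{lemma:evaluation-public-stable} for public stable operands, Lemma~\ref{lemma:equal-guard-identifiers} for the $\Varid{ps}$ lists, and the dedicated execute-stage lemma), and the same treatment of the subtle points (transient loads related only by the relaxed assignment rule, simultaneous satisfaction of the no-pending-store premise, identical rollback stacks and guard identifiers). The one spot to tighten is the retire case for a public variable typed $\Transient$: the equality of the two committed values does not follow from typing or type preservation alone (the variable stays typed $\Transient$), but from in-order retirement --- the bounds-check guard protecting the feeding load must already have been resolved and retired before the assignment reaches the front of the buffer, so the load in fact read public in-bounds memory --- which is exactly how the paper closes that case.
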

\begin{proof}
By case analysis on the reduction steps.
From \ensuremath{\Gamma\vdash\Conid{C}_{1}\approx_{\Blue{\Conid{L}}}\Conid{C}_{2}}, we know that the respective components of the configurations are also \ensuremath{\Blue{\Conid{L}}}-equivalent.
In particular, \ensuremath{\Gamma\vdash\Varid{is}_{1}\approx_{\Blue{\Conid{L}}}\Varid{is}_{2}} and \ensuremath{\Varid{cs}_{1}\mathrel{=}\Varid{cs}_{2}}, % |Gamma :- cs1 &~ cs2|,
thus the reorder buffers and the command stacks have the same
structure, length, and are point-wise related.
Then, since the directive \ensuremath{\Varid{d}} is the same in both reduction,
equivalent commands (\ensuremath{\Varid{c}_{1}\mathrel{=}\Varid{c}_{2}}) % (|Gamma :- c1 &~ c2|)
are fetched from the command stacks and \ensuremath{\Blue{\Conid{L}}}-equivalent instructions
(\ensuremath{\Gamma\vdash\Varid{i}_{1}\approx_{\Blue{\Conid{L}}}\Varid{i}_{2}}) are executed and retired.
\begin{CompactItemize}
\item Fetch Stage (\ensuremath{\Varid{d}\mathrel{=}\mathbf{fetch}} or \ensuremath{\Varid{d}\mathrel{=}\mathbf{fetch}\;\Varid{b}}). The rules from this
  stage (Fig.~\ref{fig:full-fetch-stage}) always generate the empty
  event, i.e., \ensuremath{\Varid{o}_{1}\mathrel{=}\epsilon\mathrel{=}\Varid{o}_{2}}, and affect only the reorder buffers
  and the command stacks.
Thus, to prove \ensuremath{\Gamma\vdash\Conid{C}_{1}'\approx_{\Blue{\Conid{L}}}\Conid{C}_{2}'}, we only need to show that the
  resulting buffers and stacks are related.
First, we observe that the rules only pop and push commands from the
  commands stack and append instructions at the end of the reorder
  buffer.
  Since the definition of \ensuremath{\Blue{\Conid{L}}}-equivalence for buffers is
  \emph{structural} (Fig.~\ref{fig:equiv-stacks}) and for stacks is
  syntactic equivalence, these operations preserve (\ensuremath{\Blue{\Conid{L}}}-)equivalence
  as long as they are applied to (\ensuremath{\Blue{\Conid{L}}}-)equivalent arguments.
  This is exactly the case, for control-flow commands, which are
  popped, flattened into simpler, equal commands, and pushed back on
  the stack, e.g., in rules \srule{Fetch-Seq} and \srule{Fetch-While}.
%
% (For example, rule \srule{Fetch-Seq} fetches |Gamma :- c1;c1' &~
% c2;c2'| respectively; from rule \srule{Seq}, we derive |Gamma :- c1 &~
% c2| and |Gamma :- c1' &~ c2'| and conclude |Gamma :- c1 : c1' : cs1 &~
% c2 : c2' : cs2| by rule \srule{Cmd-Stack-Cons} applied twice to the
% initial stacks |Gamma :- cs1 &~ cs2|.)
%
Similarly, the other commands are translated
directly into related instructions (Fig.~\ref{fig:equiv-instr}),
e.g., by rule \srule{Nop} for \srule{Fetch-Skip}, \srule{Fail} for \srule{Fetch-Fail}, and \srule{Guard} for
\srule{Fetch-If-True} and \srule{Fetch-If-False}, where the prediction
\ensuremath{\Varid{b}\mathrel{=}\mathbf{true}} or \ensuremath{\Varid{b}\mathrel{=}\mathbf{false}} is determined from the same
attacker-provided directive \ensuremath{\mathbf{fetch}\;\Varid{b}}.\footnote{We assume that the
  generation of the fresh prediction identifier (\ensuremath{\fresh{\Varid{p}}}) is
  deterministic and secret independent. For example, the configuration
  could contain a counter \ensuremath{\Varid{p}} containing the next fresh identifier,
  which gets incremented each time a conditional is fetched.}
When fetching assignments, i.e., rule \srule{Fetch-Asgn}), the
corresponding assignments instructions appended at the end of the
reorder buffers are related, either by rule
\srule{Asgn$_{\ensuremath{\Blue{\Conid{L}}\land\ensuremath{\Concrete}}}$} or \srule{Asng$_{\ensuremath{\Red{\Conid{H}}\lor\ensuremath{\Transient}}}$} depending
on the sensitivity and type of the variable.
For rule \srule{Fetch-Ptr-Store}, we rely on the label annotation that
decorates the command (i.e., label \ensuremath{\ell} for \ensuremath{\mathbin{*}_{\ell}\;\Varid{e}_{1}\mathrel{=}\Varid{e}_{2}}) to relate the corresponding instructions via rule \srule{Store$_{\ensuremath{\Blue{\Conid{L}}}}$} if \ensuremath{\ell\mathrel{=}\Blue{\Conid{L}}}, or \srule{Store$_{\ensuremath{\Red{\Conid{H}}}}$}), otherwise.
For array reads (\srule{Fetch-Array-Load}) and writes (\srule{Fetch-Array-Store}), we observe that the \emph{same} bounds-checking code is generated and pushed on the commands stack.
Rules
\srule{Fetch-Protect-Ptr,Fetch-Protect-Array,Fetch-Protect-Expr,Fetch-Protect-SLH}
fetch \emph{identical} \ensuremath{\mathbf{protect}} commands and decompose them into
equal commands (\srule{Fetch-Protect-Ptr,Fetch-Protect-Array,
  Fetch-Protect-SLH}) or equivalent \ensuremath{\mathbf{protect}} instructions (\srule{Fetch-Protect-Expr}).\footnote{ We prove security for both the
    hardware- and SLH-based implementation of \ensuremath{\mathbf{protect}}, but we
    assume that the same implementation is used in both executions.}

% In particular, \srule{Skip} translates to \srule{Nop} for rule
% \srule{Fetch-Skip}, \srule{Fail} to \srule{Fail} for rule
% \srule{Fetch-Fail}, \srule{Asgn$_{|LL /\ St|}$} (resp. rule
% \srule{Asng$_{|HH \/ Tr|}$}) to \srule{Asgn$_{|LL /\ St|}$}
% (resp. rule \srule{Asng$_{|HH \/ Tr|}$}) for rule \srule{Fetch-Asgn},
% \srule{Ptr-Write$_{|LL|}$} (resp. \srule{Ptr-Write$_{|HH|}$}) to
% \srule{Store$_{|LL|}$} (resp. \srule{Store$_{|HH|}$}) for rule
% \srule{Fetch-Ptr-Store},

\item Execute Stage (\ensuremath{\Varid{d}\mathrel{=}\mathbf{exec}\;\Varid{n}}).  Instruction \srule{Execute}
  from Figure~\ref{fig:full-execute-stage}, selects the \ensuremath{\Varid{n}}-th
  instruction from the reorder buffer, computes the transient variable
  map, and relies on a separate reduction relation to generate an
  observation and compute the resulting reorder buffer and command
  stack.
Since the initial buffers are \emph{structurally} related, i.e., \ensuremath{\Gamma\vdash\Varid{is}_{1}\approx_{\Blue{\Conid{L}}}\Varid{is}_{2}} and the directive \ensuremath{\mathbf{exec}\;\Varid{n}} is the same in both reductions, then the buffers are split in related components, i.e., prefixes \ensuremath{\Gamma\vdash\Varid{is}_{1}'\approx_{\Blue{\Conid{L}}}\Varid{is}_{2}'} such that \ensuremath{{\vert}\Varid{is}_{1}'{\vert}\mathrel{=}{\vert}\Varid{is}_{2}'{\vert}\mathrel{=}\Varid{n}\mathbin{-}\mathrm{1}}, suffixes \ensuremath{\Gamma\vdash\Varid{is}_{1}''\approx_{\Blue{\Conid{L}}}\Varid{is}_{2}''}, and $n$-th instructions \ensuremath{\Gamma\vdash\Varid{i}_{1}\approx_{\Blue{\Conid{L}}}\Varid{i}_{2}}.
Therefore, their transient variable maps are related, i.e., \ensuremath{\Gamma\vdash\phi(\Varid{is}_{1}',\rho_{1})\approx_{\Blue{\Conid{L}}}\phi(\Varid{is}_{2}',\rho_{2})} by
Lemma~\ref{lemma:trans-var-maps} (notice that \ensuremath{\rho_{1}\approx_{\Blue{\Conid{L}}}\rho_{2}} implies
\ensuremath{\Gamma\vdash\rho_{1}\approx_{\Blue{\Conid{L}}}\rho_{2}}) and we conclude by applying
Lemma~\ref{lemma:preservation-exec}.

\item Retire Stage (\ensuremath{\Varid{d}\mathrel{=}\mathbf{retire}}). The rules from this stage
  (Fig.~\ref{fig:full-retire-stage}) remove the resolved instruction
  at the beginning of the reorder buffer, and update the variable map
  and the memory store accordingly.
  Therefore, to prove \ensuremath{\Gamma\vdash\Conid{C}_{1}'\approx_{\Blue{\Conid{L}}}\Conid{C}_{2}'}, we only need to show that
  the resulting reorder buffers, variable maps, and memory stores are
  \ensuremath{\Blue{\Conid{L}}}-equivalent.
  Since the reorder buffers are initially related, we have \ensuremath{\Gamma\vdash(\Varid{i}_{1}\mathbin{:}\Varid{is}_{1})\approx_{\Blue{\Conid{L}}}(\Varid{i}_{2}\mathbin{:}\Varid{is}_{2})}, and from rule \srule{RB-Cons} we learn
  that the instructions retired are related, i.e., \ensuremath{\Gamma\vdash\Varid{i}_{1}\approx_{\Blue{\Conid{L}}}\Varid{i}_{2}}, and the tails of the buffers remain related in the resulting
  configuration, i.e., \ensuremath{\Gamma\vdash\Varid{is}_{1}\approx_{\Blue{\Conid{L}}}\Varid{is}_{2}}.
  Then, we prove that the architectural state (variable maps and
  memory stores) remain related when updated by the related
  instructions.
Since the instructions are related, the two configurations perform the same reduction step.
Cases \srule{Retire-Nop} and \srule{Retire-Fail} are trivial.
We only point out that in either case the two reductions produce the
same event (i.e., \ensuremath{\Varid{o}_{1}\mathrel{=}\epsilon\mathrel{=}\Varid{o}_{2}} and \ensuremath{\Varid{o}_{1}\mathrel{=}\mathbf{fail}(\Varid{p})\mathrel{=}\Varid{o}_{2}}) and update (or
empty) the reorder buffers and the command stack in the same way.
For \srule{Retire-Asgn}, if the assignments involve a \emph{public}
variable (e.g., \ensuremath{\Varid{x}\;\in\;\Blue{\Conid{L}}}) and the variable is stable (i.e., \ensuremath{\Gamma(\Varid{x})\mathrel{=}\ensuremath{\Concrete}}), then the resolved values are the same, i.e., \ensuremath{\Varid{i}_{1}\mathrel{=}(\Varid{x}\mathbin{:=}\Varid{v})\mathrel{=}\Varid{i}_{2}} from rule \srule{Asgn$_{\ensuremath{\Blue{\Conid{L}}\land\ensuremath{\Concrete}}}$}, and the resulting
variable maps are \ensuremath{\Blue{\Conid{L}}}-equivalent, i.e., \ensuremath{\update{\rho_{1}}{\Varid{x}}{\Varid{v}}\approx_{\Blue{\Conid{L}}}\update{\rho_{2}}{\Varid{x}}{\Varid{v}}} from rule \srule{VarMap} in Fig.~\ref{fig:equiv-varmaps}.
If the variable is \emph{secret} (\ensuremath{\Varid{x}\;\not\in\;\Blue{\Conid{L}}}), then the assignments
may have been resolved to different values, i.e., \ensuremath{\Varid{i}_{1}\mathrel{=}(\Varid{x}\mathbin{:=}\Varid{v}_{1})\neq(\Varid{x}\mathbin{:=}\Varid{v}_{2})\mathrel{=}\Varid{i}_{2}} from rule \srule{Asgn$_{\ensuremath{\Red{\Conid{H}}\lor\ensuremath{\Transient}}}$}) and the
variable maps remain \ensuremath{\Blue{\Conid{L}}}-equivalent, i.e., \ensuremath{\update{\rho_{1}}{\Varid{x}}{\Varid{v}_{1}}\approx_{\Blue{\Conid{L}}}\update{\rho_{2}}{\Varid{x}}{\Varid{v}_{2}}}.
%
% because |x notin LL| from rule
% \srule{Asgn$_{|HH \/ Tr|}$}.
If the variable is public (\ensuremath{\Varid{x}\;\in\;\Blue{\Conid{L}}}), but typed \emph{transient}
according to our transient-flow type system (i.e., \ensuremath{\Gamma(\Varid{x})\mathrel{=}\ensuremath{\Transient}}), we show that the assignments necessarily must have been resolved
to the same value.
Intuitively, public variables may assume different, secret values
(e.g., \ensuremath{\Varid{v}_{1}\neq\Varid{v}_{2}}) only \emph{transiently} via a data-flow
dependency to a \emph{previous} command that reads a public array
\emph{off-bounds}.
Therefore, the array read command must have been fetched first via rule \srule{Fetch-Array-Load} and then secret data must have been transiently loaded via rule \srule{Exec-Load}.
Since rule \srule{Fetch-Array-Load} automatically prepends bounds-checking instructions to the load, its guard instruction   must necessarily have been fetched, executed, and \emph{retired} before the current assignment can reach the beginning of the reorder buffer and be retired.
From this, we deduce that the bounds-check guard must have succeeded,
the subsequent load instruction has (correctly) read \emph{public}
data from memory, and therefore (\srule{Memory} in
Fig.~\ref{fig:equiv-varmaps}) the current values are equal, i.e., \ensuremath{\Varid{v}_{1}\mathrel{=}\Varid{v}_{2}} and the variable maps remain \ensuremath{\Blue{\Conid{L}}}-equivalent, i.e., \ensuremath{\update{\rho_{1}}{\Varid{x}}{\Varid{v}_{1}}\approx_{\Blue{\Conid{L}}}\update{\rho_{2}}{\Varid{x}}{\Varid{v}_{1}}}.

In case \srule{Retire-Store}, two related store instructions
are retired, i.e., \ensuremath{\Gamma\vdash\mathbf{store}_{\ell}(\Varid{n},\Varid{v})\approx_{\Blue{\Conid{L}}}} \ensuremath{\mathbf{store}_{\ell}(\Varid{n'},\Varid{v'})}.
If the label that decorate the instructions is public, (\ensuremath{\ell\mathrel{=}\Blue{\Conid{L}}}), then the same public memory address is updated with the same values,
i.e., \ensuremath{\Varid{n}\mathrel{=}\Varid{n'}} and \ensuremath{\Varid{v}\mathrel{=}\Varid{v'}} from rule \srule{Store$_{\ensuremath{\Blue{\Conid{L}}}}$}, and the memory stores remain related, i.e., \ensuremath{\update{\mu_{1}}{\Varid{n}}{\Varid{v}}\approx_{\Blue{\Conid{L}}}\update{\mu_{2}}{\Varid{n}}{\Varid{v}}} from \srule{Memory}.
Otherwise (\ensuremath{\ell\mathrel{=}\Red{\Conid{H}}}), the instruction update the same \emph{secret}
location with possibly different values, i.e., \ensuremath{\Gamma\vdash\mathbf{store}_{\Red{\Conid{H}}}(\Varid{n},\Varid{v}_{1})\approx_{\Blue{\Conid{L}}}\mathbf{store}_{\Red{\Conid{H}}}(\Varid{n},\Varid{v}_{2})} from rule \srule{Store$_{\ensuremath{\Red{\Conid{H}}}}$}, and the stores
remain related, \ensuremath{\update{\mu_{1}}{\Varid{n}}{\Varid{v}_{1}}\approx_{\Blue{\Conid{L}}}\update{\mu_{2}}{\Varid{n}}{\Varid{v}_{2}}}.
\end{CompactItemize}

\end{proof}

\begin{lemma}[\ensuremath{\Blue{\Conid{L}}}-equivalence Preservation (Execute Stage)]
  Let \ensuremath{\Conid{C}_{1}\mathrel{=}\langle\Varid{is}_{1},\Varid{i}_{1},\Varid{is}_{1}',\Varid{cs}_{1}\rangle} and \ensuremath{\Conid{C}_{2}\mathrel{=}\langle\Varid{is}_{2},\Varid{i}_{2},\Varid{is}_{2}',\Varid{cs}_{2}\rangle},
  such that \ensuremath{\Gamma\;\vdash_{\textsc{sct}}\;\Conid{C}_{1}} and \ensuremath{\Gamma\;\vdash_{\textsc{sct}}\;\Conid{C}_{2}}. If \ensuremath{\Gamma\vdash\Conid{C}_{1}\approx_{\Blue{\Conid{L}}}\Conid{C}_{2}},
  \ensuremath{\Conid{C}_{1}\;\xrsquigarrow{(\mu_{1},\rho_{1},\Varid{o}_{1})}\;\Conid{C}_{1}'}, \ensuremath{\Conid{C}_{2}\;\xrsquigarrow{(\mu_{2},\rho_{2},\Varid{o}_{2})}\;\Conid{C}_{2}'},
  \ensuremath{\mu_{1}\approx_{\Blue{\Conid{L}}}\mu_{2}}, \ensuremath{\Gamma\vdash\rho_{1}\approx_{\Blue{\Conid{L}}}\rho_{2}}, then \ensuremath{\Varid{o}_{1}\mathrel{=}\Varid{o}_{2}} and \ensuremath{\Gamma\vdash\Conid{C}_{1}'\approx_{\Blue{\Conid{L}}}\Conid{C}_{2}'}.\footnote{Also for these configurations,
    \ensuremath{\Blue{\Conid{L}}}-equivalence and typing is defined component-wise.
 % That is |Gamma :- C1
 %    &~ C2| denotes |Gamma :- is1 &~ is2|, |Gamma :- i1 &~ i2|, |Gamma
 %    :- is1' &~ is2'|, and |cs1 = cs2| for initial
 %    configurations |sub C j = <sub is j, sub i j, sub is' j, sub cs
 %    j>| and final configurations |sub C j = <sub is j, sub cs j>| for
 %    |j inset {1,2}|.
  }
\label{lemma:preservation-exec}
\end{lemma}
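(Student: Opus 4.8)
The plan is to proceed by case analysis on the rule of the auxiliary relation $\xrsquigarrow{}$ used in the two derivations. Reading $\Gamma \vdash C_1 \approx_{\Blue{\Conid{L}}} C_2$ component-wise gives $\Gamma \vdash is_1 \approx_{\Blue{\Conid{L}}} is_2$, $\Gamma \vdash i_1 \approx_{\Blue{\Conid{L}}} i_2$, $\Gamma \vdash is_1' \approx_{\Blue{\Conid{L}}} is_2'$, and $cs_1 = cs_2$. Since related instructions have the same shape (Fig.~\ref{fig:equiv-instr}), $i_1$ and $i_2$ fire the \emph{same} rule, so for each rule it suffices to show that the two observations coincide and that the resolved instructions $i_1'$, $i_2'$ reinserted into the buffer are $\Blue{\Conid{L}}$-equivalent; the untouched prefix $is_1 \approx_{\Blue{\Conid{L}}} is_2$, suffix $is_1' \approx_{\Blue{\Conid{L}}} is_2'$, and command stack are preserved structurally. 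The two hypotheses $\mu_1 \approx_{\Blue{\Conid{L}}} \mu_2$ and $\Gamma \vdash \rho_1 \approx_{\Blue{\Conid{L}}} \rho_2$ are exactly what feed Lemma~\ref{lemma:evaluation-public-stable}.

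For the arithmetic and memory cases the argument is uniform. In \srule{Exec-Asgn} the observation is silent; if $i_1 = i_2$ is related by \srule{Asgn$_{\ensuremath{\Blue{\Conid{L}}\land\ensuremath{\Concrete}}}$} then $x$ is public and stable, so the instruction-level transient-flow and constant-time typings force $\Gamma \vdash e : \ensuremath{\Concrete}$ and $\vdash_{\textsc{ct}} e : \Blue{\Conid{L}}$, and Lemma~\ref{lemma:evaluation-public-stable} yields $\llbracket e \rrbracket^{\rho_1} = \llbracket e \rrbracket^{\rho_2}$, keeping the resolved assignments related by the same rule; otherwise they are related by \srule{Asgn$_{\ensuremath{\Red{\Conid{H}}\lor\ensuremath{\Transient}}}$} regardless of the values. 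In \srule{Exec-Load} and \srule{Exec-Store} the address expression is public and stable (by the \srule{Load}/\srule{Store} rules of both systems), so Lemma~\ref{lemma:evaluation-public-stable} gives a common address $n$, and Lemma~\ref{lemma:equal-guard-identifiers} gives $\Moon{is_1} = \Moon{is_2}$; hence the observations $\mathbf{read}(n, ps)$ and $\mathbf{write}(n, ps)$ coincide. The resolved assignment/store is then related as above: when a load target is public and stable, the constant-time \srule{Load} rule forces the load label to be $\Blue{\Conid{L}}$, so the read is from public memory and $\mu_1(n) = \mu_2(n)$ by $\mu_1 \approx_{\Blue{\Conid{L}}} \mu_2$; otherwise the target is secret or transient and the values may differ harmlessly. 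Finally, \srule{Exec-Branch-Ok} and \srule{Exec-Branch-Mispredict} use that related guards are \emph{syntactically} equal (\srule{Guard}): the public-stable condition evaluates identically by Lemma~\ref{lemma:evaluation-public-stable}, so both runs take the same branch; in the mispredict case both emit $\mathbf{rollback}(p)$ with the shared $p$ and install the same rollback stack $cs'$, while the flushed buffers $is_1 \plus [\mathbf{nop}]$ and $is_2 \plus [\mathbf{nop}]$ remain related.

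The subtle case, and the one I expect to be the main obstacle, is \srule{Exec-Protect$_2$}. Here $i_1 = (x := \mathbf{protect}(v_1))$ and $i_2 = (x := \mathbf{protect}(v_2))$ are related by the \emph{unconditional} rule \srule{Protect}, which permits $v_1 \neq v_2$, yet the step resolves them to the plain assignments $x := v_1$ and $x := v_2$. When $x$ is secret or transient these are related by \srule{Asgn$_{\ensuremath{\Red{\Conid{H}}\lor\ensuremath{\Transient}}}$}; the difficulty is the case where $x$ is \emph{public and stable}, which is exactly the intended use of $\mathbf{protect}$, since there $\Blue{\Conid{L}}$-equivalence demands $v_1 = v_2$. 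The argument must exploit the firing condition of \srule{Exec-Protect$_2$}, namely that no guard is pending in the prefix: with no speculation in flight the held value is the committed, non-speculative one, and the constant-time \srule{Protect} rule forces the protected expression to be public whenever $x$ is public, so its value is determined by public data on which the two $\Blue{\Conid{L}}$-equivalent runs agree, giving $v_1 = v_2$. Making this precise is the crux, since it requires tying the value stored at \srule{Exec-Protect$_1$} to the no-pending-guards condition holding at \srule{Exec-Protect$_2$}, and is where I would concentrate the effort (an auxiliary invariant on $\mathbf{protect}$-wrapped values may be needed). By contrast \srule{Exec-Protect$_1$} is routine: both steps are silent and the wrapped results stay related by \srule{Protect}.
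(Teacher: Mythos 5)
Your proposal is correct and follows essentially the same route as the paper's proof: a case analysis on the $\Blue{\Conid{L}}$-equivalence judgment for the executed instructions, using Lemma~\ref{lemma:evaluation-public-stable} for public-stable operands and Lemma~\ref{lemma:equal-guard-identifiers} to match the pending-guard lists in \textbf{read}/\textbf{write} observations. The \srule{Exec-Protect$_2$} case you single out as the crux is handled in the paper by exactly the argument you sketch --- the no-pending-guards premise means every preceding bounds-check guard has been resolved and retired, so a public, stable protected value can no longer carry transient secret data and hence $v_1 = v_2$ --- with no additional auxiliary invariant introduced beyond that informal reasoning.
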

\begin{proof}
  By case analysis on the \ensuremath{\Blue{\Conid{L}}}-equivalence judgment \ensuremath{\Gamma\vdash\Varid{i}_{1}\approx_{\Blue{\Conid{L}}}\Varid{i}_{2}} and the reduction steps.
  Since the judgment relates only instructions of the same kind, the
  two configurations perform the same reduction step.
\begin{CompactItemize}
\item \srule{Asgn$_{\ensuremath{\Blue{\Conid{L}}\land\ensuremath{\Concrete}}}$} and \srule{Exec-Asgn}.  Let the
  \ensuremath{\Blue{\Conid{L}}}-equivalent assignment instructions executed be \ensuremath{\Gamma\vdash\Varid{x}\mathbin{:=}\Varid{e}_{1}\approx_{\Blue{\Conid{L}}}\Varid{x}\mathbin{:=}\Varid{e}_{2}}, where \ensuremath{\Varid{x}} is public (\ensuremath{\Varid{x}\;\in\;\Blue{\Conid{L}}}) \emph{and} stable
  (\ensuremath{\Gamma(\Varid{x})\mathrel{=}\ensuremath{\Concrete}}).
  From \ensuremath{\Blue{\Conid{L}}}-equivalence, we know that the expressions are identical
  \ensuremath{\Varid{e}_{1}\mathrel{=}\Varid{e}_{2}} and public and stable from typing, therefore \ensuremath{\llbracket \Varid{e}_{1}\rrbracket^{\rho_{1}}\mathrel{=}\llbracket \Varid{e}_{2}\rrbracket^{\rho_{2}}}, by Lemma~\ref{lemma:evaluation-public-stable}
  and the resolved instructions reinserted in the reorder buffers are
  related, i.e., \ensuremath{\Gamma\vdash\Varid{x}\mathbin{:=}\llbracket \Varid{e}\rrbracket^{\rho_{1}}\approx_{\Blue{\Conid{L}}}\Varid{x}\mathbin{:=}\llbracket \Varid{e}\rrbracket^{\rho_{2}}} by
  rule \srule{Asgn$_{\ensuremath{\Blue{\Conid{L}}\land\ensuremath{\Concrete}}}$}.
% since .
  %
%
\item \srule{Asgn$_{\ensuremath{\Red{\Conid{H}}\lor\ensuremath{\Transient}}}$} and \srule{Exec-Asgn}. Similar to
  the previous case, but variable \ensuremath{\Varid{x}} is secret (\ensuremath{\Varid{x}\;\not\in\;\Blue{\Conid{L}}})
  \emph{or} transient (\ensuremath{\Gamma(\Varid{x})\mathrel{=}\ensuremath{\Transient}}).
  In this case, the expressions are different and may evaluate to
  different values, but the resolved instructions remain related,
  i.e., \ensuremath{\Gamma\vdash\Varid{x}\mathbin{:=}\llbracket \Varid{e}_{1}\rrbracket^{\rho_{1}}\approx_{\Blue{\Conid{L}}}\Varid{x}\mathbin{:=}\llbracket \Varid{e}_{2}\rrbracket^{\rho_{2}}} by rule
  \srule{Asgn$_{\ensuremath{\Red{\Conid{H}}\lor\ensuremath{\Transient}}}$}.
\item \srule{Guard} and \srule{Exec-Branch-Ok}.  Let the
  \ensuremath{\Blue{\Conid{L}}}-equivalent guard instructions be \ensuremath{\Gamma\vdash\mathbf{guard}(\Varid{e}_{1}^{\Varid{b}_{1}},\Varid{cs}_{1},\Varid{p}_{1})} \ensuremath{\approx_{\Blue{\Conid{L}}}\mathbf{guard}(\Varid{e}_{2}^{\Varid{b}_{2}},\Varid{cs}_{2},\Varid{p}_{2})}.
  From \ensuremath{\Blue{\Conid{L}}}-equivalence, we know that the guard instructions are
  identical (i.e., same condition expression \ensuremath{\Varid{e}_{1}\mathrel{=}\Varid{e}_{2}}, predicted
  outcome \ensuremath{\Varid{b}_{1}\mathrel{=}\Varid{b}_{2}}, roll-back stack \ensuremath{\Varid{cs}_{1}\mathrel{=}\Varid{cs}_{2}}, and prediction
  identifier \ensuremath{\Varid{p}_{1}\mathrel{=}\Varid{p}_{2}}) and from typing we know that the guard
  expression is public, i.e., \ensuremath{\Gamma\;\vdash_{\textsc{ct}}\;\Varid{e}_{\Varid{i}}\mathbin{:}\Blue{\Conid{L}}} and
  stable \ensuremath{\Gamma\vdash\Varid{e}_{\Varid{i}}\mathbin{:}\ensuremath{\Concrete}} from rule \srule{Guard} in
  Figure~\ref{fig:ct-ts-instr} and Fig.~\ref{fig:full-ts-instr},
  respectively.
  Then, the prediction is correct in both reductions, i.e., \ensuremath{\llbracket \Varid{e}_{1}\rrbracket^{\rho_{1}}\mathrel{=}\Varid{b}_{1}\mathrel{=}\Varid{b}_{2}\mathrel{=}\llbracket \Varid{e}_{2}\rrbracket^{\rho_{2}}} by
  Lemma~\ref{lemma:evaluation-public-stable}, and both guard
  instructions are resolved to \ensuremath{\Gamma\vdash\mathbf{nop}\approx_{\Blue{\Conid{L}}}\mathbf{nop}} and the resulting
  buffers remain related.
\item \srule{Guard} and \srule{Exec-Branch-Mispredict}. Similar to the
  previous case. For the same reasons, both predictions are wrong,
  i.e., \ensuremath{\llbracket \Varid{e}_{1}\rrbracket^{\rho_{1}}\mathrel{=}\Varid{b'}\mathrel{=}\llbracket \Varid{e}_{2}\rrbracket^{\rho_{2}}} and \ensuremath{\Varid{b'}\neq\Varid{b}}, and the
  rules generate the same observations \ensuremath{\mathbf{rollback}(\Varid{p})}, flush the rest of
  the reorder buffer (the guard instruction and the suffix) and
  restore the same rollback command stack.

\item \srule{Load} and \srule{Exec-Load}.  Let the \ensuremath{\Blue{\Conid{L}}}-equivalent
  prefixes of the reorder buffers be \ensuremath{\Gamma\vdash\Varid{is}_{1}\approx_{\Blue{\Conid{L}}}\Varid{is}_{2}}.
  From \ensuremath{\Blue{\Conid{L}}}-equivalence, no store is pending in either buffers (one
  configuration steps if and only if the other steps too), and the
  same guards are pending in each buffer, i.e., \ensuremath{\Moon{\Varid{is}_{1}}\mathrel{=}\Varid{ps}\mathrel{=}\Moon{\Varid{is}_{2}}} by Lemma~\ref{lemma:equal-guard-identifiers}.
  Let the \ensuremath{\Blue{\Conid{L}}}-equivalent load instructions executed in the steps be
  \ensuremath{\Gamma\vdash\Varid{x}\mathbin{:=}\mathbf{load}(\Varid{e}_{1})\approx_{\Blue{\Conid{L}}}\Varid{x}\mathbin{:=}\mathbf{load}(\Varid{e}_{2})}.
  From \ensuremath{\Blue{\Conid{L}}}-equivalence, both instructions the address expressions are
  identical, i.e., \ensuremath{\Varid{e}_{1}\mathrel{=}\Varid{e}_{2}}, and public and stable (from typing), and
  therefore evaluate to the same address \ensuremath{\llbracket \Varid{e}_{1}\rrbracket^{\rho_{1}}\mathrel{=}\Varid{n}\mathrel{=}\llbracket \Varid{e}_{2}\rrbracket^{\rho_{2}}} by Lemma~\ref{lemma:evaluation-public-stable}.
  As a result, both reductions generate the same observation \ensuremath{\Varid{o}_{1}\mathrel{=}\mathbf{read}(\Varid{n},\Varid{ps})\mathrel{=}\Varid{o}_{2}}.
  Lastly, the resolved instructions are \ensuremath{\Blue{\Conid{L}}}-equivalent, i.e., \ensuremath{\Gamma\vdash\Varid{x}\mathbin{:=}\mu_{1}(\Varid{n})\approx_{\Blue{\Conid{L}}}\Varid{x}\mathbin{:=}\mu_{2}(\Varid{n})} by rule
  \srule{Asgn$_{\ensuremath{\Red{\Conid{H}}\lor\ensuremath{\Transient}}}$}, because the variable \ensuremath{\Varid{x}} is
  \emph{transient} (\ensuremath{\Gamma(\Varid{x})\mathrel{=}\ensuremath{\Transient}}) by typing
  (Fig.~\ref{fig:full-ts-instr}).

\item \srule{Store$_{\ensuremath{\ell}}$} and \srule{Exec-Store}.  Let the two
  \ensuremath{\Blue{\Conid{L}}}-equivalent store instructions be \ensuremath{\mathbf{store}_{\ell}(\Varid{e}_{1},\Varid{e}_{1}')\approx_{\Blue{\Conid{L}}}\mathbf{store}_{\ell}(\Varid{e}_{2},\Varid{e}_{2}')}.
  First, notice that the expressions that compute the address of the
  stores are identical in both instructions from rules
  \srule{Store$_{\ensuremath{\ell}}$}.
  Furthermore, we know that these expressions are public and stable
  from typing, i.e., \ensuremath{\Gamma\;\vdash_{\textsc{ct}}\;\Varid{e}_{\Varid{i}}\mathbin{:}\Blue{\Conid{L}}} and \ensuremath{\Gamma\vdash\Varid{e}_{\Varid{i}}\mathbin{:}\ensuremath{\Concrete}} from rule \srule{Store} in
  Fig.~\ref{fig:full-ts-instr} and \ref{fig:ct-ts-instr}, respectively, and
  therefore the expression evaluates to the same address by
  Lemma~\ref{lemma:evaluation-public-stable}, i.e., \ensuremath{\llbracket \Varid{e}_{1}\rrbracket^{\rho_{1}}\mathrel{=}\Varid{n}\mathrel{=}\llbracket \Varid{e}_{2}\rrbracket^{\rho_{2}}}.
  We then proceed by further case distinction on the security label
  \ensuremath{\ell} that decorates the store instructions.
  If \ensuremath{\ell\mathrel{=}\Blue{\Conid{L}}}, then \ensuremath{\Varid{e}_{1}'\mathrel{=}\Varid{e}_{2}'} from \srule{Store$_{\ensuremath{\Blue{\Conid{L}}}}$}, and thus
  evaluate to the same value, i.e., \ensuremath{\llbracket \Varid{e}_{1}'\rrbracket^{\rho_{1}}\mathrel{=}\Varid{v}\mathrel{=}\llbracket \Varid{e}_{2}'\rrbracket^{\rho_{2}}} by Lemma~\ref{lemma:evaluation-public-stable}, and the
  resolved instructions in the resulting buffers remain related, i.e.,
  \ensuremath{\Gamma\vdash\mathbf{store}_{\Blue{\Conid{L}}}(\Varid{n},\Varid{v})\approx_{\Blue{\Conid{L}}}\mathbf{store}_{\Blue{\Conid{L}}}(\Varid{n},\Varid{v})} by rule
  \srule{Store$_{\ensuremath{\Blue{\Conid{L}}}}$}.
  Otherwise \ensuremath{\ell\mathrel{=}\Red{\Conid{H}}}, \ensuremath{\Varid{e}_{1}'\neq\Varid{e}_{2}'} and the resolved instructions
  remain related because the store writes a secret address, i.e.,
  \ensuremath{\Gamma\vdash\mathbf{store}_{\Red{\Conid{H}}}(\Varid{n},\llbracket \Varid{e}_{1}'\rrbracket^{\rho_{1}})\approx_{\Blue{\Conid{L}}}\mathbf{store}_{\Red{\Conid{H}}}(\Varid{n},\llbracket \Varid{e}_{2}'\rrbracket^{\rho_{2}})} by
  rule \srule{Store$_{\ensuremath{\Red{\Conid{H}}}}$}.

\item \srule{Protect} and \srule{Exec-Protect$_1$}.
Let the two
  \ensuremath{\Blue{\Conid{L}}}-equivalent \ensuremath{\mathbf{protect}} instructions be \ensuremath{\Gamma\vdash\Varid{x}\mathbin{:=}\mathbf{protect}(\Varid{e}_{1})\approx_{\Blue{\Conid{L}}}\mathbf{protect}(\Varid{e}_{2})}.
By rule \srule{Exec-Protect$_1$}, we know that \ensuremath{\Varid{v}_{1}\mathrel{=}\llbracket \Varid{e}_{1}\rrbracket^{\rho_{1}}} and \ensuremath{\Varid{v}_{2}\mathrel{=}\llbracket \Varid{e}_{2}\rrbracket^{\rho_{2}}} and thus the resulting instructions \ensuremath{\Gamma\vdash\Varid{x}\mathbin{:=}\mathbf{protect}(\Varid{v}_{1})\approx_{\Blue{\Conid{L}}}\Varid{x}\mathbin{:=}\mathbf{protect}(\Varid{v}_{2})} are \ensuremath{\Blue{\Conid{L}}}-equivalent by rule \srule{Protect}.

\item \srule{Protect} and \srule{Exec-Protect$_2$}.
Let the two
  \ensuremath{\Blue{\Conid{L}}}-equivalent \ensuremath{\mathbf{protect}} instructions be \ensuremath{\Gamma\vdash\Varid{x}\mathbin{:=}\mathbf{protect}(\Varid{v}_{1})\approx_{\Blue{\Conid{L}}}\Varid{x}\mathbin{:=}\mathbf{protect}(\Varid{v}_{2})} and we must show that \ensuremath{\Gamma\vdash\Varid{x}\mathbin{:=}\Varid{v}_{1}\approx_{\Blue{\Conid{L}}}\Varid{x}\mathbin{:=}\Varid{v}_{2}}.
  If variable \ensuremath{\Varid{x}} is secret, i.e., \ensuremath{\Varid{x}\;\not\in\;\Blue{\Conid{L}}}, or transient, i.e., \ensuremath{\Gamma(\Varid{x})\mathrel{=}\ensuremath{\Transient}}, then the two resolved instructions are related by
  rule \srule{Asgn$_{\ensuremath{\Red{\Conid{H}}\lor\ensuremath{\Transient}}}$}.
  If variable \ensuremath{\Varid{x}} is public, i.e., \ensuremath{\Varid{x}\;\in\;\Blue{\Conid{L}}}, and stable, i.e.,
  \ensuremath{\Gamma(\Varid{x})\mathrel{=}\ensuremath{\Concrete}}, then we must show that \ensuremath{\Varid{v}_{1}\mathrel{=}\Varid{v}_{2}} to apply rule \srule{Asgn$_{\ensuremath{\Blue{\Conid{L}}\land\ensuremath{\Concrete}}}$}.
  Intuitively, public variables may assume different, secret values
  (e.g., \ensuremath{\Varid{v}_{1}\neq\Varid{v}_{2}}), only \emph{transiently}, due to a data-flow
  dependency to a previous command that reads a public array
  \emph{off-bounds}.
Array reads are automatically \emph{guarded} by bounds check instructions when fetched (\srule{Fetch-Array-Load}) and by rule \srule{Exec-Protect$_2$}, we know that no
guards are pending in either prefix of the reorder buffer, i.e.,  \ensuremath{\mathbf{guard}(\anonymous ,\anonymous ,\anonymous )\;\not\in\;\Varid{is}_{1}} and \ensuremath{\mathbf{guard}(\anonymous ,\anonymous ,\anonymous )\;\not\in\;\Varid{is}_{2}}.
From this, we conclude that all previous bounds-check have been
successfully resolved and retired, and therefore the values \ensuremath{\Varid{v}_{1}} and
\ensuremath{\Varid{v}_{2}} are truly public and stable, and thus equal, i.e., \ensuremath{\Varid{v}_{1}\mathrel{=}\Varid{v}_{2}}, and the corresponding, resolved assignments are likewise related,
i.e., \ensuremath{\Gamma\vdash\Varid{x}\mathbin{:=}\Varid{v}_{1}\approx_{\Blue{\Conid{L}}}\Varid{x}\mathbin{:=}\Varid{v}_{1}}.
\end{CompactItemize}

\end{proof}

\begin{lemma}[Big-step \ensuremath{\Blue{\Conid{L}}}-equivalence Preservation]
If \ensuremath{\Gamma\;\vdash_{\textsc{sct}}\;\Conid{C}_{1}}, \ensuremath{\Gamma\;\vdash_{\textsc{sct}}\;\Conid{C}_{2}}, \ensuremath{\Gamma\vdash\Conid{C}_{1}\approx_{\Blue{\Conid{L}}}\Conid{C}_{2}}, \ensuremath{\Conid{C}_{1}\;\Downarrow_{\Conid{O}_{1}}^{\Conid{D}}\;\Conid{C}_{1}'}, and \ensuremath{\Conid{C}_{2}\;\Downarrow_{\Conid{O}_{2}}^{\Conid{D}}\;\Conid{C}_{2}'}, then \ensuremath{\Gamma\vdash\Conid{C}_{2}\approx_{\Blue{\Conid{L}}}\Conid{C}_{2}'} and \ensuremath{\Conid{O}_{1}\mathrel{=}\Conid{O}_{2}}.
\label{lemma:preservation-big-step}
\end{lemma}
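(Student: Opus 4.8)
The plan is to prove this by a straightforward induction on the structure of the big-step derivation $\ensuremath{\Conid{C}_{1}\;\Downarrow_{\Conid{O}_{1}}^{\Conid{D}}\;\Conid{C}_{1}'}$, equivalently on the length of the schedule $\ensuremath{\Conid{D}}$. Because both executions are driven by the \emph{same} schedule and the big-step relation of \Cref{fig:speculative-big-step} consumes exactly one directive per application of rule \srule{Step} (with \srule{Done} applicable only when $\ensuremath{\Conid{D}\mathrel{=}[\mskip1.5mu \mskip1.5mu]}$), the two derivations have identical shape, so one induction on $\ensuremath{\Conid{D}}$ keeps them in lockstep. I also note that the stated conclusion $\ensuremath{\Gamma\vdash\Conid{C}_{2}\approx_{\Blue{\Conid{L}}}\Conid{C}_{2}'}$ is a typo for $\ensuremath{\Gamma\vdash\Conid{C}_{1}'\approx_{\Blue{\Conid{L}}}\Conid{C}_{2}'}$, \ie the two \emph{final} configurations are $\ensuremath{\Blue{\Conid{L}}}$-equivalent, which is what the induction will establish.

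In the base case $\ensuremath{\Conid{D}\mathrel{=}[\mskip1.5mu \mskip1.5mu]}$ both derivations are instances of \srule{Done}, so $\ensuremath{\Conid{C}_{1}'\mathrel{=}\Conid{C}_{1}}$, $\ensuremath{\Conid{C}_{2}'\mathrel{=}\Conid{C}_{2}}$, and $\ensuremath{\Conid{O}_{1}\mathrel{=}\epsilon\mathrel{=}\Conid{O}_{2}}$; the equivalence $\ensuremath{\Gamma\vdash\Conid{C}_{1}'\approx_{\Blue{\Conid{L}}}\Conid{C}_{2}'}$ is then just the hypothesis. For the inductive case $\ensuremath{\Conid{D}\mathrel{=}\Varid{d}\mathbin{:}\Conid{D}'}$, both derivations use \srule{Step}, yielding single steps $\ensuremath{\Conid{C}_{1}\;\stepT{\Varid{d}}{\Varid{o}_{1}}\;\Conid{C}_{1}''}$ and $\ensuremath{\Conid{C}_{2}\;\stepT{\Varid{d}}{\Varid{o}_{2}}\;\Conid{C}_{2}''}$ followed by smaller big-steps $\ensuremath{\Conid{C}_{1}''\;\Downarrow_{\Conid{O}_{1}'}^{\Conid{D}'}\;\Conid{C}_{1}'}$ and $\ensuremath{\Conid{C}_{2}''\;\Downarrow_{\Conid{O}_{2}'}^{\Conid{D}'}\;\Conid{C}_{2}'}$, with $\ensuremath{\Conid{O}_{1}\mathrel{=}\Varid{o}_{1}\;{{\mkern-2mu\cdot\mkern-2mu}}\;\Conid{O}_{1}'}$ and $\ensuremath{\Conid{O}_{2}\mathrel{=}\Varid{o}_{2}\;{{\mkern-2mu\cdot\mkern-2mu}}\;\Conid{O}_{2}'}$.

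First I would apply the single-step preservation result (Lemma~\ref{lemma:preservation-small-step}) to the two single steps: its hypotheses $\ensuremath{\Gamma\;\vdash_{\textsc{sct}}\;\Conid{C}_{1}}$, $\ensuremath{\Gamma\;\vdash_{\textsc{sct}}\;\Conid{C}_{2}}$, and $\ensuremath{\Gamma\vdash\Conid{C}_{1}\approx_{\Blue{\Conid{L}}}\Conid{C}_{2}}$ are exactly our hypotheses, so it returns $\ensuremath{\Varid{o}_{1}\mathrel{=}\Varid{o}_{2}}$ and $\ensuremath{\Gamma\vdash\Conid{C}_{1}''\approx_{\Blue{\Conid{L}}}\Conid{C}_{2}''}$. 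Before invoking the induction hypothesis on the residual big-steps I must re-establish the well-typedness invariants at the intermediate configurations, which is where the type-preservation lemmas enter: unfolding $\ensuremath{\vdash_{\textsc{sct}}}$ into its transient-flow component $\ensuremath{\Gamma\vdash\cdot}$ and its constant-time component $\ensuremath{\Gamma\;\vdash_{\textsc{ct}}\;\cdot}$, I apply Lemma~\ref{lemma:tr-preservation} and Lemma~\ref{lemma:ct-preservation} to each single step to obtain $\ensuremath{\Gamma\;\vdash_{\textsc{sct}}\;\Conid{C}_{1}''}$ and $\ensuremath{\Gamma\;\vdash_{\textsc{sct}}\;\Conid{C}_{2}''}$. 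With these facts the induction hypothesis applies to $\ensuremath{\Conid{C}_{1}''\;\Downarrow_{\Conid{O}_{1}'}^{\Conid{D}'}\;\Conid{C}_{1}'}$ and $\ensuremath{\Conid{C}_{2}''\;\Downarrow_{\Conid{O}_{2}'}^{\Conid{D}'}\;\Conid{C}_{2}'}$, giving $\ensuremath{\Gamma\vdash\Conid{C}_{1}'\approx_{\Blue{\Conid{L}}}\Conid{C}_{2}'}$ and $\ensuremath{\Conid{O}_{1}'\mathrel{=}\Conid{O}_{2}'}$. Combining $\ensuremath{\Varid{o}_{1}\mathrel{=}\Varid{o}_{2}}$ with $\ensuremath{\Conid{O}_{1}'\mathrel{=}\Conid{O}_{2}'}$ yields $\ensuremath{\Conid{O}_{1}\mathrel{=}\Varid{o}_{1}\;{{\mkern-2mu\cdot\mkern-2mu}}\;\Conid{O}_{1}'\mathrel{=}\Varid{o}_{2}\;{{\mkern-2mu\cdot\mkern-2mu}}\;\Conid{O}_{2}'\mathrel{=}\Conid{O}_{2}}$, which together with the equivalence of the final configurations discharges the goal.

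The reasoning is genuinely routine because all the delicate lockstep analysis---matching observations, handling mispredictions and loads that transiently read secret data, and maintaining the relaxed equivalence on transient variable maps---has already been absorbed into Lemma~\ref{lemma:preservation-small-step}. The one point that warrants care, and the closest thing to an obstacle, is ensuring that the typing hypotheses of both the single-step lemma and the induction hypothesis are available at every intermediate configuration; this is precisely why the argument must thread \emph{both} type systems ($\ensuremath{\vdash}$ and $\ensuremath{\vdash_{\textsc{ct}}}$) through the induction and rely on their separate preservation lemmas rather than on a single combined invariant.
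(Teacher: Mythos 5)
Your proof is correct and matches the paper's own argument essentially step for step: induction on the schedule, the base case via \srule{Done}, and in the \srule{Step} case an application of Lemma~\ref{lemma:preservation-small-step} followed by the two type-preservation lemmas (Lemma~\ref{lemma:tr-preservation} and Lemma~\ref{lemma:ct-preservation}) to re-establish the hypotheses before invoking the induction hypothesis. Your observation that the stated conclusion should read $\Gamma\vdash\Conid{C}_{1}'\approx_{\Blue{\Conid{L}}}\Conid{C}_{2}'$ is also consistent with what the paper's proof actually establishes.
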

\begin{proof}
By induction on the speculative big-step reductions (Fig.~\ref{fig:speculative-big-step}).
Since the reductions follow the same list of directives \ensuremath{\Conid{D}}, either
both configurations \ensuremath{\Conid{C}_{1}} and \ensuremath{\Conid{C}_{2}} reduce, i.e., rule \srule{Step}, or
have terminated, i.e., rule \srule{Done}.
The base case \srule{Done} is trivial; the lists of observation are empty, i.e., \ensuremath{\Conid{O}_{1}\mathrel{=}\epsilon\mathrel{=}\Conid{O}_{2}} and \ensuremath{\Conid{C}_{1}\mathrel{=}\Conid{C}_{1}'}, \ensuremath{\Conid{C}_{2}\mathrel{=}\Conid{C}_{2}'}, and thus \ensuremath{\Gamma\vdash\Conid{C}_{1}\approx_{\Blue{\Conid{L}}}\Conid{C}_{2}} implies \ensuremath{\Gamma\vdash\Conid{C}_{1}'\approx_{\Blue{\Conid{L}}}\Conid{C}_{2}'}.
In the inductive case \srule{Step}, both configurations perform a
small step, i.e., \ensuremath{\Conid{C}_{1}\;\stepT{\Varid{d}}{\Varid{o}_{1}}\;\Conid{C}_{1}''} and \ensuremath{\Conid{C}_{2}\;\stepT{\Varid{d}}{\Varid{o}_{2}}\;\Conid{C}_{2}''},
and a big-step, i.e., \ensuremath{\Conid{C}_{1}''\;\Downarrow_{\Conid{O}_{1}}^{\Conid{D}}\;\Conid{C}_{1}'}, and \ensuremath{\Conid{C}_{2}''\;\Downarrow_{\Conid{O}_{2}}^{\Conid{D}}\;\Conid{C}_{2}'}.
To prove that the observations \ensuremath{\Varid{o}_{1}\mathbin{:}\Conid{O}_{1}} and \ensuremath{\Varid{o}_{2}\mathbin{:}\Conid{O}_{2}} are identical
and the final configurations are \ensuremath{\Blue{\Conid{L}}}-equivalent, we first apply
\ensuremath{\Blue{\Conid{L}}}-equivalence preservation for small-steps, i.e.,
Lemma~\ref{lemma:preservation-small-step}, and deduce that \ensuremath{\Varid{o}_{1}\mathrel{=}\Varid{o}_{2}}
and that the intermediate configurations are \ensuremath{\Blue{\Conid{L}}}-equivalent, i.e.,
\ensuremath{\Gamma\vdash\Conid{C}_{1}''\approx_{\Blue{\Conid{L}}}\Conid{C}_{2}''}.
Furthermore, these configurations remain well-typed, i.e., \ensuremath{\Gamma\;\vdash_{\textsc{sct}}\;\Conid{C}_{1}''} and \ensuremath{\Gamma\;\vdash_{\textsc{sct}}\;\Conid{C}_{2}''}, by typing preservation, i.e., Lemma~\ref{lemma:tr-preservation} and Lemma~\ref{lemma:ct-preservation}.
At this point, we can apply the induction hypothesis to the big-steps \ensuremath{\Conid{C}_{1}''\;\Downarrow_{\Conid{O}_{1}}^{\Conid{D}}\;\Conid{C}_{1}'} and \ensuremath{\Conid{C}_{2}''\;\Downarrow_{\Conid{O}_{2}}^{\Conid{D}}\;\Conid{C}_{2}'}, and conclude that \ensuremath{\Conid{O}_{1}\mathrel{=}\Conid{O}_{2}}, and thus \ensuremath{\Varid{o}_{1}\mathbin{:}\Conid{O}_{1}\mathrel{=}\Varid{o}_{2}\mathbin{:}\Conid{O}_{2}}, and \ensuremath{\Gamma\vdash\Conid{C}_{1}'\approx_{\Blue{\Conid{L}}}\Conid{C}_{2}'}.

\end{proof}

\begin{thm}[Speculative Constant Time]
 For all programs \ensuremath{\Varid{c}} and security policies \ensuremath{\Blue{\Conid{L}}}, if \ensuremath{CT_{\mkern-1mu\Blue{\Conid{L}}}(\Varid{c})} and \ensuremath{\Gamma\vdash\Varid{c}},
  then \ensuremath{SCT_{\mkern-1mu\Blue{\Conid{L}}}(\Varid{c})}.
\end{thm}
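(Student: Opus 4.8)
The plan is to reduce the theorem to a single application of the Big-step $\Blue{\Conid{L}}$-equivalence Preservation lemma (\Cref{lemma:preservation-big-step}), so that the only real work left for the theorem itself is the bookkeeping that lifts the command-level hypotheses to the configuration level and reconciles the policy-level notion of $\approx_{\Blue{\Conid{L}}}$ from \Cref{def:STC} with the richer configuration-level relation $\Gamma \vdash C_1 \approx_{\Blue{\Conid{L}}} C_2$ that the lemma actually speaks about. First I would unfold $SCT_{\mkern-1mu\Blue{\Conid{L}}}(c)$: fix an arbitrary schedule $D$ and two initial configurations $C_i = \langle [], [c], \mu_i, \rho_i\rangle$ with $C_1 \approx_{\Blue{\Conid{L}}} C_2$ and complete executions $C_i \Downarrow^D_{O_i} C_i'$; the goal is then $O_1 = O_2$ and $C_1' \approx_{\Blue{\Conid{L}}} C_2'$.

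To invoke \Cref{lemma:preservation-big-step} I must discharge its hypotheses for the initial configurations. For well-typedness I unpack $\vdash_{\textsc{sct}}$ into its two constituents, which live under two \emph{distinct} environments: the constant-time environment $\Gamma_{\textsc{ct}}$ derived from $\Blue{\Conid{L}}$ (mapping $x$ to $\Blue{\Conid{L}}$ iff $x \in \Blue{\Conid{L}}$, and to $\Red{\Conid{H}}$ otherwise) and the transient-flow environment $\Gamma$ supplied by the hypothesis. From $CT_{\mkern-1mu\Blue{\Conid{L}}}(c)$ I obtain $\Gamma_{\textsc{ct}} \vdash_{\textsc{ct}} c$, and from $\Gamma \vdash c$ (i.e.\ $\Gamma, \varnothing \vdash c$, the repaired program with empty protected set) I obtain the transient-flow command typing. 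Each lifts to the initial configuration by the configuration-typing rules: the empty reorder buffer types by \srule{RB-Empty} and the singleton command stack $[c]$ types by \srule{Cmd-Stack-Cons} and \srule{Cmd-Stack-Empty}, giving $\Gamma_{\textsc{ct}} \vdash_{\textsc{ct}} C_i$ and $\Gamma \vdash C_i$, hence $\Gamma \vdash_{\textsc{sct}} C_i$. For the equivalence hypothesis I note the initial buffers are empty (so $\Gamma \vdash [] \approx_{\Blue{\Conid{L}}} []$ by \srule{RB-Empty}), the command stacks are syntactically identical ($[c] = [c]$), and the assumed policy-level $C_1 \approx_{\Blue{\Conid{L}}} C_2$ gives exactly $\rho_1 \approx_{\Blue{\Conid{L}}} \rho_2$ and $\mu_1 \approx_{\Blue{\Conid{L}}} \mu_2$ via rules \srule{VarMap} and \srule{Memory}; rule \srule{Conf} then assembles $\Gamma \vdash C_1 \approx_{\Blue{\Conid{L}}} C_2$.

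Applying \Cref{lemma:preservation-big-step} immediately yields $O_1 = O_2$ and $\Gamma \vdash C_1' \approx_{\Blue{\Conid{L}}} C_2'$. Since a complete execution terminates with both buffer and command stack empty, $C_i' = \langle [], [], \mu_i', \rho_i'\rangle$, so the buffer- and stack-components of the configuration-level equivalence are vacuous and rule \srule{Conf} collapses to $\rho_1' \approx_{\Blue{\Conid{L}}} \rho_2'$ together with $\mu_1' \approx_{\Blue{\Conid{L}}} \mu_2'$ — which is precisely the policy-level $C_1' \approx_{\Blue{\Conid{L}}} C_2'$ required by \Cref{def:STC}. This closes the proof.

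The genuine difficulty is not in this theorem but is already discharged by the preservation chain it rests on; the conceptual point worth stating explicitly here is the reconciliation of the two relations both written $\approx_{\Blue{\Conid{L}}}$ (the policy-level relation on variable maps and memories versus the configuration-level relation of \Cref{fig:low-equivalence}), along with the fact that $\vdash_{\textsc{sct}}$ silently quantifies over two different typing contexts. I expect the only step requiring care to be this lifting-and-matching bookkeeping. The true crux of the development — namely the execute- and retire-stage case where a \emph{public} variable transiently carries a secret, off-bounds value, and one must argue that its commit is gated by an already-retired bounds-check guard so that \Cref{lemma:evaluation-public-stable} forces the two runs to agree — is encapsulated inside \Cref{lemma:preservation-big-step} and is therefore assumed at this level rather than re-proved.
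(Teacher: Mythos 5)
Your proposal is correct and follows exactly the paper's route: the paper's own proof is a two-line expansion of Definition~\ref{def:STC} followed by a single appeal to Lemma~\ref{lemma:preservation-big-step}, which is precisely your reduction. The extra bookkeeping you spell out (lifting command typing to configuration typing via \srule{RB-Empty}/\srule{Cmd-Stack-Cons}, reconciling the policy-level and configuration-level $\approx_{\Blue{\Conid{L}}}$, and noting that $\vdash_{\textsc{sct}}$ ranges over two typing contexts) is left implicit in the paper but is exactly the right justification for the step it takes.
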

\begin{proof}
  First, we expand the conclusion \ensuremath{SCT_{\mkern-1mu\Blue{\Conid{L}}}(\Varid{c})} (Definition~\ref{def:STC})
  and let \ensuremath{\Conid{C}_{\Varid{i}}\mathrel{=}\langle[\mskip1.5mu \mskip1.5mu],[\mskip1.5mu \Varid{c}\mskip1.5mu],\mu_{\Varid{i}},\rho_{\Varid{i}}\rangle} for \ensuremath{\Varid{i}\;\in\;\{\mskip1.5mu \mathrm{1},\mathrm{2}\mskip1.5mu\}}. Then, we assume \ensuremath{\Gamma\vdash\Conid{C}_{1}\approx_{\Blue{\Conid{L}}}\Conid{C}_{2}}, \ensuremath{\Conid{C}_{1}\;\Downarrow_{\Conid{O}_{1}}^{\Conid{D}}\;\Conid{C}_{1}'},
  and \ensuremath{\Conid{C}_{2}\;\Downarrow_{\Conid{O}_{2}}^{\Conid{D}}\;\Conid{C}_{2}'} and derive \ensuremath{\Conid{O}_{1}\mathrel{=}\Conid{O}_{2}} and \ensuremath{\Conid{C}_{1}'\approx_{\Blue{\Conid{L}}}\Conid{C}_{2}'} from
  Lemma~\ref{lemma:preservation-big-step}.

\end{proof}
\clearpage
\newpage
\section{Additional analysis of performace overheads}
\label{app:perf-overheads}

\begin{figure}
  \begin{subfigure}{0.49\textwidth}
  \includegraphics[width=0.99\textwidth]{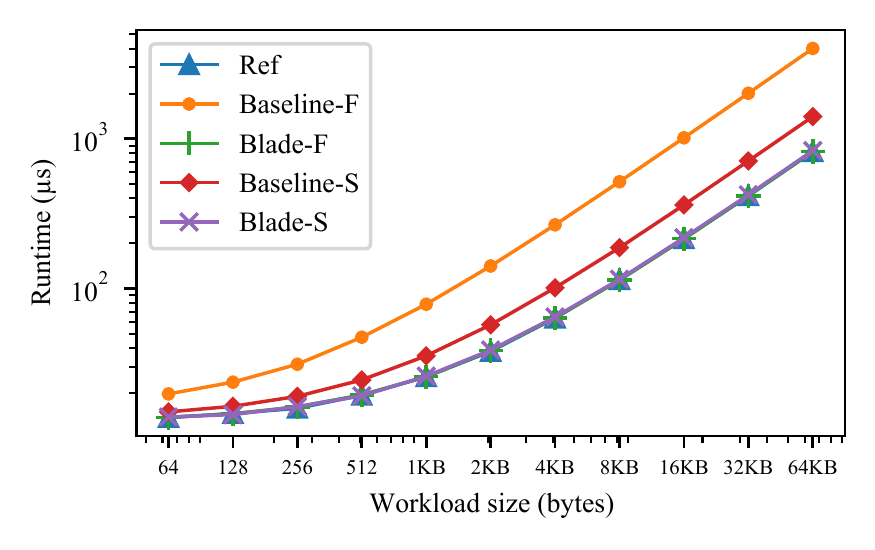}
  \caption{Without v1.1 protections}
  \label{subfig:runtime_no_v1_1}
  \end{subfigure}
  \begin{subfigure}{0.49\textwidth}
  \includegraphics[width=0.99\textwidth]{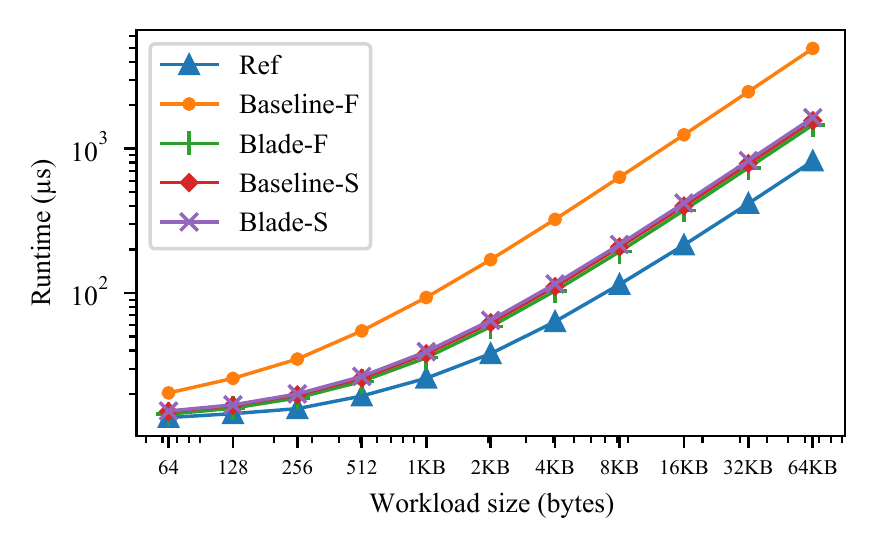}
  \caption{With v1.1 protections}
  \label{subfig:runtime_with_v1_1}
  \end{subfigure}
  \caption{
    Runtime of SHA256 (CT-Wasm) as the workload size varies.
    %\Cref{subfig:runtime_no_v1_1} shows the performance without v1.1
    %protections, while \Cref{subfig:runtime_with_v1_1} shows the performance
    %with v1.1 protections.
  }
  \label{fig:sha_scaling_runtime}
\end{figure}

\begin{figure}
  \begin{subfigure}{0.49\textwidth}
  \includegraphics[width=0.99\textwidth]{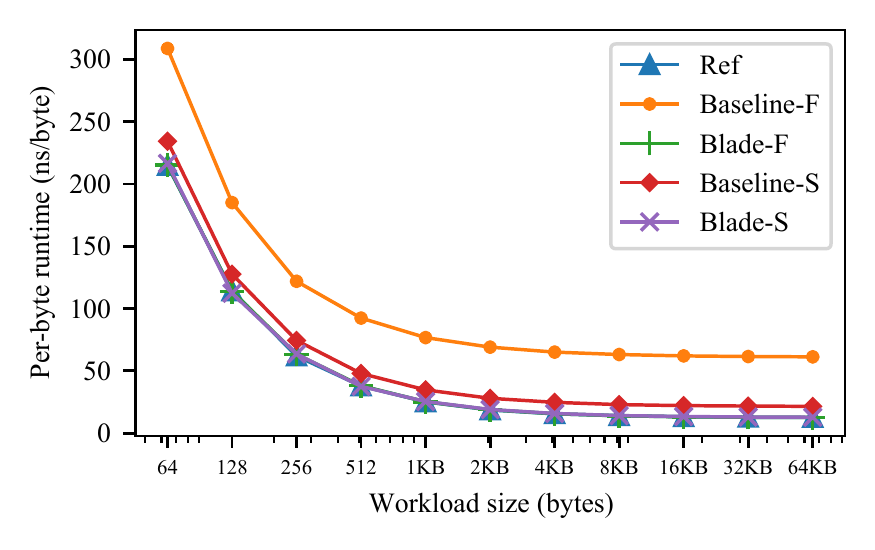}
  \caption{Without v1.1 protections}
  \label{subfig:runtime_per_byte_no_v1_1}
  \end{subfigure}
  \begin{subfigure}{0.49\textwidth}
  \includegraphics[width=0.99\textwidth]{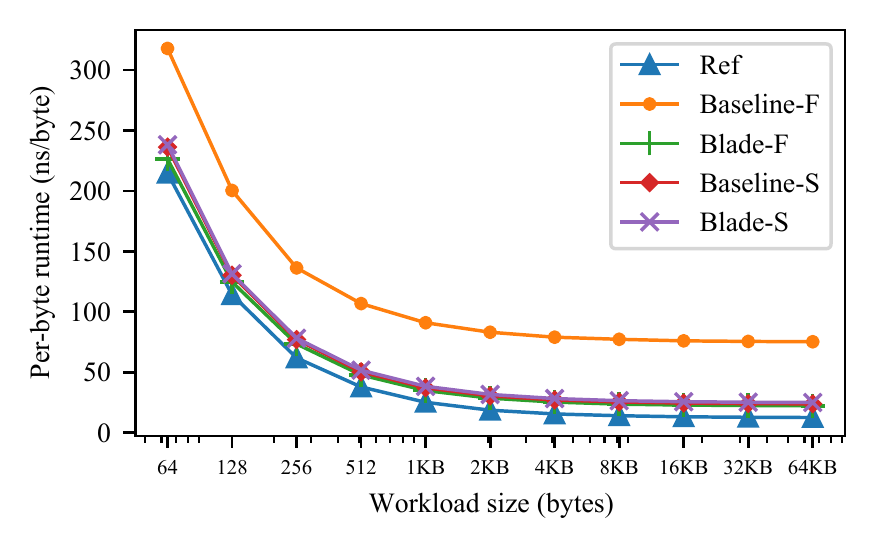}
  \caption{With v1.1 protections}
  \label{subfig:runtime_per_byte_with_v1_1}
  \end{subfigure}
  \caption{
    Runtime of SHA256 (CT-Wasm) as the workload size varies, presented on a per-byte basis.
    %\Cref{subfig:runtime_per_byte_no_v1_1} shows the performance without v1.1
    %protections, while \Cref{subfig:runtime_per_byte_with_v1_1} shows the
    %performance with v1.1 protections.
  }
  \label{fig:sha_scaling_runtime_per_byte}
\end{figure}

\New{
\Cref{tab:evaluation} showed that the performance overhead for some
benchmarks depends heavily on the workload size.
We explore this relationship in more detail in
\Cref{fig:sha_scaling_runtime}.
Specifically, we see that for low workload sizes, the runtime is dominated by
fixed costs for sandbox setup and teardown; the execution of the Wasm code
contributes little to the performance.
As the workload size increases, the overall performance overhead
asymptotically approaches the overhead of the Wasm execution itself.
This is shown even more clearly in \Cref{fig:sha_scaling_runtime_per_byte},
where we see that the asymptotic overhead for SHA-256 with v1.1 protections
is approximately 78\% and 99\% respectively for \tool-F and \tool-S, while
the asymptotic overheads without v1.1 protections are unsurprisingly
approximately zero for \tool-F and \tool-S, as they insert no defenses.
}
\fi

\end{document}

%%% Local Variables:
%%% mode: latex
%%% TeX-master: "Makefile"
%%% End: